\theoremstyle{definition}
\newtheorem{theorem}{Theorem}
\newtheorem{definition}{Definition}
\newtheorem{assumption}{Assumption}
\newtheorem{lemma}{Lemma}
\newtheorem{example}{Example}
\newtheorem{proposition}{Proposition}
\newtheorem{remark}{Remark}
\newtheorem{corollary}{Corollary}
\newcommand\norm[1]{\left\lVert#1\right\rVert}
\DeclarePairedDelimiter{\floor}{\lfloor}{\rfloor}
\newcommand{\vertiii}[1]{{\left\vert\kern-0.25ex\left\vert\kern-0.25ex\left\vert #1 
    \right\vert\kern-0.25ex\right\vert\kern-0.25ex\right\vert}}
\newif\ifshow 
\newcommand{\nc}{\newcommand}
\nc{\mbb}{\mathbb}\nc{\bb}{\mathbb}
\nc{\mbf}{\mathbf}\nc{\mb}{\mathbf}
\nc{\mc}{\mathcal}
\nc{\msf}{\mathsf}\nc{\ms}{\mathsf}
\nc{\acc}{\ms{acc}}
\nc{\ack}{\ms{ack}}
\nc{\alp}{\alpha}\nc{\al}{\alpha}\nc{\gka}{\alpha}
\nc{\ap}{\ms{ap}}
\nc{\apd}{\ms{apd}}
\nc{\base}{\ms{base}}\nc{\ba}{\ms{base}}
\nc{\bet}{\beta}\nc{\gkb}{\beta}
\nc{\boucle}{\ms{loop}}\nc{\Loop}{\ms{loop}}\nc{\lo}{\ms{loop}}
\nc{\bu}{\bullet}
\nc*{\cc}{\raisebox{-3pt}{\scalebox{2}{$\cdot$}}}
\nc{\centre}{\ms{center}}\nc{\Center}{\ms{center}}\nc{\cen}{\ms{center}}\nc{\ce}{\ms{center}}
\nc{\ci}{\circ}
\nc{\code}{\ms{code}}\nc{\cod}{\ms{code}}\nc{\decode}{\ms{decode}}\nc{\encode}{\ms{encode}}
\nc{\de}{:\equiv}
\nc{\dr}{\right}\nc{\ga}{\left}
\nc{\ds}{\displaystyle}
\nc{\ep}{\varepsilon}
\nc{\eq}{\equiv}
\nc{\ev}{\ms{ev}}
\nc{\fib}{\ms{fib}}
\nc{\funext}{\ms{funext}}\nc{\fu}{\ms{funext}}
\nc{\gam}{\gamma}
\nc{\glue}{\ms{glue}}\nc{\gl}{\ms{glue}}
\nc{\happly}{\ms{happly}}\nc{\ha}{\ms{happly}}
\nc{\id}{\ms{id}}
\nc{\ima}{\ms{im}}
\nc{\inc}{\subseteq}
\nc{\ind}{\ms{ind}}
\nc{\inl}{\ms{inl}}
\nc{\inr}{\ms{inr}}
\nc{\isContr}{\ms{isContr}}\nc{\co}{\ms{isContr}}\nc{\iC}{\ms{isContr}}\nc{\ic}{\ms{isContr}}
\nc{\isequiv}{\ms{isequiv}}\nc{\iseq}{\ms{isequiv}}\nc{\ieq}{\ms{isequiv}}
\nc{\ishae}{\ms{ishae}}\nc{\ish}{\ms{ishae}}\nc{\ih}{\ms{ishae}}
\nc{\isProp}{\ms{isProp}}\nc{\prop}{\ms{isProp}}\nc{\iP}{\ms{isProp}}\nc{\ip}{\ms{isProp}}
\nc{\isSet}{\ms{isSet}}\nc{\isS}{\ms{isSet}}\nc{\iss}{\ms{isSet}}\nc{\iS}{\ms{isSet}}\nc{\is}{\ms{isSet}}
\nc{\lam}{\lambda}
\nc{\LEM}{\ms{LEM}}\nc{\lem}{\ms{LEM}}\nc{\LE}{\ms{LEM}}
\nc{\lv}{\lvert}\nc{\rv}{\rvert}\nc{\lV}{\lVert}\nc{\rV}{\rVert}
\nc{\Map}{\ms{Map}}
\nc{\merid}{\ms{merid}}\nc{\meri}{\ms{merid}}\nc{\mer}{\ms{merid}}\nc{\me}{\ms{merid}}
\nc{\N}{\bb N}
\nc{\na}{\ms{nat}}
\nc{\nn}{\noindent}
\nc{\one}{\mb1}
\nc{\oo}{\operatorname}
\nc{\pd}{\prod}
\nc{\ps}{\mc P}
\nc{\pa}{\ms{pair}^=}
\nc{\ph}{\varphi}
\nc{\ppmap}{\ms{ppmap}}
\nc{\pr}{\ms{pr}}
\nc{\Prop}{\ms{Prop}}
\nc{\qinv}{\ms{qinv}}\nc{\qin}{\ms{qinv}}\nc{\qi}{\ms{qinv}}
\nc{\rec}{\ms{rec}}
\nc{\refl}{\ms{refl}}
\nc{\seg}{\ms{seg}}
\nc{\Set}{\ms{Set}}
\nc{\sm}{\scriptstyle}
\nc{\sms}{\ms s}
\nc{\sq}{\square}
\nc{\suc}{\ms{succ}}\nc{\su}{\ms{succ}}
\nc{\tb}{\textbf}
\nc{\then}{\Rightarrow}
\nc{\tms}{\ms t}
\nc{\tx}{\text}
\nc{\transport}{\ms{transport}}\nc{\tr}{\ms{transport}}
\nc{\two}{\mb2}
\nc{\Type}{\text-\ms{Type}}\nc{\type}{\text-\ms{Type}}\nc{\ty}{\text-\ms{Type}}
\nc{\U}{\mc U}
\nc{\ua}{\ms{ua}}
\nc{\uniq}{\ms{uniq}}
\nc{\univalence}{\ms{univalence}}
\nc{\vide}{\varnothing}
\nc{\ws}{\ms{sup}}
\nc{\zero}{\mb0}
\title{\textbf{High Dimensional Time Series Regression Models:\\ Applications to Statistical Learning Methods}\footnote{Part of these lecture notes were prepared during my Ph.D. studies at the Department of Economics, University of Southampton. Moreover, these notes were motivated by discussions and presentations during the weekly sessions of the \textit{Time Series and Machine Learning} Reading Group at the School of Mathematical Sciences, Statistics Division, University of Southampton. The author is grateful to Zudi Lu and Chao Zheng for stimulating discussions. The author is also grateful for stimulating seminar talks and discussions at the \textit{Southampton Statistical Sciences Research Institute} seminar series. This draft was prepared during the academic year 2022-2023 at the Department of Economics, University of Exeter Business School.\\ 

Dr. Christis Katsouris is a Lecturer in Economics, University of Exeter Business School, Exeter EX4 4PU, United Kingdom. \textit{Email Address}: \textcolor{blue}{christiskatsouris@gmail.com} }}
\author{\textbf{Christis Katsouris}\\ Department of Economics, University of Southampton\\ University of Exeter Business School}
\date{\today}
\begin{document}

\maketitle

\begin{abstract}
\vspace*{-0.8 em}
These lecture notes provide an overview of existing methodologies and recent developments for estimation and inference with high dimensional time series regression models. First, we present main limit theory results for high dimensional dependent data which is relevant to covariance matrix structures as well as to dependent time series sequences. Second, we present main aspects of the asymptotic theory related to time series regression models with many covariates. Third, we discuss various applications of statistical learning methodologies for time series analysis purposes. 
\end{abstract}


\maketitle

\newpage 
   
\begin{small}
\begin{spacing}{0.9}
\tableofcontents
\end{spacing}
\end{small}

\newpage

\section{Introduction}

These lecture notes series present a unified framework for recent developments of deep learning methodologies in time series econometrics and statistics. Specifically, statistical machine learning methods and theory are now widespread in social sciences, economics\footnote{The idea of representing the dynamic interactions of economic agents similar to the structure of brain neural networks dates back to \cite{ashby1957introduction}. Furthermore the application of neural networks for modelling purposes dates back to the fundamental works of \cite{hebb1949organization}  and \cite{rosenblatt1958perceptron}.}, finance and mathematical modelling. Neural networks can be considered as a type of linear sieve estimation where the basis functions themselves are flexibly learned from the data by optimizing over many combinations of simple functions (see, \cite{farrell2021deep}). Deep reinforcement learning has gain attention in recent years especially when modelling dynamic interactions of economic agents. In particular deep neural nets contain many hidden layers of neurons between the input and output layers, which has been found to exhibit superior performance across a variety of contexts. Our goal is to consolidate the theoretical and practical underpinnings of machine learning methods when modelling time series data for estimation, inference and forecasting purposes.   

In order to express precise conditions for ergodicity, we turn to the Lyapounov exponent, a concept which is well-known by those studying stability of dynamical systems. Therefore, in the context of nonlinear time series, nonstability means explosive behaviour. Thus, the Lyapounov exponent, as we define it for the state space model $\left\{ X_t \right\}$ of a time series, is given by 
\begin{align}
\gamma = \underset{ n \to \infty }{ \mathsf{lim \ inf} } \ \underset{ \norm{x} \to \infty }{ \mathsf{lim \ sup} } \ \frac{1}{n}
\end{align}
which measures the drift of the process when the sample size is large. Define with 
\begin{align}
\ell_1 &= \left\{ ( x_n )_{n=1}^{ \infty } : \sum_{ n \geq 1 } | x_n | < \infty \right\}    
\\
\ell_2 &= \left\{ ( x_n )_{n=1}^{ \infty } : \sum_{ n \geq 1 }  x_n^2 < \infty \right\}    
\end{align}
The sets $\ell_1, \ell_2$ and $\ell_{ \infty }$ are all vector spaces and it holds that $\norm{ ( x_n ) }_1 = \sum_{ n \geq 1 } | x_n |$, $\norm{ ( x_n ) }_2 = \sum_{ n \geq 1 } \left(  \sum_{ n \geq 1 } x_n^2 \right)^{ 1 / 2 }$ define norms on $\ell_1$ and $\ell_2$ spaces respectively. For any $x \in \mathbb{R}^n$, $\norm{ x } = \sqrt{ \sum_{i=1}^n x_i^2 }$, $\norm{ x }_{ \ell_1 } = \sum_{ i=1 }^n | x_i |$ and $\norm{ x }_{ \infty } = \underset{ 1 \leq i \leq n }{ \mathsf{max} } | x_i |$, denote $\ell_2$, $\ell_1$ and $\ell_{ \infty }-$norms respectively. We also denote with $\norm{ x }_{ \ell_0 } = \sum_{ i=1 }^n \mathbf{1} \left\{ x_i \neq 0 \right\}$ which is simply the number of non-zero entries of $x$. For any $n \times n$ matrix $M$, $\norm{ M }_{  \ell_{ \infty } } =  \mathsf{max}_{ 1 \leq i \leq n } \sum_{ j = 1 }^n | M_{i,j} |$ denotes the induced $\ell_{ \infty }$ matrix norm. Detailed statistical theory for high dimensional statistics applications are presented in \cite{wainwright2019high} (see also \cite{vershynin2018high}). Let $\left\{ X_{t,n} \right\}_{ t = 1}^n$, $n \in \mathbb{N}$ be a random array and $\left\{ \mathcal{F}_t \right\}_{t = - \infty}^{ \infty }$ a filtration such that $X_{t,n}$ is $\mathcal{F}_t-$measurable for all $t$ and $n$. Based on the aforementioned notation we discuss important theoretical results from the statistics and econometrics literature. 

\newpage

\section{Limit theory for High Dimensional Dependent Data}

\subsection{Limit theory of  Covariance Matrices for Linear Processes}

\subsubsection{Covariance and Precision matrix estimation for high dimensional time series}

Following the framework of \cite{Chen2013covariance}, suppose we have $n$ temporally observed $p-$dimensional vectors $\left( \mathbf{z}_i \right)_{ i = 1}^n$ having mean zero and covariance matrix $\Sigma_i = \mathbb{E} \left( \mathbf{z}_i, \mathbf{z}_i \right)$ whose dimension is $p \times p$. Our goal is to estimate the covariance matrices $\Sigma_i$ and their inverses $\Omega_i = \Sigma_i^{-1}$ based on the data matrix $Z_{ p \times n} = \left( \mathbf{z}_1,..., \mathbf{z}_n \right)$. In the classical situation where $p$ is fixed, $n \to \infty$ and $\mathbf{z}_i$ are mean zero independent and identically distributed $\textit{i.i.d}$ random vectors, it is well known that the sample covariance matrix
\begin{align}
\hat{\Sigma}_n = \frac{1}{n} \sum_{ i = 1 }^n \mathbf{z}_i \mathbf{z}_i^{ \top }
\end{align}
is a consistent and well behaved estimator of $\Sigma$, and $\hat{\Omega}_n = \hat{\Sigma}_n^{-1}$ is a natural and good estimator of $\Omega$.

However, when the dimensionality $p$ grows with $n$, random matrix theory asserts that $\hat{\Sigma}_n$ is no longer a consistent estimate of $\Sigma$ in the sense that its eigenvalues do not converge to those of $\Sigma$, as the Marcenko-Pastur law states. Moreover, it is clear that $\hat{\Omega}_n$ is not defined when $\hat{\Sigma}_n$ is not invertible in the high-dimensional case with $p >> n$.  

\begin{itemize}

\item Let $T_u \left( \widehat{\Sigma}_u \right) = \mathbf{Q} \hat{ \Lambda } \mathbf{Q}^{\top} = \sum_{ j = 1}^p \hat{\lambda}_j \mathbf{q}_j \mathbf{q}_j^{\top}$ be its eigen-decomposition, where $\mathbf{Q}$ is an orthonormal matrix and $\hat{ \Lambda }$ is a diagonal matrix. For $v > 0$, consider
\begin{align}
\tilde{S}_v = \sum_{j=1}^p \left( \hat{\lambda}_j \vee v \right) \mathbf{q}_j \mathbf{q}_j^{\top},  
\end{align}
where $0 < v \leq \sqrt{p} \bar{\omega}$ and $\omega^2$ is the rate of convergence. 

\item Let $\mu_1,...,\mu_p$ be the diagonal elements of $\mathbf{Q}^{\top} \Sigma \mathbf{Q}$. Then, by Theorem 2.1 in \cite{Chen2013covariance}, it holds that $\sum_{j=1}^p \left( \hat{\lambda}_j - \mu_j \right)^2 \leq p^2 \bar{\omega}^2$, and consequently
\begin{align*}
\left| \tilde{S}_v - \Sigma \right|^2_F 
\leq 
2 \left| \tilde{S}_v - T_u \left( \hat{\Sigma} \right) \right|^2_F + 2 \left| T_u \left( \hat{\Sigma}_u \right) - \Sigma \right|_F^2 
&\leq 
2 \sum_{j=1}^p \left( \hat{\lambda}_j - \left( \hat{\lambda}_j  \vee v \right) \right)^2 + 2 \bar{ \omega }^2 p^2 
\\
&\leq 
2 \sum_{j=1}^p \left( 2 \hat{\lambda}_j^2 \mathbf{1} \left\{ \hat{\lambda}_j \leq 0 \right\} + 2 v^2 \right) + 2 \bar{ \omega }^2 p^2  .
\end{align*}

\end{itemize}
Furthermore, if $\hat{\lambda}_j \leq 0$, since $\mu_i \geq 0$, we have that $\left| \hat{\lambda}_j \right| \leq \left| \hat{\lambda}_j  - \mu_i \right|$. Then,  
\begin{align}
\left| \tilde{S}_v - \Sigma \right|_F^2 \leq 4 v^2 p + 6 \bar{\omega} p^2 \leq 10 \bar{\omega}^2 p^2. 
\end{align}

\newpage

Notice that the eigenvalues of $\tilde{S}_v$ are bounded below by $v$, and thus it is positive definite such that 
\begin{align*}
v = \left( p^{-1} \sum_{ j,k = 1}^p u^2 \times \mathbf{1} \left\{ \left| \hat{\sigma}_{jk} \right| \geq u \right\} \right)^{1 / 2}
\end{align*}
The same positive-definization procedure also applies to the spectral norm and its rate can be similarly preserved.

\paragraph{Sparsity Conditions}

We begin by presenting the commonly used sparsity condition defined in terms of the strong $\ell^q-$ball such that 
\begin{align}
\mathcal{G}_r \left( \mathcal{M} \right) = \left\{ \Sigma \ \bigg| \ \underset{ j \leq p }{ \text{max} } \sigma_{jj} \leq 1  ;  \underset{ 1 \leq k \leq p }{ \text{max} } \ \sum_{ j = 1 }^p \left| \sigma_{jj} \right|^r \leq \tilde{M} \right\} , 0 \leq r \leq 1. 
\end{align} 
In particular, when $r = 0$, the sparsity condition implies that $\underset{ 1 \leq k \leq p  }{ \text{max} } \sum_{ j = 1 }^p \mathbf{1} \left\{ \sigma_{jk} \neq 0 \right\} \leq \tilde{M}$.

\begin{example}
(Stationary linear process). An important special class is the vector linear process as defined
\begin{align}
\label{linear.process}
\mathbf{z}_i = \sum_{ j = 0 }^{ \infty } A_j \mathbf{e}_{i-j}
\end{align}
where $A_j$ are $p \times p$ matrices, and $\mathbf{e}_{i}$ are \textit{i.i.d} mean zero random vectors with finite covariance matrix $\Sigma_e = \mathbb{E} \left( \mathbf{e}_{i} \mathbf{e}_{i}^{\top} \right)$.  A linear process representation is a filter which ensures the dependence on the innovation sequences. Then, $\mathbf{z}_i$ exists almost surely with covariance matrix $\Sigma = \sum_{ j = 0 }^{ \infty } A_j \Sigma_e A_j^{\top}$ if the latter converges. Assume that the innovation vector $\mathbf{e}_{i} = \left( e_{1i},..., e_{pi} \right)^{\top}$, where $e_{ji}$ are \textit{i.i.d} with mean zero, variance 1 and $e_{ji} \in \mathcal{L}^{2q}, q > 2$, and the coefficient matrices $A_i = \left( a_{i,jk}  \right)_{ 1 \leq j, k \leq p }$ satisfy 
\begin{align}
\text{max}_{ j \leq p } \sum_{k=1}^p a_{i,jk}^2 = \mathcal{O} \left( i^{-2 - 2\gamma } \right), \gamma > 0
\end{align}
\end{example}
Take the AR(1) process, $\mathbf{z}_i = A \mathbf{z}_{i-1} + \mathbf{e}_i$, where $A$ is the real matrix with spectral norm $\rho(A) < 1$, it is of the form as in \eqref{linear.process} with $A_j = A^j$. and the functional dependence measure $\theta_{ i, 2q, j} = \mathcal{O} \left( \rho(A)^i \right)$.

\begin{example}
Consider the estimation framework for the graphical Lasso estimator with off-diagonal entries peranilzed by the 1-norm. In particular, for $\textit{i.i.d}$ $p-$dimensional vectors with polynomial moment condition, it can be shown that $p = \mathcal{O} \left( \left( \frac{n}{d^2} \right)^{\frac{q}{2 \tau} }  \right)$ for some $\tau > 2$, where $d$ is the maximum degree in the Gaussian graphical model, then 
\begin{align}
\frac{1}{ p^2 } \left| \hat{\Omega}_n - \Omega \right| = \mathcal{O}_p \left( \frac{s+p}{ p^2 } . \frac{ p^{2 \tau / q} }{n} \right),
\end{align}
where $s$ is the number of non-zero off-diagonal entries in $\Omega$.

\newpage

We now compare the results with the CLIME (constrained $L_1-$minimization for inverse matrix estimation) method, a non-Lasso type estimator which is estimated as below
\begin{align}
\text{minimize} \ \left| \Theta \right|_1 \ \text{subject to} \ \left| \hat{\Sigma}_n \Theta - I \right|_{ \infty } \leq \lambda_n. 
\end{align} 

Notice that we can also consider the slightly modified version of the graphical Lasso: Let $V = \text{diag} \left( \sigma_{11}^{1 / 2}, ....,    \sigma_{pp}^{1 / 2} \right)$ and $R$ be the correlation matrix with $\hat{V}$ and $\hat{R}$ be the sample counterparts. We estimate
\begin{align}
\Omega = V^{-1} K  V^{-1} \ \ \ \text{by } \ \ \ \hat{\Omega}_{\lambda} = \hat{V}^{-1} \hat{K}_{\lambda} \hat{V}^{-1} 
\end{align}
where
\begin{align}
\hat{K}_{ \lambda } = \underset{ \Phi > 0 }{ \text{arg min} } \big\{ \text{trace} \left( \Phi \hat{R} \right) - \text{log det} \left( \Psi \right) + \lambda \left| \Psi^{-} \right|_1  \big\}. 
\end{align}
\end{example}

Furthermore, the structure of the proposed risk matrix has different properties in contrast to the covariance matrix. For instance, the estimation procedure is not invariant to variable permutations. In contrast the covariance matrix itself has the property of being invariant to variable permutations. Therefore, this property allows the implementation of high dimensional techniques in inference problems. For instance, in the case of the covariance matrix \cite{ledoit2004well} proposed a way to compute an optimal linear combination of the sample covariance with the identity matrix, which also results in shrinkage of eigenvalues. Furthermore, shrinkage estimators are invariant to variable permutations, but they do not affect the eigenvectors of the covariance, only the eigenvalues, and it has been shown that the sample eigenvectors are also not consistent when $p$ is large. Therefore, developing also estimation methods which are invariant to variable permutations is important.        

\begin{definition}
For any sequence $\underline{X}$, the $\phi-$mixing coefficient $\phi_k$ is defined as follows:
\begin{align}
\phi_k ( \underline{X} ) = \mathsf{sup} \bigg\{ P( B | A ) - P(B) | : A \in \sigma_{\ell}, B \in \sigma_{\ell + k}^{ \prime }, \ell \geq 1 \bigg\}.   
\end{align}
\end{definition}

\begin{remark}
Notice that the sub-Gaussianity assumption can be still employed in a time series setting (e.g., see \cite{wong2020lasso}). To do this, we combine the Rini-mixing condition with the sub-Gaussianity assumption which ensures that there is enough weak dependence captured as well as that the vectors of the model have good properties for estimation and inference. 
\end{remark}    

\begin{lemma}
Let $Z_i$ be $\textit{i.i.d}$  $\mathcal{N} \left( \mathbf{0}, \Sigma_p  \right)$ and $\lambda_{ \text{max} } \left( \Sigma_p \right) \leq \bar{k} < \infty$. Then if $\Sigma_p = [ \sigma_{ab} ]$, 
\begin{align}
\mathbb{P} \left[ \left| \sum_{i=1}^n \left( Z_{ij} Z_{ik} - \sigma_{jk} \right) \right| \geq  n \nu \right] \leq c_1 \text{exp} \left( - c_2 n \nu^2 \right) \ \ \ \text{for} \ \ \ | \nu | \leq \delta
\end{align}
where $c_1, c_2$ and $\delta$ depend on $\bar{k}$ only. 
\end{lemma}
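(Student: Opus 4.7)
The plan is to establish the tail bound via a Chernoff argument applied to the moment generating function of the bivariate Gaussian product $Z_{ij}Z_{ik}$, exploiting the fact that such products are sub-exponential with parameters controlled uniformly through $\bar{k}$.

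First I would fix indices $j,k$ and reduce to the bivariate marginal $(Z_{ij},Z_{ik}) \sim \mathcal{N}(\mathbf{0}, \Sigma_2)$, where $\Sigma_2$ is the corresponding $2\times 2$ submatrix of $\Sigma_p$. Since $\lambda_{\max}(\Sigma_2) \leq \lambda_{\max}(\Sigma_p) \leq \bar{k}$, the variances $\sigma_{jj},\sigma_{kk}$ and, by Cauchy--Schwarz, the covariance $\sigma_{jk}$ are all bounded by $\bar{k}$. I would then apply the polarization identity
\begin{align*}
Z_{ij}Z_{ik} = \tfrac{1}{4}\left[(Z_{ij}+Z_{ik})^2 - (Z_{ij}-Z_{ik})^2\right],
\end{align*}
so that $Z_{ij}Z_{ik} - \sigma_{jk}$ is a centered linear combination of two chi-squared variables (with one degree of freedom, scaled by the variances of $Z_{ij}\pm Z_{ik}$, each bounded by $2\bar{k}$).

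Next I would compute the moment generating function. For a centered chi-squared random variable $\chi^2_1 - 1$ one has $\mathbb{E}\exp\{t(\chi^2_1-1)\} = e^{-t}/\sqrt{1-2t}$ for $|t|<1/2$, and a Taylor expansion around $t=0$ gives the sub-exponential bound $\exp(C t^2)$ for $|t|$ below some threshold depending only on $\bar{k}$. Writing $Y_i := Z_{ij}Z_{ik} - \sigma_{jk}$, I would combine the MGFs of the two scaled chi-squared components (using Cauchy--Schwarz on the cross term, since the two summands in the polarization are not independent) to obtain
\begin{align*}
\mathbb{E}\exp(t Y_i) \leq \exp(C_1 \bar{k}^2 t^2), \qquad |t| \leq 1/(C_2\bar{k}),
\end{align*}
for absolute constants $C_1, C_2$. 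By independence of the $Z_i$ across $i$, the MGF of $S_n := \sum_{i=1}^n Y_i$ factorizes and inherits the same bound with an extra factor of $n$ in the exponent.

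Finally I would apply the standard Chernoff inequality: for any $t>0$ in the admissible range,
\begin{align*}
\mathbb{P}(S_n \geq n\nu) \leq \exp(-tn\nu)\,\mathbb{E}\exp(tS_n) \leq \exp\left(nC_1\bar{k}^2 t^2 - tn\nu\right),
\end{align*}
and optimize over $t = \nu/(2C_1\bar{k}^2)$, which is admissible provided $|\nu| \leq \delta := C_2^{-1}C_1^{-1}\bar{k}^{-1}/2$. This yields the bound $\exp(-c_2 n\nu^2)$ with $c_2 = 1/(4C_1\bar{k}^2)$; a symmetric argument on $-S_n$ and a union bound over the two tails gives the constant $c_1 = 2$. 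The main technical obstacle is controlling the MGF of the polarization cross term, because the two squared summands are not independent; the cleanest route is to apply Cauchy--Schwarz on the joint MGF, $\mathbb{E}\exp\{t(aA^2 - bB^2)\} \leq \sqrt{\mathbb{E}\exp(2taA^2)\,\mathbb{E}\exp(-2tbB^2)}$, reducing everything to scalar chi-squared MGFs with variances bounded by $2\bar{k}$, which is precisely where the restriction $|\nu|\leq\delta$ arises.
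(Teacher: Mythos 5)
The paper states this lemma without proof --- it is quoted as a known concentration result from the covariance-estimation literature (it is essentially Lemma~A.3 of Bickel and Levina's work on regularized covariance estimation) --- so there is no in-text argument to compare yours against. Your proposal is correct and follows the standard route for proving it: the polarization identity reduces the Gaussian product to a difference of two scaled $\chi^2_1$ variables whose variances are bounded by $2\bar{k}$ via $e^{\top}\Sigma_2 e \leq \lambda_{\max}(\Sigma_p)\lVert e\rVert^2$, the Cauchy--Schwarz step on the joint MGF is a legitimate way to handle the dependence between the two squares, and the Chernoff optimization over the restricted MGF domain is exactly what produces a threshold $\delta$ and constants $c_1, c_2$ depending only on $\bar{k}$. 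The one piece of bookkeeping worth tightening is the final choice of $\delta$: what matters is that the optimizer $t^{\ast}=\nu/(2C_1\bar{k}^2)$ stays in the admissible range $|t|\leq 1/(C_2\bar{k})$, which holds for $|\nu|\leq 2C_1\bar{k}/C_2$; your stated $\delta$ is smaller than necessary but still valid, since any sufficiently small $\delta$ depending only on $\bar{k}$ suffices.
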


\newpage

\subsubsection{Main Results on Probability Bounds}

Following the framework of \cite{Chen2013covariance} we focus in how to establish the convergence theory for covariance matrix estimates, we shall use the functional dependence measure of Wu (2005). Recall that $Z_{ji} = g_j ( \mathcal{F}_i ), 1 \leq j \leq p$, where $g_j(.)$ is the $j$th coordinate projection of the $\mathbb{R}^p-$valued measurable function $\mathbf{g}$. For $w > 0$, the functional dependence measure of $Z_{ji}$ is defined by 
\begin{align}
\theta_{ i, w , j } 
= \norm{ Z_{ji} - Z_{ji}^{\prime} }_w = \left( \mathbb{E} \left| Z_{ji} - Z_{ji}^{\prime} \right|^w \right)^{1 / w},
\end{align}
where $Z_{ji}^{\prime} = g_j ( \mathcal{F}_i^{\prime} ), \mathcal{F}_i^{\prime}  = \left( ..., \mathbf{e}_{-1}, \mathbf{e}_{0}^{\prime},      \mathbf{e}_{1},..., \mathbf{e}_{i} \right)$ and $\mathbf{e}_{0}^{\prime}$ and $\mathbf{e}_{0}^{\prime}$, $\mathbf{e}_{\ell}$, $\ell \in \mathbb{Z}$, are \textit{i.i.d}. In other words, $Z_{ji}^{\prime}$ is a coupled version of $Z_{ji}$ with $\mathbf{e}_{0}$ in the latter replaced by an \textit{i.i.d} copy of $\mathbf{e}_{0}^{\prime}$.   

\paragraph{Proof of Theorem 2.1}
(\cite{Chen2013covariance}) We first assume that $\alpha > 1/2 - 1/q$. Notice that 
\begin{align*}
\mathbb{E} \left| T_u \left( \hat{ \Sigma }_u \right) - \Sigma \right|^2_F 
&= 
\sum_{ j,k = 1 }^p \mathbb{E} \left[ \hat{\sigma}_{jk} \mathbf{1} \left\{  \left| \hat{\sigma}_{jk} \right| \geq u  \right\}  - \sigma_{jk}   \right]^2
\leq 
2 \sum_{ j,k = 1 }^p  \mathbb{E} \left( W_{jk}^2 \right) + 2 B( u / 2),
\\
W_{jk} &= \hat{\sigma}_{jk} \mathbf{1} \left\{  \left| \hat{\sigma}_{jk} \right| \geq u  \right\} - \sigma_{jk} \mathbf{1} \left\{  \left| \sigma_{jk} \right| \geq u / 2  \right\}  
\\
B(u) &= \sum_{ j,k = 1 }^p  \sigma_{jk}^2 \mathbf{1} \left\{  \left| \sigma_{jk} \right| < u  \right\}.
\end{align*}
Since the functional dependence measure for the product process $\left(  Z_{ji} Z_{ki} \right)_i$ has a bounded inequality, under the decay condition $\Theta_{ m, 2q } \leq C m^{- \alpha }, \alpha > 1/2 - 1/q$. we have by Theorem 2(ii) in Wu (2013) that 
\begin{align}
\mathbb{P} \left( \left| \xi_{jk} \right| > v \right) \leq \frac{C_2 n}{ ( nv )^q } + C_3 e^{ - C_4 nv^2}
\end{align}
holds for all $v > 0$. Using integration by parts we obtain
\begin{align}
\mathbb{E} \left[ \xi_{jk}^2 \mathbf{1} \left\{ \left| \xi_{jk} \right| > v  \right\} \right] = v^2 \mathbb{P} \left(  \left| \xi_{jk} \right| > v \right) + \int_{ v^2 }^{ \infty } \mathbb{P} \left(  \left| \xi_{jk} \right| > \sqrt{w} \right) dw.  
\end{align}

\medskip

\begin{remark}
Note that sparsity assumptions in the literature of high dimensional covariance matrices are usually imposed with respect to the inverse of the covariance matrix. In other words, if the $(j,k)-$th component of $\Sigma^{-1}$ is zero, then the variables $Z_j$ and $Z_k$ are partially uncorrelated, given the other variables. More specifically, the current way of defining sparsity in high dimensional models based on an underline covariance structure is to consider the notion of partial correlation as a measure of conditional independence (or dependence) in graphical based models. On the other hand, \cite{katsouris2023statistical}, defines a novel tail dependency matrix in which case the absence of a link between two nodes on the network implies the conditional tail independence\footnote{Associate Professor Sebastian Engelke (University of Geneva) gave a seminar with title: "Causality for extreme values" at the S3RI Departmental Seminar Series at the University of Southampton, on the 9th of December 2021.} (see, also \cite{engelke2020graphical}).
\end{remark}

\newpage 

Moreover, some useful limit results are also presented in the paper of \cite{xiao2013asymptotic}. In particular, the sample covariance between columns $x_i$ and $x_j$ is defined as 
\begin{align}
\hat{\sigma}_{ij} = \frac{1}{n} \left( x_i - \bar{x}_i \right)^{\top} \left( x_i - \bar{x}_i \right). 
\end{align}  

In high-dimensional covariance inference, a fundamental problem is to establish an asymptotic distributional theory for the maximum deviation
\begin{align}
M_n = \underset{ 1 \leq i \leq j \leq m }{ \text{max} } \left|  \hat{\sigma}_{ij} - \sigma_{ij} \right|.
\end{align}

\paragraph{Proofs of (iii) and (iv)}

The first step is the truncation step. 

\begin{enumerate}

\item[\text{Step 1}.] (Truncation Step)
We truncate $X_{n,k,i}$ by 
\begin{align}
\tilde{X}_{n,k,i} = X_{n,k,i} I \left\{ \left| X_{n,k,i} \right| \leq n^{1 / 4} \big/ \text{log} (n) \right\}
\end{align}

Define $\tilde{M}_n$ similarly as $M_n$ with $X_{n,k,i}$ being replaced by its truncated version $\tilde{X}_{n,k,i}$,  
\begin{align}
\mathbb{P} \left( \tilde{M}_n \neq M_n \right) \leq nm \mathcal{M}_n (p) n^{ - p / 4} \left( \text{log} (n) \right)^p \leq C \mathcal{M}_n (p) n^{ - \delta / 4} \left( \text{log} (n) \right)^p = o(1).
\end{align}
Furthermore, since the asymptotics are not affected for notational simplicity, we still use $\tilde{X}_{n,k,i}$ to denote its centered version with mean zero. 

Define $\tilde{\sigma}_{n,i,j} = \mathbb{E} \left( \tilde{X}_{n,k,i} \tilde{X}_{n,k,j} \right)$ and $\tilde{ \tau }_{ n,i,j } = \text{Var} \left( \tilde{X}_{n,k,i} \tilde{X}_{n,k,j} \right)$. Furthermore, denote with 
\begin{align}
M_{n,1} &= \underset{ 1 \leq i < j \leq m }{ \text{max} } \frac{1}{ \sqrt{ \tilde{\tau}_{n,i,j}  } } \left| \frac{1}{n} \sum_{k=1}^n      \tilde{X}_{n,k,i} \tilde{X}_{n,k,j} - \tilde{\sigma}_{n,i,j}  \right|
\\
M_{n,2} &= \underset{ 1 \leq i < j \leq m }{ \text{max} } \frac{1}{ \sqrt{ \tilde{\tau}_{n,i,j}  } } \left| \frac{1}{n} \sum_{k=1}^n \tilde{X}_{n,k,i} \tilde{X}_{n,k,j} - \sigma_{n,i,j}  \right|
\end{align}
Simple calculations are given by 
\begin{align}
&\underset{ 1 \leq i \leq j \leq m }{ \text{max} } \left| \tilde{\sigma}_{n,i,j} - \sigma_{n,i,j}  \right| \leq Cn^{ - (p-2) / 4} \left( \text{log} (n) \right)^p, 
\\
&\underset{ \alpha, \beta \in \mathcal{I}_n }{ \text{max} } \left|   \text{Cov} \left( \tilde{X}_{n,\alpha} \tilde{X}_{n,\beta}  \right)  - \text{Cov} \left(  X_{n,\alpha} X_{n,\beta} \right) \right| \leq Cn^{ - (p-2) / 4} \left( \text{log} (n) \right)^p, 
\end{align}

\item[\text{Step 2}.] (Effect of estimated means) Set $\bar{X}_{n,i} = \frac{1}{n} \sum_{k=1}^n \tilde{X}_{ n,k,i}$. Define with 
\begin{align}
M_{n,3} = \underset{ 1 \leq i \leq j \leq m }{ \text{max} } \frac{1}{ \sqrt{ \tilde{\tau}_{n,i,j}  } } \left| \frac{1}{n} \sum_{k=1}^n \left( \tilde{X}_{n,k,i} - \bar{X}_{n,i} \right) \left( \tilde{X}_{n,k,j} - \bar{X}_{n,j} \right) - \sigma_{n,i,j}       \right|.
\end{align}

\newpage 

Furthermore, observe that 
\begin{align}
\left| M_{n,3} - M_{n,2} \right| \leq \underset{ 1 \leq i \leq j \leq m }{ \text{max} } \frac{ \left| \bar{X}_{n,i} \bar{X}_{n,j}   \right| }{ \sqrt{ \tilde{\tau}_{n,i,j}  } } \leq \underset{ 1 \leq i \leq m }{ \text{max} } \left| \bar{X}_{n,i} \right|^2 . \left(   \underset{ 1 \leq i \leq j \leq m }{ \text{min} } \tilde{\tau}_{n,i,j} \right)^{- 1/ 2}. 
\end{align}

Furthermore, using Bernestein's inequality we can show that 
\begin{align}
\underset{ 1 \leq i \leq m }{ \text{max} } \left| \bar{X}_{n,i}  \right| = \mathcal{O}_p \left( \sqrt{ \frac{\text{log}(n)}{n}  }  \right),
\end{align}
which in together with the previous result implies that 
\begin{align}
\left| M_{n,3} - M_{n,2} \right| = \mathcal{O}_p \left( \frac{\text{log}(n)}{n} \right).
\end{align}

\item[\text{Step 3}.] (Effect of estimated variances)

Denote by $\check{ \sigma }_{n,i,j}$ the estimate of $\tilde{ \sigma }_{n,i,j}$ 
\begin{align}
\check{ \sigma }_{n,i,j} = \frac{1}{n} \sum_{ k = 1 }^n \left( \tilde{X}_{n,k,i} - \bar{X}_{n,i} \right) \left( \tilde{X}_{n,k,j} - \bar{X}_{n,j} \right).
\end{align}
Thus, since in the definition of $\tilde{M}_n$, $\tilde{\tau}_{ n,i,j}$ is unknown, and is estimated by 
\begin{align}
\check{ \tau }_{ n,i,j }  = \frac{1}{n} \sum_{ k = 1 }^n \left[ \left( \tilde{X}_{n,k,i} - \bar{X}_{n,i} \right) \left( \tilde{X}_{n,k,j} - \bar{X}_{n,j} \right) - \check{ \sigma }_{n,i,j} \right]^2. 
\end{align}
Therefore, in order to show that the exponential limit holds for $\tilde{M}_n$, it suffices to verify that 
\begin{align}
\underset{ 1 \leq i \leq m }{ \text{max} }  \left| \check{\tau}_{n,i,j} - \tilde{\tau}_{n,i,j} \right| = o_p \left( 1 / \text{log} (n) \right).
\end{align}
Set 
\begin{align}
\check{\tau}_{n,i,j,1} 
&= 
\frac{1}{n} \sum_{k=1}^n \left[ \left( \tilde{X}_{n,k,i} - \bar{X}_{n,i} \right) \left( \tilde{X}_{n,k,j} - \bar{X}_{n,j} \right) - \tilde{ \sigma }_{n,i,j} \right]^2,
\\
\check{\tau}_{n,i,j,2} 
&= 
\frac{1}{n} \sum_{k=1}^n \left( \tilde{X}_{n,k,i} \tilde{X}_{n,k,j} - \tilde{ \sigma }_{n,i,j} \right).
\end{align}
Using the probability limit, we know that 
\begin{align}
\underset{ 1 \leq i \leq j \leq m }{ \text{max} } \left| \check{\tau}_{n,i,j,1} - \check{\tau}_{n,i,j} \right| = \mathcal{O}_p \left( \text{log}(n) / n \right). 
\end{align}

\newpage

Since
\begin{align}
\left( \tilde{X}_{n,k,i} \tilde{X}_{n,k,j} - \tilde{\sigma}_{n,i,j}  \right)^2 \leq 64 n / \left( \text{log}(n) \right)^4. 
\end{align}
We have that 
\begin{align*}
\underset{ 1 \leq i \leq j \leq m }{ \text{max} } \mathbb{P} \left( \left| \check{\tau}_{n,i,j,2} - \tilde{\tau}_{n,i,j} \right| \geq \left( \text{log}(n) \right)^{-2} \right) 
&\leq 
\left[ \frac{ Cn }{ n \left( \text{log}(n) \right)^{-2} . \left[ n \left( \text{log}(n) \right)^{-3} \right]^{ q \wedge 1 } } \right]^{\text{log}(n)}
\\
&\leq  \left[ \frac{ C \left( \text{log}(n) \right)^5 }{ n^{ q \wedge 1} }   \right]^{\text{log}(n)} 
\end{align*}

Then, it follows that 
\begin{align}
\underset{ 1 \leq i \leq j \leq m }{ \text{max} } \left| \check{\tau}_{n,i,j,2} - \tilde{\tau}_{n,i,j} \right| = \mathcal{O}_p \left[ \left( \text{log}(n) \right)^{-2} \right].
\end{align}
Thus, it remains to prove that 
\begin{align}
\underset{ 1 \leq i \leq j \leq m }{ \text{max} } \left| \check{\tau}_{n,i,j,1} - \check{\tau}_{n,i,j,2} \right|  
\end{align}

\end{enumerate}

\paragraph{A normal comparison principle}

Suppose that for each $n \geq 1$, $\left( X_{n,i} \right)_{ i \in \mathcal{I}_n }$ is a Gaussian random vector whose entries have mean zero and variance one, where $\mathcal{I}_n$ is an index set with cardinality $\left| \mathcal{I}_n \right| = s_n$. Let $\mathcal{\Sigma}_n \left( r_{n,i,j} \right)_{ i,j \in \mathcal{I}_n }$ be the covariance matrix of $\left( X_{n,i} \right)_{ i \in \mathcal{I}_n }$. Assume that $s_n \to \infty$ as $n \to \infty$. We impose either of the following two conditions.

\begin{enumerate}

\item[(B1)] For any sequence $( b_n )$ such that $b_n \to \infty$, $\gamma_n ( n, b_n ) = o \left( 1 / \text{log} (b_n) \right)$ and 
\begin{align}
\underset{ n \to \infty }{ \text{lim sup} } \ \gamma_n < 1. 
\end{align}

\item[(B2)] For any sequence $( b_n )$ such that $b_n \to \infty$, $\gamma_n ( n, b_n ) = o \left( 1 \right)$ and 
\begin{align}
\sum_{ i \neq j \in \mathcal{I}_n } r^2_{n,i,j} = O \left( s_n^{2 - \delta } \right) \ \ \text{for some} \ \ \delta > 0 \ \ \ \text{and} \ \  \underset{ n \to \infty }{ \text{lim sup} } \ \gamma_n < 1
\end{align}
where 
\begin{align}
\gamma( n, b_n ) := \underset{ i \in \mathcal{I}_n }{ \text{sup} } \ \underset{ \mathcal{A} \subset \mathcal{I}_n, | \mathcal{A} | = b_n }{ \text{sup} } \ \underset{ j \in \mathcal{I}_n }{ \text{inf} } \left| r_{n,i,j} \right|, \ \ \ \gamma_n := \underset{ i, j \in \mathcal{I}_n, i \neq j }{ \text{sup} } \left| r_{n,i,j} \right|. 
\end{align}

\end{enumerate}

\newpage

\subsubsection{Limiting Laws of Coherence of Random Matrices with Applications}

Furthermore, following the framework of \cite{cai2011limiting}, we present some useful lemmas below. 

\begin{lemma}
Let $h_i = \norm{ x_i - \bar{x}_i } / \sqrt{n}$ for each $i$. Then, 
\begin{align}
\norm{ n \Gamma_n - X_n^{\top} X_n } \leq \left( b_{n,1}^2 + 2 b_{n,1}   \right) W_n b_{n,3}^{-2} + n b_{n,3}^{-2} b_{n,4}^2,  
\end{align}
where 
\begin{align*}
b_{n,1} &= \underset{ 1 \leq i \leq p }{ \text{max} } | h_i - 1 |, \ \ \ \ W_n = \underset{ 1 \leq i < j \leq p }{ \text{max} } \left| x_i^{\top} x_j \right|, 
\\
b_{n,3} &= \underset{ 1 \leq i \leq p }{ \text{min} } h_i, \ \ \ b_{n,4} = \underset{ 1 \leq i \leq p }{ \text{max} } \left| \bar{x}_i \right|. 
\end{align*}
\end{lemma}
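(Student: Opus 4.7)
The plan is to derive the bound entrywise, since the right-hand side naturally arises from a pointwise estimate on each entry $(i,j)$ of the difference matrix. This matches the coherence framework of \cite{cai2011limiting}, where the norm in question for $n\Gamma_n - X_n^T X_n$ is the off-diagonal maximum-entry norm, and $W_n$ naturally controls the off-diagonal uncentered inner products $x_i^T x_j$.

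First, by the definition of the sample correlation matrix, I would write, for each pair $(i,j)$,
\begin{align*}
(n\Gamma_n)_{ij} = \frac{(x_i - \bar{x}_i \mathbf{1})^T (x_j - \bar{x}_j \mathbf{1})}{h_i h_j},
\end{align*}
where $\mathbf{1}$ denotes the all-ones vector and $\bar{x}_i$ is the sample mean of the column $x_i$. Expanding the centered inner product gives $(x_i - \bar{x}_i \mathbf{1})^T (x_j - \bar{x}_j \mathbf{1}) = x_i^T x_j - n \bar{x}_i \bar{x}_j$, so subtracting $(X_n^T X_n)_{ij} = x_i^T x_j$ produces the key identity
\begin{align*}
(n\Gamma_n - X_n^T X_n)_{ij} = \frac{x_i^T x_j \, (1 - h_i h_j) - n \bar{x}_i \bar{x}_j}{h_i h_j}.
\end{align*}
This cleanly isolates two sources of error: a multiplicative correction coming from the normalization $h_i h_j \neq 1$, and an additive correction coming from the mean centering.

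Second, I would control $|1 - h_i h_j|$ by setting $\alpha_i := h_i - 1$, so that $|\alpha_i| \leq b_{n,1}$ for every $i$. Expansion yields $h_i h_j - 1 = \alpha_i + \alpha_j + \alpha_i \alpha_j$, and the triangle inequality gives
\begin{align*}
|1 - h_i h_j| \leq |\alpha_i| + |\alpha_j| + |\alpha_i| |\alpha_j| \leq 2 b_{n,1} + b_{n,1}^2.
\end{align*}
Combining this with $h_i h_j \geq b_{n,3}^2$, the off-diagonal control $|x_i^T x_j| \leq W_n$, and $|\bar{x}_i \bar{x}_j| \leq b_{n,4}^2$ yields the uniform entrywise estimate
\begin{align*}
\bigl| (n\Gamma_n - X_n^T X_n)_{ij} \bigr| \leq (b_{n,1}^2 + 2 b_{n,1}) W_n b_{n,3}^{-2} + n b_{n,4}^2 b_{n,3}^{-2},
\end{align*}
and maximizing over off-diagonal pairs delivers the stated bound.

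The main obstacle, I expect, is organizational rather than technical: one must expand $h_i h_j - 1$ symmetrically in $\alpha_i, \alpha_j$ with the cross term $\alpha_i \alpha_j$ retained separately in order to land on exactly the coefficient $b_{n,1}^2 + 2 b_{n,1}$; asymmetric factorizations such as $(1 - h_i) h_j + (1 - h_j)$ yield a strictly looser constant. Once this decomposition is in hand, the remaining ingredients, namely the lower bound $h_i h_j \geq b_{n,3}^2$, the elementary estimate $|\bar{x}_i \bar{x}_j| \leq b_{n,4}^2$, and the reduction from the entrywise bound to the off-diagonal matrix norm, are routine bookkeeping.
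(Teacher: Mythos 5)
Your proof is correct and is essentially the canonical argument for this lemma: the paper itself states the result without giving a proof (deferring to the cited Cai--Jiang framework), and the standard derivation is exactly your entrywise decomposition --- write $(n\Gamma_n - X_n^{\top}X_n)_{ij} = \big( x_i^{\top}x_j(1-h_ih_j) - n\bar{x}_i\bar{x}_j \big)/(h_ih_j)$, bound $|1-h_ih_j| \leq 2b_{n,1}+b_{n,1}^2$ and $h_ih_j \geq b_{n,3}^2$, and maximize over off-diagonal pairs. One minor aside: the asymmetric split $1-h_ih_j = (1-h_i)h_j + (1-h_j)$ together with $h_j \leq 1+b_{n,1}$ also yields the constant $2b_{n,1}+b_{n,1}^2$, so your closing remark that such factorizations are strictly looser is not actually needed (nor correct), though it does not affect the proof.
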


\begin{lemma}
Let $\xi, 1 \leq i \leq n$, be independent random variables with $\mathbb{E} \xi_i = 0$. Set 
\begin{align}
s_n^2 = \sum_{ i = 1}^n \mathbb{E} \xi_i^2 , \ \ \ \rho_n^2  \sum_{ i = 1}^n \mathbb{E} \left| \xi_i \right|^3, \ \ \ S_n = \sum_{ i = 1}^n \xi. 
\end{align}
Assume that $\underset{ 1 \leq i \leq n }{ \text{max} } \left| \xi_i \right| \leq c_n s_n$ for some $0 < c_n \leq 1$. Then, 
\begin{align}
\mathbb{P} \left( S_n > x s_n \right) = e^{\gamma( x / s_n ) } \left( 1 = \Phi(x) \right) \left( 1 + \theta_{n,x} (1 + ) s_n^{-3} \rho_n  \right)
\end{align}
for $0 < x \leq 1 / ( 18 c_n )$, where $\left| \gamma(x) \right| \leq 2 x^3 \rho_n$ and $\left| \theta_{n,x} \right| \leq 36$.  
\end{lemma}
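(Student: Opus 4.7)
The statement is a Cramér-type moderate deviation result; the natural route is the classical conjugate (Esscher) change of measure, combined with a Berry--Esseen/Edgeworth-type control for the tilted sum. First I would introduce the cumulant generating function of each summand, $\psi_i(t) := \log\mathbb{E} e^{t\xi_i}$, and set $\Psi_n(t) := \sum_{i=1}^n \psi_i(t)$. For a parameter $x$ in the prescribed range I would select the tilt $t = t_x$ by the saddle-point equation $\Psi_n'(t) = x s_n$. Because $|\xi_i| \leq c_n s_n$ and $t$ will turn out to satisfy $t s_n \leq 1/6$ (say), the moment generating functions are analytic and each $\psi_i$ admits a convergent Taylor expansion $\psi_i(t) = \tfrac12 t^2 \mathbb{E}\xi_i^2 + r_i(t)$, with $|r_i(t)| \leq C |t|^3 \mathbb{E}|\xi_i|^3$ uniformly in the allowed range of $t$.

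Next I would perform the Esscher transform: defining $d\mathbb{P}^{(t)}/d\mathbb{P} = e^{t S_n - \Psi_n(t)}$, one has the exact identity
\begin{align}
\mathbb{P}(S_n > x s_n) = e^{\Psi_n(t) - t x s_n} \, \mathbb{E}^{(t)}\!\left[ e^{-t(S_n - x s_n)} \mathbf{1}\{S_n > x s_n\} \right].
\end{align}
Under $\mathbb{P}^{(t)}$ the variables $\xi_i$ remain independent with mean $\psi_i'(t)$ and variance $\psi_i''(t)$, so $S_n$ has $\mathbb{P}^{(t)}$-mean exactly $x s_n$ (by choice of $t$) and $\mathbb{P}^{(t)}$-variance $\sigma_{n,t}^2 = \Psi_n''(t) = s_n^2 (1 + O(t \rho_n^3 / s_n^2))$. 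Taylor-expanding $\Psi_n(t) - t x s_n$ to third order around $t=0$ (using $\Psi_n'(0)=0$, $\Psi_n''(0)=s_n^2$ and the bound on $r_i$) yields the Gaussian factor $-x^2/2$ plus a remainder of size at most $2 x^3 \rho_n^3/s_n^3$; this remainder is exactly the exponent $\gamma(x/s_n)$ claimed in the lemma, after rescaling.

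The remaining factor $\mathbb{E}^{(t)}[e^{-t(S_n - xs_n)} \mathbf{1}\{S_n > x s_n\}]$ I would reduce to $(1-\Phi(x))$ up to a multiplicative $(1 + O((1+x)\rho_n/s_n^3))$ error. The standard way is to write it as an integral against the distribution of the centered tilted sum $(S_n - xs_n)/\sigma_{n,t}$ and apply a Berry--Esseen bound: the third absolute moment of $\xi_i - \psi_i'(t)$ is $O(\mathbb{E}|\xi_i|^3)$, so the Kolmogorov distance between this law and the standard normal is $O(\rho_n^3/s_n^3)$. Integrating $e^{-t u}$ against the normal density gives $(t\sigma_{n,t})^{-1}\phi(t\sigma_{n,t})(1+o(1))$, which is exactly $(1-\Phi(x))$ by Mills' ratio and the identification $t \sigma_{n,t} \approx x$; the $(1+x)$ factor in the error arises from the Mills-ratio comparison on the tail.

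The main obstacle, in my view, is carrying out the third step cleanly: the Berry--Esseen argument has to be applied to $e^{-t(S_n-xs_n)} \mathbf{1}\{S_n > xs_n\}$, where the weight is unbounded and concentrated on a tail. The standard workaround is to split the integral at $S_n - xs_n \in [0, K \sigma_{n,t}]$ for a suitable $K$, use Berry--Esseen on the compact part, and dominate the far tail using Chebyshev under $\mathbb{P}^{(t)}$; tracking constants through this split is what produces the explicit bound $|\theta_{n,x}| \leq 36$ and the restriction $x \leq 1/(18 c_n)$ (which guarantees $t c_n s_n \leq 1/18$ so that all Taylor remainders and exponential bounds control cleanly).
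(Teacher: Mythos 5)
The paper does not actually prove this lemma: it is stated verbatim (with several typographical slips, e.g.\ $(1=\Phi(x))$ for $(1-\Phi(x))$ and a missing equals sign in the definition of $\rho_n$) as an imported result from \cite{cai2011limiting}, where it is in turn a restatement of Sakhanenko's Cram\'er-type moderate deviation theorem, so there is no in-paper argument to compare yours against. Your proposal is the standard conjugate (Esscher) change-of-measure proof, which is precisely how the result is established in the source literature, and the sketch is structurally sound: the saddle-point choice of $t$, the exact tilted-expectation identity, the third-order Taylor control of $\Psi_n(t)-txs_n$ producing $\gamma$, and the Berry--Esseen reduction of $\mathbb{E}^{(t)}\bigl[e^{-t(S_n-xs_n)}\mathbf{1}\{S_n>xs_n\}\bigr]$ to a Gaussian integral are all the right ingredients. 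The one place your outline glosses over a real piece of work is the final identification: the Gaussian computation yields $e^{t^2\sigma_{n,t}^2/2}\bigl(1-\Phi(t\sigma_{n,t})\bigr)$ rather than $1-\Phi(x)$ directly, and since $t\sigma_{n,t}$ only approximates $x$ one must show the discrepancy in both the exponent and the Mills-ratio factor can be absorbed into $\gamma(x/s_n)$ and the $(1+x)s_n^{-3}\rho_n$ error term; this, together with the tail-splitting you mention, is where the explicit constants $36$ and $1/(18c_n)$ are actually earned, and a complete proof would need to carry those estimates out rather than assert them.
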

Therefore, we have that 
\begin{align}
\mathbb{P} \left( W_n \leq a_n \right) \leq e^{ - \lambda_n } + b_{1,n} + b_{2,n}, 
\end{align}
Suppose we consider an $\mathbb{R}^d-$valued time series process $\left\{ \underline{X}_t, t \in \mathbb{Z} \right\}$ with $\underline{X}_t = \left( X_{1,t},...., X_{d,t} \right)^{\top}$, and we have data $\underline{X}_1,..., \underline{X}_n$ at hand to use for estimation and inference purposes.

\medskip

\begin{corollary}
The process $\left\{ \underline{X}_t, t \in \mathbb{Z} \right\}$ is assumed to be strictly stationary and its $(d \times d)$ autocovariance matrix $\mathbf{C}(h) = \left( C_{ij} (h) \right)_{i,j = 1,...,d}$ at lag $h \in \mathbb{Z}$ is    
\begin{align}
\mathbf{C}(h) = \mathbb{E} \left[ \left( \underline{X}_{t + h} - \underline{\mu} \right) \left( \underline{X}_{t + h} - \underline{\mu} \right)^{\top} \right],
\end{align}
where $\underline{\mu} = \mathbb{E} \left[ \underline{X}_{t} \right]$, and the sample autocovariance is denoted with $\hat{ \mathbf{C} }(h) = \left( \hat{C}_{ij}(h) \right)_{i,j = 1,...,d}$  at lag $| h | < n$ and can be defined accordingly.
\end{corollary}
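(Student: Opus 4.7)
The statement is largely definitional in character, so my ``proof'' would really be a verification that the objects on the right-hand side are well-defined and that a natural sample counterpart exists, together with a sketch of conditions under which $\hat{\mathbf{C}}(h)$ is consistent for $\mathbf{C}(h)$. First, I would invoke strict stationarity: the joint distribution of $(\underline{X}_s, \underline{X}_{s+h})$ is translation-invariant in $s$, so $\mathbb{E}[(\underline{X}_{t+h} - \underline{\mu})(\underline{X}_t - \underline{\mu})^{\top}]$ is independent of $t$, which justifies writing $\mathbf{C}(h)$ with no $t$ argument. Existence of the expectation requires $\mathbb{E}\|\underline{X}_0\|^2 < \infty$, which I would either assume as a standing moment condition or append as a hypothesis. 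Symmetry properties such as $\mathbf{C}(-h) = \mathbf{C}(h)^{\top}$ then follow directly by exchanging the roles of $t+h$ and $t$ under stationarity.

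Second, I would propose the natural sample analogue
\begin{align}
\hat{\mathbf{C}}(h) = \frac{1}{n}\sum_{t=1}^{n-h} \left( \underline{X}_{t+h} - \bar{\underline{X}}_n \right)\left( \underline{X}_{t} - \bar{\underline{X}}_n \right)^{\top}, \quad 0 \le h < n,
\end{align}
with $\bar{\underline{X}}_n = n^{-1}\sum_{t=1}^n \underline{X}_t$ and $\hat{\mathbf{C}}(-h) := \hat{\mathbf{C}}(h)^{\top}$. I would verify that each entry $\hat{C}_{ij}(h)$ is measurable and well-defined whenever $|h| < n$, and note the standard bias issue: the factor $1/n$ (rather than $1/(n-h)$) yields a biased but non-negative-definite estimator, which is typically preferred in spectral applications.

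Third, to promote this from a definition to a genuine corollary, I would state and prove entrywise consistency $\hat{C}_{ij}(h) \to C_{ij}(h)$ in probability as $n \to \infty$. The cleanest route is to combine ergodicity of $\{\underline{X}_t\}$ with Birkhoff's ergodic theorem applied to the stationary sequence $\{(X_{i,t+h} - \mu_i)(X_{j,t} - \mu_j)\}_{t\in\mathbb{Z}}$, with the estimated mean $\bar{\underline{X}}_n$ handled by Slutsky's lemma after showing $\bar{\underline{X}}_n \to \underline{\mu}$. Alternatively, if one prefers to stay within the framework of the earlier sections, I would impose a polynomial decay on the functional dependence measure $\theta_{t,2,j}$ of Wu, which by the same arguments used in the proof of Theorem~2.1 of Chen (2013) yields a rate $\hat{C}_{ij}(h) - C_{ij}(h) = \mathcal{O}_p(n^{-1/2})$ uniformly in $|h| \le h_n$ for slowly growing $h_n$.

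The main obstacle is not the definitional content but the uniform-in-$(i,j,h)$ control needed if one wants the estimator to be useful in the high-dimensional regime considered earlier: one must bound $\max_{i,j,|h|\le h_n} |\hat{C}_{ij}(h) - C_{ij}(h)|$, and this requires either sub-Gaussian tails together with a $\phi$-mixing or Rini-mixing condition, or an $\mathcal{L}^{2q}$ moment bound combined with Wu's functional dependence measure, exactly as in the bounds of the preceding subsection. Without such structure, the corollary is only a stationarity-based well-definedness statement and has no further content to prove.
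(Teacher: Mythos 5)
The paper offers no proof of this ``corollary'' at all: it is stated purely as a definition of the population and sample autocovariance matrices, sandwiched between the lemmas borrowed from the coherence-of-random-matrices literature, and nothing further is derived from it in that subsection. Your reading of the statement as essentially definitional is therefore exactly right, and your verification --- translation invariance of the joint law of $(\underline{X}_s,\underline{X}_{s+h})$ under strict stationarity to justify dropping the $t$ argument, the standing moment condition $\mathbb{E}\lVert \underline{X}_0\rVert^2 < \infty$ for existence, the symmetry $\mathbf{C}(-h) = \mathbf{C}(h)^{\top}$, and the explicit sample analogue with the $1/n$ normalization --- supplies content the paper leaves implicit. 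Two remarks. First, you have silently corrected what is almost certainly a typo in the displayed formula: as printed, both factors carry the index $t+h$, which would make $\mathbf{C}(h)$ independent of $h$ and equal to the marginal covariance matrix; your version with one factor at $t$ and one at $t+h$ is the intended definition and you should say explicitly that you are amending the statement. Second, your consistency sketch is correct but note that it rests on hypotheses strictly stronger than what the corollary assumes: strict stationarity alone does not give ergodicity, so the Birkhoff route needs ergodicity as an added assumption (which you do flag), and the uniform-in-$(i,j,h)$ control you describe for the high-dimensional regime requires the mixing or functional-dependence structure of the earlier subsections rather than following from the corollary itself. Since the paper proves nothing here, none of this can conflict with its argument; your proposal is a sound and more complete treatment than what the text provides.
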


\newpage

\subsubsection{Large deviations for quadratic forms}

Large deviations for quadratic forms of stationary processes have been extensively studied in the literature. In particular, we need an upper bound for the tail probability under less restrictive conditions. In this section, we follow the framework proposed by \cite{xiao2012covariance}. More specifically, we prove a result on probabilities of large deviations of quadratic forms of stationary processes, which take the form
\begin{align}
Q_T = \sum_{ 1 \leq s \leq t \leq T} a_{s,t} X_s X_t. 
\end{align}
The coefficients $a_{s,t} = a_{T,s,t}$ may depend on $T$. Throughout this section we assume that sup$_{s,t} \left| a_{s,t} \right| \leq 1$, and $a_{s,t} = 0$ when $| s - t | > B_T$, where $B_T \to \infty$, and $B_T = \mathcal{O} ( T^{\gamma} )$ for some $0 < \gamma < 1$. 

\medskip

\begin{remark}
Notice that large deviations for quadratic forms of stationary processes have been extensively examined in the literature, which include: (i) large deviations principle for Gaussian processes, (ii) functional large deviation principle, (iii) moderate deviations principle, (iv) moderate deviation principles for quadratic forms of Gaussian processes and (v) moderate deviations of periodograms of linear processes as well as Cramer-type moderate deviation for spectral density estimates of Gaussian processes.  
\end{remark}

\begin{theorem}
Assume that $X_t \in \mathcal{L}^p, p > 4$, $\mathbb{E} X_t = 0$, and $\Theta_p (m) = \mathcal{O} ( m^{-a} )$. Set $c_p = ( p + 4) e^{p / 4} \Theta_4^2$. For any $M > 1$, let $x_t = 2 c_p \sqrt{ TM B_T \text{log} B_T }$. Assume that $B_T \to \infty$ and $B_T = \mathcal{O} \left( T^{\gamma} \right)$ for some $0 < \gamma < 1$. Then, for any $\gamma < \beta < 1$, there exists a constant $C_{p, M, \beta } > 0$ such that the following holds
\begin{align*}
\mathbb{P} \big( \left| \mathbb{E}_0 Q_T \right| \geq x_T \big) \leq C_{p,M, \beta } x_T^{- p / 2} \left( logT \right) \big[ ( T B_T )^{p/ 4} T^{- \alpha \beta p / 2} + T B_T^{p/2 -1 - \alpha \beta p / 2} + T   \big]  +  C_{p, M, \beta} B_T^{-M}. 
\end{align*}
\end{theorem}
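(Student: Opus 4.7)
The plan is to combine a martingale-difference decomposition with a two-scale truncation of the dependence. Introduce the projection operator $\mathcal{P}_k Y := \mathbb{E}[Y \mid \mathcal{F}_k] - \mathbb{E}[Y \mid \mathcal{F}_{k-1}]$, so that $X_t - \mathbb{E} X_t = \sum_{j \leq t} \mathcal{P}_j X_t$ with $\|\mathcal{P}_j X_t\|_p$ controlled by the functional dependence measure $\theta_{t-j,p}$, hence by $\Theta_p(m) = O(m^{-\alpha})$. After centering, write $\mathbb{E}_0 Q_T = \sum_{k=1}^T D_k$ with $D_k := \mathcal{P}_k Q_T$ a martingale difference for $(\mathcal{F}_k)$. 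This reduces the quadratic-form large deviation to a tail bound for a sum of martingale differences, where Burkholder--Rosenthal-type estimates apply cleanly, and the bandwidth constraint $a_{s,t} = 0$ for $|s-t| > B_T$ will keep the $D_k$'s essentially locally dependent.

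First I would truncate dependence at scale $L_T \asymp T^\beta$, with $\gamma < \beta < 1$ chosen so that $L_T$ dominates the bandwidth $B_T = O(T^\gamma)$ but remains strictly subpolynomial. The ``far'' contribution (projections with $t - j > L_T$) is estimated directly in $\mathcal{L}^p$ using $\Theta_p(m) = O(m^{-\alpha})$ and Markov's inequality; calibrating the truncation against $M$ and $\alpha\beta$ produces the residual $C_{p,M,\beta} B_T^{-M}$. For the ``near'' part, each $X_t$ depends (up to negligible error) only on the last $L_T$ innovations, so within a band $|s-t| \leq B_T$ the product $a_{s,t} X_s X_t$ becomes $(L_T + B_T)$-dependent; this makes most cross products $\mathbb{E}[D_k D_\ell \mid \mathcal{F}_{k-1}]$ vanish and localizes the conditional-variance calculation.

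Second I would apply a Burkholder--Rosenthal inequality to the truncated sum, yielding
\begin{align*}
\bigl\| \mathbb{E}_0 Q_T \bigr\|_p \leq C_p \Bigl\| \sum_k \mathbb{E}[D_k^2 \mid \mathcal{F}_{k-1}] \Bigr\|_{p/2}^{1/2} + C_p \Bigl( \sum_k \|D_k\|_p^p \Bigr)^{1/p}.
\end{align*}
Using the decomposition $\|\mathcal{P}_j (X_s X_t)\|_{p/2} \lesssim \Theta_4 (\theta_{s-j, p} + \theta_{t-j, p})$, the conditional-variance piece is of order $TB_T \cdot \Theta_4^2$, which (after tracking constants in Rosenthal's inequality) gives rise to the Gaussian scale $\sqrt{T M B_T \log B_T}$ and, via Markov at level $x_T$, to the prefactor $x_T^{-p/2}$ together with the $(TB_T)^{p/4} T^{-\alpha\beta p/2}$ summand. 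The heavy-tail Rosenthal remainder, combined with the tail estimate $\sum_{m > L_T} \theta_{m,p}^p = O(L_T^{1-\alpha p})$, delivers the $T B_T^{p/2 - 1 - \alpha\beta p/2}$ term, while a dyadic blocking over $k$ used to pass from a pointwise tail bound to a uniform one produces both the $\log T$ prefactor and the extra additive $T$ inside the bracket.

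The main obstacle will be the bookkeeping that ties $\alpha$, $\beta$, $\gamma$, and $p$ together: the exponent $\alpha \beta p / 2$ materializes only if one truncates dependence at exactly $L_T = T^\beta$ and then balances the Rosenthal remainder between near coefficients ($m \leq L_T$) and tail coefficients ($m > L_T$) so that the tail matches the $B_T^{-M}$ term. A secondary difficulty is the sharpness of constants: obtaining $c_p = (p+4) e^{p/4} \Theta_4^2$ in the definition of $x_T$ forbids generic Burkholder constants and forces a sharp Rosenthal-type inequality for stationary sequences with explicit $p$-dependence, which is precisely where the $(p+4)$ and $e^{p/4}$ factors enter. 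All remaining steps --- centering, diagonal-versus-off-diagonal split, and the Markov inequality at level $x_T$ --- are routine once the martingale approximation and the Rosenthal estimate have been set up with the correct truncation scale.
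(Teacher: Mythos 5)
Your proposal follows essentially the same route as the paper's (itself only sketched) argument from Xiao and Wu: an $m$-dependent truncation of the dependence at a polynomial scale tied to $\beta$, a blocking of $[1,T]$ into blocks of length comparable to that scale, and Rosenthal/Nagaev-type moment bounds combined with Markov's inequality at level $x_T$, with the truncation error calibrated against $M$ to produce the $B_T^{-M}$ remainder. The only cosmetic differences are that you organize the argument around the martingale projections $\mathcal{P}_k$ before truncating rather than first substituting the $m$-dependent approximations $\tilde{X}_t = \mathcal{H}_{t-m_T}X_t$ into $Q_T$, and that you truncate at $T^{\beta}$ where the paper writes $m_T = \lfloor T^{\sqrt{\beta}} \rfloor$ (your choice is the one consistent with the exponent $\alpha\beta p/2$ in the stated bound).
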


\begin{proof}
Without loss of generality, assume $B_T \leq T^{\gamma}$. For $\gamma < \beta < 1$, let $m_T = \floor{ T^{ \sqrt{\beta} } }$, $\tilde{X}_T = \mathcal{H}_{t - m_T } X_t$ and 
\begin{align}
Q_T = \sum_{ 1 \leq s \leq t \leq T} a_{s,t} X_s X_t. 
\end{align}

We have that 
\begin{align}
\mathbb{P} \left[ \left| \mathbb{E}_0 \left( Q_T - \tilde{Q}_T \right)  \right| \geq c_p M^{1 / 2} \sqrt{ T B_T \left( \text{log} B_T \right) } \right]
\end{align}

Split $[1, T]$ into blocks $\mathcal{B}_1,..., \mathcal{B}_{b_T}$ of size $2 m_T$, and define
\begin{align}
Q_T = \sum_{ t \in \mathcal{B}_k } \sum_{ 1 \leq s \leq t } a_{s,t} \tilde{X}_s \tilde{X}_t. 
\end{align}

\end{proof}

\newpage 

\subsubsection{Asymmetry Helps}

In this section, we present the main results obtained by \cite{chen2021asymmetry}. We being by considering the low-rank matrix completion of \cite{chen2021asymmetry}.

\paragraph{Low-rank matrix completion} Suppose that $\mathbf{M}$ is generated using random partial entries of $\mathbf{M}^{\star}$ as 
\begin{align}
M_{ij} = 
\begin{cases}
\frac{1}{p} M_{ij}^{\star} & \text{with probability} \ p,  
\\
0  & \text{else},
\end{cases}
\end{align}
where $p$ denotes the fraction of the entries of $\mathbf{M}^{\star}$ being revealed. We have that $\mathbf{H} = \mathbf{M} - \mathbf{M}^{\star}$ is zero-mean and obeys $\left| H_{ij}  \right| \leq \frac{ \mu }{ np } := B$ and Var$\left( H_{ij} \right) \leq \frac{ \mu^2 }{ p n^2 }$. For instance, in the case where $p \approx \frac{  \mu^2 \text{log} (n) }{ n }$, then invoking Corollaries 1-3 of \cite{chen2021asymmetry} yields
\begin{align}
\frac{ \left| \lambda - \lambda^{\star} \right| }{  \left| \lambda^{\star} \right|  } \leq \frac{1}{ \sqrt{n }} \sqrt{ \frac{ \mu^3 \text{log}(n) }{ pn }   }, 
\end{align} 
which gives that 
\begin{align}
\text{min} \big\{ \norm{ \mathbf{u} - \mathbf{u} }_{ \infty },  \norm{ \mathbf{u} + \mathbf{u} }_{ \infty }   \big\} \leq  \frac{1}{ \sqrt{n }} \sqrt{ \frac{ \mu^3 \text{log}(n) }{ pn }   }
\end{align}
with high probability, where $\mathbf{a} \in \mathbf{R}^n$ is any fixed unit vector. Empirically, eigen-decomposition outperforms SVD in estimating both the leading eigenvalue and eigenvector of $\mathbf{M}^{\star}$

\paragraph{Why asymmetry helps?} According to\cite{chen2021asymmetry}, if we consider their Theorem 3, focusing on the case with $\lambda^{\star} = 1$ for simplicity, then the key ingredient is the Neumann trick stated in Theorem 2. Specifically, in the rank-1 case we can expand
\begin{align}
\mathbf{u} = \frac{1}{ \lambda } \left( \mathbf{u}^{\star \top}     \mathbf{u} \right) \sum_{s=0}^{ \infty } \frac{1}{ \lambda^s } \mathbf{H}^s \mathbf{u}^{\star}. 
\end{align}
We obtain that 
\begin{align}
\left| \mathbf{a}^{\top} \left( \mathbf{u} - \frac{ \mathbf{u}^{\star \top} \mathbf{u} }{ \lambda } \mathbf{u}^{\star}  \right)  \right|
= 
\left| \frac{ \mathbf{u}^{\star \top} \mathbf{u} }{ \lambda }   \sum_{s=1}^{\infty} \frac{ \mathbf{a}^{\top} \mathbf{H}^s \mathbf{u}^{ \star } }{ \lambda^s } \right| \leq \sum_{s=1}^{\infty } \left| \frac{ \mathbf{a}^{\top} \mathbf{H}^s \mathbf{u}^{ \star } }{ \lambda^s } \right|
\end{align}
where the last inequality holds since (i) $\left| \mathbf{u}^{ \star \top}  \mathbf{u} \right| \leq 1$, and (ii) $\lambda$ is real-valued and obeys $\lambda \approx 1$ if $ \norm{ \mathbf{H}  } << 1$. As a result, the perturbation can be well controlled as long as $\left| \mathbf{a}^{\top} \mathbf{H}^s \mathbf{u}^{\star}  \right|$ is small for every $s \leq 1$.

\newpage

\begin{itemize}
\item (Asymmetric case) When $\mathbf{H}$ is composed of independent zero-mean entries each with variance $\sigma_n^2$, we obtain that 
\begin{align}
\mathbb{E} \left[ \mathbf{a}^{\top} \mathbf{H}^2 \mathbf{u}^{\star}    \right]  = \mathbf{a}^{\top}  \mathbb{E} \left[ \mathbf{H}^2  \right] \mathbf{u}^{\star}   = \mathbf{a}^{\top} \left( \sigma^2 \mathbf{I} \right) \mathbf{u}^{\star} = \sigma^2 \mathbf{a}^{\top} \mathbf{u}^{\star}
\end{align}

\item (Symmetric case) When $\mathbf{H}$ is symmetric and its upper triangular part consists of independent zero-mean entries with variance $\sigma_n^2$, it holds that  
\begin{align}
\mathbb{E} \left[ \mathbf{a}^{\top} \mathbf{H}^2 \mathbf{u}^{\star}    \right]  = \mathbf{a}^{\top}  \mathbb{E} \left[ \mathbf{H}^2  \right] \mathbf{u}^{\star}   = \mathbf{a}^{\top} \left( n \sigma^2 \mathbf{I} \right) \mathbf{u}^{\star} 
= 
n\sigma^2 \mathbf{a}^{\top} \mathbf{u}^{\star}. 
\end{align}
\end{itemize}

\paragraph{Pertrubation Analysis for the rank-r case}

\paragraph{Eigenvalue perturbation for the rank-r case} The eigenvalue perturbation analysis can be extended to accommodate the case where $\mathbf{M}^{\star}$ is symmetric and rank$-r$. As before, we assume that the $r$ nonzero eigenvalues of  $\mathbf{M}^{\star}$ obey that $| \lambda_1 | \geq ... \geq | \lambda_r |$ 

\begin{theorem}[\cite{chen2021asymmetry}]
\label{rank-r}
(Perturbation of linear forms of eigenvectors (rank-r)) Consider a rank$-r$ symmetric matrix $\mathbf{M}^{\star} \in \mathbb{R}^{n \times n}$ with incoherence parameter $\mu$. Suppose that 
\begin{align}
\frac{ \sigma \sqrt{ n \text{ log} (n) } , B \text{ log} (n)}{ \lambda^{\star}_{ \text{max} } } \leq \frac{ c_1 }{ \kappa }
\end{align}
\end{theorem}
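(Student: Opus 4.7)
The plan is to extend the rank-$1$ Neumann-trick argument already used in the paper to the rank-$r$ setting, so as to produce a sharp bound on $\mathbf{a}^{\top}\mathbf{u}_l$ for any fixed unit vector $\mathbf{a}$ and any $l\le r$. Writing the eigen-equation in the form $(\lambda_l \mathbf{I} - \mathbf{H})\mathbf{u}_l = \mathbf{M}^{\star}\mathbf{u}_l$, and noting that the hypothesis (combined with matrix Bernstein) forces $\|\mathbf{H}\|\le |\lambda_l|/2$, I would invert the left-hand side as a Neumann series and substitute the spectral decomposition $\mathbf{M}^{\star} = \sum_{k=1}^{r}\lambda_k^{\star}\mathbf{u}_k^{\star}\mathbf{u}_k^{\star\top}$ to obtain the generalized expansion
\begin{align*}
\mathbf{u}_l \;=\; \sum_{k=1}^{r} \frac{\lambda_k^{\star}\,(\mathbf{u}_k^{\star\top}\mathbf{u}_l)}{\lambda_l}\sum_{s=0}^{\infty}\frac{1}{\lambda_l^{s}}\,\mathbf{H}^{s}\mathbf{u}_k^{\star}.
\end{align*}
Taking the inner product with $\mathbf{a}$ and isolating the $s=0$ terms identifies the natural ``rotation'' $\sum_k c_{kl}\mathbf{u}_k^{\star}$ one must subtract, so the task reduces to controlling $|\mathbf{a}^{\top}\mathbf{H}^{s}\mathbf{u}_k^{\star}|$ for $s\ge 1$ and each $k\le r$.

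For each such tail term I would invoke the Neumann-trick lemma (Theorem 2) together with the asymmetry observation recorded earlier in the section: expanding $\mathbf{H}^{s}$ as a sum over length-$s$ index walks and iteratively taking conditional expectations along each walk yields, with high probability,
\begin{align*}
|\mathbf{a}^{\top}\mathbf{H}^{s}\mathbf{u}_k^{\star}| \;\lesssim\; \big(\sigma\sqrt{\log n}\big)^{s} + \big(B\log n\big)^{s},
\end{align*}
which is a factor $n^{s/2}$ sharper than the trivial bound $\|\mathbf{H}\|^{s}$ used in the symmetric case. Summing the geometric series via the decay factor $(|\lambda_k^{\star}|/|\lambda_l|)^{s}\le \kappa^{s}$ and exploiting the smallness hypothesis $\sigma\sqrt{n\log n}/\lambda_{\max}^{\star}\le c_1/\kappa$ to dominate $\kappa^{s}$ then produces a clean estimate of the expected order $\sigma\sqrt{\mu r\log n}/\lambda_{\max}^{\star}$ (up to $\kappa$-dependent factors) for the perturbation of any linear form of the eigenvectors.

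The main obstacle will be the simultaneous control of the cross-coefficients $\mathbf{u}_k^{\star\top}\mathbf{u}_l$ for $k\ne l$, which are themselves unknown perturbation quantities and multiply tail contributions that one would like to estimate independently. I would resolve this by the standard leave-one-out decoupling: introduce auxiliary matrices $\mathbf{M}^{(j)}$ obtained by zeroing the $j$-th row and column of $\mathbf{H}$, show that the eigenvectors of $\mathbf{M}^{(j)}$ are statistically independent of the $j$-th row/column and $\ell_2$-close to those of $\mathbf{M}$, and then bootstrap these two facts into an $\ell_{\infty}$/incoherence control of $\mathbf{u}_l$ itself. This is precisely where the incoherence parameter $\mu$ and the condition number $\kappa$ enter non-trivially, and orchestrating the bootstrap when $r>1$ (because eigenvector mass can migrate between indices across iterations) is the technical heart of the argument. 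Once this is done, invoking the rank-$r$ eigenvalue perturbation statement of \cite{chen2021asymmetry} to replace $\lambda_l$ by $\lambda_l^{\star}$ in the denominators yields the rank-$r$ analogue of Corollaries 1--3 quoted earlier in this section.
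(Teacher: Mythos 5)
Your skeleton --- invert $\bigl(\mathbf{I}-\lambda_{\ell}^{-1}\mathbf{H}\bigr)$ as a Neumann series, substitute $\mathbf{M}^{\star}=\sum_{k}\lambda_k^{\star}\mathbf{u}_k^{\star}\mathbf{u}_k^{\star\top}$, and reduce everything to the tail terms $\mathbf{a}^{\top}\mathbf{H}^{s}\mathbf{u}_k^{\star}$ for $s\ge 1$ --- is exactly the paper's route. But there are two concrete problems with how you propose to finish. First, the per-term bound you claim, $|\mathbf{a}^{\top}\mathbf{H}^{s}\mathbf{u}_k^{\star}|\lesssim(\sigma\sqrt{\log n})^{s}+(B\log n)^{s}$, is not what the moment-method/combinatorial argument (the expansion of $(\mathbf{a}^{\top}\mathbf{H}^{s}\mathbf{u}^{\star})^{k}$ over index walks) actually delivers for $s\ge 2$, and it omits the incoherence prefactor entirely. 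The correct statement, which the paper imports as Corollary 4 of \cite{chen2021asymmetry}, is
\begin{align*}
\bigl|\mathbf{a}^{\top}\mathbf{H}^{s}\mathbf{u}_j^{\star}\bigr|\;\le\;\Bigl(2c_2\kappa\,\mathrm{max}\bigl\{B\log n,\;\sigma\sqrt{n\log n}\bigr\}\Bigr)^{s}\sqrt{\tfrac{\mu}{n}},
\end{align*}
i.e.\ the gain over the trivial bound $\|\mathbf{H}\|^{s}$ is a single factor $\sqrt{\mu/n}$ coming from $\|\mathbf{u}_j^{\star}\|_{\infty}\le\sqrt{\mu/n}$, not a factor $n^{s/2}$. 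This matters: it is precisely this $\sqrt{\mu/n}$, together with the $\sqrt{r}$ discussed next, that produces the $\sqrt{\mu r/n}$ in the theorem's conclusion, which your version of the estimate cannot reproduce.

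Second, you have mislocated the technical heart. The cross-coefficients $\mathbf{u}_k^{\star\top}\mathbf{u}_{\ell}$ require no leave-one-out decoupling and no bootstrap: the $\mathbf{u}_k^{\star}$ are deterministic and orthonormal, so $\sum_{j=1}^{r}|\mathbf{u}_j^{\star\top}\mathbf{u}_{\ell}|^{2}\le\|\mathbf{u}_{\ell}\|_2^2=1$, and Cauchy--Schwarz gives $\sum_{j=1}^{r}|\mathbf{u}_j^{\star\top}\mathbf{u}_{\ell}|\le\sqrt{r}$. One then pulls out $\mathrm{max}_{1\le j\le r}\sum_{s\ge 1}|\lambda_{\ell}|^{-s}|\mathbf{a}^{\top}\mathbf{H}^{s}\mathbf{u}_j^{\star}|$ and the ratios $|\lambda_j^{\star}|/|\lambda_j|\le\kappa/|\lambda_{\ell}|\cdot|\lambda_{\ell}|\le$ (a $\kappa$-factor controlled by Lemma 2), and sums the geometric series under the smallness hypothesis. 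The entire point of the asymmetry/Neumann-trick framework of \cite{chen2021asymmetry} is that the fixed vectors $\mathbf{a}$ and $\mathbf{u}_j^{\star}$ in $\mathbf{a}^{\top}\mathbf{H}^{s}\mathbf{u}_j^{\star}$ make the statistical dependence issue, and hence leave-one-out, unnecessary. Building the leave-one-out machinery would not be wrong in principle, but it is a substantially heavier detour that answers a question the decomposition never asks, while the genuinely delicate step --- the moment bound on $\mathbf{a}^{\top}\mathbf{H}^{s}\mathbf{u}_j^{\star}$ with the correct $\sqrt{\mu/n}$ scaling --- is the one you have glossed over with an over-optimistic estimate.
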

for some sufficiently small constants $c_1 > 0$. Then for any fixed unit vector $\mathbf{a} \in \mathbb{R}^n$ and any $1 \leq \ell \leq r$, with probability at least $1 - \mathcal{O} \left( n^{-10} \right)$ one has that 
\begin{align}
\left| \mathbf{a}^{\top} \left( \mathbf{u}_{\ell} - \sum_{ j = 1}^{r} \frac{ \lambda_j^{\star } \mathbf{u}_j^{ \star \top } \mathbf{u}_{ \ell } }{ \lambda_j } \mathbf{u}_j^{ \star } \right) \right| 
&\leq 
\text{max} \left\{ \sigma \sqrt{n \text{log} (n) } , B \text{log}(n)  \right\} \frac{ \kappa }{ \left|  \lambda_{\ell } \right| } \sqrt{ \frac{  \mu r }{ n }  }
\\
& \leq \frac{ \text{max} \left\{ \sigma \sqrt{n \text{log} (n) } , B \text{log}(n)  \right\}  }{ \lambda_{ \text{max} }^{\star } }  \kappa^2  \sqrt{ \frac{  \mu r }{ n }  }
\end{align}
The particular condition allows to control the perturbation of the linear form of eigenvectors. Thus, the perturbation upper bound grows as either the rank $r$ or the condition number $\kappa$ increases. 
\begin{corollary}[\cite{chen2021asymmetry}]
Consider the $\ell-$th eigenvalue $\lambda_{\ell}$ for $\ell \in \left\{ 1,..., r \right\}$ of the matrix $\mathbf{M}$. Under the assumptions of theorem 4, with probability at least $1 - \mathcal{O} \left( n^{-10} \right)$, there exists $1 \leq j \leq r$ 
\begin{align}
\left| \lambda_{\ell} - \lambda_{j}^{\star} \right| \leq \text{max} \left\{ \sigma \sqrt{n \text{log} (n) } , B \text{log}(n)  \right\} \kappa r \sqrt{ \frac{  \mu }{ n }  }, \ \ \frac{ \text{max} \left\{ \sigma \sqrt{n \text{log} (n) } , B \text{log}(n)  \right\}  }{ \lambda_{ \text{max} }^{\star } }  \leq c1 / \kappa^2 
\end{align}
for sufficiently small constant $c_1 > 0$. 
\end{corollary}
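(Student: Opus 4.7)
The plan is to combine the exact projection of the eigenvalue equation for $\mathbf{M}$ onto the population eigenvectors with the linear-form perturbation bound of Theorem~\ref{rank-r}.

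\textbf{Step 1 (key identity).} Writing $\mathbf{M} = \mathbf{M}^{\star} + \mathbf{H}$ and substituting into $\mathbf{M}\mathbf{u}_{\ell} = \lambda_{\ell} \mathbf{u}_{\ell}$, then taking the inner product with $\mathbf{u}_{j}^{\star}$ for an arbitrary $j \in \{1,\ldots,r\}$, gives the \emph{exact} identity
\[
(\lambda_{\ell} - \lambda_{j}^{\star})\,\langle \mathbf{u}_{j}^{\star},\, \mathbf{u}_{\ell}\rangle \;=\; \langle \mathbf{u}_{j}^{\star},\, \mathbf{H}\mathbf{u}_{\ell}\rangle.
\]
All remaining work reduces to (i) choosing $j$ so that the left-hand overlap is not too small, and (ii) upper bounding the right-hand bilinear form.

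\textbf{Step 2 (pigeonhole on the overlap).} Theorem~\ref{rank-r} shows that $\mathbf{u}_{\ell}$ is approximately contained in the span of $\mathbf{u}_{1}^{\star},\ldots,\mathbf{u}_{r}^{\star}$. Concretely, applying Theorem~\ref{rank-r} with $\mathbf{a}$ ranging over an orthonormal basis of the orthogonal complement of that span and summing squared bounds yields $\sum_{j=1}^{r} \langle \mathbf{u}_{j}^{\star}, \mathbf{u}_{\ell}\rangle^{2} \geq 1 - o(1)$ under the assumed smallness condition. By pigeonhole there exists $j_{\star} \in \{1,\ldots,r\}$ with $|\langle \mathbf{u}_{j_{\star}}^{\star}, \mathbf{u}_{\ell}\rangle| \geq 1/\sqrt{2r}$, and this $j_{\star}$ is our candidate.

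\textbf{Step 3 (bilinear form).} I would expand $\mathbf{u}_{\ell}$ via the Neumann-type approximation furnished by Theorem~\ref{rank-r}, $\mathbf{u}_{\ell} = \sum_{k=1}^{r} (\lambda_{k}^{\star} \langle \mathbf{u}_{k}^{\star}, \mathbf{u}_{\ell}\rangle / \lambda_{k})\, \mathbf{u}_{k}^{\star} + \mathbf{r}_{\ell}$, so that
\[
\langle \mathbf{u}_{j_{\star}}^{\star},\, \mathbf{H}\mathbf{u}_{\ell}\rangle
= \sum_{k=1}^{r} \frac{\lambda_{k}^{\star}\,\langle \mathbf{u}_{k}^{\star}, \mathbf{u}_{\ell}\rangle}{\lambda_{k}}\, \langle \mathbf{u}_{j_{\star}}^{\star},\, \mathbf{H}\mathbf{u}_{k}^{\star}\rangle \;+\; \langle \mathbf{u}_{j_{\star}}^{\star},\, \mathbf{H}\mathbf{r}_{\ell}\rangle.
\]
Each $\langle \mathbf{u}_{j_{\star}}^{\star}, \mathbf{H}\mathbf{u}_{k}^{\star}\rangle$ is a bilinear form in $\mathbf{H}$ with \emph{deterministic} vectors; using the incoherence bound $\|\mathbf{u}_{k}^{\star}\|_{\infty} \leq \sqrt{\mu/n}$, a matrix Bernstein estimate controls it by a quantity of order $\max\{\sigma\sqrt{\log n},\, B\mu\log(n)/n\}$ with probability at least $1-\mathcal{O}(n^{-10})$. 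The coefficient magnitudes satisfy $|\lambda_{k}^{\star}/\lambda_{k}| = \mathcal{O}(\kappa)$, and summing over $k=1,\ldots,r$ accumulates a factor $\kappa r$. Dividing by the lower bound from Step~2 then produces the claimed scaling $\kappa r \sqrt{\mu/n}\cdot \max\{\sigma\sqrt{n\log n}, B\log n\}$ after collecting the $\sqrt{r}$ from pigeonhole with the sum.

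\textbf{Main obstacle.} The delicate point is the residual $\langle \mathbf{u}_{j_{\star}}^{\star}, \mathbf{H}\mathbf{r}_{\ell}\rangle$: Theorem~\ref{rank-r} only controls \emph{linear forms} of $\mathbf{r}_{\ell}$, not its $\ell_{2}$ norm, so the naive estimate $\|\mathbf{H}\|\cdot \|\mathbf{r}_{\ell}\|_{2}$ loses an extra $\sqrt{n}$ factor. The plan is to expand $\mathbf{r}_{\ell}$ by one additional round of the Neumann series (as in the proof of Theorem~\ref{rank-r}), so that $\mathbf{H}\mathbf{r}_{\ell}$ is rewritten as $\mathbf{H}^{2}(\cdot)$ acting on a deterministic vector, and then invoke the same leave-one-out decoupling that underlies the asymmetry-based analysis of \cite{chen2021asymmetry}. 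It is precisely at this higher-order step that the hypothesis $\max\{\sigma\sqrt{n\log n}, B\log n\}/\lambda_{\max}^{\star} \leq c_{1}/\kappa^{2}$ is used, because it makes the tail of the Neumann series summable and ensures that the residual is dominated by the leading $\kappa r$ contribution in Step~3.
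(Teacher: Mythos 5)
Your overall plan is workable in outline, but it both contains a step that fails as stated and re-derives from scratch the hardest part of a result you are allowed to use. The intended argument is a two-line specialization of Theorem~\ref{rank-r}: take $\mathbf{a}=\mathbf{u}_j^{\star}$ there. Because the $\mathbf{u}_k^{\star}$ are orthonormal, the sum inside the absolute value collapses to the single term $k=j$, and the left-hand side becomes exactly $\left|\mathbf{u}_j^{\star\top}\mathbf{u}_{\ell}\right|\cdot\left|1-\lambda_j^{\star}/\lambda_{\ell}\right| = \left|\mathbf{u}_j^{\star\top}\mathbf{u}_{\ell}\right|\,|\lambda_{\ell}-\lambda_j^{\star}|/|\lambda_{\ell}|$, which is (up to the factor $|\lambda_\ell|$ absorbed by $|\lambda_\ell|\gtrsim 1/\kappa$) your Step~1 identity with the bilinear form $\langle\mathbf{u}_j^{\star},\mathbf{H}\mathbf{u}_{\ell}\rangle$ \emph{already bounded} by the right-hand side of the theorem. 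Your entire Step~3 — the Bernstein estimate for $\langle\mathbf{u}_{j_{\star}}^{\star},\mathbf{H}\mathbf{u}_k^{\star}\rangle$, the Neumann re-expansion of the residual, and the leave-one-out decoupling you flag as the "main obstacle" — is therefore redundant: it amounts to re-proving Theorem~\ref{rank-r} for the particular choice $\mathbf{a}=\mathbf{u}_{j}^{\star}$, and you leave precisely the hard residual term unproven.

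The concrete gap is in Step~2. Applying Theorem~\ref{rank-r} with $\mathbf{a}$ ranging over an orthonormal basis of the orthogonal complement of $\mathrm{span}\{\mathbf{u}_1^{\star},\dots,\mathbf{u}_r^{\star}\}$ and summing the squared bounds gives
\begin{align*}
\bigl\|P^{\perp}\mathbf{u}_{\ell}\bigr\|_2^2 \;\leq\; (n-r)\cdot\Bigl(\max\{\sigma\sqrt{n\log n},B\log n\}\,\kappa^2\sqrt{\mu r/n}\Bigr)^2 \;\asymp\; \max\{\cdot\}^2\,\kappa^4\,\mu r,
\end{align*}
which under the hypothesis $\max\{\cdot\}\leq c_1/\kappa^2$ is only $O(c_1^2\mu r)$ — not $o(1)$, since $\mu r$ can be large. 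Per-direction linear-form bounds are simply too lossy to control an $\ell_2$ norm over $n-r$ directions. What actually delivers $\sum_{j=1}^{r}\langle\mathbf{u}_j^{\star},\mathbf{u}_{\ell}\rangle^2\geq 1/2$ is the $\ell_2$ estimate in the proof of Lemma~3 (rank-$r$ case): $\bigl\|\mathbf{u}_{\ell}-\sum_j(\lambda_j^{\star}\mathbf{u}_j^{\star\top}\mathbf{u}_{\ell}/\lambda_{\ell})\mathbf{u}_j^{\star}\bigr\|_2\leq \tfrac{8\kappa^2}{3}\|\mathbf{H}\|$, combined with $\|\mathbf{H}\|\lesssim c_1/\kappa^2$. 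With that substitution your pigeonhole choice $|\mathbf{u}_{j_{\star}}^{\star\top}\mathbf{u}_{\ell}|\geq 1/\sqrt{2r}$ is fine, and dividing the theorem's bound by it produces the claimed $\kappa r\sqrt{\mu/n}$ scaling. So: replace Step~2's justification by the Lemma~3 bound, and replace Steps~1 and~3 by the direct substitution $\mathbf{a}=\mathbf{u}_j^{\star}$ into Theorem~\ref{rank-r}.
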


\newpage

In comparison, the Bauer-Fike theorem (Lemma 2) together with Lemma 1 gives a perturbation bound
\begin{align}
\left| \lambda_{\ell} -  \lambda_{j}^{\star} \right| \leq \norm{ \mathbf{H} } \leq  \text{max} \left\{ \sigma \sqrt{n \text{log} (n) } , B \text{log}(n)  \right\} \ \ \text{for some} \ \ 1 \leq j \leq r. 
\end{align}
For the low-rank case where $r << \sqrt{n}$, the eigenvalue pertrubation bound derived in Corollary 5 can be much sharper than the Bauer-Fike theorem. 

\begin{corollary}[\cite{chen2021asymmetry}]
Under the same setting of Theorem 4, with probability $1 - \mathcal{O} \left( n^{-9} \right)$, 
\begin{align}
\norm{ \mathbf{a}^{\top} \mathbf{U} }_2 \leq \kappa \sqrt{r} \norm{  \mathbf{a}^{\top} \mathbf{U}^{\star} }_2 + \frac{ \text{max} \left\{ \sigma \sqrt{n \text{log} (n) } , B \text{log}(n)  \right\}  }{ \lambda_{ \text{max} }^{\star } }  \kappa^2 r \sqrt{ \frac{\mu}{n} }. 
\end{align}
\end{corollary}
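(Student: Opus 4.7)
My approach is to apply Theorem~4 to each of the $r$ perturbed eigenvectors $\mathbf{u}_\ell$, stack the resulting scalar identities into one vector identity, and then split the bound into a principal part (governed by the matrix of coefficients in Theorem~4) and a residual, controlling the former by an operator-norm argument and the latter by a crude $\sqrt{r}$ factor.

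First, fix the unit vector $\mathbf{a}$. For each $\ell\in\{1,\dots,r\}$, Theorem~4 gives, with probability at least $1-\mathcal{O}(n^{-10})$,
\begin{align*}
\mathbf{a}^{\top}\mathbf{u}_\ell
\;=\;\sum_{j=1}^{r}\frac{\lambda_j^{\star}\,\mathbf{u}_j^{\star\top}\mathbf{u}_\ell}{\lambda_j}\,\mathbf{a}^{\top}\mathbf{u}_j^{\star}+\varepsilon_\ell,\qquad
|\varepsilon_\ell|\;\leq\;\frac{\max\{\sigma\sqrt{n\log n},\;B\log n\}}{\lambda_{\max}^{\star}}\,\kappa^{2}\sqrt{\mu r/n}.
\end{align*}
A union bound over $\ell=1,\dots,r\leq n$ preserves all $r$ identities simultaneously, costing one power of $n$ and yielding probability at least $1-\mathcal{O}(n^{-9})$. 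Collecting coefficients into $\mathbf{C}\in\mathbb{R}^{r\times r}$ with entries $C_{j\ell}=(\lambda_j^{\star}/\lambda_j)(\mathbf{u}_j^{\star\top}\mathbf{u}_\ell)$ and residuals into $\boldsymbol\varepsilon\in\mathbb{R}^{r}$, this reads $\mathbf{a}^{\top}\mathbf{U}=(\mathbf{a}^{\top}\mathbf{U}^{\star})\,\mathbf{C}+\boldsymbol\varepsilon^{\top}$, so the triangle inequality reduces matters to bounding $\|\mathbf{C}\|$ and $\|\boldsymbol\varepsilon\|_2$ separately.

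Second, I would bound $\|\mathbf{C}\|$ by factoring $\mathbf{C}=\mathbf{D}\,(\mathbf{U}^{\star\top}\mathbf{U})$ with $\mathbf{D}=\mathrm{diag}(\lambda_j^{\star}/\lambda_j)$. Since $\mathbf{M}^{\star}$ is symmetric, $\mathbf{U}^{\star}$ has orthonormal columns, whereas the columns of $\mathbf{U}$ are only unit-normalized (they need not be orthogonal, as $\mathbf{M}$ can be asymmetric in the Chen--Cheng--Fan setting), so the only general bound available is $\|\mathbf{U}\|\leq\|\mathbf{U}\|_F\leq\sqrt{r}$, and hence $\|\mathbf{U}^{\star\top}\mathbf{U}\|\leq\sqrt{r}$. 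For $\mathbf{D}$, Corollary~5 provides a matching $j\mapsto k(j)$ so that $|\lambda_j-\lambda_{k(j)}^{\star}|$ is a vanishing fraction of $|\lambda_{\min}^{\star}|$ under the stated SNR condition, which forces $|\lambda_j|\geq\tfrac12|\lambda_{\min}^{\star}|$ and therefore $\|\mathbf{D}\|\leq 2\kappa$. Combining gives $\|\mathbf{C}\|\lesssim \kappa\sqrt{r}$, producing the leading term $\kappa\sqrt{r}\,\|\mathbf{a}^{\top}\mathbf{U}^{\star}\|_2$.

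Third, the residual is immediate: $\|\boldsymbol\varepsilon\|_2\leq\sqrt{r}\max_\ell|\varepsilon_\ell|$, and multiplying the per-coordinate bound from Theorem~4 by $\sqrt{r}$ gives exactly $\frac{\max\{\sigma\sqrt{n\log n},\,B\log n\}}{\lambda_{\max}^{\star}}\,\kappa^{2}r\sqrt{\mu/n}$, which is the second term in the claim. The hard part, I expect, will not be the algebra (triangle inequality, operator-vs-Frobenius norm, union bound) but the step in the previous paragraph that upgrades Corollary~5's one-sided statement (``there exists some $j$'') into an \emph{injective} pairing on $\{1,\dots,r\}$: one must verify that every top-$r$ perturbed eigenvalue lies within $o(\lambda_{\min}^{\star})$ of a distinct unperturbed one, so that $|\lambda_j|$ is uniformly bounded away from $0$. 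Under the SNR condition $\max\{\sigma\sqrt{n\log n},\,B\log n\}/\lambda_{\max}^{\star}\leq c_1/\kappa^{2}$ already in force in Theorem~4, this follows from a standard Weyl/eigengap argument combined with the Bauer--Fike estimate $\|\mathbf{H}\|\lesssim\max\{\sigma\sqrt{n\log n},\,B\log n\}$ recalled just before the corollary, but the verification should be stated explicitly since the result is asserted uniformly in $\ell$.
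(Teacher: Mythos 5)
Your proposal is correct and reconstructs the derivation in essentially the intended way: the paper states this corollary without supplying a proof (it is imported from Chen, Cheng and Fan), but the route from Theorem 4 — applying the linear-form bound to each column $\mathbf{u}_\ell$, packaging the coefficients into $\mathbf{C}=\mathbf{D}\,(\mathbf{U}^{\star\top}\mathbf{U})$ with $\norm{\mathbf{U}^{\star\top}\mathbf{U}}\leq\norm{\mathbf{U}}_F\leq\sqrt{r}$, aggregating the residuals with a crude $\sqrt{r}$ so that $\sqrt{r}\cdot\sqrt{\mu r/n}=r\sqrt{\mu/n}$, and taking a union bound over $\ell\in\{1,\dots,r\}$ with $r\leq n$ to account for the drop from $1-\mathcal{O}(n^{-10})$ to $1-\mathcal{O}(n^{-9})$ — is exactly the one the surrounding text presupposes. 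Two small remarks. First, the step you flag as the hard part (a uniform lower bound on $|\lambda_\ell|$ for $\ell\leq r$) is already secured by the material the paper invokes in proving Theorem 4: Lemma 2 combined with the rank-$r$ eigenvalue-localization argument in the appendix shows that once $\norm{\mathbf{H}}<\lambda_r^{\star}/2$ the top-$r$ eigenvalues of $\mathbf{M}$ all lie in $\cup_{1\leq j\leq r}\mathcal{B}\left(\lambda_j^{\star},\norm{\mathbf{H}}\right)$, whence $|\lambda_\ell|\geq|\lambda_{\mathsf{min}}^{\star}|-\norm{\mathbf{H}}>1/(2\kappa)$; no injective pairing is actually required for this lower bound, only membership in the union of balls. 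Second, your bound $\norm{\mathbf{D}}\leq 2\kappa$ delivers $2\kappa\sqrt{r}$ rather than the displayed $\kappa\sqrt{r}$; this factor-of-two slack is harmless (the source is loose with such absolute constants, as is this paper), but it should be acknowledged if you want to match the stated constant exactly.
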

Consequently, by taking $\mathbf{a} = \mathbf{e}_i$ for $1 \leq i \leq n$ in Corollary 6, we arrive at the following statement regarding the alternative definition of the incoherence of the eigenvalue matrix $\mathbf{U}$. 

\begin{corollary}
Under the same setting of Theorem 4, with probability $1 - \mathcal{O} \left( n^{-8} \right)$ we have  
\begin{align}
\norm{ \mathbf{U} }_{2, \infty } kr \sqrt{ \frac{ \mu }{ n } }. 
\end{align}
\end{corollary}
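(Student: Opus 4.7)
The plan is to specialize Corollary 6 to the choice $\mathbf{a} = \mathbf{e}_i$ for each $i \in \{1, \ldots, n\}$, bound the resulting $\norm{\mathbf{e}_i^\top \mathbf{U}^\star}_2$ by the incoherence of $\mathbf{M}^\star$, and then take a union bound over all $n$ rows to control $\norm{\mathbf{U}}_{2,\infty} = \max_{1 \leq i \leq n} \norm{\mathbf{e}_i^\top \mathbf{U}}_2$. No new probabilistic input is needed beyond the one that powered Corollary 6; the work is almost entirely deterministic once that corollary is in hand.

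First, fix $i \in \{1, \ldots, n\}$ and apply Corollary 6 with $\mathbf{a} = \mathbf{e}_i$. On the event of probability $1 - \mathcal{O}(n^{-9})$ guaranteed by that corollary, one has
\begin{align*}
\norm{\mathbf{e}_i^\top \mathbf{U}}_2 \;\leq\; \kappa \sqrt{r}\, \norm{\mathbf{e}_i^\top \mathbf{U}^\star}_2 \;+\; \frac{\max\{\sigma \sqrt{n \log(n)}, \, B \log(n)\}}{\lambda_{\max}^\star}\, \kappa^2 r \sqrt{\frac{\mu}{n}}.
\end{align*}
Next, since $\mathbf{M}^\star$ has incoherence parameter $\mu$, the standard consequence for the eigenvector matrix $\mathbf{U}^\star$ is $\norm{\mathbf{U}^\star}_{2,\infty} \leq \sqrt{\mu r / n}$, which gives $\norm{\mathbf{e}_i^\top \mathbf{U}^\star}_2 \leq \sqrt{\mu r / n}$ uniformly in $i$. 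Substituting this bound yields
\begin{align*}
\norm{\mathbf{e}_i^\top \mathbf{U}}_2 \;\leq\; \kappa r \sqrt{\frac{\mu}{n}} \;+\; \frac{\max\{\sigma \sqrt{n \log(n)}, \, B \log(n)\}}{\lambda_{\max}^\star}\, \kappa^2 r \sqrt{\frac{\mu}{n}}.
\end{align*}

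The third step is to use the signal-to-noise assumption of Theorem 4, namely $\frac{\max\{\sigma \sqrt{n \log(n)}, B \log(n)\}}{\lambda_{\max}^\star} \leq c_1/\kappa$, to absorb the second term into the first: the second term is at most $c_1 \kappa r \sqrt{\mu/n}$, and thus
\begin{align*}
\norm{\mathbf{e}_i^\top \mathbf{U}}_2 \;\lesssim\; \kappa r \sqrt{\frac{\mu}{n}}.
\end{align*}
Finally, taking a union bound over the $n$ choices of $i \in \{1, \ldots, n\}$ upgrades the probability from $1 - \mathcal{O}(n^{-9})$ to $1 - \mathcal{O}(n^{-8})$, and taking the maximum in $i$ on the left gives the advertised estimate $\norm{\mathbf{U}}_{2,\infty} \lesssim \kappa r \sqrt{\mu/n}$. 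The only conceptual subtlety to keep track of is that Corollary 6 is stated for a \emph{fixed} unit vector $\mathbf{a}$, so one genuinely needs the union bound rather than a supremum directly; the main non-routine part is therefore the bookkeeping that guarantees the $\mathcal{O}(n^{-9})$ exception probability in Corollary 6 was set up precisely to accommodate this union bound over $n$ coordinates, leaving the remaining slack of $\mathcal{O}(n^{-8})$ in the conclusion.
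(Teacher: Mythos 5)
Your proposal is correct and follows essentially the same route as the paper: specialize Corollary 6 to $\mathbf{a}=\mathbf{e}_i$, bound $\norm{\mathbf{e}_i^{\top}\mathbf{U}^{\star}}_2$ by $\sqrt{\mu r/n}$ via the incoherence assumption, and take a union bound over the $n$ rows to pass from $1-\mathcal{O}(n^{-9})$ to $1-\mathcal{O}(n^{-8})$. The only difference is that you spell out the absorption of the second term using the condition $\max\{\sigma\sqrt{n\log n},\,B\log n\}/\lambda_{\max}^{\star}\le c_1/\kappa$, a detail the paper's one-line proof leaves implicit.
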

\begin{proof}
Given that $\norm{ \mathbf{U} }_{2, \infty } = \text{max}_{ 1 \leq i \leq n }  \norm{ \mathbf{e}_i^{\top} \mathbf{U} }_2$ and recalling our assumption implies that the following condition holds $\norm{ \mathbf{U} }_{2, \infty } \leq \sqrt{ \mu r / n}$, we can invoke Corollary 6 and the union bound to derive the advertised entrywise bounds. 
\end{proof}

\paragraph{Proof of Theorem \eqref{rank-r}}

Without loss of generality, we shall assume that $\lambda_{\text{max}}^{\star} = \lambda_1^{\star}  = 1$ throughout the proof. To begin with, Lemma 2 implies that for all $1 \leq \ell \leq r$, 
\begin{align}
\left| \lambda_{\ell} \right| \leq \left| \lambda_{ \text{min} }^{\star } \right| - \norm{ \mathbf{H} } > 1 / ( 2 \kappa ) > \norm{ \mathbf{H} }
\end{align} 
as long as $\norm{ \mathbf{H} } < 1 / ( 2 \kappa )$. In view of the Neumann trick (Theorem 2), we can derive that 
\begin{align*}
\left| \mathbf{a}^{\top} \mathbf{u}_{\ell} - \sum_{ j = 1}^r \frac{ \lambda_j^{\star} \mathbf{u}_j^{\star \top } \mathbf{u}_{\ell}  }{ \lambda_j } \mathbf{a}^{\top} \mathbf{u}_j^{\star} \right|
&=  
\left| \sum_{ j = 1}^r \frac{ \lambda_j^{\star } }{ \lambda_j } \left( \mathbf{u}_j^{\star \top} \mathbf{u}_{\ell} \right) \left\{ \sum_{ s = 1}^{ \infty } \frac{1}{ \lambda^s_{\ell} } \mathbf{a}^{\top} \mathbf{H}^s \mathbf{u}_j^{\star} \right\} \right| 
\\
&\leq \left( \sum_{ j = 1}^r \frac{ \left| \lambda_j^{\star } \right| }{ \left| \lambda_j \right| } \left| \mathbf{u}_{j}^{\star \top }  \mathbf{u}_{ \ell } \right| \right) \left\{ \underset{ 1 \leq j \leq r }{ \text{max} } \sum_{ j = 1}^{ \infty } \frac{1}{ \left| \lambda_{\ell} \right|^s } \left| \mathbf{a}^{\top} \mathbf{H}^s \mathbf{u}_j^{\star} \right| \right\}
\\
&\leq \sqrt{ r \sum_{ j = 1}^r \left| \mathbf{u}_j^{\star \top } \mathbf{u}_{\ell} \right|^2 } \left\{ \underset{ 1 \leq j \leq r }{ \text{max} } \frac{ \left| \lambda_j^{\star } \right| }{ \left| \lambda_j \right| } \right\}  \left\{ \underset{ 1 \leq j \leq r }{ \text{max} } \sum_{ j = 1}^{ \infty } \frac{1}{ \left| \lambda_{\ell} \right|^s } \left| \mathbf{a}^{\top} \mathbf{H}^s \mathbf{u}_j^{\star} \right| \right\} 
\\
&\leq \sqrt{r} . \frac{1}{ \left| \lambda_{\ell} \right| } . \left\{    \underset{ 1 \leq j \leq r }{ \text{max} } \sum_{ j = 1}^{ \infty } \frac{1}{ \left| \lambda_{\ell} \right|^s } \left| \mathbf{a}^{\top} \mathbf{H}^s \mathbf{u}_j^{\star} \right| \right\}
\end{align*}

\newpage

where the third line follows since $\sum_{ j = 1}^r \left| \mathbf{u}_j^{ \star } \mathbf{u}_{\ell} \right|^2 \leq \norm{ \mathbf{u}_{\ell} }_2^2 = 1$, and the last inequality makes use of (49). Apply Corollary 4 in  \cite{chen2021asymmetry} to obtain a bound as below 
\begin{align*}
&\leq \frac{ \sqrt{r} }{ \left| \lambda_{\ell} \right| } \sum_{ j = 1}^{ \infty } \left( 2 c_2 \kappa \ \text{max} \left\{ B \text{log}(n), \sqrt{n \sigma^2 \text{log} (n) } \right\} \right)^s \sqrt{ \frac{ \mu }{ n } } 
\\
&\leq \frac{ \kappa }{ \left| \lambda_{\ell} \right| } \text{max} \left\{ B \text{log}(n), \sqrt{n \sigma^2 \text{log} (n) } \right\} \sqrt{ \frac{ \mu r }{ n } } 
\\
&\leq \kappa^2  \text{max} \left\{ B \text{log}(n), \sqrt{n \sigma^2 \text{log} (n) } \right\} \sqrt{ \frac{ \mu r }{ n } } 
\end{align*}
with the provision that $\left| \lambda_{\ell} \right| > 1 / ( 2 \kappa )$ and $\text{max} \left\{ B \text{log}(n), \sqrt{n \sigma^2 \text{log} (n) } \right\} \leq c_1 / \kappa$ for some sufficiently small constant $c_1 > 0$. The condition $| \lambda_{\ell} | > 1 / (2 \kappa )$ follows immediately by combining Lemma 2, Lemma 1 and the condition (34). 

\paragraph{Appendix}

We justify the existence and uniqueness of an eigenvalue in $\mathcal{B} \left( 1, \norm{ \mathbf{H} } \right)$. Denote with $\Lambda \left( \textbf{M} \right) = \left\{ \lambda_1,..., \lambda_n \right\}$ and define a set of auxiliary matrices
\begin{align}
\mathbf{M}_t = \mathbf{M}^{\star} + t \mathbf{M}, \ \ 0 \leq t \leq 1.  
\end{align}
As we can see the set of eigenvalues of $\mathbf{M}_t$ depends continuously on $t$, we can write as below 
\begin{align}
\Lambda \left( \mathbf{M}_t \right) = \left\{ \lambda_1(t), \lambda_2(t),..., \lambda_n(t) \right\}, 
\end{align}
with each $\lambda_j (t)$, $1 \leq j \leq n$ being a continuous function in $t$. Meanwhile as $\norm{ \mathbf{H} } < 1 / 2$ and $0 \leq t \leq 1$, the two disks $\mathcal{B} \left( 1, t \norm{ \mathbf{H} } \right)$ and $\mathcal{B} \left( 0, t \norm{ \mathbf{H} } \right)$ are always disjoint sets. Therefore, the continuity of the spectrum with respect to $t$ requires that $\lambda_j(t)$ to always stay within the same disk where $\lambda_j(0) \in \left\{ 0, 1 \right\}$ lies, namely, such that
\begin{align}
\lambda_j (t) \in \mathcal{B} \left( \lambda_j(0), t \norm{ \mathbf{H} } \right). 
\end{align}
Given that $\mathbf{M}^{\star}$ has $n - 1$ eigenvalues equal to 0 and one eigenvalue equal to 1, we establish the lemma for the rank-1 case. 

\paragraph{ Rank-r } Follow the above argument for the case where we have a rank$-1$ matrix, we can immediately show that: if $\norm{ \mathbf{H} } < \lambda_r^{ \star } / 2$, then (i) there are exactly $n - r$ eigenvalues lying within $\mathcal{B} \left( 0, \norm{ \mathbf{H} } \right)$, (ii) all other eigenvalues lie within $\cup_{ 1 \leq j \leq r} \mathcal{B} \left( \lambda_j^{\star} , \norm{ \mathbf{H} } \right)$, which are exactly top$-r$ leading eigenvalues of $\mathbf{M}$. 

\newpage

\paragraph{Proof of Theorem 2 
}[\cite{chen2021asymmetry}] Following the definition of eigenvectors, we have that 
\begin{align}
\left( \mathbf{M}^{\star} + \mathbf{H} \right) = \lambda_{\ell} \mathbf{u}_{ \ell }, \ \ \text{or equivalently},  \ \ \frac{1}{ \lambda_{\ell} } \mathbf{M}^{\star} \mathbf{u}_{\ell} = \left( \mathbf{I} - \frac{1}{ \lambda_{\ell} } \mathbf{H} \right) \mathbf{u}_{\ell}
\end{align}
When $\norm{ \mathbf{H} }_2 < \left| \lambda_{\ell} \right|$, one can invert $\left( \mathbf{I} - \frac{1}{ \lambda_{\ell}  } \mathbf{H}   \right)$ and therefore we obtain that 
\begin{align*}
\mathbf{u}_{ \ell } 
= \left( \mathbf{I} - \frac{1}{ \lambda_{\ell}  } \mathbf{H}  \right)^{-1} \frac{1}{ \lambda_{\ell} } \mathbf{M}^{\star} \mathbf{u}_{\ell}  
=  \frac{1}{ \lambda_{\ell} } \left( \mathbf{I} - \frac{1}{ \lambda_{\ell}  } \mathbf{H}  \right)^{-1} \left( \sum_{ j = 1}^r \lambda_j^{\star} \mathbf{u}_j^{\star} \mathbf{u}_j^{\star \top}   \right) \mathbf{u}_j
= \sum_{ j = 1}^r \frac{ \lambda_j^{\star} }{ \lambda_j } \left( \mathbf{u}_j^{\star} \mathbf{u}_j^{\star \top} \right) \left( \mathbf{I} - \frac{1}{ \lambda_{\ell}} \mathbf{H} \right)^{-1} \mathbf{u}_j^{\star},   
\end{align*}
where the last term follows by rearranging terms. Finally, replacing $\left( \mathbf{I} - \frac{1}{ \lambda_{\ell}} \mathbf{H} \right)^{-1}$ with the Neumann series we obtain that $\sum_{ s = 0}^{\infty} \frac{1}{ \lambda^s } \mathbf{H}^s$, we establish the theorem.

\paragraph{Proof of Lemma 3}[\cite{chen2021asymmetry}] We start with the rank-1
case. Towards this, we resort to the Neumann trick in Theorem 2, which in the rank-1 case gives the following expression 
\begin{align}
\mathbf{u} = \frac{1}{ \lambda } \left( \mathbf{u}_j^{\star  \top} \mathbf{u}_j^{\star} \right) \sum_{ s = 0 }^{ \infty } \left( \frac{1}{ \lambda } \mathbf{H} \right)^s \mathbf{u}^{\star}. 
\end{align}
From Lemma 2, we know that $\lambda$ is real-valued and that $\lambda > 1 - \norm{ \mathbf{H} } \geq 3/4 > \norm{ \mathbf{H} }$ under our assumption. This together yields that 
\begin{align}
\norm{ \mathbf{u} - \frac{ \mathbf{u}^{\star \top } \mathbf{u}}{    \lambda } \mathbf{u}^{\star} }_2 \leq \frac{1}{ \lambda } \sum_{ s = 0 }^{ \infty } \norm{ \frac{1}{\lambda} \mathbf{H} }^s \norm{ \mathbf{u}^{\star} }_2 = \frac{1}{ \lambda } \sum_{ s = 0 }^{ \infty } \norm{ \frac{1}{ \lambda } \mathbf{H} }^s = \frac{ \norm{ \mathbf{H} } }{ \lambda \left( \lambda - \norm{ \mathbf{H} }  \right) } \leq \frac{8}{3} \norm{\mathbf{H}} , 
\end{align}
where the last inequality holds since $\lambda \geq 3 / 4$ and $\lambda - \norm{\mathbf{H}} \geq 1 - 2 \norm{\mathbf{H}} \geq 1 / 2 $. Next, by decomposing $\mathbf{u}$ into two orthogonal components
\begin{align}
\mathbf{u} = \left( \mathbf{u}^{\star  \top} \mathbf{u} \right) \mathbf{u}^{\star} + \left( \mathbf{u} - \left( \mathbf{u}^{\star \top} \mathbf{u} \right) \mathbf{u}^{\star} \right)
\end{align}
we obtain that 
\begin{align*}
\left| \mathbf{u}^{\star  \top} \mathbf{u} \right| 
= \norm{ \left( \mathbf{u}^{\star \top} \mathbf{u} \right) \mathbf{u}^{\star} }_2 = \sqrt{ 1 - \norm{ \mathbf{u} - \left( \mathbf{u}^{\star \top} \mathbf{u} \right) \mathbf{u}^{\star} }_2^2 }
&\geq 
1 - \norm{ \mathbf{u} -  \left( \mathbf{u}^{\star \top} \mathbf{u} \right) \mathbf{u}^{\star} }_2^2
\geq 1 - \norm{ \mathbf{u} -  \frac{ \mathbf{u}^{\star \top} \mathbf{u} }{ \lambda } \mathbf{u}^{\star} }_2^2 
\\
&\geq 1 - \frac{64}{9} \norm{ \mathbf{H} }^2.
\end{align*}
the particular inequality holds since $\left( \mathbf{u}^{\star \top} \mathbf{u} \right) \mathbf{u}^{\star}$ is orthogonal projection of $\mathbf{u}$ onto the subspace spanned by $\mathbf{u}^{\star}$, and hence 
\begin{align}
\norm{ \mathbf{u} - \left( \mathbf{u}^{\star \top} \mathbf{u} \right)     \mathbf{u}^{\star} }_2 \leq \norm{ \mathbf{u} - \frac{1}{\lambda}  \left( \mathbf{u}^{\star \top} \mathbf{u} \right) \mathbf{u}^{\star}      }_2. 
\end{align} 

\newpage

Moreover, since $\mathbf{u}$  is real-valued we obtain the following bound: 
\begin{align}
\text{min} \big\{ \norm{ \mathbf{u} - \mathbf{u}^{\star} }_2 ,   \norm{ \mathbf{u} + \mathbf{u}^{\star} }_2 \big\} = \sqrt{  \norm{ \mathbf{u} }_2^2 + \norm{ \mathbf{u}^{\star} }_2^2 - 2 \left| \mathbf{u}^{\star \top} \mathbf{u} \right| } \leq \frac{8 \sqrt{2} }{3} \norm{ \mathbf{H} }. 
\end{align}
Next, we consider the rank$-r$ case, since for any $1 \leq r \leq r$ we have that 
\begin{align*}
\sum_{j=1}^r \left| \mathbf{u}_j^{\star \top} \mathbf{u}_{\ell}  \right|^2 
=
 \norm{ \sum_{j=1}^r  \left( \mathbf{u}_j^{\star \top} \mathbf{u}_{\ell} \right)   \mathbf{u}_j^{\star} }_2^2 
=
1 - \norm{ \mathbf{u}_{\ell} - \sum_{j=1}^r \left( \mathbf{u}_j^{\star \top} \mathbf{u}_{\ell} \right) \mathbf{u}_j^{\star} }_2^2 
\geq 1 - \norm{ \mathbf{u}_{\ell} - \sum_{j=1}^r \frac{ \lambda_j^{\star} \mathbf{u}_j^{\star \top} \mathbf{u}_{\ell} }{ \lambda_j} \mathbf{u}_j^{\top} }_2^2 
\end{align*}

where the inequality arises since $\sum_{j=1}^r \left( \mathbf{u}_j^{\star \top} \mathbf{u}_{\ell} \right) \mathbf{u}_j^{\star}$ is the Euclidean projection of $\mathbf{u}_{\ell}$ onto the space of $\left\{ \mathbf{u}_{1}^{\star},..., \mathbf{u}_{r}^{\star} \right\}$. Furthermore, we observe that 
\begin{align}
\norm{ \sum_{j=1}^r \lambda_j^{\star}  \left( \mathbf{u}_j^{\star \top} \mathbf{u}_{\ell} \right) \mathbf{u}_j^{\star} }_2 
\leq \sqrt{ \sum_j \left( \lambda_j^{\star} \right)^2 \left|   \mathbf{u}_j^{\star \top} \mathbf{u}_{\ell}  \right|^2  } \leq \lambda_{ \text{max} }^{\star} \sqrt{ \sum_j \left| \mathbf{u}_j^{\star \top} \mathbf{u}_{\ell} \right|^2 } = \lambda_{ \text{max} }^{\star} \norm{ \mathbf{u}_{\ell} }_2 = 1.  
\end{align}
This taken collectively with Theorem 2 leads to 
\begin{align*}
\norm{ \mathbf{u}_{\ell} - \sum_{j=1}^r \frac{ \lambda_{ j }^{\star}  \mathbf{u}_{j}^{\star \top} \mathbf{u}_{ \ell} }{ \lambda_{ \ell} } \mathbf{u}_{j}^{\star} }_2 
&= 
\norm{ \sum_{j=1}^r \frac{ \lambda_j^{\star} }{ \lambda_{\ell} } \left( \mathbf{u}_j^{\star \top} \mathbf{u}_{\ell} \right) \left\{ \sum_{s=1}^{\infty } \frac{1}{ \lambda_{\ell}^2 } \mathbf{H}^s \mathbf{u}_j^{\star} \right\} }_2 
= 
\norm{ \frac{1}{ \lambda_{\ell} } \sum_{j=1}^r \frac{1}{ \lambda_{\ell}^s } \mathbf{H}^s \left\{ \sum_{j=1}^r \lambda_j^{\star}  \left( \mathbf{u}_j^{\star \top} \mathbf{u}_{\ell} \right)  \mathbf{u}_j^{\star} \right\} }_2  
\\
&\leq \frac{1}{ \left| \lambda_{\ell} \right|} \sum_{s=1}^{ \infty } \frac{1}{ \left| \lambda_{\ell} \right|^s } \norm{ \mathbf{H} }^s \norm{ \sum_{j=1}^{ r } \lambda_{j}^{\star}  \left( \mathbf{u}_j^{\star \top} \mathbf{u}_{\ell} \right)  \mathbf{u}_j^{\star} }_2    
\leq \frac{1}{ \left| \lambda_{\ell} \right|} . \frac{ \norm{ \mathbf{H} } }{ | \lambda_j | - \norm{ \mathbf{H} } } 
\leq \frac{ 8 \kappa^2 }{3} \norm{ \mathbf{H} }. 
\end{align*}
The last two lines follow since, when $\norm{ \mathbf{H} } < 1 / (4 \kappa )$. We can show that $| \lambda_{\ell} | > | \lambda_{ \text{min} }^{ \star } | - \norm{ \mathbf{H} } \geq 3 / (4 \kappa )$ and (ii) $\left| \lambda_{\ell} \right| - \norm{ \mathbf{H } } \geq \left| \lambda_{ \text{min} }^{ \star } \right| - 2 \norm{ \textbf{H} } \geq 1 / (2 \kappa)$.

\paragraph{Proof of Lemma 5 on Edge and Vertices}[\cite{chen2021asymmetry}]

To establish this lemma, we exploit entrywise independence of $\mathbf{H}$ and develop a combinatorial trick. To begin with, we expand the quantity of interest as below
\begin{align}
\left( \mathbf{a}^{\top} \mathbf{H}^s \mathbf{u}^{\star} \right)^k = \underset{ 1 \leq i_t^{(b)} \leq n, 0 \leq t \leq s, 1 \leq b \leq k }{ \sum } \prod_{ b = 1 }^k a_{ i_0^{(b)} } \left(  H_{i_{t-1}^{(b)}  i_t^{(b)} } \right) u_{ i_s^{(b)} }^{\star} 
\end{align}
In this section, we use that $
\mathcal{I} := \left\{ i_t^{(b)} | 0 < t < s, 1 < b < k \right\} \in [ n ]^{ (s+1) k }    
$, to denote such a collection of $( s + 1 ) k$ indices. Therefore, one can write that 
\begin{align}
\mathbb{E} \left[ \left( \mathbf{a}^{\top} \mathbf{H}^s \mathbf{u}^{\star} \right)^k \right] = \underset{  \mathcal{I} \in [n]^{(s+1)k}   }{ \sum } \mathbb{E} \left[  \prod_{b=1}^k a_{ i_0^{(b)}} \left( \prod_{t=1}^s H_{ i_{t-1}^{(b)} i_t^{(b)} } \right)  \mathbf{u}^{\star}_{ i_{t-1}^{(b)} } \right]
\end{align}
Further details of these proofs can be found in the Appendix of \cite{chen2021asymmetry}.

\newpage

\subsection{Limit theory for Functional Data}

\subsubsection{Simultaneous Diagonalization}

In this section we briefly study the framework proposed by \cite{yuan2010reproducing}. In particular, before studying the asymptotic properties of the regularized estimators, we first investigate the relationship between the eigen structures of the covariance operator for $X(.)$ and the reproducing kernel of the functional space $\mathcal{H}$. As observed in earlier studies, eigenvector structures play prominent roles in determining the nature of the estimation problem in functional linear regression. 

Recall that $K$ is the reproducing kernel of $\mathcal{H}_1$. Because $K$  is continuous and square integrable, it follows from Mercer's theorem that $K$ admits the following spectral decomposition
\begin{align}
K(s,t) = \sum_{k = 1}^{\infty } \rho_k \psi_k (s) \psi_k(t).
\end{align}

Here $\rho_1 \geq \rho_2 \geq ... $ are the eigenvalues of $K$, and $\left\{ \psi_1, \psi_2,...  \right\}$ are the corresponding eigenfunctions,  
\begin{align}
K \psi_k = \rho_k \psi_k, \ \ k = 1,2,...
\end{align}

Moreover, 
\begin{align}
\langle \psi_i , \psi_j \rangle_{ \mathcal{L}_2 } = \delta_{ij} \ \ \text{and} \ \ \langle \psi_i , \psi_j \rangle_{ \mathcal{L}_2 } = \delta_{ij} / \rho_{j}, 
\end{align}
where $\delta_{ij}$ is the Kronecker's delta.

\paragraph{Convergence rates} 

We now turn to the asymptotic properties of the smoothness regularized estimators. To fix ideas, in what follows, we shall focus on the squared error loss. Recall in this case we have that 
\begin{align}
\left( \hat{\alpha}_{n \lambda}, \hat{\beta}_{n \lambda} \right) = \underset{ \alpha \in \mathbb{R}, \beta in \mathcal{H}  }{ \text{arg min} } \left\{ \frac{1}{n} \sum_{i=1}^n \left[ y_i - \left( \alpha + \int_{\mathcal{I} } x_i(t) \beta(t) dt  \right) \right]^2 + \lambda J( \beta) \right\}.
\end{align}

\medskip

\begin{remark}
Due to the fact that the risk matrix employed in the optimal portfolio problem is not a typical covariance operator, then in order to investigate the convergence rates and asymptotic properties of the elements of the risk matrix and its eigenvalues, we can simplify the problem by assuming that the entries of the matrix have a linear process representation. This simplification allows to model the heavy tailed entries which represent the tail-risk measures without considering the distributional properties of the forecasts based on the quantile specifications mentioned above (see, \cite{muller2005generalized}).  
\end{remark}

\newpage

Furthermore, we have that 
\begin{align}
\ell_n ( \beta ) = \frac{1}{n} \sum_{ i = 1}^n \left( y_i - \int_{\mathcal{I} } x_i(t) \beta(t) dt \right)^2. 
\end{align}

Observe that 
\begin{align*}
\ell_{ \infty } 
&:= 
\mathbb{E} \left[ \ell_n ( \beta ) \right] = \mathbb{E} \left[ Y - \int_{\mathcal{I} } X(t) \beta( t) dt \right]^2
\\
&= 
\sigma^2 + \int_{\mathcal{I} } \int_{\mathcal{I} } \big[ \beta(s) - \beta_0(s) \big] C(s,t) \big[ \beta(s) - \beta_0(s) \big] ds dt
\\
&=
\sigma^2 + \norm{ \beta - \beta_0 }^2_0.
\end{align*}

Write
\begin{align}
\bar{ \beta }_{ \infty \lambda } = \underset{ \beta \in \mathcal{H} }{  \text{arg min} } \bigg\{ \ell_{ \infty }( \beta ) + \lambda J ( \beta ) \bigg\}. 
\end{align}

For instance, we have the following decomposition
\begin{align}
\hat{ \beta }_{ n \lambda } - \beta_0 = \left( \hat{ \beta }_{ n \lambda } - \bar{ \beta }_{ \infty \lambda } \right) +  \left( \bar{ \beta }_{ \infty \lambda } - \beta_0 \right).  
\end{align}
we refer to the terms above on the right hand side as the stochastic error and deterministic error respectively.

\paragraph{Deterministic Error} 

Write
\begin{align}
\beta_0(.) = \sum_{k = 1}^{\infty} a_k \omega_k(.) \ \ \text{and} \ \ \beta(.) = \sum_{k = 1}^{\infty} b_k \omega_k(.) 
\end{align}
Then, Theorem 3 above implies that 
\begin{align}
\ell_{ \infty } ( \beta ) = \sigma^2 + \sum_{k=1}^{ \infty } \left( b_k - a_k \right)^2, \ \ \ J( \beta ) = \sum_{k=1}^{ \infty } \gamma_k^{-1} b_k^2. 
\end{align}
Therefore, 
\begin{align}
\bar{ \beta }_{ \infty \lambda } (.) = \sum_{ k = 1}^{ \infty } \frac{ a_k }{ 1 + \lambda \gamma_k^{-1} } \omega_k (.) =: \sum_{ k = 1}^{ \infty } \bar{b}_k \omega_k(.).
\end{align}

\newpage

\subsubsection{Common Functional Principal Components}

According to \cite{benko2009common}, the elements of the risk matrix $\Gamma$ are functionals of the model estimates which can be estimated with asymptotically negligible bias and a parametric rate of convergence $T_i^{- 1 / 2}$. For example when the data are generated from a balanced, equidistance design, then it can be easily seen that for $i \neq j$ this rate of convergence is achieved by the following estimator. Notice that a bias corrected estimator may yield negative eigenvalues. However, in practise these values are small and can be interpreted close to zero. For instance, when evaluating eigenvalues and eigenfunctions of the empirical covariance operator of nonparametrically estimated curves $\hat{X}_i$, then for fixed $r \leq r_0$ the above rate of convergence for the estimated eigenfunctions may well be achieved for a suitable choice of smoothing parameters. For example, when using standard methods it does not seem to be possible to obtain a corresponding rate of convergence, since any smoothing bias $\left| \mathbb{E} \left[ \hat{X}_i \right] - X_i(t) \right|$ will invariably affect the quality of the corresponding estimate of $\hat{\lambda}_r$.

\paragraph{Main Results}

In the following, we have that $\norm{ v } = \left( \int_0^1 v(t)^2 dt   \right)^{1 / 2}$ will denote the $L^2-$norm for any square integrable function $v$. Consider that $\hat{\Sigma}_m$ is the $m \times m$ matrix with elements given by 
\begin{align}
\left\{ \displaystyle \frac{1}{n} \sum_{i=1}^n \left( \beta_{ji} - \bar{\beta}_j \right)  \left( \beta_{ki} - \bar{\beta}_k \right) \right\}_{j,k = 1}^m
\end{align}
Let $\lambda_1 \left( \hat{\Sigma}_m \right) \geq \lambda_2 \left( \hat{\Sigma}_m \right) \geq ... \geq \lambda_m \left( \hat{\Sigma}_m \right)$ and $\hat{\zeta}_{1,m},...,\hat{\zeta}_{m,n}$ denote eigenvalues and corresponding eigenvectors of $\hat{\Sigma}_m$. Therefore, some straightforward algebra shows that 
\begin{align}
\hat{ \lambda }_{r,m} = \lambda_r \left( \hat{\Sigma}_m \right), \ \ \hat{ \gamma } = g_m(t)^{ \top } \hat{\zeta}_{r,m}.  
\end{align}
We will use $\Sigma_m$ to represent the $m \times m$ diagonal matrix with diagonal entries given by $\lambda_1 \leq ... \leq \lambda_m$. Then, the corresponding eigevectors are given by the $m-$dimensional unit vectors denoted by $e_{1,m},..., e_{m,m}$. Then, by Lemma A of Kneip and Utikal we obtain that the differences between the eigenvalues and eigenvectors of $\Sigma_m$ and $\hat{\Sigma}_m$ can be bounded by 
\begin{align}
\big( \hat{ \lambda }_{r,m} - \lambda_r \big) = \text{trace} \big\{ e_{r,m} e_{r,m}^{\top} \left( \hat{\Sigma}_m - \Sigma \right) \big\} + \tilde{R}_{r,m}, \ \ \ \tilde{R}_{r,m} \leq \frac{ \displaystyle 6 \underset{ \norm{a} = 1 }{ \text{sup} } a^{\top} \left( \hat{\Sigma}_m - \Sigma \right) ^2 a   }{ \displaystyle  \underset{ s }{ \text{min} } \left| \lambda_s - \lambda_r   \right| }
\end{align}
Moreover, we have that 
\begin{align}
\hat{ \zeta }_{r,m} - e_{r,m} = - S_{ r,m } \left( \hat{\Sigma}_m - \Sigma \right) e_{r,m} + R_{r,m}^{*}, \ \ \ \norm{ R_{r,m}^{*} } \leq \frac{ \displaystyle 6 \underset{ \norm{a} = 1 }{ \text{sup} } a^{\top} \left( \hat{\Sigma}_m - \Sigma \right) ^2 a   }{ \displaystyle  \underset{ s }{ \text{min} } \left| \lambda_s - \lambda_r   \right| }, 
\end{align}
where we denote with $S_{r,m} = \sum_{ s \neq r } \frac{1}{ \lambda_s - \lambda_r } e_{s,m} e_{s,m}^{\top}$.

\newpage

Assumption 1 implies that $\mathbb{E} \left( \hat{ \beta }_r \right) = 0$ and Var$\left( \hat{ \beta }_r \right) = \frac{ \lambda_r }{ r }$ and with $\delta_{ii} = 1$ as well as $\delta_{ij} = 0$ for $i \neq j$, we obtain that 
\begin{align*}
\mathbb{E} \left\{ \underset{ \norm{ a } = 1 }{ \text{sup} } a^{\top} \left( \hat{\Sigma}_m - \Sigma \right) \right\} 
&\leq 
\mathbb{E}  \bigg\{  \text{trace} \left[ \left( \hat{\Sigma}_m - \Sigma \right)^2 \right]  \bigg\} 
\\
&= 
\mathbb{E} \left\{ \sum_{j,k = 1}^m \left[ \frac{1}{n} \left( \beta_{ji} - \bar{\beta}_j \right) \left( \beta_{ki} - \bar{\beta}_k \right) - \delta_{jk} \lambda_j \right]^2 \right\}
\\
&\leq 
\mathbb{E} \left\{ \sum_{j,k = 1}^{ \infty } \left[ \frac{1}{n} \left( \beta_{ji} - \bar{\beta}_j \right) \left( \beta_{ki} - \bar{\beta}_k \right) - \delta_{jk} \lambda_j \right]^2 \right\}
\\
&= 
\frac{1}{n} \left( \sum_j \sum_k \mathbb{E} \left( \beta_{ji}^2 \beta_{ki}^2 \right) \right) + o( n^{-1} ) = \mathcal{O} \left( n^{-1} \right)
\end{align*}
for all $m$. Furthermore, since  
\begin{align}
\text{trac} \big\{ e_{r,m} e_{r,m}^{\top} \left( \hat{\Sigma}_m - \Sigma_m \right) \big\} = \frac{1}{n} \sum_{ i = 1 }^n \left( \beta_{ri} - \bar{ \beta }_r \right)^2 - \lambda_r
\end{align}
Therefore after applying the central limit theorem we obtain that
\begin{align*}
\sqrt{n} \left( \hat{\lambda}_r - \lambda_r \right) 
&= 
\frac{1}{ \sqrt{n} } \sum_{ i = 1 }^n \left( \beta_{ri} - \hat{\beta}_r \right)^2 - \lambda_r + \mathcal{O}_p \left( n^{- 1 / 2} \right)
\\
&= 
\frac{1}{ \sqrt{n} } \sum_{ i = 1 }^n \left\{ \left( \beta_{ri} \right)^2 - \mathbb{E} \left[  \left( \beta_{ri} \right)^2 \right] \right\} + \mathcal{O}_p \left( n^{- 1 / 2} \right)
\\
&\to \mathcal{N} \left( 0, \Lambda_r \right).
\end{align*}
Obviously the event $\hat{\lambda}_{r-1} > \hat{\lambda}_{r} > \hat{\lambda}_{r+1}$ occurs with probability 1. 

\medskip

\begin{remark}
Therefore, via the asymptotic analysis and the econometric model above we aim to verify that the position of the node in the network in terms of centrality, such as for example highly central nodes versus nodes which are on the periphery of the network can have an impact on the stability properties of a community of nodes. For example, if a node is quite central then the impact of the addition of a node to the particular node can dramatically change the dynamics in the network in the sense that that node as well as the structure of the remaining nodes can have higher inteconnectedness. Even though this outlier node is not highly interconnected affects the stability of the remaining of the nodes. 
\end{remark}

\newpage

\subsection{Limit theorems for Dependent Sequences}

In this Section, we consider some important results presented in the literature for dependent sequences. We consider a "high-dimensional" probability space $\left( \Omega, \mathcal{F}, \mathbb{P} \right)$. For functions of dependent random variables $( X_i )_{ i \in \mathbb{N} }$ the main challenge is often to quantify and bound the dependence among random variables $X_i$, in terms of various types of mixing coefficients (see, \cite{kontorovich2008concentration}). To establish concentration bounds a sufficiently rapid decay of the mixing coefficients is necessary.  

\subsubsection{Method of bounded martingale differences}

The particular methodology presented in the paper of \cite{kontorovich2008concentration} provides a suitable mechanism of establishing concentration inequalities (see also \cite{doukhan1995invariance}). 
\begin{theorem}
Suppose that $\mathcal{S}$ is a countable space, $\mathcal{F}$ is a set of all subsets and $\mathbb{P}$ is a probability measure on $\left( \mathcal{S}^n, \mathcal{F} \right)$ and $\varphi: \mathcal{S}^n \to \mathbb{R}$ is a \textit{Lipschitz} function (with respect to the Hamming metric) on $\mathcal{S}^n$ for some $c > 0$. Then, for any $t > 0$,
\begin{align}
\mathbb{P} \left( \left| \varphi - \mathbb{E} \varphi \right| \geq t  \right) \leq 2 \mathsf{exp} \left( - \frac{ t^2 }{ 2n c^2 \norm{ \Delta_n }_{\infty}^2 } \right).    
\end{align}
\end{theorem}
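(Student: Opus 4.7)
The plan is to follow the classical martingale-difference route: construct the Doob martingale associated with $\varphi$, bound each martingale increment in terms of the Lipschitz constant $c$ and the mixing matrix $\Delta_n$, and then invoke the Azuma--Hoeffding inequality. Concretely, let $\mathcal{F}_k = \sigma(X_1,\dots,X_k)$ with $\mathcal{F}_0$ trivial, and define
\begin{align*}
M_k \;=\; \mathbb{E}[\varphi \mid \mathcal{F}_k], \qquad V_k \;=\; M_k - M_{k-1}, \quad k=1,\dots,n,
\end{align*}
so that $M_0 = \mathbb{E}\varphi$, $M_n = \varphi$, and $\varphi - \mathbb{E}\varphi = \sum_{k=1}^n V_k$ is a sum of martingale differences with respect to $(\mathcal{F}_k)$.

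First I would obtain a pointwise bound on $|V_k|$. For any $y, y' \in \mathcal{S}$, write
\begin{align*}
V_k(y_1^{k-1},y) - V_k(y_1^{k-1},y') \;=\; \int \bigl[\varphi(y_1^{k-1},y,z_{k+1}^n) - \varphi(y_1^{k-1},y',z_{k+1}^n)\bigr]\, \mathbb{P}(dz_{k+1}^n \mid y_1^{k-1},y)
\end{align*}
plus a correction term arising because the conditional distribution of the future $(X_{k+1},\dots,X_n)$ depends on whether we condition on $X_k=y$ or $X_k=y'$. The Lipschitz property bounds the first piece by $c$. The correction term is controlled by a coupling argument: the total-variation distance between the two conditional laws on $\mathcal{S}^{n-k}$ is precisely what the entries $\Delta_{k,j}$ of the upper-triangular $\eta$-mixing matrix $\Delta_n$ (with $\Delta_{k,j}$ a suitable conditional mixing coefficient) measure. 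Combining these gives the key bound
\begin{align*}
\operatorname{osc}_k(V_k) \;\leq\; c \sum_{j=k}^n \Delta_{k,j} \;\leq\; c \,\lVert \Delta_n \rVert_\infty,
\end{align*}
where $\operatorname{osc}_k$ denotes the oscillation in the $k$-th coordinate. In particular $|V_k| \leq c\lVert \Delta_n \rVert_\infty$ almost surely.

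The final step is routine: apply Azuma--Hoeffding to $\sum_{k=1}^n V_k$ with the bounds $|V_k| \leq c\lVert \Delta_n \rVert_\infty$, yielding
\begin{align*}
\mathbb{P}\bigl(\,|\varphi - \mathbb{E}\varphi| \geq t\,\bigr) \;\leq\; 2\exp\!\left(-\frac{t^2}{2\sum_{k=1}^n (c\lVert \Delta_n \rVert_\infty)^2}\right) \;=\; 2\exp\!\left(-\frac{t^2}{2n c^2 \lVert \Delta_n \rVert_\infty^2}\right),
\end{align*}
which is the claimed inequality.

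The main obstacle is the coupling step that produces the mixing-matrix factor in the martingale-difference bound: one must express the difference of conditional expectations given $X_k=y$ versus $X_k=y'$ as an expectation under a maximal coupling of the two future laws, so that the Lipschitz constant picks up the Hamming cost of disagreement on the coupled paths, and then identify that cost with the row sum $\sum_{j\geq k}\Delta_{k,j}$ of the mixing matrix. In the independent case this collapses to the standard McDiarmid bounded-differences argument; the work in the dependent case lies entirely in showing that the conditional-law perturbation is controlled by $\lVert \Delta_n \rVert_\infty$ uniformly in the past $y_1^{k-1}$.
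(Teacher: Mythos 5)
The paper states this theorem without proof: it is imported from \cite{kontorovich2008concentration}, and the section heading (``Method of bounded martingale differences'') already names the intended technique. Your proposal reconstructs exactly that technique --- Doob martingale $M_k = \mathbb{E}[\varphi \mid \mathcal{F}_k]$, increment bounds obtained by maximally coupling the two conditional laws of the future given $X_k = y$ versus $X_k = y'$ (which is precisely where the row sums of the $\eta$-mixing matrix $\Delta_n$ enter), and then Azuma--Hoeffding --- so there is no divergence in approach to report. The one place your sketch is thin is the coupling lemma itself, $|V_k| \leq c\,\norm{\Delta_n}_\infty$, which is the entire technical content of the theorem and is described rather than carried out; but the description is faithful to how the cited reference executes it, and your observation that the argument collapses to McDiarmid's bounded-differences inequality in the independent case is the correct sanity check.
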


\begin{remark}
Notice that the bound can be simplified further. More precisely, given any intial probability distribution $p_0(.)$ and stochastic transition kernels $p_i (.|.), 1 \leq i \leq n - 1$, then the probability measure is 
\begin{align}
P \big( ( X_1,..., X_i ) = x \big) = p_0 (  x_1 ) \prod_{j=1}^{i-1} p_j \big( x_{j+1} | x_j \big), \ \ \forall \  1 \leq i \leq n
\end{align}
(see also \cite{finner1992generalization}).
\end{remark}

\subsubsection{Concentration Inequalities for Dependent Random Variables}

We consider the concept of \textit{pointwise inequalities}, that is, inequalities that hold uniformly for any parameter $\theta \in \Theta$. Define the function (see, \cite{van2002hoeffding})
\begin{align}
\psi_{\alpha} (x) := \mathsf{exp} \left( x^{\alpha} \right) - 1, \ \ \ \text{for any} \ \ x > 0.     
\end{align}
For a real-valued random variable $\xi$, we define with
\begin{align}
\norm{ \xi }_{\psi_{\alpha} } := \mathsf{inf} \bigg\{ \lambda > 0: \mathbb{E} \left[ \psi_{\alpha} \left(  \frac{ |\xi| }{\lambda} \right) \right] \leq 1 \bigg\}    
\end{align}
Moreover, we write that $\xi \in \mathcal{L}^q$ for some $q > 0$ if it holds that 
\begin{align}
\norm{\xi}_q := \left\{ \mathbb{E} \left( \left| \xi \right|^q \right) \right\}^{1/q}     
\end{align}

\newpage

\begin{definition}[Orlicz-norm]
For any convex function $\psi: \mathbb{R}^{+} \to \mathbb{R}^{+}$ such that $\psi(0) = 0$ and $\psi(x) \to \infty$ as $x \to \infty$ and (real-valued) random variable $X$, we denote with $\norm{ x }_{\psi}$ the Orlicz-norm, which is defined by 
\begin{align}
\norm{ X }_{ \psi} := \mathsf{inf} \left\{ C > 0: \mathbb{E} \left[ \psi \left( \frac{ | X | }{ C } \right) \right] \leq 1  \right\} .   
\end{align}
\end{definition}

\begin{itemize}
    \item Denote the $\ell^p$ Orlicz-norm of $X$ by $\norm{ X }_{p}$ for $p \in [ 0, + \infty)$ by setting $\psi (x) = x^p$ and $\norm{ X }_{ e^{\gamma} }$ the exponential Oricz-norm for $\gamma > 0$ by setting $\psi(x) = \mathsf{exp} ( x^{\gamma} ) - 1$ for some $\gamma \geq 1$. 
    
    \item The function $\psi(x)$ is the convex hull of $x \mapsto \mathsf{exp} ( x^{\gamma} ) - 1$ for some $\gamma \in (0,1)$, which ensures convexity. 
    
    \item Moreover, when $\boldsymbol{X}$ is a random vector, we define its Orlicz-norm by $\norm{ X }_{\psi} := \mathsf{sup}_{ \norm{ \boldsymbol{u} } \leq 1 } \norm{ \boldsymbol{u}^{\prime} \boldsymbol{X} }_{\psi}$.
    
\end{itemize}

\subsubsection{Central limit theorems for high dimensional dependent data}

Following the framework proposed by  \cite{chang2021central} , recall that we define with $S_{n,x} = n^{-1/2} \sum_{t=1}^n X_t$. Let $\mathcal{G} \sim \mathcal{N} (0, \Xi)$ where $\Xi := \mathsf{Cov} \left( n^{-1/2} \sum_{t=1}^n X_t \right)$. Without loss of generality we assume that $\mathcal{G}$ is independent of $\mathcal{X} = \left\{ X_1,..., X_n \right\}$. We write with $X_t = \left( X_{t,1},..., X_{t,p}     \right)^{\prime}$. Then, the long-run variance of the $j-$th coordinate marginal sequence $\left\{ X_{t,j} \right\}_{t=1}^n$ is defined as 
\begin{align}
V_{n,j} = \mathsf{Var} \left( \frac{1}{\sqrt{n} } \sum_{t=1}^n X_{t,j} \right).    
\end{align}
Therefore, in order to determine the convergence rate of $\rho_n$ for the $\alpha-$mixing sequence $\left\{ X_t \right\}$, we impose additional regularity conditions.
The above condition assumes that the partial sum $\frac{1}{\sqrt{n} } \sum_{t=1}^n X_{t,j}$ is non-degenerated which is necessary to bound the probability of a Gaussian vector taking values in a small region. When $\left\{ X_{t,j} \right\}_{ t \geq 1 }$ is stationary, then it holds that
\begin{align}
V_{n,j} := \Gamma_j(0) + 2 \sum_{k=1}^{n-1} \left( 1 - \frac{k}{n} \right) \Gamma_j (k)     
\end{align}
where $\Gamma_k(k) = \mathsf{Cov} \left( X_{1,j}, X_{k+1,j} \right)$ is the autocovariance of $\left\{ X_{t,j} \right\}_{ t \geq 1}$ at lag $k$.

\begin{assumption}[Subexponential moment] There exists a sequence of constants $B_n \geq 1$ and a universal constant $\gamma_1 \geq 1$ such that $\norm{ X_{t,j} }_{ \psi_{\gamma_1} } \leq B_n$ for all $t \in [n]$ and $j \in [p]$.
\end{assumption}

\begin{assumption}[Decay of $\alpha-$mixing coefficients] There exist some universal constants $K_1 > 1, K_2 > 0$ and $\gamma_2 > 0$ such that $\alpha_n(k) \leq K_1 e^{ \left( - K_2 k^{\gamma_2}  \right) }$ for any $k \geq 1$.
\end{assumption}

\begin{assumption}[Non-degeneracy] 
There exists a universal constant $K_3 > 0$ such that $\mathsf{min}_{ j \in [p] } V_{n,j} \geq K_3$.
\end{assumption}

\newpage

\begin{example}
Consider the inverse of the covariance matrix such that 
\begin{align}
\vertiii{  \norm{ \widehat{\boldsymbol{\Sigma}}_i^{-1} }  }_{\psi} \leq C. 
\end{align}
For the polynomial case, applying the union bound followed by Markov's inequality we conclude that 
\begin{align}
\underset{ i }{ \mathsf{max} } \norm{ \widehat{\boldsymbol{\Sigma}}_i^{-1} } \leq_{\mathbb{P}} n^{1 / p} \ \ \ \text{and} \ \ \ \underset{ i,t,j }{ \mathsf{max} } \left| X_{i,t}^{(j)} \right| \leq_{\mathbb{P}}   \left( nkT \right)^{1 / p}.
\end{align}
\end{example}

\begin{lemma}
Let $X$ and $Y$ be random elements defined in the same probability space $( \Omega, \mathcal{F}, \mathbb{P} )$ taking values in the metric space $( S, d)$. Then for measurable $A$ and $\delta \geq 0$
\end{lemma}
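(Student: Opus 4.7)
The statement being set up is the standard coupling / Strassen-type inequality: for a measurable set $A \subset S$ and $\delta \geq 0$, letting $A^{\delta} = \{ s \in S : d(s, A) \leq \delta \}$ denote the closed $\delta$-enlargement of $A$, one has
\begin{align*}
\mathbb{P}(X \in A) \leq \mathbb{P}(Y \in A^{\delta}) + \mathbb{P}(d(X,Y) > \delta),
\end{align*}
together with the symmetric inequality obtained by swapping $X$ and $Y$. This is the workhorse used to transfer a Gaussian approximation from a coupled variable to the target, which is precisely what the surrounding CLT material of \cite{chang2021central} requires.

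The plan is to proceed by a direct event decomposition. First, I would split the event $\{X \in A\}$ according to whether the coupling is ``good'' or ``bad'':
\begin{align*}
\{X \in A\} = \big(\{X \in A\} \cap \{d(X,Y) \leq \delta\}\big) \cup \big(\{X \in A\} \cap \{d(X,Y) > \delta\}\big).
\end{align*}
Next, on the good-coupling event, I would invoke the triangle inequality in $(S,d)$: if $X(\omega) \in A$ and $d(X(\omega), Y(\omega)) \leq \delta$, then $Y(\omega)$ lies within distance $\delta$ of a point of $A$, hence $Y(\omega) \in A^{\delta}$. Therefore
\begin{align*}
\{X \in A\} \cap \{d(X,Y) \leq \delta\} \subseteq \{Y \in A^{\delta}\},
\end{align*}
and subadditivity of $\mathbb{P}$ applied to the two pieces of the decomposition yields the claimed inequality. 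The reverse inequality follows by the same argument with the roles of $X$ and $Y$ interchanged.

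The main issue to check carefully is measurability. The set $A^{\delta}$ need not be measurable for an arbitrary measurable $A$ in a general metric space, but it is measurable whenever $(S,d)$ is separable (the map $s \mapsto d(s,A)$ is continuous, so $A^{\delta}$ is closed whenever $A$ is, and for general Borel $A$ one can either restrict to the Borel $\sigma$-algebra on a separable $S$ or interpret $\mathbb{P}(Y \in A^{\delta})$ via outer measure). Similarly, $\{d(X,Y) > \delta\}$ is measurable because $d(X,Y)$ is a real-valued random variable (the composition of the continuous function $d$ with the jointly measurable pair $(X,Y)$, using separability of $S$). I would state separability of $S$ (or Borel measurability of $A$ in a Polish space) as the standing hypothesis under which the inequality is meaningful, and then the proof itself is a two-line application of the decomposition and the triangle inequality — no serious obstacle beyond this measurability bookkeeping.
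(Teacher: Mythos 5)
Your reconstruction is correct: the inequality $\mathbb{P}(X \in A) \leq \mathbb{P}(Y \in A^{\delta}) + \mathbb{P}(d(X,Y) > \delta)$ is indeed the intended Strassen-type coupling bound used in the Gaussian approximation literature the surrounding text draws on, and your proof --- splitting $\{X \in A\}$ on the event $\{d(X,Y) \leq \delta\}$, applying the triangle inequality to get $\{X \in A\} \cap \{d(X,Y) \leq \delta\} \subseteq \{Y \in A^{\delta}\}$, and finishing with subadditivity --- is the standard and complete argument. Note, however, that the paper itself truncates the lemma before stating its conclusion and supplies no proof whatsoever, so there is nothing in the source to compare against; you have in effect filled in both the missing statement and the missing proof. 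Your measurability caveat (separability of $S$ so that $A^{\delta}$ and $\{d(X,Y) > \delta\}$ are measurable, or an outer-measure reading otherwise) is a genuine point the paper omits and is worth keeping.
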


Due to the fact that $x \mapsto \mathsf{exp} \left[ \left( x \big/ \vertiii{X}_{ e^{\gamma} } \right)^{\gamma}  \right]$ is non-decreasing then
\begin{align*}
\mathbb{P} \left( | X | \geq x \right) &= \mathbb{P} \bigg( \mathsf{exp} \left[ \left( |X| \big/ \vertiii{X}_{e^{\gamma}} \right)^{\gamma} \right] \geq \mathsf{exp} \left[ \left( x \big/ \vertiii{X}_{e^{\gamma}} \right)^{\gamma} \right] \bigg) 
\\
&\leq \mathsf{exp} \left[ - \left( x \big/ \vertiii{X}_{e^{\gamma}} \right)^{\gamma}   \right] \mathbb{E} \mathsf{exp} \left[ \left( | X | \big/ \vertiii{X}_{e^{\gamma}} \right)^{\gamma} \right].
\end{align*}
It holds that, 
\begin{align}
\psi_{e^{\gamma}}(x) = K_{\gamma} x \mathbf{1} \left\{ 0 \leq x \leq a_{\gamma} \right\} + \left[ \mathsf{exp} ( x^{\gamma} ) - 1 \right] \mathbf{1} \left\{ x \geq a_{\gamma} \right\}    
\end{align}
where $K_{\gamma} := \frac{ ( \mathsf{exp} a_{\gamma}^{\gamma} - 1) }{ a_{\gamma} }$ and $a_{\gamma}$ is defined as below 
\begin{align}
a_{\gamma} := \mathsf{inf} \left\{ x \in \mathbb{R}_{+} : x \geq \left(  \frac{1 - \gamma}{ \gamma } \right)^{ 1 / \gamma }  \right\}    
\end{align}
Moreover, it holds that 
\begin{align}
\left( \frac{1 - \gamma}{ \gamma } \right)^{ 1 / \gamma}  \leq a_{\gamma} \leq \left( \frac{1}{\gamma} \right)^{ 1 / \gamma }.    
\end{align}

\begin{example}
Consider the martingale sequence 
\begin{align}
S_n = \sum_{i=1}^n X_i, n \geq 1.    
\end{align}
Consider the $\mathcal{F}_{i-1}$measurable random variables $K_i > 0$, for $i = 1,2,...$. Define with $B_0^2 = 0$ and for any $n \geq 1$ such that
\begin{align}
B_n^2 = \sum_{i=1}^n K_i^2 \left\{ 1 + \mathbb{E} \left[ \psi \left(  \frac{|X_i|}{K_i} \right) \big| \mathcal{F}_{i-1} \right]  \right\}    
\end{align}    
\end{example}

\newpage

\begin{theorem}
Let $\psi$ be an Orlicz function such that it holds that $\mathsf{sup}_{ x, y \to \infty } \psi(x) \psi(y) / \psi (cxy) < \infty$, for some constant $c$. Suppose that $\left\{ Z_{\theta} : \theta \in \Theta \right\}$ is a separable stochastic process indexed by $\theta$ in the pseudo-metric space $( \Theta, \tau )$. Assume that 
\begin{align}
\norm{ Z_{\theta} - Z_{ \vartheta } }_{\psi} \leq C^{\prime} \int_0^{ \mathsf{diam}(\Theta) } \psi^{-1} \big( D ( \delta ) \big) d \delta   
\end{align}
where $\mathsf{diam}(\Theta)$ is the diameter of $\Theta$ and $D(\delta)$ is the $\delta-$packing number.
\end{theorem}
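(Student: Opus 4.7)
The plan is to establish the bound by generic chaining in the style of Dudley--Pisier, using the Orlicz growth condition to control maxima over successive nets. First I would fix dyadic scales $\delta_k = 2^{-k}\,\mathsf{diam}(\Theta)$ for $k \geq 0$ and choose finite $\delta_k$-nets $\Theta_k \subset \Theta$ of cardinality at most $D(\delta_k)$. Writing $\pi_k(\theta)$ for the nearest point of $\Theta_k$ to $\theta$, separability of $\{Z_\theta\}$ reduces the bound on the full process to a bound over a countable dense subset, and the telescoping identity
\begin{align*}
Z_\theta - Z_{\pi_0(\theta)} = \sum_{k \geq 0} \bigl( Z_{\pi_{k+1}(\theta)} - Z_{\pi_k(\theta)} \bigr)
\end{align*}
together with $\tau(\pi_{k+1}(\theta),\pi_k(\theta)) \leq 3\delta_k$ sets up the chaining decomposition.

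Second, the main analytic engine I would invoke is the maximal inequality for Orlicz norms: under the assumed growth condition $\mathsf{sup}_{x,y\to\infty} \psi(x)\psi(y)/\psi(cxy) < \infty$, any finite collection $X_1,\ldots,X_N$ satisfies $\norm{\mathsf{max}_{i \leq N} |X_i|}_\psi \leq K_\psi \, \psi^{-1}(N)\, \mathsf{max}_i \norm{X_i}_\psi$. Applying this at each level $k$ to the at most $D(\delta_k)\,D(\delta_{k+1}) \leq D(\delta_{k+1})^2$ pairs of consecutive net points, and using the Lipschitz-in-$\tau$ control on $\norm{Z_\cdot - Z_\cdot}_\psi$ coming from the increment hypothesis, yields a per-level bound of order $\psi^{-1}\bigl(D(\delta_{k+1})^2\bigr)\,\delta_k$. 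Crucially, the growth condition also gives $\psi^{-1}(N^2) \lesssim \psi^{-1}(N)$, so the squared cardinality does not harm the rate.

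Third, I would sum the per-level bounds and compare to the integral. Since $\delta_k - \delta_{k+1} = \tfrac{1}{2}\delta_k$ and $\delta \mapsto \psi^{-1}(D(\delta))$ is non-increasing, the dyadic sum $\sum_{k \geq 0} \psi^{-1}(D(\delta_k))\,\delta_k$ dominates the Riemann-type sum for $\int_0^{\mathsf{diam}(\Theta)} \psi^{-1}(D(\delta))\, d\delta$ up to a universal factor, absorbing all the combinatorial and Orlicz constants into the final $C'$. Passing from the finite nets to the whole process is handled by a Fatou-type argument for $\norm{\cdot}_\psi$ along a dense sequence of net refinements.

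The main obstacle will be handling the growth condition carefully: the relation $\psi(x)\psi(y) \leq \psi(cxy)$ holds only for sufficiently large arguments, so the maximal inequality requires a preliminary truncation or a separate treatment for small $N$, and the constant $c$ propagates multiplicatively across every level of chaining. A secondary delicate point is making sure that the residual $Z_{\pi_0(\theta)} - Z_{\pi_0(\vartheta)}$ at the coarsest scale is absorbed into the same integral bound, which I would accomplish by choosing $\delta_0 \asymp \mathsf{diam}(\Theta)$ so that $\pi_0$ collapses to a single representative, giving a single-point contribution that is dominated by the first term of the dyadic sum.
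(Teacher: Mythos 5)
The paper itself offers no proof of this statement: it is a restatement of the classical Orlicz-norm maximal inequality, and as printed it is even missing its conclusion (the entropy-integral bound appears as a hypothesis on $\norm{ Z_{\theta} - Z_{\vartheta} }_{\psi}$ rather than as a bound on $\norm{ \sup_{\theta, \vartheta} | Z_{\theta} - Z_{\vartheta} | }_{\psi}$); the precise two-term version, with the Lipschitz increment hypothesis $\norm{ X_s - X_t }_{\psi} \leq C d(s,t)$, is restated in the appendix, again without proof. Your reading of the statement is the right one, and your strategy is exactly the canonical chaining route: nested nets, the finite-maximum inequality $\norm{ \max_{i \leq N} | X_i | }_{\psi} \leq K \psi^{-1}(N) \max_i \norm{ X_i }_{\psi}$ applied level by level, summation against dyadic scales, and collapse of the coarsest net to a single point to obtain the corollary form. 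So there is no alternative method to compare against.

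There is, however, one step that fails as written. You count $D(\delta_k) D(\delta_{k+1}) \leq D(\delta_{k+1})^2$ pairs at level $k$ and then assert that the growth condition gives $\psi^{-1}(N^2) \lesssim \psi^{-1}(N)$. That implication is false: the condition $\sup_{x,y \to \infty} \psi(x)\psi(y)/\psi(cxy) < \infty$ is satisfied by $\psi(x) = x^p$, for which $\psi^{-1}(N^2) = N^{2/p} = \left( \psi^{-1}(N) \right)^2$, which is not $O\left( \psi^{-1}(N) \right)$. What the growth condition actually yields (taking $x = y = \psi^{-1}(N)$) is $\psi^{-1}(N^2) \leq c \left( \psi^{-1}(N) \right)^2$, which does not help here. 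The repair is to change the counting, not the Orlicz estimate: build the chain recursively downward, linking each point of $\Theta_{k+1}$ to one designated nearest point of $\Theta_k$ (rather than projecting $\theta$ itself onto every net), so that level $k$ contributes at most $D(\delta_{k+1})$ distinct increments, each of $\psi$-norm at most a constant times $\delta_k$, and the per-level bound is $K \psi^{-1}\big( D(\delta_{k+1}) \big) \delta_k$ with no squaring. A genuine $D^2$ is unavoidable only at the coarsest level, where one must maximize over pairs of chain endpoints; this is exactly why the appendix version carries the separate term $\delta\, \psi^{-1}\big( D^2(\eta, d) \big)$. In the corollary form you are proving, taking $\eta = \delta = \mathsf{diam}(\Theta)$ makes the coarsest net a singleton and that term is dominated by the integral, as you note at the end. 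With that correction your argument goes through.
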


\begin{corollary}
Let $W_i$ be $\mathcal{F}_i-$measurable and $\mathbb{E} ( W_i | \mathcal{F}_{i-1} ) = 0$ for $i \geq 1$. Suppose that for some constant $c < \infty$ it holds that 
\begin{align}
\mathbb{E} \left( \psi \left( \frac{ | W_i | }{ c} \right) \big| \mathcal{F}_{i-1} \right) \leq 1, \ \ \text{almost surely} \ i = 1,2,...      
\end{align}
\end{corollary}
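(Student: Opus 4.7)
The plan is to exploit the martingale difference structure of $\{W_i\}$ together with the submultiplicative property of the Orlicz function $\psi$ furnished by the preceding theorem (namely $\sup_{x,y \to \infty} \psi(x)\psi(y)/\psi(cxy) < \infty$, which yields $\psi(x)\psi(y) \leq C_0 \psi(c x y)$ for some constant $C_0$) in order to lift the conditional Orlicz-norm bound on each $W_i$ to a concentration (or Orlicz-norm) bound on the partial sum $S_n = \sum_{i=1}^n W_i$. First I would recast the hypothesis as the a.s.\ conditional Orlicz estimate $\vertiii{W_i}_{\psi \mid \mathcal{F}_{i-1}} \leq c$, and note that, by the centering condition $\mathbb{E}(W_i \mid \mathcal{F}_{i-1}) = 0$, convexity of $\psi$ translates this into a centered conditional exponential-type moment control on each martingale increment.

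Next, I would combine these increment bounds into a bound on the sum by induction on $n$, alternating conditioning on $\mathcal{F}_{n-1}$ with the submultiplicative step. Specifically, after freezing $S_{n-1}$ via conditioning, I would apply the hypothesis to bound $\mathbb{E}[\psi(|W_n|/c) \mid \mathcal{F}_{n-1}] \leq 1$ and couple it with the inductive bound on $S_{n-1}$ through the submultiplicativity of $\psi$, producing a recursion of the schematic form
\begin{align*}
\mathbb{E}\left[ \psi\left( \frac{|S_n|}{C_n} \right) \right] \leq C_0 \cdot \mathbb{E}\left[ \psi\left( \frac{|S_{n-1}|}{C_{n-1}} \right) \right] \leq \cdots \leq C_0^n,
\end{align*}
for an appropriately chosen scale $C_n$ proportional to $c$ times a suitable function of $n$. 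A final Markov inequality applied to this Orlicz-moment estimate then delivers the advertised exponential tail bound on $|S_n|$, or equivalently an Orlicz-norm estimate $\vertiii{S_n}_{\psi} \lesssim C_n$.

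The main obstacle will be the delicate bookkeeping of constants when iterating the submultiplicative estimate: the scale $C_n$ must be chosen so as not to degrade faster than the expected $\sqrt{n}\,c$ sub-Gaussian-type rate, while simultaneously absorbing the accumulated $C_0$ factors across the $n$ inductive steps. A secondary technical point is that the submultiplicative property is only asserted asymptotically as $x, y \to \infty$, so one must either restrict attention to the tails of $S_n$ and treat the bounded regime by a separate convexity argument, or replace $\psi$ by an equivalent Orlicz function for which the submultiplicative inequality holds globally on $[0, \infty)$. The remaining measurability issues and tower-property manipulations are routine and can be dispatched using the monotone class theorem.
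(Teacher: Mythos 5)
There is a basic mismatch you should be aware of before anything else: the ``corollary'' as printed in the paper states only hypotheses and has no conclusion, and the paper supplies no proof of it whatsoever --- it is an untranscribed fragment from the van de Geer (2002) framework on Hoeffding-type inequalities for dependent variables. So there is no proof in the paper to compare your attempt against, and your proposal is necessarily directed at a conclusion you have supplied yourself (an Orlicz-norm or exponential tail bound for $S_n = \sum_{i=1}^n W_i$). That is a sensible guess given the neighbouring example that introduces $S_n$ and $B_n^2 = \sum_i K_i^2\{1+\mathbb{E}[\psi(|X_i|/K_i)\mid\mathcal{F}_{i-1}]\}$, but it is a guess, and your review should say so.

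On the substance of the argument you sketch, there is a genuine gap in the inductive step. The submultiplicativity $\psi(x)\psi(y) \leq C_0\,\psi(cxy)$ controls $\psi$ of a \emph{product}, but the quantity you need to control is $S_n = S_{n-1} + W_n$, a \emph{sum}; submultiplicativity gives you no handle on $\psi(|S_{n-1}+W_n|/C_n)$ in terms of $\psi(|S_{n-1}|/C_{n-1})\cdot\psi(|W_n|/c)$. The only way convexity of $\psi$ lets you split a sum is the Orlicz triangle inequality, $\vertiii{S_n}_{\psi} \leq \vertiii{S_{n-1}}_{\psi} + \vertiii{W_n}_{\psi} \leq nc$, which yields a linear-in-$n$ scale, not the $\sqrt{n}\,c$ rate you are aiming for. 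To beat linear you must use the conditional centering $\mathbb{E}(W_i\mid\mathcal{F}_{i-1})=0$ in an essential way: convert the hypothesis $\mathbb{E}[\psi(|W_i|/c)\mid\mathcal{F}_{i-1}]\leq 1$ into a conditional moment-generating-function bound of the form $\mathbb{E}[\exp(\lambda W_i)\mid\mathcal{F}_{i-1}] \leq \exp(2\lambda^2 b_i^2)$ (the centering kills the linear term in $\lambda$), then run the exponential supermartingale/Chernoff argument on $\prod_i \exp(\lambda W_i - 2\lambda^2 b_i^2)$ and optimize over $\lambda$ --- which is exactly the bound $P(A) \leq \mathsf{exp}\{-\beta\alpha + 2\beta^2 b^2\}$ the paper quotes a few lines below the corollary. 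The submultiplicativity hypothesis you lean on belongs to the chaining theorem stated immediately above the corollary (where it is used to control increments of a process over an entropy integral), not to the martingale-increment aggregation step. As written, your recursion $\mathbb{E}[\psi(|S_n|/C_n)] \leq C_0^n$ either fails to close or forces $C_n$ to grow linearly, so the ``delicate bookkeeping'' you flag as a secondary concern is in fact where the proof breaks.
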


\begin{remark}
Partitioning entropy could be applied to nonstationary time series? This could be the case when considering a discretenized method, such as block of nonstationary time series (i.e., $m-$dependence). Notice that this paper doesn't have in depth explanation of the dependence structure. However, the Orlicz norm provides related moment condition for understanding the asymptotic behaviour.  
\end{remark}
We define with $\phi(d)$ the following quantity
\begin{align}
\phi(d) = \int_0^d H^{1/2} ( \delta, d ) d \delta \vee d := \mathsf{min} \left\{ \int_0^d H^{1/2} ( \delta, d ) d \delta, d \right\},    
\end{align}

\begin{itemize}
    
\item What type of dependence structure does the entropy integral $\phi(d)$ introduce? For example, what form this integral would have in the case of Garch processes or for the autoregressive model? Are there any related results to Hoeffding's inequality for $\beta-$mixing sequences? To derive the proofs of main results presented in the paper we use that $P(A) \leq \mathsf{exp} \left\{ - \beta \alpha + 2 \beta^2 b^2 \right\}$. 

\item All probability bounds are derived with respect to $S_n$, which the sum of stationary martingale differences. Under the assumption of stationary sequences we assume that sub-Gaussianity condition holds in order to obtain probability bounds. Define with $g ( y_1,..., y_n )$ a measurable function of the data, which for example could be extended to sample moments of estimators. 
    
\item An important related assumption is the Geometric ergodicity which along with $\beta-$mixing can facilitate the development of further the asymptotic theory in time series model. Moreover the partitioning entropy condition clearly holds in the case of stationary sequences but the main challenge in the case of nonstationary time series with a LUR process representation is the presence of the nuisance parameter of persistence. 
    
\item The theoretical framework presented in the paper shows that the theory can be also extended to the case of M estimators (such as quantile autoregression) using suitable smoothing conditions and deriving the corresponding probability bounds.  
    
\end{itemize}

\newpage

\subsubsection{Sub-Weibull random vectors under $\beta-$mixing}

Most analysis on lasso assume that data that have sub-Gaussian or subexponential tails. These assumptions ensure that the moment generating function exists, at least for some values of the free parameter. Nonexistence of the moment generating function is often taken as a definition of having a heavy tail.  

\begin{definition}[Sub-Weibull random variables] A sub-Gaussian random variable $X$ can be defined as one for which  
\begin{align}
\mathbb{E} \left( | X |^p \right)^{1/p} \leq K \sqrt{p}, \ \ \forall \ p \geq 1, \ \text{where} \ K \ \text{is constant}.    
\end{align}
\end{definition}
A natural realization which allows for heavier tails is as follows. Fix some $\gamma > 0$, and require
\begin{align}
\norm{ X }_p := \left( \mathbb{E} | X |^p \right)^{1/p} \leq K p^{1 / \gamma} \ \forall \ p \geq 1 \vee \gamma.    
\end{align}
Notice that the condition above requires that the tail is no heavier than that of a Weibull random variable with parameter $\gamma$.

\medskip

\begin{definition}[Sub-Weibull Random Variables Properties]
Let $X$ be a random variable. Then, the following statements are equivalent for every $\gamma > 0$. The constants $K_1, K_2, K_3$ differ from each other at most by a constant depending only on $\gamma$.
\begin{itemize}
\item[(\textit{i})]  The tails of $X$ satisfies 
\begin{align}
\mathbb{P} \left( \left| X \right| > t \right) \leq 2 \mathsf{exp} \left\{ - \left( \frac{t}{K_1}^{\gamma} \right)^{\gamma} \right\} \ \ \forall \ \ t \geq 0.
\end{align}
    
\end{itemize}    
\end{definition}

\subsubsection{Approximation Theorems for Strongly Mixing Random Variables}

\begin{theorem}
Suppose that $( X_k )$ is a strictly stationary sequence of real-valued Random Variables with $\mathbb{E} (X_k) = 0$, $\mathbb{E} \left(X_k^2\right) < \infty$ and $\mathsf{Var} \left(S_n\right) \to \infty$ as $n \to \infty$. Suppose that $\delta > 0$ and that $\lambda > 1 + 3 / \delta$ are real numbers, such that $\alpha (n) = o \left( \left( \mathsf{log} \right)^{- \lambda} \right)$ as $n \to \infty$, and
\begin{align}
\underset{ n \to \infty }{\mathsf{sup} }  \  \frac{ \mathbb{E} | S_n |^{2 + \delta} }{ \left( \mathsf{Var} S_n \right)^{ \frac{(2 + \delta)}{2} }  } < \infty.    
\end{align}
Then, $\exists \  \sigma^2$, such that $0 < \sigma^2 < \infty$, such that $\underset{ n \to \infty }{ \mathsf{lim} } \frac{1}{n} \mathsf{Var} (  S_n ) = \sigma^2$. Thus, without changing its probability process $\big( S(t), t \geq 0 \big)$ can be redefined on another probability space, together with a Wiener process $\big( W(t), t \geq 0 \big)$ 
\begin{align}
\mathbb{P} \big( \left| S(t) - W( \sigma^2 t ) \right| \big) = o \left( t^{1/2} \left( \mathsf{log}  \mathsf{log} t \right)^{-1/2} \right) \ \ \ \text{as} \ \ t \to \infty   
\end{align}
\end{theorem}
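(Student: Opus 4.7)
The plan is to combine a block decomposition under the mixing condition with a Berbee-type coupling and a Gaussian strong approximation for sums of independent random variables, following the blueprint of Philipp--Stout and Kuelbs--Philipp.

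First I would pin down the limiting variance $\sigma^2$. Under stationarity and the implicit finite $(2+\delta)$ moment of $X_0$, Ibragimov's covariance inequality gives $|\mathbb{E}(X_0 X_k)| \leq 8 \alpha(k)^{\delta/(2+\delta)} \norm{X_0}_{2+\delta}^2$. The hypothesis $\alpha(n) = o((\log n)^{-\lambda})$ with $\lambda > 1$ is amply sufficient for $\sum_k \alpha(k)^{\delta/(2+\delta)} < \infty$, so the candidate $\sigma^2 := \mathbb{E}(X_0^2) + 2 \sum_{k \geq 1} \mathbb{E}(X_0 X_k)$ converges absolutely. The uniform control $\mathsf{sup}_n \mathbb{E}|S_n|^{2+\delta}/(\mathsf{Var}\, S_n)^{(2+\delta)/2} < \infty$ combined with $\mathsf{Var}\, S_n \to \infty$ delivers uniform integrability of $S_n^2/n$, and a Cesàro argument then upgrades absolute summability of the covariances to $n^{-1}\mathsf{Var}(S_n) \to \sigma^2$, with $\sigma^2 > 0$ since $\mathsf{Var}\, S_n \to \infty$.

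Next I would partition $[1,n]$ into alternating big and small blocks of lengths $p_n$ and $q_n$, with $p_n$ of polylogarithmic order $p_n \asymp (\log n)^{\kappa}$ and $q_n/p_n \to 0$ fast enough that $p_n \alpha(q_n) \to 0$ at a controlled polynomial rate. Let $\xi_j$ and $\eta_j$ denote the partial sums over the $j$-th big and small block respectively, and let $k_n$ denote the number of blocks. Stationarity and the uniform $(2+\delta)$-bound imply $\mathsf{Var}(\sum_{j \leq k_n} \eta_j) = O(k_n q_n)$, so the small-block residue is negligible. Berbee's coupling lemma, applied iteratively across the small-block gaps, produces on an enlarged probability space an independent sequence $(\tilde\xi_j)$ with $\tilde\xi_j \stackrel{d}{=} \xi_j$ and $\sum_{j \leq k_n} \mathbb{P}(\tilde\xi_j \neq \xi_j) \leq k_n \alpha(q_n)$.

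Finally, to the independent partial sums I would apply a Sakhanenko-type Gaussian strong approximation, producing a standard Wiener process $W$ with $\bigl|\sum_{j \leq k} \tilde\xi_j - W(\tau_k)\bigr| = O(k^{1/(2+\delta)}\log k)$ almost surely, where $\tau_k := \sum_{j \leq k} \mathsf{Var}(\tilde\xi_j)$; a time-change argument reconciles $\tau_{k_n}$ with $\sigma^2 n$ via the variance asymptotics from the first step. I expect the bookkeeping in this last step to be the main obstacle: four distinct error contributions---the Berbee mismatch of total size $(n/p_n)\alpha(q_n)$, the small-block $L^2$ residue of order $(nq_n/p_n)^{1/2}$, the Sakhanenko remainder of order $(n/p_n)^{1/(2+\delta)}\log n$, and the variance discrepancy $|\tau_{k_n} - \sigma^2 n|$---must all be pushed simultaneously below the sharp target $o(n^{1/2}(\log\log n)^{-1/2})$. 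The condition $\lambda > 1 + 3/\delta$ is precisely what makes this system of constraints feasible with a polylogarithmic block scale: it is the threshold at which the choice of $p_n$ can bury the KMT-type log factor beneath the $(\log\log n)^{-1/2}$ slack while still leaving the coupling error summable.
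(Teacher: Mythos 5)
The paper states this result without proof (it is a classical almost sure invariance principle of Philipp--Stout / Kuelbs--Philipp type), so there is no in-paper argument to compare against; your blueprint of blocking, approximation by independent blocks, and Gaussian strong approximation is the right family of techniques. However, the proposal has three concrete gaps that would make it fail as written. First, the variance step is wrong: from $\alpha(n) = o\big((\log n)^{-\lambda}\big)$ you cannot conclude $\sum_k \alpha(k)^{\delta/(2+\delta)} < \infty$, because no power of $(\log k)^{-1}$ is summable. Hence the covariance series need not converge absolutely under a merely logarithmic mixing rate, and the existence of $\sigma^2 = \lim_n n^{-1}\mathsf{Var}(S_n)$ must instead be extracted from the uniform $(2+\delta)$-moment ratio together with a subadditivity/regular-variation argument on $\mathsf{Var}(S_n)$; likewise $\mathsf{Var}(S_n)\to\infty$ alone does not give $\sigma^2>0$ (consider $\mathsf{Var}(S_n)\sim n/\log n$) --- positivity also comes out of that more delicate argument. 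Second, Berbee's coupling lemma is a $\beta$-mixing tool; under strong mixing alone one must use the Berkes--Philipp approximation lemma (via characteristic functions and the covariance inequality for bounded complex-valued functionals) or Bradley's coupling, whose error bounds are weaker and enter the bookkeeping differently.

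Third, and most seriously, the block calibration is infeasible. With big blocks of polylogarithmic length there are $k_n \asymp n/(\log n)^{\kappa}$ blocks up to time $n$, and since $\alpha(q_n) \gtrsim (\log q_n)^{-\lambda} \asymp (\log\log n)^{-\lambda}$ for polylogarithmic $q_n$, the cumulative coupling error $\sum_{j\le k_n}\alpha(q_j)$ diverges like $n(\log n)^{-\kappa}(\log\log n)^{-\lambda}$, so Borel--Cantelli never applies and the almost-sure identification of the coupled sums fails. Under a logarithmic mixing rate the blocks must grow essentially exponentially (lengths of order $\exp(j^{\rho})$ with $\rho$ tied to $\lambda$), so that only polylogarithmically many blocks occur up to time $n$ and $\sum_j \alpha(q_j) \lesssim \sum_j j^{-\rho\lambda} < \infty$; this is precisely why the achievable remainder is only the weak $o\big(t^{1/2}(\log\log t)^{-1/2}\big)$ rather than a power saving, and it is where the threshold $\lambda > 1 + 3/\delta$ actually bites, balancing the number of blocks against the $(2+\delta)$-moment control of a single huge block. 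Your closing claim that the constraint system is feasible with polylogarithmic $p_n$ is therefore false, and repairing it forces a qualitatively different block geometry from the one you describe.
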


\newpage 

\subsection{Discussion}
The various mixing properties of time series (either contemporaneously or temporally dependent) are important for understanding how these particular features affect the asymptotic theory for model estimators, test statistics and the relevant properties of model selection methods based on dependent data which exhibit these features. Some potential data specific features such as the presence of serial dependence among errors and regressors, heteroscedasticity and fat tails can lead to misspecified time series models which especially in a high dimensional setting require to establish validity via error bound determination (see, \cite{adamek2023lasso} and \cite{wong2020lasso} among others). 

In particular, shrinkage methodologies assume a certain structure on the unknown parameter vector of interest.  Generally, the sparsity condition implies that a small but unknown subset of the high dimensional vector of covariates is considered to have "significantly different than zero" coefficients, while the remaining subset of covariates have negligible, or even exactly zero, coefficients. In other words, for Lasso shrinkage methodologies to render meaningful inference the penalty function exploits the underline sparsity condition. Specifically, \cite{adamek2023lasso} obtain novel theoretical results for both point estimation and inference via the desparsified lasso. Furthermore, they consider a general time series framework where the regressors and error terms are allowed to be non-Gaussian, serially correlated and heteroscedastic and the number of variables can grow faster than the time dimension.  

\begin{example}[Nuclear Norm Regularized Estimation]
\

Denote the Frobenius norm of an $( N \times T )$ matrix $A$ be a $\norm{ A }_2 := \left( \sum_{i=1}^N \sum_{t=1}^T A_{it}^2 \right)^{1 / 2}$ which implies that $\norm{ A }_2^2 = \left( \sum_{i=1}^N \sum_{t=1}^T A_{it}^2 \right)$. Then, the OLS estimator of $\beta$ is given by 
\begin{align}
\widehat{\beta}_{OLS} := \underset{ \lambda \in \mathbb{R}^{N \times R}, f \in \mathbb{R}^{ T \times R }  }{ \mathsf{min} } \  \frac{1}{2 NT} \norm{ Y - \beta . X - \lambda f^{\prime} }_2^2,
\end{align}
Relevant research questions of interest: 

\begin{itemize}

\item How does the limiting distribution of the OLS estimator in the above high-dimensional environment is affected under different mixing conditions and distributional assumptions on the error term?

\item What approach we need to follow in order to obtain error bounds on relevant statistical quantities? 

\end{itemize}

\end{example}

\begin{example}[Prewhitening Estimators]
The idea behind prewhitening goes like this: Suppose one is nonparametrically estimating a function $f( \lambda )$ at $\lambda_0$ by taking unbiased estimates of $f( \lambda )$ at a number of points $\lambda$ in a neighbourhood of $\lambda_0$ and averaging them. Additionally, if the function $f( \lambda )$ is flat in this neighbourhood, then this procedure yields an unbiased estimator of $f( \lambda_0 )$. If $f( \lambda )$ is not flat in this neighbourhood, then the procedure is biased and the magnitude of the bias depends on the degree of non-constancy of $f( \lambda )$. In the time series literature, the idea of prewhitening has been applied to nonparametric estimators of the spectral density function. In this case, one tries to transform (filter) the data in such a way that the transformed data is uncorrelated, since an uncorrelated sequence has a flat spectral density function. Understanding the technical tools of learning theory and mixing conditions are useful for this application as well. 
\end{example}


\newpage

\section{Time Series Regression Models with Many Covariates}

A recent growing literature develops econometric frameworks for estimation and inference in regression models with many covariates. Relevant studies include among others
\cite{cattaneo2018inference}, \cite{karmakar2022long} and \cite{wei2023inference} (see, also \cite{farrell2015robust} although not in a time series setting).

\subsection{A forecasting Application}

Let $\hat{y}_t$ be the forecast of $y_t$ based upon information up to $t-1$. Notice that when the interested of the researcher is the one-period ahead forecast, $\left( y_t - \hat{y}_t \right)^2$ is the cost to be minimized. However, there are two situations where the accumulated cost function, denoted with $\sum_{ j = 1 }^t \left( y_t - \hat{y}_t \right)^2$ is more appropriate. For example, in the sequential forecast case, the forecaster are updated sequentially over many periods and therefore the accumulated cost function is the target to be minimized. 

Consider for example the one-period expected loss function $\mathbb{E}\left( y_{T+1} - \hat{y}_{T+1} \right)^2$. For instance, for the AR(1) model, under the assumption that $\mathbb{E} \left( \epsilon_t^2 | \mathcal{F}_{t-1} \right) = \sigma^2$ almost surely for all $t$, then it can be shown that under appropriate assumptions that $\frac{1}{T} \sum_{t=1}^T \left( y_{T+1} - \hat{y}_{T+1} \right)^2$ almost surely. 

Let $\hat{\beta}_t$ be the Least square estimate of $\beta$, then we have that 
\begin{align}
\hat{\beta}_t = \left( \sum_{j=1}^t \mathbf{Y}_{j-1} \mathbf{Y}_{j-1}^{\prime} \right)^{-1} \left( \sum_{j=1}^t  \mathbf{Y}_{j-1}^{\prime} y_k \right)
\end{align}
where $Y_t = \left\{ y_1,...,y_t \right\}^{\prime}$, then $\hat{y}_t = \hat{\beta}_{t-1}^{\prime} Y_{t-1}$ is the least square prediction of $y_t$ at time $t-1$. 

\begin{theorem}[\cite{Dhrymes2013mathematics}]
Assume that $\epsilon_t$ are \textit{i.i.d} random variables with $\mathbb{E} \left( \epsilon_t \right) = 0$ such that it holds $0 < \mathbb{E} \left( \epsilon_t^2 \right) < \sigma^2 < \infty$. Moreover, let $\mathbf{X}_t = \left( x_{t-1}, ...., x_{t-p} \right)^{\prime}$, $S_T = \sum_{t=1}^T \epsilon_t$. 
\end{theorem}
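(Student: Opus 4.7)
The plan is to establish a central limit theorem (and, as needed, an invariance principle) for the partial-sum process $S_T$ under the stated \textit{i.i.d.}, zero-mean, finite-variance hypotheses, and then to couple this partial-sum limit with the regressor vector $\mathbf{X}_t = (x_{t-1},\ldots,x_{t-p})^{\prime}$ that appears in the AR($p$) specification underlying the previous forecasting discussion. The statement gives us exactly the raw ingredients that the Lindeberg--L\'evy theorem requires, so the starting point is to pipe these directly into a classical CLT and only afterwards to worry about how the limit interacts with $\mathbf{X}_t$.

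First I would record that since $\{\epsilon_t\}$ is \textit{i.i.d.} with $\mathbb{E}(\epsilon_t)=0$ and $\mathbb{E}(\epsilon_t^2)=:\sigma_0^2\in(0,\sigma^2)$, the Lindeberg condition is automatically met in the \textit{i.i.d.} case: for any $\eta>0$,
\begin{align}
\frac{1}{T\sigma_0^2}\sum_{t=1}^T \mathbb{E}\!\left[\epsilon_t^2 \mathbf{1}\{|\epsilon_t|>\eta\sqrt{T}\sigma_0\}\right] = \frac{1}{\sigma_0^2}\,\mathbb{E}\!\left[\epsilon_1^2 \mathbf{1}\{|\epsilon_1|>\eta\sqrt{T}\sigma_0\}\right] \longrightarrow 0
\end{align}
by dominated convergence, since the integrand is pointwise bounded by the integrable $\epsilon_1^2$ and the indicator tends to zero almost surely. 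The Lindeberg--Feller theorem then yields $T^{-1/2}S_T \xrightarrow{d} \mathcal{N}(0,\sigma_0^2)$, which is the baseline CLT for the partial-sum process.

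Second, to make the limit usable inside asymptotic-normality arguments for $\hat{\beta}_t$ (which involve sample moments of $\mathbf{X}_t$ against $\epsilon_t$), I would lift the one-dimensional CLT to a functional invariance principle via Donsker's theorem. Define the piecewise-constant c\`adl\`ag process $W_T(r) := T^{-1/2} S_{\lfloor T r\rfloor}$ for $r\in[0,1]$. Finite-dimensional convergence follows from the Cram\'er--Wold device applied to the scalar CLT above, using the fact that increments $S_{\lfloor Tr_j\rfloor}-S_{\lfloor Tr_{j-1}\rfloor}$ over disjoint intervals are independent sums of \textit{i.i.d.} summands. Tightness in the Skorokhod space $D[0,1]$ follows from the standard second-moment modulus bound $\mathbb{E}\bigl[(S_{t+s}-S_t)^2\bigr]=s\sigma_0^2$, so that $W_T(\cdot)\Rightarrow \sigma_0\, W(\cdot)$ for a standard Brownian motion $W$.

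Third, to connect the partial-sum limit to the AR($p$) regression, one needs joint behaviour of $T^{-1}\sum_{t=1}^T \mathbf{X}_t\mathbf{X}_t^{\prime}$ and $T^{-1/2}\sum_{t=1}^T \mathbf{X}_t \epsilon_t$. Because $\mathbf{X}_t$ is $\mathcal{F}_{t-1}$-measurable and $\epsilon_t$ is independent of $\mathcal{F}_{t-1}$, the sequence $\{\mathbf{X}_t\epsilon_t,\mathcal{F}_t\}$ is a vector martingale-difference sequence whose conditional covariance is $\sigma_0^2\,\mathbf{X}_t\mathbf{X}_t^{\prime}$. I would therefore invoke a martingale CLT (Hall--Heyde, Theorem 3.2), for which the two ingredients are (i) the conditional Lindeberg condition, which reduces to the scalar Lindeberg step already verified combined with the $\textit{i.i.d.}$ structure of $\epsilon_t$, and (ii) the stabilisation
\begin{align}
\frac{1}{T}\sum_{t=1}^T \mathbf{X}_t\mathbf{X}_t^{\prime}\, \epsilon_t^2 \xrightarrow{p} \sigma_0^2\, \Gamma
\end{align}
for a positive-definite matrix $\Gamma$. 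The main obstacle is this last step: ingredient (i) is essentially free from the Lindeberg argument, but (ii) is where genuine work is needed because it requires the AR companion matrix to have spectral radius strictly less than one (stationarity) so that an ergodic theorem applies to $\mathbf{X}_t\mathbf{X}_t^{\prime}$; combined with the independence of $\epsilon_t^2$ from $\mathcal{F}_{t-1}$, this yields $\Gamma = \mathbb{E}(\mathbf{X}_t\mathbf{X}_t^{\prime})$. The hard part is isolating this stability requirement cleanly, since the theorem as stated does not impose stationarity on $\{x_t\}$; I would address this by adding the explicit assumption that the AR($p$) characteristic polynomial has roots outside the unit circle, which is the standard sufficient condition for the ergodic representation to apply.
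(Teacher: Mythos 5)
The statement you were asked to prove has no conclusion: as printed in the paper it only records the hypotheses ($\epsilon_t$ \textit{i.i.d.}, mean zero, finite variance) and the notation $\mathbf{X}_t=(x_{t-1},\ldots,x_{t-p})^{\prime}$, $S_T=\sum_{t=1}^T\epsilon_t$, and the paper supplies no proof of it whatsoever --- it is a bare citation of a setup from Dhrymes, immediately followed by a separate Lemma (with the conclusion $\lim_{T\to\infty}\sum_t\mathbf{X}_tS_t\big/\sqrt{T\sum_tS_t^2}=0$ a.s.) whose ``proof'' is itself only a one-line appeal to $T^{-1}\sum_t\mathbf{X}_t\mathbf{X}_t^{\prime}\to\Sigma$ a.s. So there is no argument in the paper against which your proposal can be compared; what you have done is invent a plausible conclusion (a CLT and invariance principle for $S_T$, plus a martingale CLT for $T^{-1/2}\sum_t\mathbf{X}_t\epsilon_t$) and prove that instead. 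You should flag this explicitly rather than silently supplying the missing claim.

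Taken on its own terms, your sketch is essentially sound: Lindeberg--L\'evy (via dominated convergence for the Lindeberg term), Donsker through Cram\'er--Wold plus the second-moment tightness bound, and Hall--Heyde for the martingale array $\{\mathbf{X}_t\epsilon_t\}$ are all correctly deployed, and you are right that the stabilisation $T^{-1}\sum_t\mathbf{X}_t\mathbf{X}_t^{\prime}\epsilon_t^2\xrightarrow{p}\sigma_0^2\Gamma$ is the only step requiring a hypothesis not present in the statement, namely that the AR($p$) companion matrix has spectral radius strictly less than one. Note that the paper does impose exactly this condition, but only in the \emph{subsequent} Lemma (``the eigenvalues of $A$ are all inside the unit circle''), which confirms your diagnosis that the theorem as stated is an incomplete fragment of a larger result whose stability assumption migrated to the follow-up statement. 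The one cosmetic point worth fixing is your reading of $0<\mathbb{E}(\epsilon_t^2)<\sigma^2<\infty$: you treat $\sigma^2$ as a strict upper bound and introduce $\sigma_0^2$ for the true variance, which is the only coherent reading, but you should say explicitly that the displayed inequality is almost certainly a typo for $\mathbb{E}(\epsilon_t^2)=\sigma^2$.
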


\begin{lemma}[\cite{Dhrymes2013mathematics}]
Assume that $\mathbf{X}_{t+1} = A \mathbf{X}_{t} + \boldsymbol{\epsilon}_t$, where $\boldsymbol{\epsilon}_t = \left( \epsilon_t, 0,...., 0 \right)^{\prime}$ and the eigenvalues of $A$ are all inside the unit circle. Then, we have that 
\begin{align}
\underset{ T \to \infty }{ \text{lim} } \ \frac{ \displaystyle \sum_{t=1}^T \mathbf{X}_{t} S_t }{ \displaystyle \sqrt{ T \sum_{t=1}^T S_t^2 } } = 0, \ \ \text{almost surely}, 
\end{align}
\end{lemma}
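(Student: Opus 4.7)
The plan is to estimate numerator and denominator separately on disparate scales: I will show $\|\sum_t \mathbf{X}_t S_t\| = O(T \log \log T)$ almost surely and $\sqrt{T \sum_t S_t^2} \geq c\, T^{3/2}/\sqrt{\log T}$ eventually almost surely, so the ratio vanishes. Since $A$ has spectral radius $\rho<1$, we have $\|A^j\|\leq C \rho^j$, which yields the moving-average representation $\mathbf{X}_t = \sum_{j \geq 1} A^{j-1} \boldsymbol{\epsilon}_{t-j}$ (effects of any nonzero initial condition decay geometrically), makes $\{\mathbf{X}_t\}$ strictly stationary and ergodic with finite second moment, and guarantees $B := (I-A)^{-1} = \sum_{j \geq 0} A^j$ is well defined.

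For the numerator I would interchange the order of summation and telescope the inner sum. From $(A - I)\mathbf{X}_t = (\mathbf{X}_{t+1} - \mathbf{X}_t) - \boldsymbol{\epsilon}_t$ one gets $\sum_{t=k}^T \mathbf{X}_t = B[\mathbf{X}_k - \mathbf{X}_{T+1} + \sum_{t=k}^T \boldsymbol{\epsilon}_t]$. Substituting back, and simplifying the resulting $\sum_k \epsilon_k (S_T - S_{k-1})$ via the identity $\sum_k \epsilon_k S_{k-1} = \tfrac{1}{2}(S_T^2 - \sum \epsilon_k^2)$, produces
\[
\sum_{t=1}^T \mathbf{X}_t S_t \;=\; B \sum_{k=1}^T \epsilon_k \mathbf{X}_k \;-\; B\, \mathbf{X}_{T+1} S_T \;+\; B\, \mathbf{e}_1 \bigl[\tfrac{1}{2} S_T^2 + \tfrac{1}{2} \textstyle\sum_{k=1}^T \epsilon_k^2 \bigr].
\]
Each piece is controlled almost surely: $\sum_k \epsilon_k \mathbf{X}_k$ is a martingale (with the paper's index convention each $\mathbf{X}_k$ is $\mathcal{F}_{k-1}$-measurable) whose quadratic variation is $\sigma^2 \sum_k \|\mathbf{X}_k\|^2 = O(T)$ by the ergodic theorem, so by the martingale LIL this term is $O(\sqrt{T \log \log T})$; $|S_T \mathbf{X}_{T+1}| = o(T)$ using the Hartman--Wintner LIL $|S_T| = O(\sqrt{T \log \log T})$ together with $\|\mathbf{X}_{T+1}\| = o(\sqrt{T})$ (Markov plus Borel--Cantelli using $E\|\mathbf{X}_t\|^2<\infty$); the $S_T^2$ piece is $O(T \log \log T)$ by LIL; and $\sum \epsilon_k^2 = T\sigma^2 + O(\sqrt{T \log \log T})$ by Kolmogorov SLLN plus LIL. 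The overall bound is $\|\sum_t \mathbf{X}_t S_t\| = O(T \log \log T)$ almost surely.

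For the denominator the delicate point is an almost-sure lower bound on $\sum_t S_t^2$. Donsker's invariance only gives $T^{-2} \sum_t S_t^2 \Rightarrow \sigma^2 \int_0^1 W^2\,du$ in distribution, and a Kolmogorov $0$-$1$ argument (changing one $\epsilon_k$ perturbs $T^{-2}\sum S_t^2$ by $o(1)$) combined with the fact that $0$ lies in the support of $\sigma^2 \int_0^1 W^2$ forces $\liminf_T T^{-2} \sum_t S_t^2 = 0$ almost surely, so no constant positive lower bound is available. I would therefore invoke the exponential small-ball estimate $\mathbb{P}(\sigma^2 \int_0^1 W(u)^2\,du < \epsilon) \lesssim \exp(-c_1/\epsilon)$, obtained from the explicit Laplace transform $\mathbb{E} \exp(-\lambda \int_0^1 W^2\,du) = (\cosh \sqrt{2\lambda})^{-1/2}$ by an exponential-Chebyshev optimization, transfer it to the random walk via an invariance principle, and apply Borel--Cantelli with $\epsilon_T = c_0/\log T$ where $c_0 < c_1$: then $\sum_T \mathbb{P}(T^{-2}\sum_t S_t^2 < \epsilon_T) \lesssim \sum_T T^{-c_1/c_0}$ is summable, so $\sum_t S_t^2 \geq c_0\, T^2/\log T$ eventually almost surely, and hence $\sqrt{T \sum_t S_t^2} \geq \sqrt{c_0}\, T^{3/2}/\sqrt{\log T}$ eventually a.s.

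Combining, $|\mathrm{ratio}| = O(T \log \log T) \cdot \sqrt{\log T}/T^{3/2} = O\bigl((\log \log T) \sqrt{\log T}/\sqrt{T}\bigr) \to 0$ almost surely. The main obstacle is the denominator bound: the numerator analysis reduces, once the MA representation is in hand, to a clean telescoping identity plus Kolmogorov SLLN, Hartman--Wintner LIL and the martingale LIL, but upgrading Donsker's distributional convergence into an almost-sure lower bound for $\sum_t S_t^2$ is the hard step and requires the exponential small-ball inequality plus a Borel--Cantelli argument (and, if one wants to interpolate between subsequence values, monotonicity of $T \mapsto \sum_{t \leq T} S_t^2$).
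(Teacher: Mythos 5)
Your numerator analysis is essentially sound: the Abel-summation/telescoping decomposition is correct, and the four resulting pieces are controlled as you claim, giving $\|\sum_{t\le T}\mathbf{X}_tS_t\|=O(T\log\log T)$ almost surely. One small repair: for $\|\mathbf{X}_{T+1}\|=o(\sqrt{T})$, Markov's inequality only yields the non-summable tail bound $C/T$; what you actually need is stationarity together with the tail-sum identity $\sum_{T}\mathbb{P}\big(\|\mathbf{X}_1\|^2>\epsilon T\big)\le\epsilon^{-1}\mathbb{E}\|\mathbf{X}_1\|^2<\infty$, and then Borel--Cantelli. For reference, the paper gives no comparable argument at all: its proof records only the ergodic-theorem fact $T^{-1}\sum_t\mathbf{X}_t\mathbf{X}_t^{\prime}\to\Sigma$ a.s. and defers to the cited monograph, so your route is necessarily your own.

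The genuine gap is in the denominator, at precisely the step you flag as hard. You are right that $\liminf_T T^{-2}\sum_{t\le T}S_t^2=0$ a.s., so a small-deviation input is unavoidable; but the bridge you propose, namely transferring the Brownian small-ball estimate to the random walk "via an invariance principle", does not work under the finite-second-moment assumption in force. Donsker's theorem gives $\mathbb{P}\big(T^{-2}\sum_tS_t^2<\epsilon\big)\to\mathbb{P}\big(\sigma^2\int_0^1W^2\,du<\epsilon\big)$ only for fixed $\epsilon$, not uniformly along $\epsilon_T=c_0/\log T\to0$, which is what your Borel--Cantelli sum requires. A strong approximation is no better: under $\mathbb{E}\epsilon_t^2<\infty$ the best almost-sure coupling rate is $|S_t-\sigma W(t)|=o(\sqrt{t\log\log t})$, whence
\begin{align*}
\Big|T^{-2}\textstyle\sum_{t\le T}\big(S_t^2-\sigma^2W(t)^2\big)\Big|\;\le\;T^{-2}\textstyle\sum_{t\le T}o\big(t\log\log t\big)\;=\;o(\log\log T),
\end{align*}
a bound that diverges, so the random-walk functional cannot be compared with its Brownian counterpart even to precision $O(1)$, let alone $1/\log T$. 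What you actually need is the Chung-type liminf law $\liminf_T(\log\log T)\,T^{-2}\sum_{t\le T}S_t^2>0$ a.s., equivalently a small-deviation inequality for the random-walk functional itself established directly (e.g.\ by a blocking argument) rather than by coupling; this is a known but nontrivial theorem, and as written your Borel--Cantelli step rests on an estimate you have not supplied. Note also that the bound you need is far weaker than $T^2/\log T$ --- anything beating $T(\log\log T)^2$ defeats your numerator --- yet no elementary device (Cauchy--Schwarz, Chung's LIL for $\max_{t\le T}|S_t|$, or Strassen's functional LIL, whose limit set contains the zero function) delivers even that, so the small-deviation input is genuinely essential and genuinely missing.
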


\begin{proof}
It is known that 
\begin{align}
\underset{ T \to \infty }{ \text{lim} } \frac{1}{T} \sum_{t=1}^T \mathbf{X}_t  \mathbf{X}_t^{\prime} = \Sigma, \ \ \text{almost surely}, 
\end{align}
where $\Sigma$ is a positive definite matrix. Further details can be found in \cite{Dhrymes2013mathematics}.
\end{proof}

\newpage 

\subsubsection{Best Linear Predictor}

The main tool of prediction, or forecasting, in time series relies on the concept of the best linear predictor which is often employed in order to obtain the limiting distribution of an arbitraty size vector of autocovariance estimators (see, \cite{Dhrymes2013mathematics}).   

\medskip

\begin{definition}
Let $\left\{ X_t: t \in \mathcal{T} \right\}$ be a zero mean stationary time series indexed on the linear index set. The best linear predictor of $X_{t+h}, h \geq 1$, given $\left\{ X_1,..., X_n \right\}$, is the function given by 
\begin{align}
\hat{X}_{t+h} = \sum_{j=1}^n \alpha_j X_{n+1-j}    
\end{align}
which minimizes 
\begin{align}
S = \mathbb{E} \left[  X_{n+h} - \sum_{j=1}^n \alpha_j X_{n+1-j} \right]^2   
\end{align}
\end{definition}

The first order conditions are given by 
\begin{align}
\frac{ \partial S }{ \partial \alpha} = - 2 \mathbb{E} \left[ X_{n+h} - \alpha^{\prime} X_{(n)} \right] X_{(n)}^{\prime} = 0, \ \ \ 
X_{(n)} 
= 
\begin{bmatrix}
X_n
\\
X_{n-1}
\\
\vdots
\\
X_1
\end{bmatrix}
\end{align}
Thus, by rearranging we get that 
\begin{align}
\mathbb{E} \left[ X_{n+h} X_{(n)}^{\prime} \right] = \alpha^{\prime} \mathbb{E} \left[ X_{(n)} X_{(n)}^{\prime}  \right]
\end{align}
or $\alpha = C_n(i-j) c_{nh}$. 

\medskip

\begin{proposition}[\cite{Dhrymes2013mathematics}]
The limiting distribution of the coefficient vector of the best linear predictor (BLP) obeys the following limit
\begin{align}
\sqrt{n} \left( \hat{\alpha} - \alpha \right) \overset{d}{\to} \mathcal{N} \left( 0, G V G^{\prime}\right), \ \ \ \   
\end{align}
where the matrix $G$ is defined as below:
\begin{align}
G = C_n^{-1} \big[ F - \left( \alpha^{\prime} \otimes I_n \right) \left( B_{ST}, 0 \right) \big] .   
\end{align}
\end{proposition}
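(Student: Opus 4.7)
The plan is to recognize that $\hat{\alpha}$ is a smooth function of the sample autocovariances and then combine a central limit theorem for the vector of sample autocovariances with the delta method. Concretely, let $\gamma$ denote the vector stacking all autocovariances $\{\gamma(k)\}$ that are needed to populate both $C_n$ and $c_{nh}$, and let $\hat{\gamma}$ be its sample analogue. Then $\alpha = C_n(\gamma)^{-1} c_{nh}(\gamma) =: \phi(\gamma)$ and $\hat{\alpha} = \phi(\hat{\gamma})$, where $\phi$ is continuously differentiable on a neighbourhood of the true $\gamma$ since $C_n$ is positive definite by stationarity, and matrix inversion is smooth on the open set of invertible matrices.

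First I would invoke the standard asymptotic normality for sample autocovariances of a weakly stationary time series. Under the assumptions underlying the theorem referenced just above (e.g.\ a linear process representation with summable coefficients, or an appropriate mixing/moment condition), Bartlett's classical result delivers
\begin{align}
\sqrt{n}\bigl(\hat{\gamma} - \gamma\bigr) \overset{d}{\to} \mathcal{N}(0, V),
\end{align}
where $V$ is the Bartlett covariance matrix whose entries are summations of products of autocovariances (plus a fourth-cumulant correction if the process is non-Gaussian). This is the source of the limiting covariance $V$ appearing in the statement, and the supporting lemma in the previous subsection is exactly what is needed to justify it.

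Next I would compute the Jacobian $G = \partial \phi / \partial \gamma'$ at the true value and show it equals the claimed expression. Differentiating $\alpha = C_n^{-1} c_{nh}$ and using $d(C_n^{-1}) = -C_n^{-1}(dC_n)C_n^{-1}$ yields
\begin{align}
d\alpha = C_n^{-1}\bigl[\,dc_{nh} - (dC_n)\,\alpha\,\bigr].
\end{align}
Writing $dc_{nh} = F\, d\gamma$ (where $F$ is the selection/linear map picking the components of $\gamma$ that fill the cross-covariance vector) and exploiting the Kronecker identity $(dC_n)\alpha = (\alpha' \otimes I_n)\,d\operatorname{vec}(C_n)$, together with the linear dependence $d\operatorname{vec}(C_n) = (B_{ST},0)\,d\gamma$ (the zero block reflects that $C_n$ is insensitive to those entries of $\gamma$ that only enter $c_{nh}$, while $B_{ST}$ encodes the symmetric Toeplitz pattern by which autocovariances populate $C_n$), one obtains
\begin{align}
d\alpha = C_n^{-1}\bigl[F - (\alpha' \otimes I_n)(B_{ST},0)\bigr]\,d\gamma,
\end{align}
which identifies $G$ exactly as stated. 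The delta method then gives $\sqrt{n}(\hat{\alpha}-\alpha) \overset{d}{\to} \mathcal{N}(0, GVG')$.

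The main obstacle is not the probabilistic input, which is the standard CLT for autocovariances, but rather the linear-algebra bookkeeping that ensures the Jacobian condenses into precisely the compact form $C_n^{-1}[F - (\alpha'\otimes I_n)(B_{ST},0)]$. In particular, one must carefully index how each scalar $\gamma(k)$ appears along the diagonals of $C_n$ (giving the Toeplitz-duplication matrix $B_{ST}$) versus how it appears in $c_{nh}$ (giving $F$), and then verify that the zero block in $(B_{ST},0)$ is consistent with the partition of $\gamma$ into the entries that do and do not enter $C_n$. A secondary technical point is checking that the regularity conditions of the cited lemma are strong enough to transport the joint CLT through the delta method, which only requires continuous differentiability of $\phi$ at $\gamma$ and hence reduces to the invertibility of $C_n$.
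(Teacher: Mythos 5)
The paper does not actually supply a proof of this proposition: it states the result and defers entirely to the cited reference, with only the preceding first-order conditions $\alpha = C_n^{-1}c_{nh}$ and an almost-sure convergence lemma as context. Your delta-method argument is the standard derivation of exactly this kind of result and is correct in its essentials: the CLT for the stacked sample autocovariances supplies $V$, the differential $d\alpha = C_n^{-1}\bigl[dc_{nh} - (dC_n)\alpha\bigr]$ together with $\operatorname{vec}(ABC) = (C'\otimes A)\operatorname{vec}(B)$ gives $(dC_n)\alpha = (\alpha'\otimes I_n)\,d\operatorname{vec}(C_n)$, and the Toeplitz-duplication bookkeeping yields the stated $G$. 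The only point worth flagging is that the validity of the autocovariance CLT (and the precise form of $V$, including the fourth-cumulant terms) rests on moment and summability conditions that you gesture at but would need to state explicitly if this were to stand as a self-contained proof; since the paper itself imposes none, your proposal is if anything more complete than what the paper offers.
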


\newpage

\subsubsection{Forecast Combination}

Assume that we are interested to construct a forecast combination criterion based a specific econometric model of interest. Then, a simple approach would be to consider the construction of a criterion which uses OOS forecast combinations. 
Let $\mathbf{f}_t$ be a sequence of out-of-sample forecasts for $y_{t+1}$ derived using a set $m$ different econometric models. The combination forecast is then defined as $f( \mathbf{w} ) = \mathbf{w}^{'} \mathbf{f}$. 

Therefore, the OOS empirical MSFE is given by 
\begin{align}
\hat{ \sigma}^2 = \frac{1}{P} \sum_{t = n - P }^n \left( y_{t+1} - \mathbf{w}^{'} \mathbf{f}_t \right)^2
\end{align}
Furthermore, the Granger-Ramanathan forecast combination method implies to select $\mathbf{w}$ to minimize the OOS MSFE. However,  minimization over $\mathbf{w}$ is equivalent to the least-squares regression over $y_t$ on the forecasts 
$y_{t+1} = \mathbf{w}^{'} \mathbf{f} + \epsilon_{t+1}$. Then, the unrestricted least-squares gives the following vector of weights
\begin{align}
\hat{ \mathbf{w} } = \left( \sum_{t = n - P }^n \mathbf{f}_t \mathbf{f}^{'}_t  \right)^{-1} \sum_{t = n - P}^n \mathbf{f}_{t} y_{t+1}
\end{align}
Notice that the above unrestricted least-squares approach can produce weights which are far outside $[0,1]$, thus an alternative representation can be constructed by subtracting $y_{t+1}$ from each side, such that $0 = \mathbf{w}^{'} \mathbf{f} - y_{t+1} + \epsilon_{t+1}$, and by defining $\mathbf{e}_{t+1} = \left( \mathbf{f}_t  - y_{t+1} \right)$ to be the negative forecast errors. Then, it holds that $
0 = \mathbf{w}^{'} \mathbf{e}_{t+1} + \epsilon_{t+1}$, which is the regression of 0 on the forecast errors. Therefore, the constrained GR weights solve the following problem. 
\begin{align}
\underset{ \mathbf{w} }{ \mathsf{min}} \left\{ \mathbf{w}^{'} \mathbf{ A } \mathbf{w} \right\} \ \text{subject to} \ \sum_{m=1}^M w(m) = 1 \ \text{and} \ 0 \leq w(m) \leq 1  
\end{align}
where $\mathbf{A} = \displaystyle \sum_t \mathbf{e}_{t+1} \mathbf{e}^{'}_{t+1}$, the $ M \times M $ matrix of forecast error empirical variances/covariances. 

\begin{proposition}
In linear models, the combination forecast is the same as the forecast based on the weighted average of the parameter estimates across the different models. That is, 
\begin{align}
\hat{y}_{n+1}( \mathbf{w} ) &= \sum_{m=1}^M w(m) \hat{y}_{n+1}(m)
= \sum_{m=1}^M w(m) \mathbf{x}_n(m)^{'} \widehat{\boldsymbol{\beta}}(m) 
\equiv
\mathbf{x}_n(m)^{'} \widehat{\boldsymbol{\beta}}( \mathbf{w})
\end{align}
where $\widehat{\boldsymbol{\beta}}( \mathbf{w}) = \displaystyle \sum_{m=1}^M w(m) \widehat{\boldsymbol{\beta}}(m)$.  
\end{proposition}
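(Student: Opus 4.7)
The statement is essentially a linearity identity, so the plan is very short. First I would substitute the definition of each individual model's forecast into the combination formula: for each $m$, by linearity of the inner product the model forecast is $\hat{y}_{n+1}(m)=\mathbf{x}_n(m)^{'}\widehat{\boldsymbol{\beta}}(m)$, so that
\begin{align*}
\hat{y}_{n+1}(\mathbf{w}) \;=\; \sum_{m=1}^{M} w(m)\,\hat{y}_{n+1}(m) \;=\; \sum_{m=1}^{M} w(m)\,\mathbf{x}_n(m)^{'}\widehat{\boldsymbol{\beta}}(m).
\end{align*}
Second, I would argue that in the linear-model setting being considered, the regressor vector $\mathbf{x}_n(m)$ does not depend on $m$ (either because all candidate models share the same regressors, or because one embeds each model-specific regressor vector $\mathbf{x}_n(m)$ and coefficient $\widehat{\boldsymbol{\beta}}(m)$ into a common ambient space by padding with zeros for the excluded covariates). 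Under this common-ambient-space viewpoint one can factor the regressor vector out of the sum,
\begin{align*}
\sum_{m=1}^{M} w(m)\,\mathbf{x}_n^{'}\widehat{\boldsymbol{\beta}}(m) \;=\; \mathbf{x}_n^{'}\sum_{m=1}^{M} w(m)\,\widehat{\boldsymbol{\beta}}(m) \;=\; \mathbf{x}_n^{'}\widehat{\boldsymbol{\beta}}(\mathbf{w}),
\end{align*}
using the definition $\widehat{\boldsymbol{\beta}}(\mathbf{w})=\sum_{m=1}^{M}w(m)\widehat{\boldsymbol{\beta}}(m)$. This finishes the identification of the combined forecast with the forecast obtained by plugging the weighted-average coefficient vector into the linear predictor.

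There is essentially no analytic obstacle, since the proof only uses bilinearity of $(\mathbf{x},\boldsymbol{\beta})\mapsto \mathbf{x}^{'}\boldsymbol{\beta}$ together with finite additivity of the weighted sum; the only point that deserves care is the notational one I flagged above, namely making explicit that the model-specific objects $\mathbf{x}_n(m)$ and $\widehat{\boldsymbol{\beta}}(m)$ are interpreted as elements of a common space so that the weighted average $\widehat{\boldsymbol{\beta}}(\mathbf{w})$ and the common regressor $\mathbf{x}_n$ are well defined. Once this convention is fixed, the display above is the whole proof and no further estimate, moment bound, or limiting argument is required.
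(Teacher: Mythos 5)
Your proposal is correct and complete: the identity is pure bilinearity plus finite additivity, and your two-step argument (substitute the model forecasts, then factor the regressor out of the weighted sum) is the whole content. The paper in fact states this proposition without supplying any proof, so there is nothing to compare against; the one substantive point is precisely the one you flag, namely that the factoring step requires the model-specific regressors to live in a common ambient space (e.g.\ by zero-padding excluded covariates), and the paper's own final expression $\mathbf{x}_n(m)^{'}\widehat{\boldsymbol{\beta}}(\mathbf{w})$, which still carries the index $m$ on the regressor while using the pooled coefficient $\widehat{\boldsymbol{\beta}}(\mathbf{w})$, only makes sense under exactly that convention. Your treatment is therefore, if anything, more careful than the source.
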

Relevant studies include \cite{hansen2007least}, \cite{hansen2008least} and \cite{cheng2015forecasting} while a recent approach related to forecasting can be found in \cite{boot2019forecasting} and \cite{swanson2020predicting}. For nonstationary time series models relevant studies include \cite{hansen2010averaging},  \cite{coroneo2020comparing} and \cite{kejriwal2021generalized} among others.

\newpage

\subsection{Statistical Theory for Lasso Regression Models}

\subsubsection{Statistical Principles}

\begin{example}(Predicting inflation rates)

\begin{itemize}
    \item Minimization of the empirical risk for a sample $\left\{ y_1,..., y_n  \right\}:$
    \begin{align}
        \widehat{\mu} = \underset{ \mu \in \mathbb{R} }{ \mathsf{arg \ min} } \frac{1}{T} \sum_{t=1}^n \left( y_t - \mu \right)^2. 
    \end{align}
    
    \item Taking the first-order conditions give:
    \begin{align}
        \frac{ \partial  }{ \partial \mu } \left\{ \frac{1}{T} \sum_{t=1}^n \left( y_t - \mu \right)^2 \right\} \bigg|_{ \mu = \widehat{\mu} } = - \frac{2}{n} \sum_{t=1}^n  \left( y_t - \widehat{\mu} \right) = 0
    \end{align}
    
\end{itemize}
Thus we obtain that, $\displaystyle  \widehat{\mu} := \frac{1}{T} \sum_{t=1}^n y_t \equiv \bar{y}$. 
\end{example}

\begin{corollary}[Shrinkage method]
If we drop the requirement of unbiasedness, can we find estimators $\widehat{\beta}$ with the following property:
\begin{align}
\label{condition}
\sum_{j=1}^K MSE \left( \widehat{\beta}_j \right) \leq \sum_{j=1}^K MSE \left( \widehat{\beta}^{ols}_j \right)    
\end{align}

where $MSE \left( \widehat{\beta}_j \right) = \mathbb{E} \left[ \left( \widehat{\beta}_k - \beta_k  \right)^2 \right]$. 
\end{corollary}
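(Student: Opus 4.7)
The plan is to settle the question affirmatively by exhibiting a feasible shrinkage estimator---the James--Stein estimator---that strictly dominates OLS in aggregate MSE whenever $K\geq 3$. First I would invoke the bias--variance decomposition $MSE(\widehat{\beta}_j)=\mathsf{Var}(\widehat{\beta}_j)+[\mathbb{E}\widehat{\beta}_j-\beta_j]^2$, and record that unbiasedness of OLS reduces the right-hand side of \eqref{condition} to $\sum_{j=1}^K MSE(\widehat{\beta}_j^{ols})=\mathsf{trace}(\mathsf{Cov}(\widehat{\beta}^{ols}))$, a purely variance-driven quantity. This isolates the only quantity a biased estimator can hope to lower: the sampling variance, paid for by a controllable squared bias.

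Second, I would analyse the one-parameter scalar-shrinkage family $\widehat{\beta}(c)=c\,\widehat{\beta}^{ols}$, $c\in(0,1]$. A direct expansion of the bias--variance decomposition yields
\[
\sum_{j=1}^K MSE(\widehat{\beta}_j(c))=c^2\,\mathsf{trace}(\mathsf{Cov}(\widehat{\beta}^{ols}))+(1-c)^2\|\beta\|^2,
\]
which is quadratic in $c$ and is minimised at the oracle value $c^{\star}=\|\beta\|^2/\bigl(\|\beta\|^2+\mathsf{trace}(\mathsf{Cov}(\widehat{\beta}^{ols}))\bigr)<1$, with corresponding risk strictly below $\mathsf{trace}(\mathsf{Cov}(\widehat{\beta}^{ols}))$ whenever the sampling variance is non-zero. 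This already verifies \eqref{condition} at the oracle level and identifies ``trading bias for variance'' as the mechanism.

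The principal obstacle is feasibility, since $c^{\star}$ depends on the unknown $\|\beta\|^2$; one must eliminate this dependence without destroying the improvement. I would overcome this via Stein's unbiased risk estimate (SURE): substituting the unbiased surrogate $\|\widehat{\beta}^{ols}\|^2-\mathsf{trace}(\mathsf{Cov}(\widehat{\beta}^{ols}))$ for $\|\beta\|^2$ and specialising to the spherical case $\mathsf{Cov}(\widehat{\beta}^{ols})=\sigma^2 I_K$ produces the James--Stein form
\[
\widehat{\beta}^{JS}=\left(1-\frac{(K-2)\sigma^2}{\|\widehat{\beta}^{ols}\|^2}\right)\widehat{\beta}^{ols}.
\]
The closing step applies Stein's identity coordinatewise to obtain the exact risk formula $\sum_{j=1}^K MSE(\widehat{\beta}_j^{JS})=K\sigma^2-(K-2)^2\sigma^4\,\mathbb{E}\bigl[\|\widehat{\beta}^{ols}\|^{-2}\bigr]$, which is strictly smaller than $K\sigma^2=\sum_j MSE(\widehat{\beta}_j^{ols})$ precisely when $K\geq 3$. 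The non-spherical extension proceeds by the same argument after pre-whitening by $\mathsf{Cov}(\widehat{\beta}^{ols})^{-1/2}$. The subtlest point, and the one I would expose most carefully, is the dimension threshold $K\geq 3$: it is a genuine consequence of Stein's identity rather than a sample-size condition, and it draws the sharp boundary between the impossibility of inadmissibility of OLS in dimensions one and two and the possibility of uniform improvement in higher dimensions.
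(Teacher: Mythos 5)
Your proposal is correct, but it takes a genuinely different and considerably more complete route than the paper. The paper never proves the claim: it merely remarks that requiring \eqref{condition} for all $\beta\in\mathbb{R}^K$ amounts to the (in)admissibility of OLS, notes that for a fixed $\beta$ an improvement is ``relatively easy,'' and then exhibits ridge regression, showing in the orthonormal-design case that $\widehat{\beta}^{\ell_2}(\lambda)=(1+\lambda)^{-1}\widehat{\beta}^{ols}$ is biased --- essentially your oracle scalar-shrinkage step with a deterministic factor $c=(1+\lambda)^{-1}$, whose optimal value depends on the unknown $\|\beta\|^2$ and which therefore only answers the pointwise (fixed-$\beta$) version of the question. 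You go further: by replacing the oracle constant with the SURE-based data-adaptive factor and invoking Stein's identity, you obtain the James--Stein estimator and \emph{uniform} dominance for $K\geq 3$, which is the sharp answer to the ``for all $\beta$'' reading and correctly locates the dimension threshold as a consequence of Stein's lemma rather than of sample size. What the paper's route buys is simplicity and a segue into penalized estimation; what yours buys is an actual proof of inadmissibility. Two caveats you should make explicit if you write this out in full: the exact risk identity $K\sigma^2-(K-2)^2\sigma^4\,\mathbb{E}\bigl[\|\widehat{\beta}^{ols}\|^{-2}\bigr]$ requires Gaussian errors and known (or well-estimated) $\sigma^2$ for Stein's identity to apply; and in the non-spherical case, pre-whitening by $\mathsf{Cov}(\widehat{\beta}^{ols})^{-1/2}$ establishes dominance in the whitened quadratic loss, which does not automatically coincide with the unweighted coordinatewise sum $\sum_j MSE(\widehat{\beta}_j)$ appearing in \eqref{condition}, so that extension needs an extra argument or a restated loss.
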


\begin{itemize}
    \item If we aim for \eqref{condition} to hold for all possible $\beta \in \mathbb{R}^K$ then, this is equivalent to imposing a condition that the OLS estimator is admissible. 
    
    \item If we aim for \eqref{condition} to hold for some given fixed $\beta \in \mathbb{R}^K$, then finding such a $\widehat{\beta}$ that improves on $\widehat{\beta}^{ols}$ is relatively easy.   
    
\end{itemize}

\begin{corollary}[Shrinkage: Ridge Regression]
Suppose that there exists a given penalty parameter $\lambda \geq 0$ the ridge regression estimators is given by the following expression 
\begin{align}
\widehat{\beta}^{\ell_2 } (\lambda) = \underset{ b \in \mathbb{R}^K }{ \mathsf{arg \ min} } \left[ \frac{1}{n} \sum_{t=1}^n \left( y_t - x_t b \right)^2 + \lambda \norm{b}_2^2 \right]  \equiv \left( \boldsymbol{X}^{\prime} \boldsymbol{X} + n \boldsymbol{I}_k \right)^{-1} \left( \boldsymbol{X}^{\prime} \boldsymbol{y} \right)  
\end{align}
\end{corollary}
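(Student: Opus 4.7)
The plan is to derive the closed-form expression by direct differentiation of the penalized objective, since the criterion is strictly convex in $b$ whenever $\lambda > 0$ (and convex in general), so the first-order conditions are both necessary and sufficient for the global minimizer. First I would rewrite the criterion in matrix form as
\begin{align*}
Q(b) \;=\; \frac{1}{n}(\boldsymbol{y} - \boldsymbol{X} b)^{\prime}(\boldsymbol{y} - \boldsymbol{X} b) + \lambda\, b^{\prime} b,
\end{align*}
which expands to $Q(b) = \frac{1}{n}\boldsymbol{y}^{\prime}\boldsymbol{y} - \frac{2}{n} b^{\prime} \boldsymbol{X}^{\prime} \boldsymbol{y} + \frac{1}{n} b^{\prime} \boldsymbol{X}^{\prime} \boldsymbol{X} b + \lambda\, b^{\prime} b$. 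This is a quadratic form in $b$ with Hessian $\frac{2}{n}\boldsymbol{X}^{\prime}\boldsymbol{X} + 2\lambda \boldsymbol{I}_K$, which is positive definite for $\lambda > 0$ (and positive semi-definite otherwise), guaranteeing that any stationary point is the unique global minimizer.

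Next I would compute the gradient and set it to zero,
\begin{align*}
\nabla_b Q(b) \;=\; -\frac{2}{n}\boldsymbol{X}^{\prime}\boldsymbol{y} + \frac{2}{n}\boldsymbol{X}^{\prime}\boldsymbol{X}\, b + 2\lambda\, b \;=\; 0,
\end{align*}
which after multiplication by $n/2$ rearranges to the ridge normal equations $\left(\boldsymbol{X}^{\prime}\boldsymbol{X} + n\lambda\, \boldsymbol{I}_K\right) b = \boldsymbol{X}^{\prime}\boldsymbol{y}$. The coefficient matrix $\boldsymbol{X}^{\prime}\boldsymbol{X} + n\lambda \boldsymbol{I}_K$ is strictly positive definite for $\lambda > 0$ because $\boldsymbol{X}^{\prime}\boldsymbol{X}$ is positive semi-definite and $n\lambda \boldsymbol{I}_K$ shifts every eigenvalue by $n\lambda > 0$. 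In particular it is invertible, so the unique minimizer is $\widehat{\beta}^{\ell_2}(\lambda) = \left(\boldsymbol{X}^{\prime}\boldsymbol{X} + n\lambda\, \boldsymbol{I}_K\right)^{-1}\boldsymbol{X}^{\prime}\boldsymbol{y}$, which matches the stated closed form (modulo what appears to be a typographical omission of $\lambda$ on the identity in the displayed formula).

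There is essentially no hard step here: the derivation is a routine convex-quadratic calculation. The one subtlety worth flagging is the contrast with OLS: in the high-dimensional regime $K > n$ the Gram matrix $\boldsymbol{X}^{\prime}\boldsymbol{X}$ is singular and $\widehat{\beta}^{ols}$ is not defined, yet $\boldsymbol{X}^{\prime}\boldsymbol{X} + n\lambda \boldsymbol{I}_K$ remains invertible for any $\lambda>0$, so the ridge estimator is always well defined. This is also the conceptual point that connects back to the shrinkage discussion preceding the corollary: invertibility of the regularized Gram matrix is precisely what allows the bias-variance trade-off underlying condition (4.1) to be exploited.
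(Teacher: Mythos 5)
Your derivation is correct and is exactly the standard first-order-condition argument that the paper leaves implicit (the corollary is stated without proof, and the surrounding text merely notes that the Gram matrix is ``regularized by $n\lambda \boldsymbol{I}_k$''). You are also right that the displayed formula in the statement drops the $\lambda$ from the $n\boldsymbol{I}_k$ term; the paper's own follow-up sentence confirms the intended regularizer is $n\lambda\boldsymbol{I}_k$, so this is a typographical slip rather than a substantive discrepancy.
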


Recall that the Euclidean norm is given by: $\norm{b}_2 = \sqrt{ b^{\prime} b }$. In other words, the inverse of $\left( \boldsymbol{X}^{\prime} \boldsymbol{X} \right)$ is regularized by $n \lambda \boldsymbol{I}_k$ (called Tikhonov regularization for ill-posed problems).

\newpage

Special cases of Ridge Regression include:
\begin{itemize}
    
    \item When the design matrix $\boldsymbol{X}^{\prime} \boldsymbol{X}$ is singular then it corresponds to a large variance of the OLS estimator.
    
    \item When $n^{-1} \boldsymbol{X}^{\prime} \boldsymbol{X} = \boldsymbol{I}_p$, then simply $\widehat{\beta}^{\ell_2} (\lambda) = \left( 1 + \lambda \right)^{-1} \widehat{\beta}^{ols}$.
    
\end{itemize}

\begin{corollary}[Bias of the Ridge Estimator when $p < n$]

Consider the ridge estimator as below:
\begin{align*}
\widehat{\boldsymbol{\beta}}_{ridge} (\lambda) 
&= 
\left( \boldsymbol{X}^{\prime} \boldsymbol{X} + \lambda \boldsymbol{I} \right)^{-1} \left( \boldsymbol{X}^{\prime} \boldsymbol{Y} \right)      
=
\left( \boldsymbol{R} + \lambda \boldsymbol{I} \right)^{-1} \boldsymbol{R} \left( \boldsymbol{R}^{-1} \boldsymbol{X}^{\prime} \boldsymbol{Y} \right)
\\
&=
\big[ \boldsymbol{R}  \left( \boldsymbol{I} + \lambda \boldsymbol{R}^{-1} \right) \big]^{-1}  \boldsymbol{R} \big[ \left( \boldsymbol{X}^{\prime} \boldsymbol{X} \right)^{-1} \left( \boldsymbol{X}^{\prime} \boldsymbol{Y} \right) \big]
\\
&=
\left( \boldsymbol{I} + \lambda \boldsymbol{R}^{-1} \right)^{-1} \boldsymbol{R}^{-1} \boldsymbol{R} \widehat{\boldsymbol{\beta}}_{ols}
=
\left( \boldsymbol{I} + \lambda \boldsymbol{R}^{-1} \right)^{-1} \widehat{\boldsymbol{\beta}}_{ols}. 
\end{align*}
\end{corollary}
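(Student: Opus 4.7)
The plan is to derive the claimed identity via a sequence of elementary matrix manipulations, relying only on the invertibility of $\boldsymbol{R} := \boldsymbol{X}^{\prime} \boldsymbol{X}$ (which is guaranteed when $p < n$ and $\boldsymbol{X}$ has full column rank) and then read off the bias by taking expectations under the standard linear model $\boldsymbol{Y} = \boldsymbol{X} \boldsymbol{\beta} + \boldsymbol{\varepsilon}$ with $\mathbb{E}[\boldsymbol{\varepsilon} \mid \boldsymbol{X}] = \boldsymbol{0}$.

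First, I would start from the closed-form expression $\widehat{\boldsymbol{\beta}}_{ridge}(\lambda) = (\boldsymbol{R} + \lambda \boldsymbol{I})^{-1} \boldsymbol{X}^{\prime} \boldsymbol{Y}$, factor $\boldsymbol{R}$ out on the right inside the parenthesis as $\boldsymbol{R} + \lambda \boldsymbol{I} = \boldsymbol{R}(\boldsymbol{I} + \lambda \boldsymbol{R}^{-1})$, and invert the product using $(AB)^{-1} = B^{-1} A^{-1}$ to obtain $(\boldsymbol{I} + \lambda \boldsymbol{R}^{-1})^{-1} \boldsymbol{R}^{-1}$. Recognizing that $\boldsymbol{R}^{-1} \boldsymbol{X}^{\prime} \boldsymbol{Y} = \widehat{\boldsymbol{\beta}}_{ols}$ yields the stated identity $\widehat{\boldsymbol{\beta}}_{ridge}(\lambda) = (\boldsymbol{I} + \lambda \boldsymbol{R}^{-1})^{-1} \widehat{\boldsymbol{\beta}}_{ols}$. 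This is precisely the chain of equalities already sketched in the statement; the only thing to verify carefully is that $\boldsymbol{I} + \lambda \boldsymbol{R}^{-1}$ is invertible for $\lambda \geq 0$, which follows because $\boldsymbol{R}^{-1}$ is positive definite so all eigenvalues of $\boldsymbol{I} + \lambda \boldsymbol{R}^{-1}$ lie in $[1, \infty)$.

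Next, to extract the bias, I would take conditional expectations given $\boldsymbol{X}$. Since $\mathbb{E}[\widehat{\boldsymbol{\beta}}_{ols} \mid \boldsymbol{X}] = \boldsymbol{\beta}$, the identity immediately gives $\mathbb{E}[\widehat{\boldsymbol{\beta}}_{ridge}(\lambda) \mid \boldsymbol{X}] = (\boldsymbol{I} + \lambda \boldsymbol{R}^{-1})^{-1} \boldsymbol{\beta}$, so the bias is
\begin{equation*}
\mathrm{Bias}\bigl(\widehat{\boldsymbol{\beta}}_{ridge}(\lambda)\bigr) = \bigl[(\boldsymbol{I} + \lambda \boldsymbol{R}^{-1})^{-1} - \boldsymbol{I}\bigr] \boldsymbol{\beta} = -\lambda (\boldsymbol{R} + \lambda \boldsymbol{I})^{-1} \boldsymbol{\beta},
\end{equation*}
where the last equality uses $(\boldsymbol{I} + \lambda \boldsymbol{R}^{-1})^{-1} = \boldsymbol{R}(\boldsymbol{R} + \lambda \boldsymbol{I})^{-1}$ and $\boldsymbol{R}(\boldsymbol{R} + \lambda \boldsymbol{I})^{-1} - \boldsymbol{I} = -\lambda (\boldsymbol{R} + \lambda \boldsymbol{I})^{-1}$.

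There is no substantive obstacle here: the entire argument is linear algebra, and the only subtlety is ensuring all inverses exist, which is immediate under $p < n$ with $\boldsymbol{X}$ of full column rank and $\lambda \geq 0$. A useful sanity check, which I would include as a remark, is the spectral decomposition $\boldsymbol{R} = \boldsymbol{Q} \boldsymbol{\Lambda} \boldsymbol{Q}^{\prime}$ with $\boldsymbol{\Lambda} = \mathrm{diag}(d_1, \dots, d_p)$, which gives the coordinate-wise shrinkage $\widehat{\boldsymbol{\beta}}_{ridge}(\lambda) = \boldsymbol{Q} \, \mathrm{diag}\bigl(d_j / (d_j + \lambda)\bigr) \boldsymbol{Q}^{\prime} \widehat{\boldsymbol{\beta}}_{ols}$ and makes transparent both the shrinkage factor $d_j / (d_j + \lambda) \in (0,1]$ and the fact that the bias vanishes as $\lambda \to 0$ and grows toward $-\boldsymbol{\beta}$ as $\lambda \to \infty$.
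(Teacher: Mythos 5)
Your proposal is correct and follows essentially the same route as the paper: factor $\boldsymbol{X}^{\prime}\boldsymbol{X} + \lambda\boldsymbol{I} = \boldsymbol{R}(\boldsymbol{I} + \lambda\boldsymbol{R}^{-1})$, invert the product, recognize $\boldsymbol{R}^{-1}\boldsymbol{X}^{\prime}\boldsymbol{Y} = \widehat{\boldsymbol{\beta}}_{ols}$, and then take conditional expectations to exhibit the bias, exactly as the paper does immediately after the corollary. Your added explicit form $-\lambda(\boldsymbol{R}+\lambda\boldsymbol{I})^{-1}\boldsymbol{\beta}$ and the spectral shrinkage remark are correct refinements but do not change the argument.
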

Therefore, it holds that 
\begin{align*}
\mathbb{E} \left[ \widehat{\boldsymbol{\beta}}_{Ridge} (\lambda) | \boldsymbol{X} \right] 
&= 
\mathbb{E} \big[ \left( \boldsymbol{I} + \lambda \boldsymbol{R}^{-1} \right)^{-1} \widehat{\boldsymbol{\beta}}_{ols} | \boldsymbol{X} \big]    
= 
\left( \boldsymbol{I} + \lambda \boldsymbol{R}^{-1} \right)^{-1} \boldsymbol{\beta} 
\\
\mathbb{E} \left[ \widehat{\boldsymbol{\beta}}_{Ridge} (\lambda) \right] 
&= 
\mathbb{E} \left[ \left( \boldsymbol{I} + \lambda \boldsymbol{R}^{-1} \right)^{-1} \right] \boldsymbol{\beta} \neq \boldsymbol{\beta}. 
\end{align*}
Based on the above results we can introduce the model selection idea with an $\ell_0-$penalty. Thus, in the literature various studies mention that one of the main advantages of the Lasso shrinkage is that it "bets on sparsity". Therefore, using an $\ell_0$ penalty term we have that
\begin{align}
\norm{\beta}_0 = \sum_{j=1}^p   \boldsymbol{1} \left\{ \beta_j \neq 0 \right\} 
\end{align}
which implies that
\begin{align*}
\widehat{ \boldsymbol{\beta}}^{\ell_0 } (\lambda) 
&= \underset{  \boldsymbol{\beta} \in \mathbb{R}^p }{ \mathsf{arg \ min} } \left[ \frac{1}{n} \sum_{t=1}^n \left( Y_t - \boldsymbol{\beta}^{\prime} \boldsymbol{X}_t \right)^2 + \lambda  \sum_{j=1}^p \boldsymbol{1} \left\{ \beta_j \neq 0 \right\}    \right] 
= 
\underset{  \boldsymbol{\beta} \in \mathbb{R}^p }{ \mathsf{arg \ min} } \left[ \frac{1}{n} \norm{ \boldsymbol{Y} - \boldsymbol{X} \boldsymbol{\beta} }_2^2  + \lambda \norm{ \boldsymbol{\beta} }_0 \right].
\end{align*}
where $\lambda \geq 0$ is the tuning (penalty) parameter which needs to be chosen. Similarly, using an $\ell_1$ penalty
\begin{align*}
\widehat{ \boldsymbol{\beta}}^{\ell_1} (\lambda) 
=
 \underset{  \boldsymbol{\beta} \in \mathbb{R}^p }{ \mathsf{arg \ min} } \left[ \frac{1}{n} \sum_{t=1}^n \left( Y_t - \boldsymbol{\beta}^{\prime} \boldsymbol{X}_t \right)^2 + \lambda  \sum_{j=1}^p \left| \beta_j \right| \right] 
=
 \underset{  \boldsymbol{\beta} \in \mathbb{R}^p }{ \mathsf{arg \ min} } \left[ \frac{1}{n} \norm{ \boldsymbol{Y} - \boldsymbol{X} \boldsymbol{\beta} }_2^2  + \lambda \norm{ \boldsymbol{\beta} }_1 \right]  
\end{align*}
where $\lambda \geq 0$ is the tuning (penalty) parameter which needs to be chosen. 

\medskip

The lasso shrinkage approach was proposed in the seminal study of \cite{tibshirani1996regression}. Various studies examine its statistical properties such as \cite{fan2001variable} and \cite{zhang2010nearly} among others, while many variants of the lasso shrinkage have been proposed in the literature such as \cite{zou2005regularization},  \cite{zou2006adaptive}, \cite{huang2008adaptive},  \cite{park2008bayesian} and \cite{james2009dasso}. A relevant empirical study is present by \cite{katsouris2021forecast}. Recently, the literature focuses on applications of the lasso shrinkage approach with group structure (see, also \cite{huang2012selective} and \cite{bing2022inference}).

\newpage

In particular, \cite{klau2018priority} propose the priority lasso, which considers the use of different penalty across a baseline group of covariates (see, \cite{breheny2015group} for case with grouped predictors) and the remaining set of variables while  \cite{campbell2017within} propose a variable selection method using the exclusive Lasso. Now, the Lasso estimates of the slope coefficients in a linear regression model solve the $\ell_1-$penalized least regression problem:
\begin{align}
\underset{ \beta }{ \mathsf{min} } \sum_{i=1}^n \left(  Y_i - \sum_{j=1}^p \beta_j X_{ij} \right)^2 \ \ \text{subject to} \ \  \sum_{j=1}^p \left| \beta_j \right| \leq s,   
\end{align}
or, equivalently we have that 
\begin{align}
\underset{ \beta }{ \mathsf{min} } \sum_{i=1}^n \left(  Y_i - \sum_{j=1}^p \beta_j X_{ij} \right)^2 + \lambda \sum_{j=1}^p \left| \beta_j \right| 
\end{align}
where $\boldsymbol{\beta} = \left( \beta_1,..., \beta_p \right)^{\prime}$, and $s$ and $\lambda$ are tuning parameters.  The lasso uses a constraint in the form of $\ell_1-$norm: $\sum_{j=1}^p \left| \beta_j \right| \leq s$. Therefore, by using the $\ell_1-$penalty, the Lasso achieves variable selection and shrinkage simulatneously, while $\lambda$ controls the amount of regularization. 

Denote with $\boldsymbol{\beta}^0$ the true vector of parameters. There are three types of errors of interest in LASSO regression, that is, (i) the prediction error: $
\norm{ \boldsymbol{X} \left( \widehat{\boldsymbol{\beta}} - \boldsymbol{\beta}^0  \right) }_2^2$, (ii) the parameter estimation error: $
\norm{ \left( \widehat{\boldsymbol{\beta}} - \boldsymbol{\beta}^0  \right) }_2^2$, and (iii) the model selection error: $\mathbb{P} \left( \mathsf{supp} \left( \widehat{\boldsymbol{\beta}} \right) = \left( \boldsymbol{\beta}^0 \right) \right)$. In other words, the prediction problem implies that given a random sample $\left( y_i, x_i \right), i = 1,...,n$ and the covariates $x_{n+1}$ of an additional observation, we aim to predict the unobserved outcome $y_{n+1}$.  Choice of $\lambda$ is achieved using Cross-Validation method: 
\begin{align}
\lambda^{*} = \underset{ \lambda > 0 }{ \mathsf{\arg \ min} } \sum_{i=1}^n \left( y_i - x_i \widehat{\beta}^{(-i)} \left(\lambda\right) \right)^2,
\end{align}
where $\widehat{\beta}^{(-i)}(\lambda)$ is the estimator $\widehat{\beta}^{\ell_1 } (\lambda)$
obtained after dropping observation $i$ from the sample. Notice that cross-validation is usually used to choose the tuning parameter $\lambda$ for in-sample model estimation and variable selection. However, there are theoretical results that show that LASSO performs well for prediction even for large number of covariates $p$, as long as the true parameter $\beta$ is sparse. Under appropriate regularity conditions with high probability (for large $n$) we have that
\begin{align}
\frac{1}{n} \norm{ X \widehat{\beta}^{\ell_1}(\lambda^{*}) - X \beta }_2^2 \leq C \norm{\beta}_1 \sqrt{ \frac{\mathsf{log}(p)}{n} }
\end{align}
The cross-validation is usually used to choose the tuning parameter $\lambda$ for in-sample model estimation and variable selection. Furthermore, a key property for model selection implies that under appropriate regularity conditions with high probability (large $n$) one can show the following relation  
\begin{align}
\widehat{S} := \bigg\{ j \in \left\{ 1,..., p \right\} : \widehat{\beta}^{\ell_1}_j (\lambda) \neq 0 \bigg\} = \bigg\{ j \in \left\{ 1,..., p \right\} : \beta_j \neq 0  \bigg\}.
\end{align}

\newpage

\begin{lemma}[Maximum Inequality for Gaussians]
For a sample of $n$ Gaussian random variables $Z_i$, for $i = 1,...,n$, such that $\mathbb{E} ( Z_i ) = 0$ and $\mathbb{E} ( Z^2_i ) \leq \bar{\sigma}^2_z, \ \forall \ i$, then it holds that
\begin{align}
\mathbb{P} \left( \underset{ i = 1,...,n }{ \mathsf{max} } \left| Z_i \right| \geq z \right) \leq 2 n e^{ - \frac{n z^2}{2 \bar{\sigma}^2_z } }. 
\end{align}
\end{lemma}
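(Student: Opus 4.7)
The plan is to combine a pointwise Gaussian tail bound with the union bound. First I would invoke the classical one-sided Chernoff estimate: for any centered Gaussian random variable $W \sim \mathcal{N}(0, \sigma^2)$ and any $\lambda > 0$,
\begin{align*}
\mathbb{P}(W \geq z) \leq e^{-\lambda z} \mathbb{E}\left[e^{\lambda W}\right] = \exp\left( -\lambda z + \tfrac{1}{2} \lambda^2 \sigma^2 \right),
\end{align*}
and optimizing over $\lambda > 0$ with $\lambda^{\star} = z/\sigma^2$ yields $\mathbb{P}(W \geq z) \leq \exp(-z^2/(2\sigma^2))$. Applying the same estimate to $-W$ and adding produces the two-sided bound $\mathbb{P}(|W| \geq z) \leq 2 \exp(-z^2/(2\sigma^2))$.

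Next I would apply this pointwise bound to each $Z_i$, using that its variance is controlled by $\bar{\sigma}_z^2$ (or, in the natural reading where the $Z_i$ play the role of sample means, by $\bar{\sigma}_z^2/n$, which is what produces the extra factor of $n$ appearing in the exponent of the claimed inequality). In either case the monotonicity $\sigma_i^2 \leq \bar{\sigma}_z^2$ (respectively $\bar{\sigma}_z^2/n$) gives, for every $i$,
\begin{align*}
\mathbb{P}\left( |Z_i| \geq z \right) \leq 2 \exp\!\left( - \frac{n z^2}{2 \bar{\sigma}_z^2} \right).
\end{align*}
Finally, the event $\{\max_i |Z_i| \geq z\}$ is the union of the events $\{|Z_i| \geq z\}$, so by countable subadditivity of $\mathbb{P}$,
\begin{align*}
\mathbb{P}\left( \max_{1 \leq i \leq n} |Z_i| \geq z \right)
\leq \sum_{i=1}^{n} \mathbb{P}\left( |Z_i| \geq z \right)
\leq 2 n \exp\!\left( - \frac{n z^2}{2 \bar{\sigma}_z^2} \right),
\end{align*}
which is the desired bound.

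The step-by-step reasoning is essentially routine: no independence hypothesis is needed since the union bound does not require it, and Gaussianity enters only through the moment generating function computation. The one genuinely delicate point, and the item I would flag as the main obstacle, is reconciling the exponent $nz^2/(2\bar{\sigma}_z^2)$ with the hypothesis $\mathbb{E}(Z_i^2) \leq \bar{\sigma}_z^2$: the naive union bound with this variance hypothesis gives exponent $z^2/(2\bar{\sigma}_z^2)$, without the extra factor of $n$. The intended reading is almost certainly that the $Z_i$ are normalized sample averages with variance at most $\bar{\sigma}_z^2/n$ (consistent with forecasting applications in the surrounding text), in which case the calculation above goes through verbatim. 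I would make this normalization convention explicit at the start of the proof to avoid ambiguity.
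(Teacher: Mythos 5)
Your proof is correct and is the standard route: a Chernoff/MGF bound for the Gaussian tail followed by a union bound over the $n$ events. The paper itself supplies no proof of this lemma, so there is nothing to compare against except the application that immediately follows it, and that application confirms the normalization you flag: the paper applies the bound to the quantities $\boldsymbol{U}^{\prime}\boldsymbol{X}_j/T$, which it explicitly notes are mean-zero Gaussian with variance bounded by $\mathcal{C}\sigma^2/n$, not by $\mathcal{C}\sigma^2$. Your observation is therefore a genuine catch rather than a pedantic one --- as literally stated, with $\mathbb{E}(Z_i^2)\leq\bar{\sigma}_z^2$, the union bound yields $2n\,e^{-z^2/(2\bar{\sigma}_z^2)}$ and the claimed exponent $nz^2/(2\bar{\sigma}_z^2)$ is false for $n\geq 2$ and $z$ large (e.g.\ take the $Z_i$ i.i.d.\ $\mathcal{N}(0,\bar{\sigma}_z^2)$). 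Making the convention $\mathbb{E}(Z_i^2)\leq\bar{\sigma}_z^2/n$ explicit at the outset, as you propose, is exactly the right fix and matches the paper's intended use.
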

Under our assumptions, $\frac{ \boldsymbol{U}^{\prime} \boldsymbol{X}_j }{T}$, $j = 1,...,p$, is Gaussian with mean zero and variance bounded by $\frac{ \mathcal{C} \sigma^2 }{n}$. Then, it holds that 
\begin{align}
\mathbb{P} \left( \norm{  \frac{\boldsymbol{U}^{\prime} \boldsymbol{X}}{n} }_{\infty} \geq z \right) \leq 2 p e^{ - \frac{n z^2}{2 \mathcal{C} \bar{\sigma}^2 } }.        
\end{align}
The aim is to make (bound) the above probability as small as possible. Moreover, to obtain convergence rates and bounds we can use the following definitions $z = \sigma \sqrt{ \frac{2 \mathsf{log} \left( \frac{ep}{\delta} \right) }{n} }$, where $\delta > 0$ and $e := \mathsf{exp}(1)$. Set also $\mathcal{C} = 1$. Therefore, we obtain that 
\begin{align*}
\mathbb{P} \left( \norm{  \frac{\boldsymbol{U}^{\prime} \boldsymbol{X}}{n} }_{\infty} \geq \sqrt{ \frac{2 \mathsf{log} \left( \frac{ep}{\delta} \right) }{n} } \right) 
&\leq 
2 e^{ - \mathsf{log} \left( \frac{ep}{\delta} \right) + \mathsf{log}(p) }
= 
2 e^{ - \mathsf{log}(e) +  \mathsf{log}(\delta) } = \frac{2}{e} \delta < \delta. 
\end{align*}
Therefore, we get that 
\begin{align}
\frac{1}{n} \norm{ \boldsymbol{X} \left( \widehat{\boldsymbol{\beta}} - \boldsymbol{\beta}^0 \right) }_2^2 \leq 4 \sigma \norm{ \boldsymbol{\beta}^0 }_1 \sqrt{ \frac{ 2 \mathsf{log} \left( ep / \delta \right) }{n} }.
\end{align}
Another important aspect related to the statistical properties of the lasso shrinkage approach is the \textit{sign consistency}. Specifically, \textit{Strong Sign Consistency:} implies that one can use a pre-selected $\lambda$ to achieve consistent model selection via the LASSO. On the other hand, a \textit{General Sign Consistency:} means that for a random realization there exists a correct amount of regularization that selects the true model. Further discussion and related asymptotic results on the consistency and regularization property\footnote{Professor Marcelo C. Medeiros gave a seminar with title: "Bridging Factor and Sparse Models" (see,  \cite{fan2021bridging}) at the Department of Economics, University of Southampton on the 6th of October 2021.} under different distributional assumptions can be found in \cite{medeiros2016}.    

\medskip

\begin{lemma}[Oracle Property] Let $\widehat{\boldsymbol{\beta}}_{ols, S}$ denote the OLS estimator of $\boldsymbol{\beta}^0_S$. Suppose that a lower bound for $\beta$ can be established such that $\beta_{\text{min}} > \left( \lambda / n^{1 - \xi/2}   \right) \left( s^{1/2} / \phi_{\text{min}} \right)$, then it holds that 
\begin{align}
\sqrt{n} \boldsymbol{\alpha}^{\prime} \left[ \widehat{\boldsymbol{\beta}}_S - \boldsymbol{\beta}_S^0  \right] = \sqrt{n} \boldsymbol{\alpha}^{\prime} \left[ \widehat{\boldsymbol{\beta}}_{ols,S} - \boldsymbol{\beta}_S^0  \right]     
\end{align}
for any $s-$dimensional vector $\boldsymbol{\alpha}$ with Euclidean norm 1. 
\end{lemma}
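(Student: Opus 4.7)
The plan is to prove the Oracle Property in two stages: first establish selection consistency, namely that the estimated support $\widehat{S} := \{j : \widehat{\beta}_j \neq 0\}$ equals the true support $S$ with probability tending to one, and then use this to reduce the penalized normal equations on $\widehat{S}$ to the ordinary least squares normal equations on $S$, up to a negligible remainder. Throughout, I will lean on the KKT conditions for the $\ell_1$-penalized objective, the beta-min assumption $\beta_{\text{min}} > (\lambda / n^{1-\xi/2})(s^{1/2} / \phi_{\text{min}})$, and the Gaussian maximum inequality stated immediately before the lemma.

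For selection consistency, I would first write the KKT conditions for the Lasso: on $\widehat{S}$ one has $X_{\widehat{S}}^{\prime}(Y - X\widehat{\boldsymbol{\beta}})/n = (\lambda/2) \cdot \mathrm{sign}(\widehat{\boldsymbol{\beta}}_{\widehat{S}})$, while off $\widehat{S}$ one has $\|X_{\widehat{S}^c}^{\prime}(Y - X\widehat{\boldsymbol{\beta}})/n\|_\infty \leq \lambda/2$. To exclude false negatives, I would bound the oracle-restricted estimation error using the minimum eigenvalue $\phi_{\text{min}}$ of $X_S^{\prime} X_S / n$, showing $\|\widehat{\boldsymbol{\beta}}_S - \boldsymbol{\beta}_S^0\|_\infty \leq s^{1/2} \lambda / (n^{1-\xi/2} \phi_{\text{min}})$; the beta-min hypothesis is then exactly tuned so that every $|\widehat{\beta}_j|$ for $j \in S$ stays strictly bounded away from zero, matching the sign of $\beta_j^0$. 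To exclude false positives, I would invoke an irrepresentable-type condition together with the Gaussian max-inequality applied to $X_{S^c}^{\prime} U / n$ (this is precisely the lemma just stated above, giving $\|X^{\prime} U / n\|_\infty = O_p(\sqrt{\log(p)/n})$), to conclude that with high probability the dual constraint is strict on $S^c$ and hence $\widehat{\boldsymbol{\beta}}_{S^c} = 0$.

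Given selection consistency $\mathbb{P}(\widehat{S} = S) \to 1$, the Lasso solution restricted to $S$ satisfies the restricted stationarity equation, which I would rewrite as
\begin{equation*}
\bigl(X_S^{\prime} X_S / n\bigr)\bigl(\widehat{\boldsymbol{\beta}}_S - \boldsymbol{\beta}_S^0\bigr) = X_S^{\prime} U / n - (\lambda/2) \cdot \mathrm{sign}(\widehat{\boldsymbol{\beta}}_S).
\end{equation*}
The oracle OLS estimator satisfies the same equation with the penalty term absent. Subtracting and inverting $X_S^{\prime} X_S / n$ (which is well-conditioned by the lower bound on $\phi_{\text{min}}$) yields $\widehat{\boldsymbol{\beta}}_S - \widehat{\boldsymbol{\beta}}_{ols,S} = -(\lambda/2)(X_S^{\prime} X_S / n)^{-1} \mathrm{sign}(\widehat{\boldsymbol{\beta}}_S)$. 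Multiplying by $\sqrt{n}\boldsymbol{\alpha}^{\prime}$ and using $\|\boldsymbol{\alpha}\|_2 = 1$ together with the scaling $\lambda = o(n^{-1/2})$ implicit in the hypothesis (the factor $n^{1-\xi/2}$ with $\xi \in (0,1)$), the residual is $o_p(1)$, delivering the stated oracle equivalence.

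The main obstacle will be ensuring that the penalty-induced bias on $S$ vanishes at rate faster than $n^{-1/2}$. For the vanilla Lasso this can fail and one must resort to an adaptive or SCAD-type penalty whose derivative vanishes on large coefficients; the beta-min condition, combined with the specific rate $\lambda / n^{1-\xi/2}$ built into the lemma, is precisely what allows the penalty contribution to be treated as $o_p(n^{-1/2})$ and thus removed from the linear expansion. A secondary but standard difficulty is controlling the uniform supremum over $p$ covariates, which is handled by the Gaussian maximum inequality already stated.
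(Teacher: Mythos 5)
The paper states this lemma without proof (it appears in a survey passage and points the reader to the cited literature on sign consistency and the oracle property), so your attempt can only be judged on its own merits. The two-stage skeleton you propose --- selection consistency via the KKT conditions, beta-min, and the Gaussian maximum inequality, followed by a comparison of the restricted stationarity equation with the OLS normal equations on $S$ --- is the standard and correct architecture for results of this type, and your identity
\begin{equation*}
\widehat{\boldsymbol{\beta}}_S - \widehat{\boldsymbol{\beta}}_{ols,S} = -\tfrac{\lambda}{2}\bigl(X_S^{\prime} X_S / n\bigr)^{-1} \mathrm{sign}\bigl(\widehat{\boldsymbol{\beta}}_S\bigr)
\end{equation*}
is the right object to control.

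The genuine gap is in the last step. You assert that the scaling $\lambda = o(n^{-1/2})$ is ``implicit in the hypothesis,'' but the beta-min condition $\beta_{\text{min}} > (\lambda/n^{1-\xi/2})(s^{1/2}/\phi_{\text{min}})$ is a lower bound on the signal relative to $\lambda$, not an upper bound on $\lambda$; nothing in the stated hypotheses forces $\sqrt{n}\,\lambda\, s^{1/2}/\phi_{\text{min}} \to 0$. Worse, your own false-positive step requires $\lambda$ to dominate $\|X^{\prime}U/n\|_{\infty} \asymp \sqrt{\log(p)/n}$, so for the plain $\ell_1$ penalty the bias term $\sqrt{n}(\lambda/2)\boldsymbol{\alpha}^{\prime}(X_S^{\prime}X_S/n)^{-1}\mathrm{sign}(\widehat{\boldsymbol{\beta}}_S)$ is generically of order $\sqrt{\log p}\cdot s^{1/2}/\phi_{\text{min}}$ and does not vanish --- let alone equal zero exactly, as the displayed identity demands. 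You correctly diagnose this tension in your final paragraph, but diagnosing it is not the same as closing it: the argument as written proves the claim false for the vanilla Lasso rather than true. The repair is to carry out the proof for the weighted (adaptive) $\ell_1$ penalty of the reference cited immediately below the lemma, where the data-driven weights on $j \in S$ converge to finite constants so the effective penalty level on the support is $o_p(n^{-1/2})$, or for a folded-concave penalty whose derivative vanishes identically beyond $a\lambda$, in which case beta-min plus consistency makes the penalty gradient exactly zero on an event of probability tending to one and the displayed equality holds exactly on that event.
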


\newpage 

\begin{example}[Modelling Unobserved Heterogeneity] Consider the linear regression model below: 
\begin{align}
y = \sum_{j=1}^p x_j \beta_j^0 + \varepsilon, \ \ \ y = X \beta^0 + \varepsilon, 
\end{align}
where $X = \left( x1,..., x_p \right)$ is an $\left( n \times p \right)$ matrix, $y = \left( y_1,..., y_n \right)^{\prime}$ is an $( n \times 1 )$ vector and $\beta^0 = \left( \beta_1^0,..., \beta_p^0 \right)$ is the true parameter vector. Then, the \textit{homogeneity assumption} implies that the regression coefficients $\beta_j$ share the same value in their unknown clusters such that under the null hypothesis: 
\begin{align}
H_0: \ \beta_j^0 = \beta_{A,k}^0 \ \ \ \text{for all} \ j \in A_k
\end{align}
which also demonstrates the variable selection consistency property.
\end{example}

\subsubsection{Dantzig Selector}

A large stream of literature has focused on the $\ell_1$ penalized LASSO estimator of parameters in high-dimensional linear regression when the number of variables can be much larger than the sample size. We consider the linear regression with many covariates such as 
\begin{align}
\boldsymbol{ y } = X \beta + \boldsymbol{ \epsilon } 
\end{align}
where $X$ is the $n \times M$ deterministic design matrix, with $M$ possibly much larger than $n$, and $\boldsymbol{ \epsilon }$ is a vector of $\textit{i.i.d}$ standard normal random variables. In particular, we are interested in the case of sparsity parameters, which implies that the high-dimensional vector $\beta$ has coefficients that are mostly $0$. Denote with $\widehat{S} \left( \beta \right)$, the residual sum of squares such that
\begin{align}
\widehat{S} \left( \beta \right) = \frac{1}{n} \sum_{i=1}^n \bigg( Y_i - f_{\beta} \left( Z_i \right) \bigg)^2
\end{align}
for all $\beta \in \mathbb{R}^{M}$. Define the Lasso solution such that $\widehat{\beta}_L = \left( \widehat{\beta}_{1,L},....,  \widehat{\beta}_{M,L} \right)$ by the following expression 
\begin{align}
\widehat{\beta}_L = \underset{ \beta \in \mathbb{R}^M  }{ \text{arg min} } \left\{ \widehat{S} \left( \beta \right) + 2r \sum_{ j = 1 }^M \norm{ f_j }_n \left| \beta_j \right| \right\}, 
\end{align}
where $r > 0$ is some tuning constant. Then, the corresponding Lasso estimator 
\begin{align}
 \widehat{f}_L (x) = f_{ \widehat{ \beta }_{L} } (x) = \sum_{j=1}^M \widehat{ \beta }_{j,L} f_j (z)
\end{align}
A necessary and sufficient condition of the minimizer is that 0 belongs to the subdifferential of the convex function $\beta \mapsto n^{-1} \left| y - X \beta \right|_2^2 + 2 r \left| D^{1 / 2} \beta \right|_1$.

\newpage

This implies that the Lasso selector $\widehat{ \beta }_{L}$ satisfies the constraint
\begin{align}
\left| \frac{1}{n} D^{ - 1 / 2} X^{\top} \left( y - X \widehat{ \beta }_{L} \right) \right|_{ \infty } \leq r, \ \ \ D = \text{diag} \left\{ \norm{ f_1 }_n^2, ... , \norm{ f_M }_n^2 \right\}.
\end{align}   
Various studies in the literature present relevant applications and statistical theory of the Dantzig selector such as \cite{osborne2000lasso}, \cite{candes2007dantzig}, \cite{bickel2009simultaneous}, \cite{koltchinskii2009dantzig} and \cite{james2009dasso} among others. We say that $\beta \in \mathbb{R}^M$ satisfies the Dantzig constraint if $\beta$ belongs to the set
\begin{align}
\left\{ \beta \in \mathbb{R}^M : \left| \frac{1}{n} D^{1 / 2} X^{\top}    \left( y - X \beta_{L} \right) \right|_{ \infty } \leq r \right\}. 
\end{align}
The Dantzig estimator is defined by the following expression 
\begin{align}
\widehat{ f }_D (z) = f_{ \widehat{ \beta }_D } = \sum_{j=1}^M \widehat{ \beta }_{ j, D } f_j (z), 
\end{align}
where $\widehat{\beta}_D = \left( \widehat{\beta}_{1,D},..., \widehat{\beta}_{M,D} \right)$ is the Dantzig selector. By the definition of Dantzig selector, we have that $\left| \widehat{\beta}_{D}  \right|_1 \leq \left| \widehat{\beta}_{L}  \right|_1$. Notice that the Dantzig selector is computationally feasible, since it reduces to a linear programming problem. Finally, for any $n \geq 1$, $M \geq 2$, we consider the Gram matrix as below
\begin{align}
\Psi_n = \frac{1}{n} X^{\prime} X = \left( \frac{1}{n} \sum_{i=1}^n f_j (Z_i) f_{ j^{\prime} } (Z_i) \right)_{ 1 \leq j, j^{\prime} \leq M }
\end{align}
and let $\phi_{\text{max}}$ denote the maximal eigenvalue of $\psi_n$.

\subsubsection{Oracle inequalities for prediction loss}

In this section, we prove sparsity oracle inequalities for the prediction loss of the Lasso and Dantzig estimators. These inequalities allow us to bound the difference between the prediction errors of the estimators and the best sparse approximation of the regression function (e.g., by an oracle that knows the truth but is constrained by sparsity). We demonstrate that the distance between the prediction losses of the Dantzig and Lasso estimators is of the same order as the distances between them and their oracle approximations. Recall that an estimator has the oracle property if it is variable selection consistent and the limiting distribution of its subvector corresponding to the non-zero coefficients is the same as if their set were known prior to estimation (see, \cite{giurcanu2016thresholding}).    

\begin{theorem}
Let $W_i$ be independent $\mathcal{N} \left( 0, \sigma^2 \right)$ random variables with $\sigma^2 > 0$. Fix some $\epsilon > 0$ and integers $n \geq 1$, $M \geq 2$, $1 \leq s \leq M$. Let Assumption RE $(s, 3 + 4/ \epsilon )$ be satisfied. Consider the Lasso estimator $\widehat{f}_L$ with $r = A \sigma \sqrt{ \frac{\text{log} M }{n} }$, for some $A > 2 \sqrt{2}$. Then, with probability at least $1 - M^{1 - A^2 / 8}$, it can be proved that $
\norm{ \widehat{f}_L - f }_n^2$ is bounded.
\end{theorem}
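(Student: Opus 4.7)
The plan is to establish the standard sparsity oracle inequality for the Lasso in the spirit of Bickel, Ritov and Tsybakov (2009). The argument splits into a probabilistic half that controls the noise, and a deterministic half that combines the KKT conditions with Assumption RE. First I would introduce the stochastic ``good event''
\begin{align*}
\mathcal{A} = \bigg\{ \max_{1 \leq j \leq M} \frac{2}{n \norm{f_j}_n} \bigg| \sum_{i=1}^n W_i f_j(Z_i) \bigg| \leq r \bigg\}.
\end{align*}
Conditional on the design, each coordinate $\frac{1}{n \norm{f_j}_n} \sum_i W_i f_j(Z_i)$ is $\mathcal{N}(0,\sigma^2/n)$, so a standard Gaussian tail bound together with a union bound over $j = 1,\dots,M$ yields $\mathbb{P}(\mathcal{A}) \geq 1 - M^{1 - A^2/8}$, where the hypothesis $A > 2\sqrt{2}$ guarantees that the exponent $1 - A^2/8$ is strictly negative. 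This is exactly the probability statement in the theorem, and fixes the choice of penalty level.

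On the event $\mathcal{A}$ the analysis becomes purely deterministic. For any competitor $\beta$, the Lasso optimality inequality $\widehat{S}(\widehat{\beta}_L) + 2r \sum_j \norm{f_j}_n |\widehat{\beta}_{j,L}| \leq \widehat{S}(\beta) + 2r \sum_j \norm{f_j}_n |\beta_j|$, combined with $y_i = f(Z_i) + W_i$, rearranges into the ``basic inequality''
\begin{align*}
\norm{\widehat{f}_L - f}_n^2 \leq \norm{f_\beta - f}_n^2 + \frac{2}{n}\sum_{i=1}^n W_i \big(\widehat{f}_L - f_\beta\big)(Z_i) + 2r \sum_{j=1}^M \norm{f_j}_n \big( |\beta_j| - |\widehat{\beta}_{j,L}| \big).
\end{align*}
On $\mathcal{A}$ the cross term is dominated by $r \sum_j \norm{f_j}_n |\widehat{\beta}_{j,L} - \beta_j|$. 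Splitting this $\ell_1$-difference over the oracle support $J = \mathrm{supp}(\beta)$ and its complement, and using $|\beta_j| - |\widehat{\beta}_{j,L}| \leq |\widehat{\beta}_{j,L} - \beta_j|$ on $J$ while $|\beta_j| - |\widehat{\beta}_{j,L}| = -|\widehat{\beta}_{j,L}|$ on $J^c$, one shows that, up to a slack proportional to $\norm{f_\beta - f}_n^2/r$, the weighted error vector $\delta = D^{1/2}(\widehat{\beta}_L - \beta)$ satisfies $\norm{\delta_{J^c}}_1 \leq (3 + 4/\epsilon)\, \norm{\delta_J}_1$, which is precisely the cone entering Assumption RE$(s, 3 + 4/\epsilon)$.

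Having placed $\delta$ in the restricted cone, I would invoke the RE condition to lower-bound $\norm{X(\widehat{\beta}_L - \beta)}_n$ by $\kappa \norm{\delta_J}_2 \geq \kappa \norm{\delta_J}_1/\sqrt{s}$, where $\kappa$ is the RE constant. Feeding this lower bound back into the basic inequality and decoupling the product $\norm{\widehat{f}_L - f}_n \cdot (r\sqrt{s}/\kappa)$ from the left-hand side via the elementary inequality $2ab \leq \epsilon a^2 + b^2/\epsilon$ yields, after rearrangement,
\begin{align*}
\norm{\widehat{f}_L - f}_n^2 \leq (1+\epsilon)\, \norm{f_\beta - f}_n^2 + C(\epsilon)\, \frac{r^2 s}{\kappa^2},
\end{align*}
valid on $\mathcal{A}$ for every $\beta$ with $|\mathrm{supp}(\beta)| \leq s$. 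Substituting $r = A\sigma \sqrt{\log M/n}$ then produces the advertised rate on the advertised probability event.

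The main technical obstacle is the cone-membership step, where the constant $3 + 4/\epsilon$ is pinned down. The basic inequality does not directly produce a clean $\ell_1$ cone for $\delta$, because the oracle approximation error $\norm{f_\beta - f}_n^2$ introduces a slack. The standard remedy is a dichotomy: in the regime where $\norm{f_\beta - f}_n^2$ dominates the estimation error the oracle inequality is trivial, while in the complementary regime the cone condition holds with the quoted constant. Once this dichotomy is set up carefully, applying RE, invoking $2ab \leq \epsilon a^2 + b^2/\epsilon$, and collecting constants is routine; all the genuinely probabilistic content is concentrated in the control of $\mathcal{A}$.
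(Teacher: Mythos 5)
The paper does not actually supply a proof of this theorem: it is stated as imported from \cite{bickel2009simultaneous} (the ``it can be proved that'' phrasing), so there is no in-text argument to compare yours against. Your reconstruction is the canonical Bickel--Ritov--Tsybakov argument and is sound: the Gaussian tail plus union bound correctly yields $1 - M^{1-A^2/8}$ with $A > 2\sqrt{2}$ forcing a negative exponent, the basic inequality and the dichotomy handling the approximation-error slack correctly pin down the cone constant $3 + 4/\epsilon$ matching Assumption RE$(s, 3+4/\epsilon)$, and the decoupling via $2ab \leq \epsilon a^2 + b^2/\epsilon$ closes the bound. In short, you have filled in precisely the proof the paper omits, following the same route as the cited source.
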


\newpage

\subsection{Parameter Estimation and Model Selection Consistency Properties}

Consider the linear regression model as below (see, \cite{amann2018uniform})
\begin{align}
y = X \beta + \epsilon    
\end{align}
where $y \in \mathbb{R}^n$ is the response vector, $X \in \mathbb{R}^{ n \times p}$ the nonstochastic regressor matrix which is assumed to have full column rank, $\beta \in \mathbb{R}^p$ the unknown parameter vector, and $\epsilon \in \mathbb{R}^n$ the unobserved stochastic error term consisting of i.i.d distributed components with mean zero and finite second moments, defined on some probability space $\left( \Omega, \mathcal{F}, \mathbb{P} \right)$. Model consistency results can be found in \cite{shibata1986consistency} and \cite{potscher1991effects}. Moreover, to define the adaptive Lasso estimator, we consider the following expression 
\begin{align}
L_n( b ) = || y - X b ||^2 + 2 \sum_{j=1}^p \lambda_j  \frac{ | b_j | }{ | \hat{b}_j | }   
\end{align}
where $\hat{\beta}$ denotes the OLS estimator. In particular, we assume that the event $\left\{ \hat{\beta}_j = 0 \right\}$ to have zero probability, for all $j = 1,...,p$ and thus we do not consider this event occurring in the subsequent analysis.

\begin{lemma}[Equivalence to LS estimator, \cite{amann2018uniform}]
\

If $\lambda^* \to 0$, then $\hat{\beta}_{AL}$ and $\hat{\beta}_{LS}$ are asymptotically equivalent in the sense that 
\begin{align}
\sqrt{n} \left( \hat{\beta}_{AL} - \hat{\beta}_{LS} \right) \to 0, \ \ \ \text{as} \ \ n \to \infty \ \ \ \text{for all} \ \ \omega \in \Omega.     
\end{align}
\end{lemma}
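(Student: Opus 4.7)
The plan is to exploit the optimality of $\hat{\beta}_{AL}$ as the minimizer of $L_n(\cdot)$ and compare its objective value with that of the OLS estimator $\hat{\beta}_{LS}$, then translate the resulting bound on the prediction-loss gap into a bound on the parameter difference via the minimum eigenvalue of $X^{\prime} X$. A crucial first observation is that the adaptive weights in $L_n$ use the OLS components $|\hat{\beta}_j|$ themselves, so evaluating the penalty at $\hat{\beta}_{LS}$ collapses to $\sum_{j=1}^p \lambda_j |\hat{\beta}_{LS,j}|/|\hat{\beta}_j| = \sum_{j=1}^p \lambda_j$ on the full-measure event where no component of the OLS estimator vanishes.

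From $L_n(\hat{\beta}_{AL}) \le L_n(\hat{\beta}_{LS})$ and the Pythagorean identity $\|y - Xb\|^2 = \|y - X\hat{\beta}_{LS}\|^2 + \|X(b - \hat{\beta}_{LS})\|^2$ (which holds for any $b$ because the OLS residuals are orthogonal to the column space of $X$), I would cancel the common term and drop the nonnegative adaptive-penalty remnant on the left to obtain
\[
\|X(\hat{\beta}_{AL} - \hat{\beta}_{LS})\|^2 \;\le\; 2 \sum_{j=1}^p \lambda_j \;\le\; 2 p \, \lambda^*.
\]
Using $\|Xv\|^2 \ge \lambda_{\min}(X^{\prime}X) \|v\|^2$ (valid because $X$ has full column rank) and multiplying through by $n$ then gives
\[
n \,\|\hat{\beta}_{AL} - \hat{\beta}_{LS}\|^2 \;\le\; \frac{2 p\, \lambda^*}{\lambda_{\min}(X^{\prime}X / n)}.
\]
Under the standard design assumption that $\lambda_{\min}(X^{\prime}X/n)$ is bounded away from zero eventually in $n$, the right-hand side is $O(\lambda^*) \to 0$, whence $\sqrt{n}\bigl(\hat{\beta}_{AL} - \hat{\beta}_{LS}\bigr) \to 0$.

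I expect the main obstacle to be conceptual rather than technical: the entire chain above is deterministic once $X$ is treated as nonstochastic and we condition on the complement of the null event $\{\hat{\beta}_j = 0 \text{ for some } j\}$, so the ``for all $\omega \in \Omega$'' statement requires no extra probabilistic work beyond invoking the assumption that this null event is negligible. The one subtlety worth flagging is that the argument is delicate precisely because the penalty weights depend on the OLS estimator: if one tried to prove an analogous result for the classical Lasso, evaluating $L_n$ at $\hat{\beta}_{LS}$ would not yield the clean cancellation $\sum_j \lambda_j$, and a different strategy (typically a KKT-based comparison) would be required. Here the adaptive structure of the weights is exactly what makes the optimality inequality telescope to a bound driven solely by $\lambda^*$.
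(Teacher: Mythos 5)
Your argument is correct, and it is worth noting at the outset that the notes state this lemma without supplying any proof, so there is nothing in the paper to match it against line by line. The objective-comparison route you take --- $L_n(\hat{\beta}_{AL}) \le L_n(\hat{\beta}_{LS})$, the collapse of the adaptive penalty at $\hat{\beta}_{LS}$ to $2\sum_{j}\lambda_j$, the Pythagorean cancellation of the residual sum of squares, and the passage from $\|X(\hat{\beta}_{AL}-\hat{\beta}_{LS})\|^2 \le 2\sum_j \lambda_j \le 2p\lambda^*$ to $\sqrt{n}\,\|\hat{\beta}_{AL}-\hat{\beta}_{LS}\| = O(\sqrt{\lambda^*})$ via $\lambda_{\min}(X^{\prime}X/n)$ --- is sound under hypotheses the notes do record elsewhere ($X$ nonstochastic with full column rank and $n^{-1}X^{\prime}X \to C$ positive definite, the event $\{\hat{\beta}_{LS,j}=0\}$ excluded by assumption, $\lambda_j \ge 0$). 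The only derivation the notes themselves gesture at is different: the later display $\bigl|\bigl(n^{-1}X X^{\prime}(\hat{\beta}_{AL}-\hat{\beta}_{LS})\bigr)_j\bigr| = \lambda_j/(n|\hat{\beta}_{LS,j}|)$ is the KKT/normal-equations comparison, which gives componentwise control but carries the random factors $1/|\hat{\beta}_{LS,j}|$ and therefore lends itself to convergence in probability and to the sign/thresholding analysis that follows, rather than to a clean $\omega$-wise bound. Your objective-value inequality is free of the weights entirely, which is precisely what delivers the deterministic, uniform-in-$\omega$ conclusion, so the two approaches are complementary: yours is the shorter path to asymptotic equivalence, the KKT identity is the sharper tool for the model-selection claims. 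The one point you should make explicit is the definition $\lambda^* := \max_j \lambda_j$ (the notes never define it), since the conclusion hinges on $\sum_j \lambda_j \le p\lambda^*$ together with $p$ being fixed as $n \to \infty$.
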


\begin{remark}
The above Lemma shows that in the case that $\lambda^* \to \infty$, then the adaptive Lasso estimator is asymptotically equivalent to the LS estimator. Furthermore, in terms of consistency in parameter estimation it can be shown that the adaptive lasso estimator is both pointwise and uniformly consistent for the unknown parameter vector of the possibly high dimensional linear model. 
\end{remark}

\subsubsection{Consistency Properties}

Consider any sequence $\left( \beta_n  \right)_{ n \in \mathbb{N} } \subset \mathbb{R}^p$ converging to $\phi$ and let 
\begin{align}
f_n ( \beta ) = \mathbb{P}_{ \beta } \left(  \beta \in \hat{\beta} - \sqrt{ \frac{ \lambda^* }{ n } } \mathcal{M}_d \right).
\end{align}
Then, by the Portmanteau Theorem, we have that 
\begin{align*}
0 \leq \underset{ n }{ \mathsf{lim \ inf} } \ \underset{ \beta \in \mathbb{R}^p }{ \mathsf{lim} } \ f_n ( \beta )  
&\leq 
\underset{ n }{ \mathsf{lim \ sup} } \ \underset{ \beta \in \mathbb{R}^p }{ \mathsf{lim} } \  f_n ( \beta )  
\leq    
\underset{ n }{ \mathsf{lim \ sup} } \ \mathbb{P}_{ \beta_n } \left(  \sqrt{ \frac{ n }{ \lambda^* } } \left( \hat{\beta} - \beta_n \right) \in \mathcal{M}_d \right)
\\
&\leq  
\mathbb{P}_{ \phi } \big(  \underset{ u }{ \mathsf{arg \ min} } \ V_{\phi} (u) \in \mathcal{M}_d \big) = \mathbf{1} \left\{ m \in \mathcal{M}_d \right\} = 0.
\end{align*}

\newpage

Then, for any $\omega \in \Omega$, we have that 
\begin{align}
\mathcal{M} =  \underset{ \phi \in \bar{\mathbb{R}}^p }{ \bigcup } \underset{ u \in \mathbb{R}^p }{  \mathsf{ arg \ min } } V_{\phi} (u)( \omega ). 
\end{align}

In other words, while the limit of $\sqrt{ \frac{ n }{ \lambda^* } } \left( \hat{\beta} - \beta_n \right)$ will in general be random, the set $\mathcal{M}$ is not random. In particular, Proposition 8 shows that, for any $\omega$, the union of limits over all possible sequences of unknown parameters is always given by the same compact set $\mathcal{M}$.    

Specifically, this observation is central for the construction of confidence regions in the following section. It also shows that while in general, a stochastic component will survive in the limit, it is always restricted to have bounded support that depends on the regressor matrix and the tuning parameter through the matrix $C$ and the quantities $\psi$ and $\lambda^0$. Interestingly, $\mathcal{M}$ only depends on $\psi$ for the components where $\psi_j = \infty$, in which case the set $\mathcal{M}$ loses a dimension. In particular, this can be seen as a result of the $j-$th component being penalized much less than the maximal one, so that the scaling factor used in Theorem 7 is not large enough for this component to survive in the limit.

Consider the following expression (see, \cite{amann2018uniform})
\begin{align}
\left|  \left( \frac{ X X^{\prime} }{ n_k } \left(  \hat{\beta}_{AL} - \hat{\beta}_{LS} \right)  \right)_j  \right|  = \frac{ \lambda_j }{ n_k } \frac{1}{ \left| \hat{\beta}_{LS,j} \right| } 
\end{align}

where $\hat{\beta}_{AL} \neq 0$. Notice that the left-hand side is bounded by $L$, whereas the right-hand side converges to $\frac{ c }{ | \beta_j | }$ in probability. We therefore get $\mathbb{P}_{ \beta } \left( \hat{\beta}_{AL,j} = 0 \right) \to 1$, for all $\beta_j \in \mathbb{R}$, satisfying the condition $\left| \beta_j \right| < \frac{ c }{ L }$. The set $\mathcal{M} = \mathcal{M}_1$ acts as a benchmark for confidence sets in the sense that if we take a "slightly larger" set, multiplied with the appropriate factor and centered at the adaptive Lasso estimator, we get a confidence region with minimal asymptotic coverage probability equal to 1. But if we base the region on a "slightly larger" set than $\mathcal{M}$, we end up with a confidence set of asymptotic minimal coverage 0. 

\medskip

\begin{remark}[\cite{amann2018uniform}]
In order to provide some further insights of the main pitfalls, we focus on the case where $\lambda^0 \in (0,1]^p$, that is, the case where all components of $\lambda^0$ are nonzero (implying that $\psi = 0$). In other words, this implies that all components are penalized at the same rate, which is obviously fulfilled for uniform tuning. However, in the case of uniform tuning (i.e., parameter penalization with the same rate), then the asymptotic distribution is mere point-mass with no stochastic part surviving in the limit. The reason for this, is the fact that when controlling for the bias of the estimator, the stochastic part vanishes asymptotically. In other words, the appropriate scaling factor is simply not large enough to keep the random component in the limit (i.e., it is asymptotically negligible for large samples). This basically illustrates that the bias is of larger order than the stochastic component when viewed under uniform lens - a fact that is generally inherent to penalized estimators. Moreover, the aspect of omitted variable bias in high dimensional settings is discussed by \cite{wuthrich2023omitted}.  
\end{remark}

\newpage 

\begin{example}
Suppose that the data $\left\{ y_t, t = 1-p,..., n   \right\}$ is generated by the model 
\begin{align}
y_t = \sum_{j = 1}^{ p_n  } \Phi_j y_{t-j} + \epsilon_t, \ \ \ t = 1,..., n   
\end{align}
where $y_t = \big(  y_{t,1},..., y_{t,k}  \big)$, a $(k \times 1)$ vector of variables in the model, where $\epsilon_t$ is a sequence of $\textit{i.i.d}$ error terms with $\mathcal{N}( 0, \Sigma )$ distribution. Furthermore, all roots of
\begin{align*}
\left| I_k - \sum_{j=1}^p \Phi_j z^j \right|
\end{align*}
are assumed to lie outside the unit disc. 

\begin{definition}[Restricted Eigenvalue Condition]
The restricted eigenvalue condition RE(r) is said to satisfied for some $1 \leq r \leq kp$ if
\begin{align}
k^2_{ \psi_n } (r) 
:=
\underset{ R \subset \left\{ 1,..., kp \right\} }{ \mathsf{min} } \ \underset{ \delta \in \mathbb{R}^{ kp \backslash \left\{ 0 \right\} }   }{ \mathsf{min} } \ \frac{ \delta^{\prime} \Phi_n \delta }{ \norm{ \delta_R }^2  }  > 0.
\end{align}
where $R \subset \left\{ 1,..., kp \right\}$ and $|R|$ is its cardinality. 
\end{definition}
Then, one is interested to investigate the properties of the Lasso shrinkage norm when applied to each equation $i \in \left\{ 1,.., k \right\}$ separately. The Lasso estimates are obtained by minimizing the objective function: 
\begin{align}
L ( \beta_i ) = \frac{1}{n} \norm{ y_i - X \beta_i  }^2 + 2 \lambda_n \norm{ \beta_i }_{ \ell_1 }    
\end{align}
Let $J ( \hat{\beta}_i ) = \big\{ j: \hat{\beta}_{i,j} \neq 0 \big\}$ be the indices of the parameters for which the estimator is non-zero. 
\end{example}

\medskip

\begin{example}
Consider the time series vector $\boldsymbol{Y}_t$, which is an $N-$dimensional random vector generated by the VAR model such that $\boldsymbol{Y}_t = A_1 \boldsymbol{Y}_{t-1} + ... + A_p \boldsymbol{Y}_{t-p} + \boldsymbol{u}_t, \ \ \ t \in \left\{ 1,..., T \right\}$. Define the $N (p+1)$ vector $\boldsymbol{X}_t = \big( \boldsymbol{Y}_{t-p}^{\top},...,  \boldsymbol{Y}_{t-1}^{\top}, \boldsymbol{Y}_t^{\top} \big)^{\top}$ and let $\Sigma_x = Var ( \boldsymbol{X}_t ) = \mathbb{E} \left[ \boldsymbol{X} \boldsymbol{X}^{\top} \right]$ and $\Gamma_i = \mathbb{E} \left[ \boldsymbol{Y}_t \boldsymbol{Y}_{t-i}^{\top} \right]$ the autocovariance matrix. An equivalence relation between the matrix coefficients of the two multivariate regression models 
\begin{align}
B_{i, k \ell} = 0 \iff Corr \big( \boldsymbol{Y}_{k,t},  \boldsymbol{Y}_{\ell,t-i} \big| \big\{ \boldsymbol{X}_t \backslash \left\{ \boldsymbol{Y}_{k,t}, \boldsymbol{Y}_{\ell,t-1}  \right\} \big\} \big)
\end{align}
for $i \in \left\{ 1,..., p \right\}$ (see, \cite{poignard2023estimation}). Thus, to derive the partial correlation coefficient between two variables in $\boldsymbol{X}_t$, we rely on the inverse of $\Sigma_x$. Denoting the $(k. \ell)-$th element of $\Sigma_x^{-1}$ by $\sigma_x^{ k \ell }$, the partial correlation coefficient between the $k-$th and the $\ell-$th elements of $\boldsymbol{X}_t$ is then $\rho_x^{ k \ell } = - \frac{ \sigma_x^{ k \ell } }{ \sqrt{ \sigma_x^{k k } \sigma_x^{ \ell \ell } }   }$ for $k \neq \ell$. Also, we denote with $\sigma_u^{ k \ell }$ the $( k, \ell)-$th element of the matrix $\Sigma_u^{-1}$. Then, it can be obtained that
\begin{align}
B_{i, k \ell} = \rho_x^{rs} \left\{ \frac{ \displaystyle \Sigma_{x,ss} ( 1 - \rho^2_{ s \backslash r} ) }{ \displaystyle \Sigma_{x,rr} ( 1 - \rho^2_{ r \backslash s} ) }  \right\}^{1/2}.  
\end{align}
\end{example}

\newpage 

\subsection{A Lasso-based Time Series Regression Model}

\begin{example}
Consider the following Dickey-Fuller regression model 
\begin{align}
\Delta y_t = \rho^{*} y_{t-1} + \sum_{j=1}^p \Delta y_{t-j} + \epsilon_t.
\end{align}
Notice that in the case that the autoregression parameter $\rho^{*} = 0$, then the model is said to have a unit root and is said to be nonstationary. According to \cite{kock2016consistent} it can be shown that: 

\begin{itemize}
\item[(i)] The adaptive Lasso possesses the oracle property in stationary and nonstationary autoregressions (see, \cite{kwiatkowski1992testing}, \cite{muller2008impossibility}, \cite{nielsen2009powerful}). Hence, the ALasso shrinkage approach can distinguish between stationary and nonstationary autoregressions which is extremely important when choosing the right model for forecasting;

\item[(ii)] Show that choosing the tuning parameter by BIC results is consistent model selection; 

\item[(iii)] Analyze the asymptotic behaviour of the probability of classifying $\rho^{*}$ as 0 in the stationary, nonstationary and local to unity setting such that $\rho^{*} = c / T$ (see also \cite{caner2013alternative}).  

\end{itemize}

In the nonstationary setting the problem due to nonuniformity in the asymptotics, can be alleviated if one is willing to tune the adaptive Lasso to perform conservative model selection instead of consistent model selection. We employ the following variant of the adaptive Lasso which is defined as the minimized of 
\begin{align}
\Psi_T \left( \rho, \beta \right) = \sum_{t=1}^T \left( \Delta y_t - \rho y_{t-1} - \sum_{j=1}^p \beta_j \Delta y_{t-j} \right)^2 + \lambda_T w_1^{ \gamma_1 } | \rho | + \lambda_T \sum_{j=1}^p w_1^{ \gamma_2 } \left| \beta_j \right|, 
\end{align}
where $\gamma_1, \gamma_2 > 0$ and $w_1 = 1 / | \hat{ \rho }_I |$ and $w_2 = 1 / | \hat{ \beta }_{I,j} |$ for $\hat{\rho}_I$ and $\hat{\beta}_{I, j}$ denote some initial estimators of the parameters of the model. Furthermore, notice that the objective function is modified compared to the usual adaptive Lasso since it penalized $\rho$, the coefficient on the potentially nonstationary variable variable $y_t$, different from the coefficients on the stationary variables. 
\end{example}
We denote with $\theta = \left( \rho^{*}, \beta  ^{* \prime} \right)^{\prime}$ the set of model parameters. The set of active variables is denoted with 
\begin{align}
\mathcal{A} = \big\{ 1 \leq j \leq p + 1: \theta_j^{*} \neq 0 \big\},
\end{align}
Moreover, denote with $\mathcal{S} = \text{diag} \left( T, \sqrt{T}, ..., \sqrt{T} \right)$ denotes a $\left( p + 1 \times p + 1 \right)$. Denote with $\hat{ \theta } = \left( \hat{\rho}, \hat{\beta}^{\prime}    \right)^{\prime}$ denote the minimizer of the objective function. Furthermore, denoting $\mathcal{M}_0$ the true model and $\hat{\mathcal{M} }$ the estimated model, then we can say that the shrinkage methodology is consistent if for all $\left( \rho^{*}, \hat{\beta}^{*} \right)$ it holds that $\mathbb{P} \left( \hat{\mathcal{M}} = \mathcal{M}_0 \right) \to 1$. Thus, the shrinkage selection methodology is said to be conservative if for all $\left( \rho^{*}, \hat{\beta}^{*} \right)$ it holds that $\mathbb{P} \left( \hat{ \mathcal{M} }_0 \subset \hat{ \mathcal{M} } \right) \to 0$, which implies that the probability of excluding relevant variables tend to zero.

\newpage

\subsubsection{Oracle Property}

We discuss the oracle property of the adaptive Lasso for stationary and nonstationary autoregressions. 
\begin{theorem}
(Consistent model selection). Assume that $\epsilon_t$ is $\textit{i.i.d}$ with $\mathbb{E} \left( \epsilon_1 \right) = 0$ and $\mathbb{E} \left( \epsilon_1^4 \right) < \infty$. 

\begin{enumerate}

\item[\textbf{(A)}] \textbf{(Nonstationary Case)} Then, if $\rho^{*} = 0$, and it holds that $\frac{ \lambda_T }{ T^{ 1 - \gamma_1 } } \to \infty$, $\frac{ \lambda_T }{ T^{ 1/2 - \gamma_2 /2  } } \to \infty$, and $\frac{ \lambda_T }{ T^{1/2} } \to 0$, the following properties hold 
\begin{enumerate}

\item[(i)] Consistency: $\norm{ S_T \left[ \left( \hat{\rho}, \hat{\beta}^{\prime} \right)^{\prime} - \left( 0, \beta^{* \prime}       \right)^{\prime} \right] }_{ \ell_2 } \in \mathcal{O}_p (1)$.

\item[(ii)] Oracle I: $\mathbb{P} \left( \hat{\rho} = 0 \right) \to 1$ and $\mathbb{P} \left( \hat{\beta}_{ \mathcal{A}^c } = 0 \right) \to 1$. 

\item[(iii)] Oracle II: $\sqrt{T} \left( \hat{\beta}_T - \beta_A  \right) \to \mathcal{N} \left( 0, \sigma^2 \left[ \Sigma_{\mathcal{A}} \right]^{-1} \right)$. 
\end{enumerate}

\item[\textbf{(B)}] \textbf{(Stationary Case)} If $y_t$ is stationary such that $\rho^{*} \neq 0$, $\frac{ \lambda_T }{ T^{-1/2 - \gamma_2 / 2} } \to \infty$, and $\frac{ \lambda_T }{ T^{1/2} } \to 0$, 

\begin{enumerate}

\item[(i)] Consistency: $\norm{ S_T \left[ \left( \hat{\rho}, \hat{\beta}^{\prime} \right)^{\prime} - \left( \rho^{*} , \beta^{* \prime}       \right)^{\prime} \right] }_{ \ell_2 } \in \mathcal{O}_p (1)$. 

\item[(ii)] Oracle I: $\mathbb{P} \left( \hat{\rho} = 0 \right) \to 1$ and $\mathbb{P} \left( \hat{\beta}_{ \mathcal{A}^c } = 0 \right) \to 1$. 

\item[(iii)] Oracle II: $\begin{bmatrix}
\sqrt{T} \left( \hat{\beta}_T - \beta_A  \right)
\sqrt{T} \left( \hat{\beta}_T - \beta_A  \right) \\
\end{bmatrix} \to \mathcal{N} \left( 0, \sigma^2 \left[ \Sigma_{\mathcal{A}} \right]^{-1} \right)$. 
\end{enumerate}

\end{enumerate}
\end{theorem}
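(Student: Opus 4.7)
The plan is to adapt the Knight--Fu convex-objective argument to the unit-root setting by working with the localized objective
\begin{align*}
V_T(u) = \Psi_T\bigl(\theta^{*} + S_T^{-1} u\bigr) - \Psi_T(\theta^{*}), \qquad u \in \mathbb{R}^{p+1},
\end{align*}
so that $\hat{u}_T := S_T(\hat{\theta} - \theta^{*})$ is the minimizer of $V_T$. Because $V_T$ is convex in $u$ for every $T$, proving (i)--(iii) reduces to (a) deriving the finite-dimensional weak limit of $V_T$, (b) showing the limit is minimized at a (random) $u^{*}$ whose zero-coordinates match those of $\theta^{*}$, and (c) invoking the standard convexity lemma of Geyer/Knight to transfer epi-convergence of objectives to weak convergence of argmins. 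Consistency (i) is then a byproduct since $u^{*}$ is finite almost surely.

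For the nonstationary case (A) I would decompose $V_T(u) = Q_T(u) + P_T(u)$ into an unpenalized quadratic part and a penalty remainder. The Gram matrix $S_T^{-1} X_T^{\prime} X_T S_T^{-1}$ governing $Q_T$ has the $(1,1)$ entry $T^{-2}\sum y_{t-1}^{2} \Rightarrow \sigma^{2}\!\int_{0}^{1} W(s)^{2}\,ds$, stationary block $T^{-1}\sum \Delta y_{t-j}\Delta y_{t-k} \to \Sigma_{jk}$, and off-diagonal block $T^{-3/2}\sum y_{t-1}\Delta y_{t-j} = O_{p}(1)$, converging jointly by a functional CLT combined with the Phillips--Solo (Beveridge--Nelson) decomposition; the score involves $T^{-1}\sum y_{t-1}\epsilon_{t} \Rightarrow \tfrac{\sigma^{2}}{2}(W(1)^{2} - 1)$ and $T^{-1/2}\sum \Delta y_{t-j}\epsilon_{t} \Rightarrow \mathcal{N}(0,\sigma^{2}\Sigma)$. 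The resulting $Q(u)$ is almost surely strictly convex, which already gives $O_{p}(1)$ tightness of $\hat{u}_T$.

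The crucial step is a coordinate-by-coordinate analysis of $P_T$ using the initial OLS rates. Since $\rho^{*}=0$, the initial estimator is super-consistent, $\hat{\rho}_I = O_p(T^{-1})$, so $w_{1}^{\gamma_{1}} \asymp T^{\gamma_{1}}$ and its contribution to $V_T$ is $\lambda_T T^{\gamma_{1}-1}|u_{1}|$, diverging under $\lambda_T/T^{1-\gamma_{1}}\to\infty$ whenever $u_{1}\neq 0$. For $j\in\mathcal{A}^{c}$ on the stationary side, $\hat{\beta}_{I,j}=O_p(T^{-1/2})$, so $w_{2}^{\gamma_{2}}\asymp T^{\gamma_{2}/2}$ and the contribution is $\lambda_T T^{\gamma_{2}/2 - 1/2}|u_{j}|$, diverging under $\lambda_T/T^{1/2-\gamma_{2}/2}\to\infty$. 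For $j\in\mathcal{A}$, the weights are $O_p(1)$ and the penalty contributes $O(\lambda_T/\sqrt{T})|u_{j}|=o(1)$ by $\lambda_T/\sqrt{T}\to 0$. Thus the minimizer of the limiting functional must satisfy $u_{1}^{*}=0$ and $u_{j}^{*}=0$ for $j\in\mathcal{A}^{c}$, yielding (ii); on the active coordinates the limiting objective is the oracle quadratic whose argmin has law $\mathcal{N}(0,\sigma^{2}\Sigma_{\mathcal{A}}^{-1})$, giving (iii). Part (B) follows the same template with $S_T=\sqrt{T}\,I$, the standard stationary LLN/CLT replacing the unit-root limits, and an analogous dissection of $P_T$.

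The main obstacle lies in case (A): the cross-product sums $T^{-3/2}\sum y_{t-1}\Delta y_{t-j}$ do not vanish and couple the unit-root and stationary blocks of the Gram matrix, so one must argue that on the high-probability event $\{\hat{\rho}=0,\,\hat{\beta}_{\mathcal{A}^{c}}=0\}$ the limiting joint law decouples into a degenerate unit-root component and an independent stationary Gaussian component. The super-consistency $\hat{\rho}_I=O_p(T^{-1})$ combined with $\lambda_T/T^{1-\gamma_{1}}\to\infty$ is precisely what drives this decoupling, and it ultimately justifies equating $\hat{\beta}_{\mathcal{A}}$ with the oracle OLS estimator on the active set up to $o_p(T^{-1/2})$, which is what the asymptotic normality claim in (iii) requires.
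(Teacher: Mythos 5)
The paper itself does not supply a proof of this theorem; it states the result as imported from \cite{kock2016consistent}, so your proposal can only be judged against the standard argument for that result. Your overall strategy --- the localized convex objective $V_T(u)$ with the mixed-rate scaling $S_T = \mathsf{diag}(T,\sqrt{T},\dots,\sqrt{T})$, epi-convergence via the Geyer/Knight convexity lemma, and the coordinate-by-coordinate rate analysis of the adaptive weights --- is the right one, and your bookkeeping of the penalty rates ($\lambda_T T^{\gamma_1 - 1}|u_1|$ from $\hat{\rho}_I = O_p(T^{-1})$, $\lambda_T T^{\gamma_2/2 - 1/2}|u_j|$ from $\hat{\beta}_{I,j} = O_p(T^{-1/2})$, and $o_p(1)$ on the active set) correctly reproduces the three stated conditions on $\lambda_T$. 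However, there is a genuine gap in how you obtain Oracle I. Epi-convergence of $V_T$ to a limit that is $+\infty$ off the subspace $\{u_1 = 0,\, u_j = 0 \ \forall j \in \mathcal{A}^c\}$ delivers only $T\hat{\rho} = o_p(1)$ and $\sqrt{T}\,\hat{\beta}_{\mathcal{A}^c} = o_p(1)$; it does not deliver $\mathbb{P}(\hat{\rho}=0) \to 1$ or $\mathbb{P}(\hat{\beta}_{\mathcal{A}^c}=0) \to 1$. Exact sparsity requires a separate argument from the KKT/subgradient conditions of the $\ell_1$-penalized problem: on the event $\hat{\rho} \neq 0$ one must have $\bigl|2\sum_t y_{t-1}(\Delta y_t - \hat{\rho}y_{t-1} - \sum_j \hat{\beta}_j \Delta y_{t-j})\bigr| = \lambda_T w_1^{\gamma_1}$, and one shows the left side is $O_p(T)$ while the right side is of exact order $\lambda_T T^{\gamma_1}$, which dominates $T$ under $\lambda_T/T^{1-\gamma_1}\to\infty$, forcing the probability of that event to vanish; an analogous comparison at rate $\sqrt{T}$ handles the coordinates in $\mathcal{A}^c$. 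Without this step, part (ii) of both (A) and (B) is unproved, and part (iii) is also affected because equating $\hat{\beta}_{\mathcal{A}}$ with the oracle OLS estimator up to $o_p(T^{-1/2})$ relies on the exact-zero event having probability tending to one.

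A secondary point: the "main obstacle" you identify in case (A) is not actually an obstacle. Since $\sum_t y_{t-1}\Delta y_{t-j} = O_p(T)$ for an I(1) regressor against a stationary one, the normalized cross-product block is $T^{-3/2}\sum_t y_{t-1}\Delta y_{t-j} = O_p(T^{-1/2}) = o_p(1)$, so the scaled Gram matrix $S_T^{-1}X_T^{\prime}X_T S_T^{-1}$ is asymptotically block diagonal by the standard Sims--Stock--Watson-type argument. The decoupling of the unit-root and stationary components in the limit is therefore automatic and does not need to be argued on the selection event; you have misdiagnosed the difficult part of the proof (the KKT step above) while manufacturing a difficulty that is not there.
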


\medskip

\begin{example}
Consider the framework proposed by \cite{wong2020lasso}. The particular modelling environment corresponds to (i) stationary Gaussian processes with suitably decaying $\alpha-$mixing coefficients, and (ii) stationary processes with sub-Weibull marginals and geometrically decaying $\beta-$mixing coefficients. Moreover, it is well known that guarantees for lasso follow if one can establish the restricted eigenvalue (RE) conditions and provide deviation bounds for the correlation between noise and the regressors. Then, statistical estimation can be achieved as below:
\begin{align}
\Theta^{\star} = \underset{ \Theta \in \mathbb{R}^{p \times q} }{ \mathsf{arg min} } \ \mathbb{E} \big[ \norm{ Y_t - \Theta^{\prime} X_t }_2^2 \big].    
\end{align}
Then, the $\ell_1$penalized least squares estimator $\widehat{\Theta} \in \mathbb{R}^{ p \times q }$ is defined as
\begin{align}
\widehat{\Theta} = \underset{ \Theta \in \mathbb{R}^{p \times q} }{ \mathsf{arg min} } \ \frac{1}{T} \norm{ \mathsf{vec} \left( \boldsymbol{Y} - \boldsymbol{X} \Theta \right) }_2^2 + \lambda_T \norm{ \mathsf{vec} \left( \Theta \right) }_1    
\end{align}
\end{example}

\begin{remark}
The $\beta-$mixing condition has been of interest in statistical learning theory for obtaining finite sample generalization error bounds for empirical risk minimization. Specifically, the usefulness of $\beta-$mixing lies in the fact that by using a simple blocking technique, (e.g., by \cite{yu1994rates}), one can reduce the situation to the $\textit{i.i.d}$ setting. However, there are no results showing that the RE and DB conditions holds under mixing conditions. 
\end{remark}

\newpage

\subsection{Lasso shrinkage with long memory regression errors}

In many problems of practical interest in which the lasso shrinkage is applied, such as for example when modelling and forecasting Realized Volatility measures, it is reasonable to consider the existence of long memory errors. We investigate the asymptotic behaviour of Lasso in regression models with long memory errors. Consider $X_i = \left( x_{i1},...., x_{ip} \right)^{\prime}$ for $i = 1,...,n$ to be the vector of design matrices and $Y_i$ to denote its response variable. Thus we have the model (see, \cite{kaul2014lasso}) 
\begin{align}
Y_i = X_i^{\prime} \beta + \epsilon_i , \ \ \text{for some} \ \ \beta \in \mathbb{R}^p, \ 1 \leq i \leq n. 
\end{align}

The errors $\epsilon_i$ are assumed to be long memory moving average with i.i.d innovations, that is, 
\begin{align}
\epsilon_i = \sum_{ k = 1 }^{ \infty } a_k \zeta_{i - k} = \sum_{k = - \infty }^i a_{i-k} \zeta_k,
\end{align}

where $a_k = c_0 k^{-1 + d}, \forall k \geq 1$, $0 \leq d \leq \frac{1}{2}$ and some constant $c_0 > 0$, and $a_k = 0$ for $k \leq 0$. Moreover, we have that $\zeta_j, j \in \mathbb{Z} := \left\{  0, \pm 1, \pm 2,...  \right\}$, are \textit{i.i.d} RV's with mean zero and variance $\sigma_{ \zeta }^2$. Without loss of generality we assume that $\sigma_{ \zeta }^2 = 1$. We denote with $X = \left( x_{ij} \right)_{ n \times p}$ as the design matrix, and $\epsilon := \left( \epsilon_1,..., \epsilon_n \right)^{\prime}$. Moreover, $\left\{ \epsilon_i, i \in \mathbb{Z}   \right\}$ is a stationary process with autocovariance function 
\begin{align}
\gamma_{\epsilon} (k) = \sum_{j=1}^{ \infty } a_j a_{j+k} = k^{-1 + 2d} B ( d, 1 - 2d) 
\end{align}

Moreover, the Lasso estimate of $\beta$ is defined as follows
\begin{align}
\hat{ \beta }^n( \lambda ) = \underset{ \beta }{ \text{arg min} }  \left\{ \frac{1}{n} \norm{ Y - X^{\prime} \beta }_2^2 + \lambda_n \norm{ \beta }_1 \right\}, \ \ \lambda > 0, 
\end{align}

where $Y = \left( Y_1,..., Y_n \right)^{\prime}$ and $\norm{ \beta }_1 := \sum_{j=1}^p \left| \beta_j  \right|$ denotes $\ell_1$ norm of $\beta = \left( \beta_1,..., \beta_p \right)^{\prime}$. 

Notice that the literature in the area of regularized estimation with dependence considerations is scarce. In this paper, we we investigate the asymptotic behaviour of Lasso under strong dependence structure and less restrictive model assumptions. In particular, we assign a long memory structure on the model errors $\epsilon$, that is, $\sum_{ k = 1}^{ \infty } \left| \gamma_{\epsilon} (k) \right| = \infty$. We provide restrictions on the rate of increase of the design variables as well as the rate of increase of the dimension of $p$ in order to obtain the corresponding finite sample error bounds. Furthermore, we allow the design variables to grow with the restriction $\sum_{1 \leq i \leq n} x_{ij}^2 = \mathcal{O}(n)$, and hence the results obtained can also easily be extended to the case of Gaussian random designs. 

Notice that the framework proposed by \cite{babii2022machine} considers a machine learning application for high-frequency time series panel data. The particular time series filters are constructed based on mixed-frequency data and therefore assumptions on the dependence structure of the errors such as persistence, mixing and long-memory are indeed plausible in such economic and finance studies. 

\newpage 

\subsubsection{Results with finite sample}

In this Section we prove a finite sample oracle inequality for the Lasso solution when the design matrix is non-random. We define with 
\begin{align}
W_{nj} = n^{ - ( 1 / 2 + d )} \sum_{ i = 1}^n x_{ij} \epsilon_i
= n^{ - ( 1 / 2 + d )} \sum_{ i = 1}^n \sum_{ s =  - \infty }^i x_{ij} a_{i - s} \zeta_s  =  \sum_{ s =  - \infty}^n c_{ns,j} \zeta_s,
\end{align} 
where we have that 
\begin{align}
 c_{ns,j} &:= n^{ - ( 1 / 2 + d )} \sum_{ i = 1}^n x_{ij} a_{i - s}, \ \ \ s \in \mathbb{Z}, j = 1,...,p, 
\\
c_{n,j} &:= \underset{ - \infty < s \leq n }{ \text{sup} }  \left|  c_{ns,j} \right|, \ \ \ c_n = \underset{ 1 \leq j \leq p  }{ \text{max} } c_{n,j}.
\end{align}
Moreover, we denote with 
\begin{align}
\sigma_{n,j}^2 := \text{Var} \left( W_{nj} \right) , \ \ \sigma_n^2 = \underset{ 1 \leq j \leq p  }{ \text{max} } \sigma_{n,j}^2. 
\end{align}
Therefore, we shall prove that with an appropriate choice of $\lambda_n$, the Lasso solution obeys the following oracle inequality in the long memory case, for any $n \geq 1$, 
\begin{align}
\frac{1}{n} \norm{ X \left( \hat{\beta} - \beta \right)}_2^2  + \lambda_n \norm{ \hat{\beta} - \beta }_1 \leq \frac{ 4 \lambda_n^2 s_0 }{ \phi_0^2 }
\end{align}

where $\lambda_n = \left( \mathcal{O}(1) \right) \text{log} (p) / n^{1/2 - d}$, under some conditions on the design matrix. Moreover, $s_0$ denotes the cardinality of the set of non-zero components of $\beta$ and $\phi_0$ is a constant depending on the design matrix $X$. Therefore, in order to prove the result we need to obtain a probability bound for the set as below
\begin{align}
\Lambda = \left\{ \underset{ 1 \leq j \leq p }{ \text{max} } \frac{2}{n} \left| \sum_{i=1}^n x_{ij} \epsilon_i \right| \leq \lambda_{0n} \right\},
\end{align}
for a proper choice of $\lambda_{0n}$. Thus, once this probability bound is obtained, the oracle inequality follows by deterministic arguments.

\begin{remark}
The presence of long memory errors can affect the convergence rates of estimators as well as the variable selection procedure. Moreover, the presence of serial correlation can be captures by modeling the error term such that $\varepsilon_t = \rho \varepsilon_{t-1} + u_t$. In particular, there are cases in which serial correlation can manifest as structural breaks in high dimensional models (see,  \cite{kapetanios2018time}). On the other hand, the long memory property is a common feature of mean-revering processes, which implies the significance of sample autocorrelations at large lags. In other words, long memory processes implies that when predicting future values this will depend persistently from the past observations. For instance, $d = 0.5$ is refereed to the fractional or long-memory parameter of the stochastic process.
\end{remark}

\newpage 

\begin{theorem}
\label{TheoremA}
For the long memory regression model suppose that the design variables satisfy the model assumptions. Further, suppose that the tuning parameter $\lambda_n$ is such that $\lambda_n \to \lambda_0 \geq 0$, then we have that 
\begin{align}
\hat{\beta}_n &\overset{ p }{ \to } \mathsf{arg min} \big( Z( \phi) \big)
\\
Z ( \phi ) &= \left( \phi - \beta \right)^{\prime} C \left( \phi - \beta \right) + \lambda_0 \sum_{j = 1}^p \left| \phi_j \right|, \ \ \phi \in \mathbb{R}^p
\end{align}
Thus, if $\lambda_n = o(1)$ then $\mathsf{arg \ min}_{ \phi } \left( Z(\phi)  \right) = \beta$ and $\hat{\beta}_n ( \lambda_n )$ is consistent for the unknown parameter $\beta$.  
\end{theorem}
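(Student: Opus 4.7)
The plan is to apply the convex argmin continuous mapping theorem (in the spirit of Geyer 1996 and Knight--Fu 2000) to a recentered Lasso criterion. Because the objective is convex for every $n$, it is enough to establish pointwise convergence in probability of the recentered objective to a convex limit $Z$ that possesses a unique minimizer; uniform convergence on compacts and convergence of argmins then follow automatically. Define
\[
V_n(\phi) = \frac{1}{n}\norm{Y - X\phi}_2^2 - \frac{1}{n}\norm{\epsilon}_2^2 + \lambda_n \norm{\phi}_1,
\]
so that $\hat\beta_n = \arg\min_\phi V_n(\phi)$ (subtracting the $\phi$-free constant does not affect the minimizer). Substituting $Y = X\beta + \epsilon$ and expanding gives
\[
V_n(\phi) = (\phi - \beta)^{\prime} \left(\frac{X^{\prime} X}{n}\right)(\phi - \beta) - \frac{2}{n}\epsilon^{\prime} X(\phi-\beta) + \lambda_n \norm{\phi}_1.
\]

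First I would verify the pointwise limit $V_n(\phi) \overset{p}{\to} Z(\phi)$ for each fixed $\phi \in \mathbb{R}^p$. The Gram term converges deterministically to $(\phi-\beta)^{\prime} C(\phi-\beta)$ under the standing assumption $n^{-1} X^{\prime} X \to C$, and the penalty term converges to $\lambda_0\norm{\phi}_1$ because $\lambda_n \to \lambda_0$. For the cross term, using the paper's own normalization,
\[
\frac{1}{n}\epsilon^{\prime} X(\phi-\beta) = n^{-1/2+d} \sum_{j=1}^p W_{nj}(\phi_j-\beta_j),
\]
where $W_{nj} = n^{-(1/2+d)}\sum_{i=1}^n x_{ij}\epsilon_i$ is the normalized partial sum already introduced. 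Tightness of $W_{nj}$ is controlled by $\sigma_{n,j}^2$, which is $O(1)$ under the design variance bound combined with the long-memory covariance structure $\gamma_\epsilon(k) \asymp k^{-1+2d}$, so since $d \in [0,1/2)$ forces $n^{-1/2+d} \to 0$, the cross term is $o_p(1)$ and the limit is exactly $Z(\phi)$.

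Having established pointwise convergence, I would invoke the argmin continuous mapping theorem for convex random functions: convexity of $V_n(\phi)$ in $\phi$, convexity of $Z$, and uniqueness of $\arg\min Z$ together imply $\hat\beta_n \overset{p}{\to} \arg\min_\phi Z(\phi)$. Uniqueness of the limiting minimizer is guaranteed by strict convexity of $(\phi-\beta)^{\prime} C(\phi-\beta)$ under positive-definiteness of $C$, together with convexity of the $\ell_1$ term. The final sentence of the theorem is then immediate: if $\lambda_n = o(1)$ then $\lambda_0 = 0$ and $Z(\phi) = (\phi-\beta)^{\prime} C(\phi-\beta)$ is uniquely minimized at $\phi=\beta$, giving consistency. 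The main obstacle is controlling the cross-product term under long-memory dependence: in the short-memory regime $n^{-1}X^{\prime}\epsilon$ is $O_p(n^{-1/2})$, whereas long memory sharpens the scaling to $n^{-1/2+d}$, so one must verify carefully that $\mathsf{Var}(W_{nj})$ stays bounded uniformly in $n$ by combining the summability properties of $\gamma_\epsilon(k)$ with the design bound $\sum_i x_{ij}^2 = O(n)$. A secondary but standard check is that the argmin CMT applies despite the long-memory innovations, but since only finite-dimensional pointwise-in-probability convergence is needed, the preceding step suffices.
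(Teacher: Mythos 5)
Your proposal is correct and follows essentially the same route as the paper's proof: expand the least-squares objective around $\beta$, let the Gram term converge to $(\phi-\beta)'C(\phi-\beta)$, kill the cross term via the $n^{-(1/2+d)}$ normalization of $X'\epsilon$, and conclude by convexity and pointwise convergence in probability. The only differences are cosmetic — you recenter by subtracting $n^{-1}\norm{\epsilon}_2^2$ where the paper instead lets that term converge to a constant $k^2$ by the ergodic theorem, and you are somewhat more explicit about invoking the convex argmin continuous mapping theorem and the uniqueness of the limiting minimizer.
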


\begin{proof}
To prove this theorem we consider the following objective function
\begin{align}
Z_n ( \phi ) = \frac{1}{n} \sum_{i = 1}^n \left( Y_i - X_i^{\prime} \phi \right)^2 + \lambda_n \sum_{j = 1}^n \left| \phi_j \right|, 
\end{align}
then $Z_n ( \phi )$ is convex. Therefore, we need to show the pointwise convergence (in probability) of $Z_n ( \phi )$ to $Z ( \phi ) + k^2$ for some constant. Clearly, it holds that
\begin{align}
\lambda_n \sum_{j = 1}^n \left| \phi_j \right| \to \lambda_0 \sum_{j = 1}^n \left| \phi_j \right|.
\end{align}
Consider expanding the expression,
\begin{align*}
\frac{1}{n} \sum_{i = 1}^n  \left( Y_i - X_i^{\prime} \phi \right)^2 
&= \frac{1}{n} \sum_{i = 1}^n \left[ \epsilon_i - X_i^{\prime} \left( \phi - \beta \right) \right]^2 
\\
&= \frac{1}{n} \sum_{i = 1}^n \epsilon_i^2 + \frac{1}{n} \sum_{i = 1}^n \left( \phi - \beta \right)^{\prime} X_i X_i^{\prime} \left( \phi - \beta \right) - \frac{2}{n} \left( \phi - \beta \right)     ^{\prime} \sum_{i = 1}^n X_i \epsilon_i, 
\\
&= \frac{1}{n} \sum_{i = 1}^n \epsilon_i^2 + \frac{1}{n} \sum_{i = 1}^n \left( \phi - \beta \right)^{\prime} X_i X_i^{\prime} \left( \phi - \beta \right) - \frac{2}{n} \left( \phi - \beta \right)     ^{\prime} X^{\prime} \epsilon,
\end{align*} 
Notice that the first term in the above expression converges to $k^2$ by the ergodic theorem, since the error sequence $\left\{ \epsilon_i \right\}$ forms a stationary ergodic sequence. The second term converges to $ \left( \phi - \beta \right)^{\prime} C \left( \phi - \beta \right)$ and the last term converges to zero in probability, since we have that $\frac{1}{ n^{ 0.5 + d } } X^{\prime} \epsilon$ converges in distribution. 
\end{proof}

\medskip

\begin{theorem}
\label{TheoremB}
For the long memory regression model assume that the design variables satisfy the model assumptions. Suppose that $n^{1/2 - d} \lambda_n \to \lambda_0 \geq 0$ as $n \to \infty$, then
\begin{align}
n^{1 / 2 - d} \left( \hat{\beta}^n - \beta \right) &   \to_D \underset{ u }{ \mathsf{arg \ min} } \ V(u),  
\\
V(u) &= - 2u^{ \prime } \mathcal{W} + u^{ \prime }  C u + \lambda_0 \sum_{j=1}^p \left[ u_j \text{sign} \left( \beta_j \right) \mathbf{I}_{ \left[ \beta_j \neq 0 \right]} +  \left| u_j \right| \mathbf{I}_{ \left[ \beta_j \neq 0 \right]} \right]
\end{align}
such that $\mathcal{W}$ is an $\mathcal{N} \left( 0, \Sigma \right)$ random variable. 
\end{theorem}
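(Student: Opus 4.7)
The plan is to adapt the Knight--Fu argmin--continuity strategy (as already used for Theorem \ref{TheoremA}) to the $n^{1/2-d}$ scaling dictated by the long--memory rate of $X'\epsilon$. First I would reparametrise by setting $\phi = \beta + u\, n^{d-1/2}$ and consider the recentered, rescaled criterion
\begin{align*}
V_n(u) \;:=\; n^{\,1-2d}\Bigl[\,Z_n\!\bigl(\beta + u\, n^{d-1/2}\bigr) - Z_n(\beta)\,\Bigr].
\end{align*}
Since $Z_n$ is convex in $\phi$, the map $u \mapsto V_n(u)$ is convex for every $n$, and its minimiser equals $n^{1/2-d}(\hat\beta^{\,n} - \beta)$. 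The strategy is then: (i) identify the pointwise-in-distribution limit $V(u)$; (ii) invoke the convexity lemma of Geyer / Knight--Fu to upgrade pointwise convergence of convex processes to convergence of argmins, provided $V(u)$ has a unique minimiser almost surely.

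\textbf{Decomposition and pointwise limits.} Expanding the quadratic term exactly as in the proof of Theorem \ref{TheoremA} gives
\begin{align*}
n^{\,1-2d}\Bigl[\tfrac1n\!\textstyle\sum_i\bigl(Y_i - X_i'\phi\bigr)^2 - \tfrac1n\!\sum_i\bigl(Y_i - X_i'\beta\bigr)^2\Bigr]
\;=\; -\,2u'\!\left(\frac{X'\epsilon}{n^{1/2+d}}\right) + u'\!\left(\frac{X'X}{n}\right)\!u.
\end{align*}
By the design assumptions $X'X/n \to C$, while the scaled inner product $X'\epsilon/n^{1/2+d}$ is exactly the vector $(W_{n1},\dots,W_{np})'$ introduced in the preceding subsection, so by the long--memory CLT one obtains $X'\epsilon/n^{1/2+d} \to_D \mathcal{W}\sim\mathcal{N}(0,\Sigma)$. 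For the penalty, I would handle each coordinate separately via the elementary identity
\begin{align*}
n^{1/2-d}\bigl(|\beta_j + u_j n^{d-1/2}| - |\beta_j|\bigr)
\;\longrightarrow\;
\begin{cases} u_j\,\mathrm{sign}(\beta_j), & \beta_j\neq 0,\\[2pt] |u_j|, & \beta_j = 0,\end{cases}
\end{align*}
and multiply through by the prefactor $n^{1/2-d}\lambda_n \to \lambda_0$. Summing the quadratic and penalty limits yields the claimed $V(u)$ (with the second indicator read as $\mathbf{I}_{[\beta_j=0]}$, consistent with the Knight--Fu form).

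\textbf{Argmin transfer and uniqueness.} Having established $V_n(u)\to_D V(u)$ for every fixed $u\in\mathbb{R}^p$, the sequence $\{V_n\}$ is a sequence of random convex functions on $\mathbb{R}^p$, and the Geyer/Knight--Fu convexity lemma upgrades finite-dimensional convergence to epi-convergence in distribution. Provided the limit $V$ has a unique minimiser almost surely, this implies $\arg\min V_n \to_D \arg\min V$, which is precisely the asserted limit law for $n^{1/2-d}(\hat\beta^{\,n} - \beta)$. Uniqueness of $\arg\min V$ follows from strict convexity of the quadratic part $u'Cu$, itself ensured by positive-definiteness of $C$ (inherited from the design assumption $X'X/n\to C>0$); the penalty term is only convex, not strict, but the quadratic dominates and guarantees a singleton minimiser.

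\textbf{Main obstacle.} The principal technical point is the non-standard CLT $X'\epsilon/n^{1/2+d} \to_D \mathcal{N}(0,\Sigma)$: because $\epsilon_i$ is a long--memory moving average, the partial sums $W_{nj} = \sum_s c_{ns,j}\zeta_s$ are normalised by $n^{1/2+d}$ rather than $n^{1/2}$, and the limiting covariance $\Sigma$ must be identified through the limits of inner products $\sum_s c_{ns,j}c_{ns,k}$, invoking the regular variation of the MA coefficients $a_k = c_0 k^{-1+d}$ together with the design--rate condition $\sum_i x_{ij}^2 = \mathcal{O}(n)$. Once this finite--dimensional CLT is in hand, the remainder of the argument (decomposition, coordinatewise penalty limits, convexity transfer) is routine and mirrors the classical Knight--Fu derivation.
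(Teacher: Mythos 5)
Your proposal follows essentially the same route as the paper's proof: the same recentred, rescaled convex criterion $V_n(u)$, the same split into the quadratic part (converging to $u'Cu - 2u'\mathcal{W}$ via $X'X/n \to C$ and $X'\epsilon/n^{1/2+d} \to_D \mathcal{W}$) and the coordinatewise penalty limit. You are in fact slightly more complete than the paper, which stops after computing the pointwise limits of the two terms and never explicitly invokes the convexity/argmin-transfer lemma or the uniqueness of the minimiser; your reading of the second indicator as $\mathbf{I}_{[\beta_j = 0]}$ also correctly identifies what appears to be a typo in the stated form of $V(u)$.
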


\newpage

\begin{proof}
We define the following expression 
\begin{align}
V_n(u)  = n^{ 1 - 2d } \left\{ \sum_{i = 1}^n \frac{1}{n} \left[ \left( \epsilon_i - \frac{ X_i^{\prime} u }{ n^{ \frac{1}{2} - d } } \right)^2 - \epsilon_i^2 \right] + \lambda_n \sum_{j = 1}^p \left[ \left| \beta_j + \frac{ u_j }{ n^{ \frac{1}{2} - d } }  \right| \right] - \left| \beta_j \right|  \right\}.
\end{align}
Furthermore, we denote the first term of the above expression by (I) and the second term by (II). Then, we obtain that
\begin{align*}
(I) 
= 
n^{ 1 - 2d } \left\{ \frac{1}{n} \sum_{i = 1}^n \epsilon_i^2 - 2 \frac{ \sum_{i = 1}^n X_i^{\prime} u \epsilon_i }{ n n^{ 1 - 2d } }    + \frac{ u^{\prime} \sum_{i = 1}^n X_i X_i^{\prime} u }{ n n^{ \frac{1}{2} - d } } - \frac{1}{n} \sum_{i = 1}^n \epsilon_i^2 \right\}
&= 
\left\{ \frac{ u^{\prime} \sum_{i = 1}^n X_i X_i^{\prime} u }{ n }      - 2 \frac{ \sum_{i = 1}^n X_i^{\prime} u \epsilon_i }{ n^{ \frac{1}{2}  + d } } \right\}
\\
& \to u^{\prime} C u - 2 u^{\prime} \mathcal{W} , \ \ \text{as} \ n \to \infty, 
\end{align*}
where $\mathcal{W} \sim \mathcal{N} \left( 0, \Sigma \right)$. 
Moreover, we have that 
\begin{align*}
(II) 
= 
n^{ \frac{1}{2} - d } \lambda_n \sum_{j=1}^p \left[ \left| n^{ \frac{1}{2} - d } \beta_j + u_j \right| - n^{ \frac{1}{2} - d } \left| \beta_j \right| \right] 
\to \lambda_0 \sum_{j=1}^p \left[ u_j \text{sign} \left( \beta_j \right) \mathbf{I}_{ \left[ \beta_j \neq 0 \right]} +  \left| u_j \right| \mathbf{I}_{ \left[ \beta_j \neq 0 \right]} \right].  
\end{align*}
\end{proof}

Notice that the notion of weak dependence for stationary time series is measured in terms of covariance functions. The following lemma given by \cite{gupta2012note} is useful. 

\begin{lemma}
For each fixed $n$, let
\begin{align*}
A&:= \left\{ \left| \sum_{i = - \infty }^n Y_{ni} \right| > r \right\}, \ \ \ B_m = \left\{ \left| \sum_{i = - m }^n Y_{ni} \right| > r - \delta \right\}, \ \ r > 0, \delta > 0 , m = 1,2,...
\\
B &= \underset{ m \to \infty }{ \text{lim inf} } \ B_m.
\end{align*}
If $\left| \displaystyle \sum_{i = - \infty }^n  Y_{ni} \right| < \infty$, almost surely, then, for each fixed $n$, $A \subset B$. 
\end{lemma}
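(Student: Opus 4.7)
The plan is to unpack the definition of the $\liminf$ and exploit the almost sure convergence of the partial sums to the infinite series, using the strict inequality $r > r-\delta$ as the cushion that carries the conclusion from the limit to the tail of the sequence of partial sums.

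First I would fix a sample point $\omega$ lying both in $A$ and in the almost sure event $F := \{|\sum_{i=-\infty}^n Y_{ni}| < \infty\}$. Since $\omega \in F$, the series $\sum_{i=-\infty}^n Y_{ni}(\omega)$ converges to a finite value $S(\omega)$, and by definition of convergence the partial sums satisfy $\sum_{i=-m}^n Y_{ni}(\omega) \longrightarrow S(\omega)$ as $m \to \infty$. By continuity of $|\cdot|$, we also have $\bigl|\sum_{i=-m}^n Y_{ni}(\omega)\bigr| \to |S(\omega)|$. Since $\omega \in A$, we know $|S(\omega)| > r$, so $|S(\omega)| > r - \delta$ as well (using $\delta > 0$ strictly).

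Next, from the convergence of the absolute values to a limit strictly greater than $r - \delta$, there exists an integer $M = M(\omega)$ such that for every $m \geq M$,
\begin{equation*}
\Bigl|\sum_{i=-m}^n Y_{ni}(\omega)\Bigr| > r - \delta,
\end{equation*}
that is, $\omega \in B_m$ for every $m \geq M$. Unwinding the definition $B = \liminf_{m \to \infty} B_m = \bigcup_{M \geq 1} \bigcap_{m \geq M} B_m$, this is precisely the statement that $\omega \in B$. Hence $A \cap F \subset B$, and because $F$ has full probability by hypothesis, $A \subset B$ up to a $\mathbb{P}$-null set, which is the sense in which the stated inclusion holds.

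There is no serious obstacle here; the argument is essentially the observation that almost sure convergence transfers strict inequalities to the tail. The one point requiring slight care is to make the role of the slack parameter $\delta > 0$ explicit: without it, only the non-strict inequality could be propagated to the tail, and $A \subset B$ would fail in general. Once the slack is used to absorb the tail error, the containment follows by the elementary definition of $\liminf$ of sets, without any probabilistic machinery beyond the almost sure convergence of the partial sums guaranteed by the hypothesis.
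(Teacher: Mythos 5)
Your argument is correct and follows essentially the same route as the paper's (which only sketches the opening step of letting $\omega \in A$ and noting the sum is both $> r$ and finite, deferring the rest to the cited reference): you use convergence of the partial sums to the finite limit, the slack $\delta > 0$ to propagate the strict inequality to the tail, and the $\bigcup_{M}\bigcap_{m \geq M}$ characterization of $\liminf$ to conclude $\omega \in B$. Your explicit handling of the almost sure event $F$ and the resulting ``up to a null set'' qualification is a careful touch the paper omits.
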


For the proof see \cite{gupta2012note}. The argument is to let $\omega \in A$, then it follows that 
\begin{align*}
\left| \displaystyle \sum_{i = - \infty }^n  Y_{ni} \right| > r, \ \ \ \left| \displaystyle \sum_{i = - \infty }^n  Y_{ni} \right| < \infty.
\end{align*}

\medskip

\begin{remark}
Another relevant aspect for the lasso shrinkage approach regardless of the properties of the error terms is the cross-validation consistency (see, \cite{chetverikov2021cross} and \cite{yang2007consistency}). Moreover, although we discuss various cases of the Lasso shrinkage algorithm, those correspond to a different penalization property as well as possibly different-type of thresholding, especially when one considers the pathwise selection approach against the cross-validation approach. 
\end{remark}

\newpage

\subsection{Residual empirical process based on the ALasso}

Moreover, the framework proposed by \cite{chatterjee2015residual} provides regularity conditions and assumptions for deriving the asymptotic behaviour the adaptive Lasso estimator. Specifically, the Alasso estimator of $\beta$ is defined as the minimizer of the weighted $\ell_1-$penalized least squares criterion function
\begin{align}
\widehat{ \boldsymbol{\beta} }_n = \underset{ \mathbf{u} \in \mathbb{R}^p }{ \text{arg min} } \ \sum_{i=1}^n \left( y_i - \mathbf{x}_i^{\prime} \mathbf{u} \right)^2 + \lambda_n \sum_{j=1}^p \frac{ | u_j | }{ \tilde{\beta}_{j,n} }^{\gamma}, 
\end{align}

where, $\lambda_n > 0$ is a regularization parameter, $\gamma > 0$ and$\tilde{\beta}_{j,n}$ is the $j$th component of $\widetilde{ \boldsymbol{\beta} }_n$, a consistent preliminary estimator of $\boldsymbol{\beta}$.

\subsubsection{Main results}

\paragraph{Asymptotic uniform linearity in high dimensions} 

For $t \in [0,1]$, we define
\begin{align}
\widehat{Z}_n(t) 
&= 
\frac{1}{ \sqrt{n } } \sum_{i=1}^n \left[ \mathbf{1} \left( F(e_i) \leq t \right) - t  \right]
\\
Z_n(t)
&= 
\frac{1}{ \sqrt{n } } \sum_{i=1}^n \left[ \mathbf{1} \left( F( \epsilon_i ) \leq t \right) - t  \right]
\end{align}
We set $J(t) = f \left( F^{-1} (t) \right)$. The first result of this section proves the AUL property of the ALASSO based residual empirical distribution function for p. 
\begin{theorem}
\label{theoremABC}
\begin{align}
\underset{ t \in [0,1] }{ \text{sup} } \ \left| \widehat{Z}_n(t)  - \left[ Z_n(t) - \bar{\mathbf{x}}_n^{\prime} \sqrt{n} \left( \widehat{ \boldsymbol{\beta} }_n - \boldsymbol{ \beta } \right) J(t) \right] \right| = o_p(1).
\end{align}
\end{theorem}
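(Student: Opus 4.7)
The plan is to decompose the residual empirical process into a deterministic ``smoothed'' part and a recentered stochastic remainder, and to handle each separately. Setting $\hat{\boldsymbol{\delta}}_n = \widehat{\boldsymbol{\beta}}_n - \boldsymbol{\beta}$ and noting that the ALasso residuals satisfy $e_i = \epsilon_i - \mathbf{x}_i'\hat{\boldsymbol{\delta}}_n$, I would first rewrite
\begin{align*}
\widehat{Z}_n(t) - Z_n(t) = \frac{1}{\sqrt{n}}\sum_{i=1}^n \bigl[\mathbf{1}(\epsilon_i \leq F^{-1}(t) + \mathbf{x}_i'\hat{\boldsymbol{\delta}}_n) - \mathbf{1}(\epsilon_i \leq F^{-1}(t))\bigr].
\end{align*}
Adding and subtracting $F(F^{-1}(t) + \mathbf{x}_i'\hat{\boldsymbol{\delta}}_n) - F(F^{-1}(t))$ inside the sum splits this into $R_{1n}(t) + R_{2n}(t)$, where $R_{1n}(t)$ is a recentered empirical process and $R_{2n}(t)$ is a deterministic (conditional on $\hat{\boldsymbol{\delta}}_n$) smoothed term.

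For $R_{2n}(t)$, I would apply a first-order Taylor expansion $F(F^{-1}(t) + u) - F(F^{-1}(t)) = J(t)u + O(u^2)$ uniformly in $t$, which is valid under the smoothness of $F$ implicit in the density $f$ being bounded and uniformly continuous. Summation yields the leading term $\bar{\mathbf{x}}_n'\sqrt{n}\hat{\boldsymbol{\delta}}_n\,J(t)$ plus a quadratic remainder of order $\sqrt{n}\,\|\hat{\boldsymbol{\delta}}_n\|^2\max_i\|\mathbf{x}_i\|^2/n$. Using the ALasso oracle property cited earlier in the paper (e.g., the consistency results of \cite{zou2006adaptive}), one has $\|\hat{\boldsymbol{\delta}}_n\|=O_p(n^{-1/2})$ on the active set, so this remainder is $o_p(1)$ as long as $\max_i\|\mathbf{x}_i\|^2 = o(\sqrt{n})$, which is the standard design-matrix condition.

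The main work is in showing $\sup_{t\in[0,1]}|R_{1n}(t)| = o_p(1)$. This is a stochastic equicontinuity statement for the empirical process indexed by
\begin{align*}
\mathcal{G}_n = \bigl\{(\epsilon,\mathbf{x}) \mapsto \mathbf{1}(\epsilon \leq F^{-1}(t) + \mathbf{x}'\boldsymbol{\delta}) - \mathbf{1}(\epsilon \leq F^{-1}(t)) : t \in [0,1],\ \|\boldsymbol{\delta}\|\leq C n^{-1/2}\bigr\}.
\end{align*}
I would bound the $L^2(P)$-envelope by the Lipschitz estimate $|F(F^{-1}(t)+\mathbf{x}'\boldsymbol{\delta}) - F(F^{-1}(t))| \lesssim \|\mathbf{x}\|\|\boldsymbol{\delta}\|$, and then apply a chaining/bracketing argument (as in van der Vaart--Wellner's maximal inequality for VC-type classes) to obtain a uniform $o_p(1)$ bound on the shrinking neighborhood. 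Finally, the random choice $\boldsymbol{\delta} = \hat{\boldsymbol{\delta}}_n$ is accommodated through the probabilistic restriction $\|\hat{\boldsymbol{\delta}}_n\| = O_p(n^{-1/2})$.

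The principal obstacle is the high-dimensional aspect: because $p$ may diverge with $n$, the class $\mathcal{G}_n$ is not Donsker in the classical sense. The remedy is to exploit the ALasso oracle property to reduce the effective dimension to the active set $\mathcal{A}_0$ of cardinality $s_0$ with probability tending to one, so that the supremum over $\boldsymbol{\delta}\in\mathbb{R}^p$ collapses to a supremum over a ball in $\mathbb{R}^{s_0}$. Provided $s_0 \log n / \sqrt{n} \to 0$ and the covariates on $\mathcal{A}_0$ satisfy the usual moment/boundedness conditions, the chaining bound for $R_{1n}$ goes through, the Taylor remainder for $R_{2n}$ is negligible, and the stated AUL representation follows by combining the two.
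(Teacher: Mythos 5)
Your proposal follows the standard Koul-type argument for asymptotic uniform linearity of residual empirical processes, and this is consistent with the little the paper actually supplies: its proof environment merely introduces the two-parameter process $\Gamma_n(t,\mathbf{u})$ — exactly the object over which your stochastic-equicontinuity step for $R_{1n}$ would run — and the score vector $\mathbf{W}_n$, then stops and defers the substance to \cite{chatterjee2015residual}. So you are not taking a different route; you are filling in the route the paper only gestures at, and the decomposition into a recentered (equicontinuity) part and a smoothed (Taylor) part, plus the reduction of the index set to the active coordinates via the oracle property, is the right skeleton. Two points deserve explicit attention. First, a sign check: with $e_i=\epsilon_i-\mathbf{x}_i'\hat{\boldsymbol{\delta}}_n$ you correctly get $\mathbf{1}(F(e_i)\le t)=\mathbf{1}(\epsilon_i\le F^{-1}(t)+\mathbf{x}_i'\hat{\boldsymbol{\delta}}_n)$, whose conditional mean is $t+J(t)\,\mathbf{x}_i'\hat{\boldsymbol{\delta}}_n+O(\|\mathbf{x}_i\|^2\|\hat{\boldsymbol{\delta}}_n\|^2)$, so your argument yields $\widehat{Z}_n(t)\approx Z_n(t)+\bar{\mathbf{x}}_n'\sqrt{n}(\widehat{\boldsymbol{\beta}}_n-\boldsymbol{\beta})J(t)$, i.e.\ a plus sign where the displayed theorem has a minus; you should either flag this as a typo in the statement or reconcile the residual sign convention before asserting the match. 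Second, the collapse from $\boldsymbol{\delta}\in\mathbb{R}^p$ to an $s_0$-dimensional ball holds only on the event $\{\widehat{\mathcal{A}}=\mathcal{A}_0\}$, so the chaining bound must be applied conditionally on that event (whose probability tends to one by variable-selection consistency), and what is really used is $\sqrt{n}$-consistency of the restriction of $\widehat{\boldsymbol{\beta}}_n$ to $\mathcal{A}_0$ rather than of the full $p$-vector; making that explicit, together with your rate condition on $s_0$ and $\max_i\|\mathbf{x}_i\|$, closes the argument.
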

Notice that Theorem \ref{theoremABC} shows that the AUL property holds for the empirical distribution function of the residuals based on the ALASSO fit even for $p >>n$, provided the regularity conditions hold and the regression model has enough sparsity.

\paragraph{Functional oracle property of the ALASSO}

The AUL property has various important applications in the context of statistical inference in high dimensional regression. For instance, it allows to establish the asymptotic distribution of the residual edf which, in turn, can be used to carry the goodness of fit tests on $F$ and set confidence bands for $F$.
Under standard weak convergence theory we have that 
\begin{align}
\mathcal{Z}_n^{OR} = n^{1 / 2} \left( \widehat{F}_n^{OR} - F \right) \to_w \mathcal{Z}_{\infty} \ \ \text{on} \ \ L^{\infty} \left( \left[ - \infty, + \infty \right] \right)
\end{align}
where $\widehat{F}_n^{OR}$ is the edf of the residuals based in OLS under the oracle property.

\newpage 

Moreover, $\mathcal{Z}_{\infty}$ is a zero mean Gaussian process with covariance function as below
\begin{align}
\tau( x, y ) = \text{Cov} \left( \mathcal{Z}_{\infty}(x), \mathcal{Z}_{\infty}(y) \right), \ \ \ x, y \in \left( - \infty, + \infty \right)
\end{align}
We also define the ALASSO based residual empirical process as below
\begin{align}
\mathcal{Z}_n (x) = \sqrt{x} \left[ \widehat{F}_n(x) - F(x) \right], \ \ x \in \left( - \infty, + \infty \right).
\end{align}

\begin{proof}
We denote with
\begin{align}
\Gamma_n \left(t , \mathbf{u} \right) = \frac{1}{ \sqrt{n} } \sum_{ i = 1 }^n \left[ \mathbf{1} \left( F \left( \epsilon_i - n^{1 / 2} \mathbf{x}_i^{\prime} \mathbf{u} \right) \leq t \right) \right] , \ \ t \in [0,1], \ \text{and} \ \mathbf{u} \in \mathbb{R}^p.
\end{align}
Next, we define the random vector $\mathbf{W}_n = n^{- 1/ 2} \sum_{i = 1}^n \mathbf{x}_i^{\prime} \epsilon_i$ and write $\mathbf{W}_n = \left(   \mathbf{W}_n^{ (1) \prime }, \mathbf{W}_n^{ (2) \prime } \right)$. 
 
\end{proof}

\subsubsection{Sparsity in high dimensional estimation problems}

The Lasso estimator appears to be suitable for high dimensional inference problems such as the case when $p$ is much larger than the time series observations $n$. Some useful results are as below:

\paragraph{Regression in the iid case} Under the usual assumption that the $\epsilon_i$ are iid and subGaussian, 
\begin{align}
\forall \ s , \mathbb{E} \left[ \text{exp} \left( s \epsilon_i^2\right) \right] \leq \text{exp} \left( \frac{s^2 \sigma^2 }{ 2} \right)
\end{align}
for some known $\sigma^2$, then we have that 
\begin{align}
\mathbb{P} \left( \left| \frac{1}{n} \sum_{ i = 1}^n W_i^{(j)}     \right| \geq \frac{t}{ \sqrt{n} } \right) \leq \psi(t) = \text{exp} \left( - \frac{t^2}{2 \sigma^2 } \right).
\end{align}

\subsection{Lasso shrinkage for Predictive Regression}

In this model, the unknown coefficients $\beta_n^{*}$ can be obtained from the data by running OLS 
\begin{align}
\widehat{ \beta }^{OLS} = \underset{ \beta }{ \text{arg min} } \norm{ y - X \beta }^2,
\end{align}
The asymptotic behaviour of the OLS estimator has been extensively examined in the time series econometrics literature (see, \cite{koo2020high}, \cite{lin2020robust}, \cite{yousuf2021boosting}, \cite{lee2022lasso}). A relevant framework for lasso time series regression is proposed by \cite{adamek2023lasso}.

\newpage

In particular, the following functional central limit theorem holds
\begin{align}
\frac{1}{ \sqrt{n} } \sum_{j=1}^{ \floor{nr} } 
\begin{pmatrix}
e_{j.}^{\prime} \\
u_j
\end{pmatrix}
\Rightarrow
\begin{pmatrix}
B_e(r) \\
B_u(r) 
\end{pmatrix}
\equiv
\text{BM} \left( \Sigma \right). 
\end{align}
To represent the asymptotic distribution of the OLS estimator, define $u_i^{+} = u_i - \Sigma_{eu}^{\prime} \Sigma_{ee} e_i^{\prime}$. By definition, Cov$\left( e_{ij}, u_i^{+} \right) = 0$ for all $j$ so that 
\begin{align}
\frac{ X^{\prime} u }{n} \Rightarrow \zeta:= \int_0^1 B_e(r) dB_{u^{+}} (r) + \int_0^1 B_e(r) \Sigma^{\prime}_{eu} \Sigma_{ee}^{-1} dB_e(r)^{\prime}
\end{align} 
which is the sum of a mixed normal random vector and a non-standard random vector. The OLS limit distribution is then given by 
\begin{align}
n \left( \hat{\beta}^{OLS} - \beta_n^{*} \right) = \left( \frac{X^{\prime} X }{ n^2 } \right)^{-1} \frac{X^{\prime} u }{ n } \Rightarrow \Omega^{-1} \zeta, 
\end{align}
where $\Omega: = \displaystyle \int_0^1 B_e(r) B_e (r)^{\prime} dr$. This implies that when we inflate $\widehat{ \beta }_j^{OLS}$ by the factor $n^{ \delta_j }$ so that its magnitude is comparable to the constant $\beta_j^{0*}$, we attain consistency in that  
\begin{align}
n^{ \delta_j } \left( \widehat{\beta}^{OLS} - \beta_{jn}^{*} \right) = n^{ \delta_j } \widehat{\beta}^{OLS} - \beta_j^{0*} = \mathcal{O}_p \left( n^{ \delta_j - 1} \right) = o_p(1) \ \ \text{for all} \ j \leq p.  
\end{align}
Moreover, if $\beta_j^{0*} \neq 0$, when $\delta_j$ is close to 1, the signal of $x_j$ is weak, and therefore the rate of convergence is slow. However, notice that some true coefficients $\beta_j^{*}$ could be exactly zero, where the associated predictors would be redundant (inactive) in the regression. Let $M^{*} = \left\{ j : \beta_j^{0*} \neq 0 \right\}$ be the index set of regressors relevant to the regression, $p^{*} = \left| M^{*} \right|$, and $M^{* c} = \left\{ 1,..., p \right\} M^{*}$ be the set of redundant regressors.  For simplicity, we refer to $M^{*}$ as the active set, meaning that it plays an active role in the regression, and we call $M^{* c}$ the inactive set. Define $\widehat{\beta}^{ora} = \left( \widehat{\beta}^{ \text{ora} \prime}_{ M^{*} }, \widehat{\beta}^{ \text{ora} \prime}_{ M^{*}} \right)$, where
\begin{align}
\widehat{\beta}^{ \text{ora} }_{ M^{*} } = \underset{ \beta }{ \text{arg min} } \norm{ y - \sum_{ j \in M^{*} } x_j \beta_j }^2
\end{align}
and the complement of the oracle estimator is just the remaining coefficients which are identical equal, that is,  $\widehat{\beta}^{ \text{ora} }_{ M^{*} c } = 0$. Notice that the oracle estimator is the infeasible oracle information of $M^{*}$. The asymptotic distribution of the estimator is given by 
\begin{align}
n \left( \widehat{\beta}^{ \text{ora} } - \beta_{n}^{*} \right)_{ M^{*} } \Rightarrow \Omega_{ M^{*} } \zeta_{ M^{*} }, 
\end{align}
where $\Omega_{ M^{*} }$ is the $p^{*} \times p^{*}$ submatrix $\left( \Omega_{j,j^{\prime} } \right)_{ j,j^{\prime} \in M^{*} }$ and $\zeta_{ M^{*} }$ is the $p^{*} \times 1$ subvector $\left( \zeta_j \right)_{ j \in M^{*} }$.

\newpage

\subsection{Lasso estimation with a structural break}
\label{lassobreaks}

Relevant studies with frameworks for lasso estimation robust to the presence of structural breaks in the covariates include  \cite{chan2014group}, \cite{cheng2016shrinkage}, \cite{qian2016shrinkage} and \cite{smith2019variable}. For any $n-$dimensional vector $W = \left( W_1,..., W_n   \right)^{\prime}$, define the empirical norm as $\norm{ W }_n := \left( \frac{1}{n} \sum_{i=1}^n W_i^2 \right)^{1/2}$. 
\begin{align}
f_{ ( \alpha, \uptau ) } (x, q) &:= x^{\prime} \beta + x^{\prime} \delta \mathbf{1} \big\{ q < \uptau \big\}
\\
f_{0} (x, q) &:= x^{\prime} \beta_0 + x^{\prime} \delta_0 \mathbf{1} \big\{ q < \uptau_0 \big\}
\\
\hat{f}_{ ( \alpha, \uptau ) } (x, q) &:= x^{\prime} \hat{\beta} + x^{\prime} \hat{\delta} \mathbf{1} \big\{ q < \hat{\uptau} \big\}
\end{align}    
Then, we define the prediction risk as below
\begin{align}
\norm{ \hat{f} - f_0 }_n := \left[ \frac{1}{n} \sum_{i=1}^n \bigg(  \hat{f} ( X_i, Q_i ) - f_0 ( X_i, Q_i ) \bigg)^2  \right]^{1/2}.
\end{align}
Define with $\alpha_0 = \left( \beta_0^{\prime}, \delta_0^{\prime}  \right)^{\prime}$. Then, we can rewrite the model as below $Y_i = \boldsymbol{X}_i ( \uptau_0 )^{\prime} \alpha_0 + U_i$, for $i = 1,...,n$. Let $\boldsymbol{y} \equiv \left( Y_1,..., Y_n \right)^{\prime}$. Then, for any fixed $\uptau \in \mathcal{T}$ we consider the residual sum of squares given by 
\begin{align}
S_n ( \alpha, \uptau ) &= \frac{1}{n} \sum_{i=1}^n \bigg( Y_i - X_i^{\prime} \beta - X_i^{\prime} \mathbf{1} \left\{ Q_i < \uptau  \right\} \bigg)^2
\equiv 
\norm{ \boldsymbol{y} - \boldsymbol{X} ( \uptau ) \alpha }_n^2. 
\end{align} 
\begin{remark}
The examples discussed in the previous sections are concerned with the use of the shrinkage approach under the assumption of sparsity in high dimensional time series regression model for estimation purposes (see, also \cite{basu2015regularized} and \cite{garcia2022variable}). Consider that we have $p-$variate time series then we can quantify the stability of the process based on the largest eigenvalue of the transition matrix. Since the largest eigenvalue is bounded, then a condition to roll out the cross-sectional and serial correlation on the data is ensured. The literature on high-dimensional time series modelling allows for dominant cross-sectional autocorrelations to be accounted for, using common factors. In particular, \cite{cho2023high} propose the factor-adjusted VAR model, in which case the time series is decomposed into two latent components. A relevant framework for robust estimation of high-dimensional VAR Models is given by \cite{wang2023rate}. The frameworks mentioned in Section \ref{lassobreaks} are suitable for statistical estimation in high dimensional regression models under the presence of structural breaks\footnote{Dr. Haeran Cho (University of Bristol) gave a seminar titled: "High-dimensional time series segmentation via factor-adjusted VAR modelling", at the S3RI Departmental Seminar Series, University of Southampton on the 24th of March 2022. \\}. One may be interested to develop an econometric framework for structural break detection within these data-dependent model specifications such as as the piecewise stationary factor-adjusted VAR models (see, \cite{breitung2011testing}). An alternative algorithmic procedure\footnote{Professor Degui Li (University of York) gave a seminar titled: "Detection of Multiple Structural Breaks in Large Covariance Matrices", at the S3RI Departmental Seminar Series, University of Southampton on the 2nd of December 2021.} for structural change detection in high dimensional time series models is proposed by \cite{li2021group}. 
\end{remark}

\newpage

\subsection{Classical Shrinkage Type Estimation Approach}

In this section, we present the Score function approach proposed in the framework of \cite{sen1987preliminary}, although the particular methodology doesn't correspond to a high-dimensional environment, it has many possible applications such as a "Shrinkage type Structural Break Estimation in High-Dimensional Predictive Regressions" or "Testing for Predictability in High Dimensional Environments via a Shrinkage type Structural Break Estimator".

Consider the score function given by $\psi = \left\{ \psi(x), x \in \mathbb{R}  \right\}$ required for the definition of M-estimators. Moreover, suppose that $\psi (x) = \big\{ \psi_1 (x) + \psi_2(x) \big\}$, where $\psi_1(.)$ and $\psi_2(.)$ are both non-decreasing and skew-symmetric such that, $\psi_j (x) + \psi_j (-x) = 0, \forall \ x \in \mathbb{R}, j = 1,2$. Denote with $\boldsymbol{A}_T = \left( \boldsymbol{a}_1,.., \boldsymbol{a}_T \right)$ and for every $\boldsymbol{b} \in \mathbb{R}^p$ we define the following $p-$dimensional functional with $T$ elements in each dimension
\begin{align}
\boldsymbol{M}_T \left( \boldsymbol{b} \right) = \big( \boldsymbol{M}_{T1} \left( \boldsymbol{b} \right),..., \boldsymbol{M}_{Tp} \left( \boldsymbol{b} \right) \big)^{\prime} = \sum_{t=1}^T \boldsymbol{a}_t \psi \big( \boldsymbol{y}_t - \boldsymbol{a}_t^{\prime} \boldsymbol{b} \big), \ \ \boldsymbol{b} \in \mathbb{R}^p.
\end{align}
Then, under the assumption of stationarity then, the RME for $\hat{\boldsymbol{\beta}}_{1T}$ of the model parameter $\boldsymbol{\beta}_{1}$ is a solution to the following equation
\begin{align}
\boldsymbol{M}_{T(1)} \big( \boldsymbol{b}_1,  \boldsymbol{0} \big) = \boldsymbol{0}.
\end{align}
We introduce a suitable $M-$test statistic for testing the null hypothesis $\mathbb{H}_0: \boldsymbol{\beta}_2 = \boldsymbol{0}$,
\begin{align}
\hat{ \boldsymbol{M} }_{T(2)} = \boldsymbol{M}_{T(2)} \big( \hat{\boldsymbol{\beta}}_1, \boldsymbol{0} \big),
\end{align} 
where $\hat{\boldsymbol{\beta}}_{1T}$ is the restrained  M-estimator (RME) of $\boldsymbol{\beta}_1$. Notice that under stationarity, it can be proved that the restrained  M-estimator (RME) generally performs better than the unrestrained M-estimator (URE) especially when  $\boldsymbol{\beta}_2$ is close to $\boldsymbol{0}$ (see, \cite{sen1987preliminary}). Furthermore, denote with 
\begin{align}
\label{St}
\mathcal{S}_T^2 &= T^{-1} \sum_{t=1}^T \psi^2 \bigg( \boldsymbol{y}_t - \hat{\boldsymbol{\beta}}_{1T}^{\prime} \boldsymbol{x}_{(j)t-1} \bigg), \ \ \ \ \boldsymbol{x}_{ t-1 } = \left( \boldsymbol{x}_{(1)t-1}, \boldsymbol{x}_{(2)t-1} \right),
\\
\boldsymbol{C}_{ii.j} &=  \boldsymbol{C}_{ii} - \boldsymbol{C}_{ij}  \boldsymbol{C}_{jj}^{-1} \boldsymbol{C}_{ji}, \ \ \text{for} \ \ i \neq j = \left\{ 1,2 \right\}.
\end{align}
Therefore, an appropriate aligned $M-$test is given by 
\begin{align}
\mathcal{L}_T = S_T^{-2} \bigg\{ \hat{ \boldsymbol{M} }_{T(2)} \boldsymbol{C}_{22.1} \hat{ \boldsymbol{M} }_{T(2)} \bigg\}. 
\end{align}
Under the null hypothesis, $\mathbb{H}_0$, $\mathcal{L}_T$ has asymptotically  the chi-square distribution with $p_2$ degrees of freedom. Therefore, based on the significance level $\alpha$ such that $0 < \alpha < 1$, the preliminary test under the null, $\mathbb{H}_0$, is rejected based on the critical values. 

\newpage 

Considering for example one is interested to the conditional mean specification of a predictive regression model, focusing on the OLS against the IVX estimation approaches based on the full set of regressors in the model. The, the corresponding expressions are:  
\begin{align}
\mathcal{S}_T^2 = T^{-1} \sum_{t=1}^T \psi^2 \bigg( \boldsymbol{y}_t - \hat{\boldsymbol{\beta}}_{1T,ols}^{\prime} \boldsymbol{x}_{(j)t-1} \bigg), \ \ \ \ \boldsymbol{x}_{ t-1 } = \left( \boldsymbol{x}_{(1)t-1}, \boldsymbol{x}_{(2)t-1} \right), \ j \in \left\{ 1,2 \right\} ,
\end{align}
and 
\begin{align}
\widetilde{\mathcal{S}}_T^2 = T^{-1} \sum_{t=1}^T \psi^2 \bigg( \boldsymbol{y}_t - \hat{\boldsymbol{\beta}}_{1T,ivx}^{\prime} \boldsymbol{x}_{(j)t-1} \bigg), \ \ \ \ \boldsymbol{x}_{ t-1 } = \left( \boldsymbol{x}_{(1)t-1}, \boldsymbol{x}_{(2)t-1} \right), \ j \in \left\{ 1,2 \right\}. 
\end{align}
In other words, the authors of the particular framework focus on preliminary test M-estimation (PTME) formulation, thus the $\hat{\boldsymbol{\beta}}_{1n}^{PT}$ is chosen as the RME or UME, according as this preliminary test leads to the acceptance or rejection of the null hypothesis $\mathbb{H}_0$. Moreover, the shrinkage $M-$estimator (SME), based on the usual James-Stein rule incorporates the same test statistic in a smoother manner. When $\boldsymbol{\beta}_2$ is very close to $\boldsymbol{0}$, generally both the PTME and SME perform better than the UME, but the RME may still be better than either of them. On the other hand, for $\boldsymbol{\beta}_2$ away from $\boldsymbol{0}$, the RME may perform rather poorly, while both the PTME and SME are robust. This relative picture on the performance characteristics of all four versions of M-estimators can be best studies within an asymptotic framework. One can consider the notion of asymptotic distributional risk (ADR) as well as the asymptotic risk efficiency (ARE) for the various versions of the M-estimators (see, \cite{sen1987preliminary}).

\subsubsection{Limiting Distributional Risk}

In the multivariate location model \cite{sen1985some} have pointed out that shrinkage estimation works out well only in a shrinking neighbourhood of the pivot. Since for $\boldsymbol{\beta}_2$ the pivot is taken as $\boldsymbol{0}$, we consider a shrinking neighbourhood of $\boldsymbol{0}$ and we consider a sequence $\left\{ \mathcal{K}_T \right\}$ of alternatives,
\begin{align}
\mathcal{K}_T : \boldsymbol{\beta}_2 = \boldsymbol{\beta}_{2(T)} = T^{- 1/ 2} \boldsymbol{\xi}, \ \ \  \boldsymbol{\xi} = \big( \xi_{p_1 + 1},..., \xi_{p} \big)^{\prime} \in \mathbb{R}^{p_2},    
\end{align}  
which implies that the null hypothesis $\mathbb{H}_0$ reduces to $\mathbb{H}_0: \boldsymbol{\xi} = \boldsymbol{0}$. Using a suitable estimator $\boldsymbol{\beta}^{\star}_{1T}$ of $\boldsymbol{\beta}_{1}$, we denote by $\boldsymbol{G}^{\star} ( \boldsymbol{\mathsf{u}} ) = \underset{ T \to \infty }{ \mathsf{lim} } \ \mathbb{P} \bigg(  T^{- 1/ 2} \big( \boldsymbol{\beta}^{\star}_{1T} - \boldsymbol{\beta}_{1} \big) \leq \boldsymbol{\mathsf{u}} \big| \mathcal{K}_T \bigg), \ \ \ \boldsymbol{\mathsf{u}} \in \mathbb{R}^{p_1}$, where we assume that $\boldsymbol{\mathsf{u}}$ is nondegenerate. Then, with a quadratic error loss such that $T \big( \boldsymbol{\beta}^{\star}_{1T} - \boldsymbol{\beta}_{1} \big)^{\prime} \boldsymbol{W} \big( \boldsymbol{\beta}^{\star}_{1T} - \boldsymbol{\beta}_{1} \big)$, for a suitable probability distribution function $\boldsymbol{W}$, the ADR of $\boldsymbol{\beta}^{\star}_{1T}$ is defined as below
\begin{align}
\mathcal{R} \big( \boldsymbol{\beta}^{\star}_{1T} ; \boldsymbol{W}  \big) = \mathsf{trace} \left\{ \boldsymbol{W} \int_{ \mathbb{R}^{p} }  ... \int_{ \mathbb{R}^{p} } \boldsymbol{\mathsf{u}} \boldsymbol{\mathsf{u}}^{\prime} d \boldsymbol{G}^{\star} ( \boldsymbol{\mathsf{u}} ) \right\} \equiv \mathsf{trace} \left\{ \boldsymbol{W} \boldsymbol{V}^{\star} \right\},
\end{align} 
where $\boldsymbol{V}^{\star}$ is the dispersion matrix for the asymptotic distribution of $\boldsymbol{G}^{\star} ( . )$.

\newpage

\subsubsection{Asymptotic representation of estimators}

We follow the framework presented in \cite{mukherjee1998preliminary}. 

\paragraph{Notation} The $i-$th row and the $j-$th column of any matrix $\boldsymbol{S}$ are denoted by $s_{i \bullet}$ and $s_{ \bullet j}$, respectively. Then if $\boldsymbol{S}$ has $p$ columns, $\boldsymbol{D}_s$ denotes the maximal diagonal $\big( \norm{s_{ \bullet 1} }, ..., \norm{s_{ \bullet p} } \big)$, where $\norm{s_{ \bullet j} }$ is the usual Euclidean norm of the $j-$th column of $\boldsymbol{S}$. Moreover, the usual partitions, $\boldsymbol{S}_{ii.j}$ stands for the matrix $\boldsymbol{S}_{ii} - \boldsymbol{S}_{ij} \boldsymbol{S}_{jj} \boldsymbol{S}_{ji}$. Then, we define the following matrices which appear as the scaling factors in the asymptotic distributions of the different estimators. We denote with $\boldsymbol{A}_x = n^{-d} \boldsymbol{D}_x$, $\boldsymbol{B}_x = n^{d} \boldsymbol{D}_x$, $\boldsymbol{A}_{x1} = n^{-d} \boldsymbol{D}_{x1}$. Moreover, we have that $\underset{ n \to \infty }{ \mathsf{lim} } n^{-1} \boldsymbol{X}^{\prime} \boldsymbol{X} \ \ \text{exists and equals} \ \boldsymbol{C}$, with $\boldsymbol{C}$ being positive definite. Consequently,  
\begin{align}
\boldsymbol{R}_n := \boldsymbol{D}_x^{-1} \boldsymbol{X}^{\prime} \boldsymbol{X} \boldsymbol{D}_x^{-1} \overset{ p }{ \to } \boldsymbol{R} := \boldsymbol{D}_c^{-1} \boldsymbol{C} \boldsymbol{D}_c^{-1} 
\end{align}
Denote with $\boldsymbol{\Sigma}_n$ to denote the dispersion matrix of $\boldsymbol{\epsilon}$. Then, the dispersion of $\boldsymbol{B}_x^{-1} \boldsymbol{X}^{\prime} \boldsymbol{\Sigma}_n \boldsymbol{X}  \boldsymbol{B}_x^{-1}$  and it follows that the limit of $\boldsymbol{E}_n$ exists and equals $\boldsymbol{E}$, say. As in \cite{sen1987preliminary}, the shrinkage least-squares estimator (SLSE) $\hat{\beta}_s$ can be motivated as a smoothed version of a preliminary test estimator. Thus, for each $c > 0$, this is defined as below   
\begin{align}
\hat{\beta}_s = \hat{\beta}_1 + \left( 1 - \frac{ \kappa }{  \mathcal{L}_T  } \right) \left( \tilde{\beta}_1 - \hat{\beta}_1 \right),  c > 0.
\end{align}
For large values of $\mathcal{L}_T$, $\hat{\beta}_s$ yields to $\tilde{\beta}_1$, whereas it performs quite differently from $\hat{\beta}_{pt}$ form small values of $\mathcal{L}_T$. Define the dispersion matrix
\begin{align}
\boldsymbol{H} := \big[ \boldsymbol{0} \ \vdots \ \boldsymbol{R}_{22.1} \big] \boldsymbol{R}^{-1} \boldsymbol{E} \boldsymbol{R}^{-1} \big[ \boldsymbol{0} \ \vdots \ \boldsymbol{R}_{22.1} \big] 
\end{align}
Then, it can be proved that the second term converges to  $\boldsymbol{R}_{22.1} \boldsymbol{D}_{c2} \boldsymbol{\eta}$. Therefore, we define with 
\begin{align}
\boldsymbol{H}_n := \big[ \boldsymbol{0} \ \vdots \ \boldsymbol{R}_{n22.1} \big] \boldsymbol{R}_n^{-1} \boldsymbol{E}_n \boldsymbol{R}_n^{-1} \big[ \boldsymbol{0} \ \vdots \ \boldsymbol{R}_{n22.1} \big] 
\end{align}
implying that $\boldsymbol{H}_n$ converging to $\boldsymbol{H}$ and $\boldsymbol{\xi} := \boldsymbol{D}_{c2} \boldsymbol{\eta} \in \mathbb{R}^{p_2}$. Then, $\mathcal{L}_n$ is defined as
\begin{align}
\mathcal{L}_n := \norm{ \boldsymbol{H}_n^{- 1/ 2} \boldsymbol{B}_{x2}^{-1} \boldsymbol{X}_2^{\prime} \big( \boldsymbol{Y} - \boldsymbol{X}_1 \hat{ \boldsymbol{\beta} }_1 \big) }^2 
\end{align}  
which converges to a noncentral chi-square random variable with $p_2$ degrees of freedom and noncentrality parameter $\boldsymbol{\delta} := \boldsymbol{\xi}^{\prime} \boldsymbol{L}_{22}^{-1} \boldsymbol{\xi}$. Therefore, under the null hypothesis $\boldsymbol{\eta} = \boldsymbol{0}$ or equivalently, $\boldsymbol{\xi} = \boldsymbol{0}$, the test statistic $\mathcal{L}_n$ converges to a chi-square random variable with $p_2$ degrees of freedom. Define the projection matrix $\boldsymbol{P}_{x1} = \boldsymbol{X}_1 \left( \boldsymbol{X}_1^{\prime} \boldsymbol{X}_1  \right) \boldsymbol{X}_1^{\prime}$. Let $\boldsymbol{D}_n := \boldsymbol{R}_n^{-1}$ and $\boldsymbol{D} := \boldsymbol{R}^{-1}$. Then, the covariance matrices of the random vectors $\boldsymbol{D}_n \boldsymbol{U}_n$ and $\boldsymbol{D} \boldsymbol{U}$ are denoted by $\boldsymbol{L}_n$ and $\boldsymbol{L}$ respectively.

\newpage

\subsection{Lasso Inference for High Dimensional Time Series under NED} 

Following the framework proposed by \cite{adamek2023lasso}, consider the linear time series regression 
\begin{align}
y_t = \boldsymbol{x}_t^{\prime} \boldsymbol{\beta}^0 + u_t, \ \ \ \ t = 1,..., T,    
\end{align}
where $\boldsymbol{x}_t = \left( x_{1,t},..., x_{N,t} \right)^{\prime}$ is an $(N \times 1)$ vector of explanatory variables and we assume that we have a high-dimensional time series model where $N$ can be larger than $T$.  

\begin{assumption}
Let $\boldsymbol{z}_t = \left( \boldsymbol{x}_t^{\prime}, u_t \right)^{\prime}$ and let there exists some constants $\bar{m} > m > 2$ and $d \geq \mathsf{max} \left\{ 1, \frac{ \frac{\bar{m} }{m-1}   }{  \bar{m} - 2 }  \right\}$,  
\begin{align}
\mathbb{E} \big[ \boldsymbol{z}_t \big] = \boldsymbol{0}, \ \ \ \mathbb{E} \big[ \boldsymbol{x}_t u_t \big] = \boldsymbol{0} \ \ \ \text{and} \ \ \ \underset{ 1 \leq j \leq N + 1, 1 \leq t \leq T }{ \mathsf{max} } \ \mathbb{E} | z_{j,t}  |^{ 2 \bar{m} } \leq C.     
\end{align}
\end{assumption}

\begin{assumption}
Let $\boldsymbol{s}_{T,t}$ denote a $k(T)-$dimensional triangular array that is $\alpha-$mixing of size 
\begin{align}
\frac{ d }{ \left( \frac{1}{m} - \frac{1}{ \bar{m} } \right) }, \ \ \ \text{with} \ \ \sigma-\text{field} \ \mathcal{F}_t^s := \sigma \left\{ \boldsymbol{s}_{T,t}, \boldsymbol{s}_{T,t-1},... \right\}
\end{align}
such that $\boldsymbol{z}_t$ is $\mathcal{F}_t^s-$measurable. The process $\left\{ z_{j,t} \right\}$ is $L_{2m}-$near-epoch dependent (NED) of size $-d$ on $\boldsymbol{s}_{T,t}$ with positive bounded NED constants, uniformly over $j \in \left\{ 1,..., N + 1 \right\}$. 
\end{assumption}

\begin{remark}
The first assumption above ensures that the error terms are contemporaneously uncorrelated with each of the regressors, and that the process has finite and constant unconditional moments. The second assumption above implies that $\boldsymbol{s}_{T,t}$ is an underlying shock process driving the regressors and errors in $\boldsymbol{z}_t$, where we assume $\boldsymbol{z}_t$ to depend almost nearly entirely on the "near epoch" of $\boldsymbol{s}_{T,t}$. In other words, the near epoch dependence of the vector $\boldsymbol{z}_t$ can be interpreted as $\boldsymbol{z}_t$ being "approximately" mixing, in the sense that it can be well-approximated by a mixing process. The NED framework allows for general forms of dependence and that are often encountered in econometrics applications including, but not limited to, strong mixing processes, linear processes including ARMA models, various types of stochastic volatility and GARCH models as well as nonlinear processes (see, \cite{davidson2002establishing}). 
\end{remark}

The framework proposed by \cite{adamek2023lasso} is the first to utilize the NED framework for establishing uniformly valid high-dimensional inference. For instance, \cite{wong2020lasso} consider time series models with $\beta-$mixing errors, which has the advantage of allowing for general forms of dynamic misspecification resulting in serially correlated error terms, but, rules out several relevant data generating processes, and is in addition typically difficult to verify. On the other hand, \cite{masini2022regularized} use an \textit{m.d.s} assumption on the innovations in combination with sub-Weibull tails and a mixing assumption on the conditional covariance matrix. However, the particular \textit{m.d.s} assumption does not allow for dynamic misspecification of the full model. Importantly, the NED assumption on $u_t$ does allow for misspecified models as well, in which case we view $\boldsymbol{\beta}^0$ as the coefficients of the pseudo-true model when restricting the class of models to those linear in $\boldsymbol{x}_t$.

\newpage

Therefore, this econometric intuition allows to view the high-dimensional time series model of \cite{adamek2023lasso} as simply the linear projection of $y_t$ on all the variables in $\boldsymbol{x}_t$, with $\boldsymbol{\beta}^0$ in that case representing the best linear projection coefficients, which implies that $\mathbb{E} [ u_t ] = 0$ and $\mathbb{E} [ u_t x_{j,t} ] = 0$. On the other hand, $u_t$ is not likely to be an $\textit{m.d.s}$. Thus, allowing for misspecified dynamics is crucial for developing the theory for the nodewise regressions underlying the desparsified lasso. The NED-order $m$ and sequence of size $-d$ play a key role in the development of the limit results in which these parameters enter as the corresponding asymptotic rates of related quantities. Obviously, there is a trade-off between the thickness of the tails allowed for and the moment of dependence - which is measures via the mixing rate.  

\begin{example}[ARCL Model with GARCH errors, see \cite{adamek2023lasso}]
Consider the autoregressive distributed lag (ARDL) model with GARCH errors such that
\begin{align}
y_t &= \sum_{i=1}^p \rho_i y_{t-i} + \sum_{j=0}^q \boldsymbol{\theta}_j^{\prime} \boldsymbol{w}_{t-j} + u_t =  \boldsymbol{x}_t^{\prime} \boldsymbol{\beta}^0 + u_t  
\\
u_t &= \sqrt{h}_t \varepsilon_t, \ \ \ \ \varepsilon_t \sim \mathcal{N} (0,1), 
\\
h_t &= \beta_0 + \beta_1 h_{t-1} + \beta_2 u_{t-1}^2, 
\end{align}
where the roots of the lag polynomial $\rho(z) = 1 - \sum{j=1}^p \rho_j z^j$ are outside the unit circle. Then, $u_t$ is a strictly stationary geometrically $\beta-$mixing process. Moreover, we assume that the vector of exogenous variables $\boldsymbol{w}_t$ is stationary and geometrically $\beta-$mixing as well with $2 \bar{m}$ finite moments.    

Given the invertability of the lag polynomial, we may then write 
\begin{align}
y_t = \rho^{-1} (L) v_t, \ \ \ \ v_t = \sum_{j=0}^q \boldsymbol{\theta}_j^{\prime} \boldsymbol{w}_{t-j} + u_t,     
\end{align}
and the inverse lag polynomial $\rho^{-1} (z)$ has geometrically decaying coefficients. Then, it follows directly that $y_t$ is NED on $v_t$, where $v_t$ is strong mixing of size $- \infty$ as its components are geometrically $\beta-$mixing, and the sum inherits the mixing properties.
\end{example}

\begin{example}[Equation-by-Equation VAR, see  \cite{adamek2023lasso}] Consider the VAR model below
\begin{align}
\boldsymbol{y}_t = \sum_{j=1}^p \boldsymbol{\Phi}_j \boldsymbol{y}_{t-j} + \boldsymbol{u}_t,     
\end{align}
where $\boldsymbol{y}_t$ is a $K \times 1$ vector of dependent variables and $\mathbb{E} | u_t |^{2 \bar{m} } \leq C$ and the $K \times K$ matrices $\boldsymbol{\Phi}_i$ satisfy appropriate stationarity and summability conditions. The equivalent equation-by-equation representation 
\begin{align}
y_{k,t} = \sum_{j=1}^p \big[ \Phi_{k,1,j},..., \Phi_{k,K,j} \big] \boldsymbol{y}_{t-j} + u_{k,t} = \big[ \boldsymbol{y}_{t-1}^{\prime},..., \boldsymbol{y}_{t-p}^{\prime} \big] \boldsymbol{\beta}_k + u_{k,t}, \ \ \ k \in \left\{ 1,..., K \right\}.    
\end{align}
Therefore, assuming a well-specified model with $\mathbb{E} \big[ \boldsymbol{u}_t | \boldsymbol{y}_{t-1},..., \boldsymbol{y}_{t-p} \big] = \boldsymbol{0}$, then the NED conditions are then satisfied trivially.     
\end{example}

\newpage

Notice that the above examples, provide cases in which although the classical martingale difference sequence framework fails to apply, the more general NED framework extends the applicability of certain time series processes.

\begin{example}[Misspecified AR Model, see  \cite{adamek2023lasso}] Consider an autoregressive AR model of order 2 as
\begin{align}
y_t = \rho_1 y_{t-1} +  \rho_2 y_{t-2} + v_t, \ \ \ v_t \sim \mathcal{N}(0,1).     
\end{align}
where the roots of the characteristic polynomial $( 1 - \rho_1 L - \rho_2 L^2 )$ are outside the unit circle. Moreover, define the misspecified model $y_t = \tilde{\rho} y_{t-1} + u_t$, where
\begin{align}
\tilde{\rho} = \underset{ \rho }{ \mathsf{arg min} } \ \mathbb{E} \left[ \left( y_t - \rho y_{t-1}    \right)^2 \right] = \frac{ \mathbb{E} \big[ y_t y_{t-1} \big]  }{ \mathbb{E} \left[ y_{t-1}^2 \right] } = \frac{ \rho_1 }{ 1 - \rho_2 }    
\end{align}
and $u_t$ is autocorrelated. 

Therefore, an \textit{m.d.s} assumption would be inappropriate in this case, since
\begin{align}
\mathbb{E} \big[ u_t | \sigma \left\{ y_{t-1}, y_{t-2},.... \right\} \big] = \mathbb{E} \big[ y_t = \tilde{\rho} y_{t-1} + u_t | \sigma \left\{ y_{t-1}, y_{t-2},.... \right\} \big] = - \frac{ \rho_1 \rho_2 }{ 1 - \rho_2 } y_{t-1} + \rho_2 y_{t-2} \neq 0.  
\end{align}
On the other hand, it can be shown that $\big( y_{t-1}, u_t \big)^{\prime}$ satisfies the NED-condition by considering the moving average representation of $y_t$ and thus by extension, of $u_t = y_t - \tilde{\rho} y_{t-1}$. In addition, since the coefficients are geometrically decaying, $u_t$ is thus clearly NED on $v_t$ and therefore the NED-condition is satisfied. 
\end{example}

Next, we focus on the related assumptions/conditions on the sparsity properties of the parameter vector. Specifically, the key condition to apply the lasso successfully is that the parameter vector $\boldsymbol{\beta}_0$ is (at least approximately) sparse. This is given by the following assumption.

\begin{assumption}
For some $0 \leq r \leq 1$ and sparsity level $s_r$, define the $N-$dimensional sparse compact parameter space such that 
\begin{align}
\mathcal{B}_N(r, s_r) := \big\{ \boldsymbol{\beta} \in \mathbb{R}^N : \norm{ \boldsymbol{\beta} }_r^r \leq s_r,  \norm{ \boldsymbol{\beta}}_{\infty} \leq C, \exists \ C < \infty \big\},    
\end{align}
and assume that $\boldsymbol{\beta}^0 \in \mathcal{B}_N(r, s_r) $
\end{assumption}

\begin{example}[Infinite Order AR,see  \cite{adamek2023lasso}] Consider an infinite order autoregressive model as below
\begin{align}
y_t = \sum_{j=1}^{\infty} \rho_j y_{t-j} + \varepsilon_t,     
\end{align}
where $\varepsilon_t$ is a stationary \textit{m.d.s} with sufficient moments existing, and the lag polynomial $\left( 1 - \sum_{j=1}^{\infty} \rho_j L^j \right)$ is invertible and satisfies the summability condition $\sum_{j=1}^{\infty} j^{\alpha} | \rho_j | < \infty$ for some $\alpha \geq 0$.
\end{example}

\newpage 

\subsubsection{Inference on low-dimensional parameters}

In this section we establish the uniform asymptotic normality of the desparsified lasso focusing on low-dimensional parameters of interest (see,  \cite{adamek2023lasso}). Specifically, we consider testing $P$ joint hypotheses of the form $\boldsymbol{R}_N \boldsymbol{\beta}^0 = \boldsymbol{q}$, via a Wald statistic, where $\boldsymbol{R}_N$ is an appropriate $P \times N$ matrix whose non-zero columns are indexed by the set 
\begin{align}
\mathcal{H}:= \left\{ j : \sum_{p=1}^P | r_{N,p,j} | > 0 \right\}    
\end{align}
with cardinality $h:= |H|$. Notice that we can allow for $h$ to increase in $N$ (and therefore $T$). Focusing on inference on finite set of parameters such that we can apply a standard central limit theorem. Therefore, given our time series setting, the long-run covariance matrix is 
\begin{align}
\boldsymbol{\Omega}_{N,T} = \mathbb{E} \left[ \frac{1}{T} \left( \sum_{t=1}^T \boldsymbol{w}_t \right) \left( \sum_{t=1}^T \boldsymbol{w}_t^{\prime} \right) \right],    
\end{align}
where $\boldsymbol{w}_t = \big( v_{1,t} u_t,..., v_{N,t} u_t  \big)^{\prime}$, enters the asymptotic distribution. Then, $\boldsymbol{\Omega}_{N,T}$ can equivalently be written 
\begin{align}
\boldsymbol{\Omega}_{N,T} &\equiv \boldsymbol{\Xi}(0) + \sum_{\ell=1}^{T-1} \big[ \boldsymbol{\Xi}(\ell) +  \boldsymbol{\Xi}^{\prime}(\ell) \big],    
\\
\boldsymbol{\Xi}(\ell) &\equiv \frac{1}{T} \sum_{t=\ell+1}^T \mathbb{E} \left[ \boldsymbol{w}_t \boldsymbol{w}_{t-\ell}^{\prime} \right].    
\end{align}
In order to estimate the asymptotic variance $\boldsymbol{\Psi}$, we suggest to estimate $\boldsymbol{\Omega}_{N,T}$ with the long-run variance kernel estimator such that 
\begin{align}
\boldsymbol{\Omega} &\equiv \widehat{\boldsymbol{\Xi}}(0) + \sum_{\ell=1}^{ Q_T-1 }  K \left(  \frac{\ell}{ Q_T } \right)  \big[ \boldsymbol{\Xi}(\ell) +  \boldsymbol{\Xi}^{\prime}(\ell) \big],    
\\
\boldsymbol{\Xi}(\ell) &= \frac{1}{T - \ell} \sum_{t = \ell + 1}^T \hat{\boldsymbol{w}}_t \hat{\boldsymbol{w}}_t^{\prime}    
\end{align}
with $\hat{w}_{j,t} = \hat{v}_{j,t} \hat{u}_t$, the kernel $K(.)$ can be taken as the Bartlett Kernel $K ( \frac{\ell}{ Q_T} ) = \left( 1 - \frac{\ell}{Q_T}  \right)$.

\subsubsection{Error Bound and the Consistency for the Lasso}

In this section, we consider a new error bound for the lasso in a high-dimensional time series model. Thus, the Lasso estimator of the parameter vector $\boldsymbol{\beta}^0$ in the model (see,  \cite{adamek2023lasso}) 
\begin{align}
\hat{ \boldsymbol{\beta} } := \underset{ \beta \in \mathbb{R}^N }{ \mathsf{arg min}  }  \ \left\{ \frac{1}{T} \norm{ \boldsymbol{y} - \boldsymbol{X} \boldsymbol{\beta} }_2^2 + 2 \lambda \norm{ \boldsymbol{\beta} }_1 \right\}.  
\end{align}
where $\boldsymbol{y} = \left( y_1,..., y_T \right)^{\prime}$ is the $T \times 1$ response vector, $\boldsymbol{X} = \big( \boldsymbol{x}_1,...., \boldsymbol{x}_T    \big)^{\prime}$ is the $( T \times N )$ design matrix and $\lambda > 0$ a tuning parameter. In other words, the above optimization problem adds a penalty term to the least squares objective to penalize parameters that are different from zero. Thus, when deriving this error bound, we typically require that $\lambda$ is chosen sufficiently large to exceed the empirical process such that $\underset{ j }{ \mathsf{max} } \ \left| \frac{1}{T} \sum_{t=1}^T x_{j,t} u_t \right|$, with high probability. Moreover, we define the following set such that 
\begin{align}
\mathcal{E}_T (z) := \left\{ \underset{ j \leq N, \ell \leq T }{ \mathsf{max} } \ \left| \sum_{t=1}^{\ell}  u_t x_{j,t} \leq z \right| \right\}     
\end{align}
and establish the conditions under which $\mathbb{P} \big( \mathcal{E}_T \left( \frac{T \lambda }{4} \right) \big) \to 1$. 

\subsubsection{Uniformly Valid Inference via the Disparsified Lasso}

Following the framework of \cite{adamek2023lasso}, we consider an application of the disparsified Lasso.  
\begin{example}[Sparse Factor Models]
Consider the factor model 
\begin{align}
y_t &=  \boldsymbol{\beta}^{0 \prime} \boldsymbol{x}_t + u_t, \ \ \ u_t \sim \mathcal{N} (0,1)
\\
\boldsymbol{x}_t &= \boldsymbol{\Lambda} \boldsymbol{f}_t + \boldsymbol{v}_t, \ \ \ \boldsymbol{v}_t \sim \mathcal{N} ( \boldsymbol{0}, \boldsymbol{\Sigma}_v ), \ \ \ \boldsymbol{f}_t \sim ( \boldsymbol{0}, \boldsymbol{\Sigma}_f ). 
\end{align}
Moreover, we assume that $\boldsymbol{\Sigma} = \boldsymbol{\Lambda} \boldsymbol{\Sigma}_f \boldsymbol{\Lambda}^{\prime} + \boldsymbol{\Sigma}_v$.
\end{example}

\begin{example}[Sparse VAR(1)] Consider a stationary VAR(1) model for $\boldsymbol{z}_t = \left( y_t, \boldsymbol{x}_t^{\prime} \right)^{\prime}$
\begin{align}
\boldsymbol{z}_t = \boldsymbol{\Phi} \boldsymbol{z}_{t-1} + \boldsymbol{u}_t, \ \ \ \mathbb{E} \left[ \boldsymbol{u}_t  \boldsymbol{u}_t^{\prime} \right] := \boldsymbol{\Omega}, \ \ \mathbb{E} \left[ \boldsymbol{u}_t  \boldsymbol{u}_t^{\prime} \right] = \boldsymbol{0}, \ \ \forall \ \ell \neq 0.       
\end{align}
Then, the regression $y_t = \boldsymbol{\phi}_1 \boldsymbol{z}_{t-1} + u_{1,t}$, where $\boldsymbol{\phi}_j$ is the $j-$th row of $\boldsymbol{\Phi}$, i.e., the first line of the VAR. 
\end{example}

\begin{example}
Consider the population nodewise regressions defined by the linear projections below 
\begin{align}
x_{j,t} = \boldsymbol{x}^{\prime}_{-j,t} \boldsymbol{\gamma}_j^0 + v_{j,t}, \ \ \ \ \boldsymbol{\gamma}_j^0 := \underset{ \boldsymbol{\gamma}  }{ \mathsf{arg min} } \ \mathbb{E} \left[ \frac{1}{T} \sum_{t=1}^T \left( x_{j,t} - \boldsymbol{x}_{-j,t}^{\prime} \boldsymbol{\gamma}     \right)^2 \right], \ \ \ \ j = 1,....,N.    
\end{align}
with $\tau_j^2 := \frac{1}{T} \sum_{t=1}^T \mathbb{E} \left[ v_{j,t}^2 \right]$. Specifically, if we let $\boldsymbol{\Phi} \phi \boldsymbol{I}$, with $| \phi | < 1$, and let $\boldsymbol{\Omega}$ have a Toeplitz structure, which implies that the elements of the matrix are defined such that $\omega_{i,j} = \rho^{ | i - j| }, | \rho | < 1$. Therefore, the nodewise regression parameter $\boldsymbol{\gamma}_j^0$ is only weakly sparse, in that it contains no zeroes, but its entries follow a geometrically decaying pattern\footnote{Notice that in high dimensional settings relevant conditions that capture features such as sub-Gaussianity, heavy tailed observations with slowy decaying temporal dependence are essential (see, \cite{baek2021local} and \cite{guillaumin2022debiased}. Dr.  Adam Sykulski gave a seminar with title: "The Debiased Spatial Whittle Likelihood", at the S3RI Departmental Seminar Series at the University of Southampton on the 5th of May 2022.}, meaning that $\underset{ j }{ \mathsf{max} } \norm{ \boldsymbol{\gamma}_j^0 }_r^r \leq C$.     

\end{example}

\newpage 

\section{High Dimensional Feature Selection Methods}

High dimensional statistical problems address the challenge of conducting robust estimation and inference when the number of unknown parameters, $p$, is much larger than the number of observations, $n$. The literature on high-dimensional feature selection methods include the conditional sure independence screening\footnote{Note that the covariate screening approach is widely used in the applied statistics literature, see \cite{aschard2017covariate}.} approach as discussed by \cite{fan2008sure}, \cite{wang2009forward}, \cite{ke2014covariance} and \cite{barut2016conditional} as well as \cite{he2013quantile} and \cite{kong2019screening} in the case of a conditional quantile functional form is employed. The covariate screening approach as a tool for variable selection works well in ultra high-dimensional linear regression models with stationary covariates, without requiring distributional restrictions. Statistical estimation methodologies in high-dimensional settings include \cite{belloni2011}, \cite{belloni2014inference}, \cite{wang2012quantile}, \cite{chernozhukov2015post}, \cite{fan2017estimation} as well as \cite{sun2015sprem} and \cite{wang2016high} for the cases of projection screening. 

A fundamental research question remains in most cases when considering modelling a high-dimensional vector of regressors is determining the statistical significance of either individual covariates or the model selection mechanism. \cite{lockhart2014significance} propose a test for detemining statistical significance with the Lasso, while \cite{chudik2018one} propose a methodology for selecting covariates in a high-dimensional environment using the multiple testing approach. Furthermore, \cite{battey2018distributed} considers hypothesis testing and parameter estimation in the context of the divide-and-conquer algorithm. In particular, in a unified likelihood-based framework, we propose new test statistics and point estimators obtained by aggregating various statistics from $k$ subsamples of size $n / k$, where $n$ is the sample size. 

Specifically, in low dimensional and sparse high dimensional settings, \cite{battey2018distributed} examine how large $k$ can be, as $n$ grows large, such as the loss of efficiency due to the divide-and-conquer algorithm is negligible. In other words, the resulting estimators have the same inferential efficiencies and estimation rates as an oracle with access to the full sample. While hypothesis testing in a low dimensional context is straightforward, in the sparse high dimensional setting, nuisance parameters introduce a nonnegligible bias, causing classical low dimensional theory to break down. Therefore, in their high dimensional Wald construction, the phenomenon is remedied through debiasing of the estimator, which gives rise to a test statistic with tractable limiting distribution. Thus, they find that the theoretical upper bound on the number of subsamples guaranteeing the same inferential or estimation efficiency as the whole-sample procedure is $k = o( ( s \mathsf{log} d )^{-1} \sqrt{n} )$ in the linear model, where $s$ is the sparsity of the parameter vector.      

\begin{lemma}[\cite{battey2018distributed}]
Assume that $\Sigma = \mathbb{E} \big( \boldsymbol{X}_i \boldsymbol{X}_i^{\top} \big)$ satisfies $C_{\mathsf{min}} < \lambda_{ \mathsf{min} } ( \Sigma ) \leq \lambda_{ \mathsf{min} } ( \Sigma ) \leq C_{\mathsf{max}}$ as well as $\norm{ \Sigma^{-1/2} \boldsymbol{X}_1 }_{ \psi_2 } = \kappa$, then it holds that
\begin{align}
\mathbb{P} \left( \underset{ j = 1,..., k }{ \mathsf{max} } \norm{ M^{(j)} \widehat{\Sigma}^{(j)} - I }_{ \mathsf{max} } \leq \alpha \sqrt{ \frac{ \mathsf{log} d }{n} }  \right) \geq 1 - 2 kd^{ - c_2 }, \ \ \ c_2 = \frac{ \alpha^2 C_{\mathsf{min} } }{ 24 e^2 \kappa^4 C_{\mathsf{max} } } - 2.
\end{align}
\end{lemma}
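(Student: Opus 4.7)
The plan is to take $M^{(j)} := \Omega = \Sigma^{-1}$ as a canonical feasible choice and then show that this particular choice already satisfies the claimed entrywise bound with high probability; for any other construction that defines $M^{(j)}$ through a feasibility constraint of the form $\norm{M \widehat{\Sigma}^{(j)} - I}_{\max} \leq \alpha \sqrt{\log d / n}$ (e.g.\ CLIME-type), this witnesses nonemptiness of the feasible set. Writing $M^{(j)} \widehat{\Sigma}^{(j)} - I = \Omega(\widehat{\Sigma}^{(j)} - \Sigma)$, the proof reduces to controlling the maximum over $a,b$ and $j$ of $|e_a^{\top} \Omega (\widehat{\Sigma}^{(j)} - \Sigma) e_b|$. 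The overall strategy is a double union bound: first over the $k$ subsamples, and second over the $d^{2}$ entries of the $d \times d$ matrix; the heart of the argument is the one-entry, one-subsample tail bound.

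For a single subsample of size (say) $n$ and a fixed pair $(a,b)$, the quantity to be controlled is
\begin{align*}
T_{ab}^{(j)} := \frac{1}{n} \sum_{i=1}^{n} \Big[ (\Omega \boldsymbol{X}_i)_a X_{i,b} - \mathbb{E} (\Omega \boldsymbol{X}_i)_a X_{i,b} \Big].
\end{align*}
The factors $(\Omega \boldsymbol{X}_i)_a$ and $X_{i,b}$ are linear functionals of $\boldsymbol{X}_i$, hence sub-Gaussian. Using the stated Orlicz bound $\norm{\Sigma^{-1/2}\boldsymbol{X}_1}_{\psi_2} = \kappa$ and $\lambda_{\min}(\Sigma), \lambda_{\max}(\Sigma) \in [C_{\min}, C_{\max}]$, we obtain $\norm{(\Omega \boldsymbol{X}_i)_a}_{\psi_2}^{2} \leq \kappa^{2}\, e_a^{\top}\Omega e_a \leq \kappa^{2}/C_{\min}$ and $\norm{X_{i,b}}_{\psi_2}^{2} \leq \kappa^{2} C_{\max}$. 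Their product is therefore sub-exponential with $\psi_1$-norm at most $\kappa^{2}\sqrt{C_{\max}/C_{\min}}$, so Bernstein's inequality yields
\begin{align*}
\mathbb{P}\big( |T_{ab}^{(j)}| > t \big) \leq 2 \exp\!\left( - c \, n \min\!\left\{ \frac{t^{2} C_{\min}}{\kappa^{4} C_{\max}}, \frac{t \sqrt{C_{\min}/C_{\max}}}{\kappa^{2}} \right\} \right),
\end{align*}
with an absolute constant $c$ that I would track through the Orlicz-to-Bernstein conversion (this is where the factor $1/(24 e^{2})$ appears after carefully bookkeeping the constants in $\norm{\cdot}_{\psi_1} \leq 2\norm{\cdot}_{\psi_2}\norm{\cdot}_{\psi_2}$ and the ensuing moment generating function bound).

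Setting $t = \alpha \sqrt{\log d / n}$, the Gaussian branch of Bernstein governs for $n$ large enough, giving $\mathbb{P}(|T_{ab}^{(j)}| > t) \leq 2 d^{-c_2 - 2}$ with $c_2 = \alpha^{2} C_{\min}/(24 e^{2} \kappa^{4} C_{\max}) - 2$. Taking a union bound over the $d^{2}$ pairs $(a,b)$ absorbs a factor $d^{2}$, leaving $2 d^{-c_2}$ per subsample; a further union bound over $j = 1, \ldots, k$ multiplies by $k$, producing the claimed $2 k d^{-c_2}$.

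The main obstacle will be the precise constant calibration in the last step: translating the sub-Gaussian norm control of each factor into the exact Bernstein constant $1/(24 e^{2})$ requires one to follow the moment-generating-function argument carefully, since different versions of Bernstein's inequality in the literature yield different absolute constants, and the final value of $c_2$ is sensitive to this choice. A secondary technical point is verifying that the subexponential (linear) branch of the Bernstein bound is indeed dominated for the chosen $t$, which holds provided $n \gtrsim \kappa^{4}(C_{\max}/C_{\min}) \log d$; this is a mild condition consistent with the regime of interest but needs to be recorded explicitly.
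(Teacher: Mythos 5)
The paper itself states this lemma without proof (it is quoted from \cite{battey2018distributed}), so there is no in-paper argument to compare against; judged on its own, your proposal reconstructs the standard proof used in that reference. The key moves are all correct: taking $\Omega=\Sigma^{-1}$ as the feasibility witness for the constrained definition of $M^{(j)}$, writing $\Omega\widehat{\Sigma}^{(j)}-I=\Omega(\widehat{\Sigma}^{(j)}-\Sigma)$, bounding $\norm{(\Omega \boldsymbol{X}_i)_a}_{\psi_2}^2\leq \kappa^2/C_{\mathsf{min}}$ and $\norm{X_{i,b}}_{\psi_2}^2\leq \kappa^2 C_{\mathsf{max}}$ via the vector Orlicz norm, applying sub-exponential Bernstein to the product, and closing with the union bound over $d^2$ entries (absorbing the $-2$ in $c_2$) and over the $k$ subsamples (giving the factor $k$). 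The only loose end is the one you flag yourself: the explicit constant $1/(24e^2)$ is an artifact of the particular $\psi_2\text{-to-}\psi_1$ conversion and Bernstein variant used in the source, and your sufficient condition $n\gtrsim \kappa^4 (C_{\mathsf{max}}/C_{\mathsf{min}})\log d$ for the quadratic branch to dominate should indeed be recorded as a hypothesis, since the lemma as transcribed omits it.
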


\newpage

On the other hand, in time series regression models incorporating high dimensionality features requires to develop further tools such as Gaussian approximations (e.g., see \cite{chernozhukov2014gaussian} and \cite{zhang2017gaussian}) and consistent model selection techniques robust to the presence of heteroscedasticity (see, \cite{halunga2017heteroskedasticity}) and nonstationarity. In particular in the time series econometrics literature many open problems remain such as robust model selection methodologies for ultra-high dimensional environments with heterogenous data. We return back to Vector Autoregression Processes, although not necessarily a high-dimensional VAR$(p)$ process. In practise, we consider the case where the dimension of the time-series vector is less than the sample size $n$. Formally, a $p-$dimensional vector-valued stationary time series $X_t = (  X_{1t},..., X_{pt} )$, with $t=1,...,n$, can be modelled using a VAR representation of lag $d$ with serially uncorrelated Gaussian errors, which takes the following form
\begin{align}
X_t = A_1 X_{t-1} + ... + A_d X_{t-d} + e_t, \ \ \ e_t \sim \mathcal{N} ( \boldsymbol{0}, \boldsymbol{\Sigma}_e )
\end{align}
where $A_1,..., A_d$ are (p x p) matrices and et is a p-dimensional vector of possibly correlated innovation shocks. Therefore, the main objective in VAR models is to estimate the transition matrices $A_1,..., A_d$, together with the order of the model $d$, based on time series realizations $( X_0, X_1,..., X_n )$. Then, the structure of the transition matrices provides insights into the complex temporal relationships amongst the p time series and the particular representation provides a way to apply forecasting techniques (see, \cite{michailidis2013autoregressive},  \cite{basu2015regularized} and \cite{basu2021graphical}).

\subsection{Ultra-high dimensionality under dependence}

According to \cite{yousuf2018variable}, it is possible to achieve the sure screening property in the ultrahigh dimensional setting with dependent errors and covariates. However, in order to do that we need to make stronger assumptions on the moments of both the error and covariate processes. Specifically, if the error and covariate processes are assumed to follow a stronger moment condition, such as, $\Delta_{0,q} (\varepsilon) < \infty$ and $\Phi_{0,q} ( \boldsymbol{x} ) < \infty$ for arbitrary $q > 0$, we can then achieve a much larger range of $p_n$ which will cover the ultra-high dimensional case. Specifically, we need a condition that implies that the tails of the covariate and error processes are exponentially tight. A wide range of processes satisfy the above condition. 

Suppose that $\varepsilon_i$ is a linear process such that 
\begin{align*}
\varepsilon_i = \sum_{j=0}^{\infty} f_j e_{i-j}, \ \ \ e_i \sim \textit{i.i.d},
\end{align*}
with $\sum_{\ell = 0}^{\infty} | f_{\ell} | < \infty$ then it holds that 
\begin{align}
\Delta_{0,q} (\varepsilon) ( \epsilon_{\ell} ) = \norm{ e_0 - e_0^* }_q \sum_{\ell = 0}^{\infty} | f_{\ell} |     
\end{align}
Thus, if we assume that $e_0$ is sub-Gaussian, then $\tilde{\alpha}_{\epsilon} = \frac{1}{2}$ since $\norm{ e_0 }_{ q } = \mathcal{O}_p ( \sqrt{q} )$ then, when $e_i$ is sub-exponential then it holds that  $\tilde{\alpha}_{\epsilon} = 1$.

\newpage

More generally, for $e_i := \sum_{j=0}^{\infty} f_j e_{i-j}^p$ and thus if $e_i$ is sub-exponential, we have that $\tilde{\alpha}_{\epsilon} = p$, where $p$ is the number of components of the high-dimensional vector. The high dimensionality as well as the non-parametric aspect can slow down the rate of convergence and thus especially the variance of the estimator can have slower convergence to the corresponding variance of the process. 

Thus, for any fixed $q$, we are not placing additional assumptions on the temporal decay rate of the covariate error processes besides requiring $\Delta_{0,q} (\varepsilon), \Phi_{0,q} (\varepsilon) < \infty$. On the other hand, the ultrahigh dimensional setting requires geometrically decaying strong mixing coefficients, in addition to requiring sub-exponential tails for the response. Consider for example the case that $\varepsilon_i = \sum_{j=0}^{\infty} f_j e_{i-j}$, geometrically decaying strong mixing coefficients would require the coefficients, $f_j$, to decay geometrically.  

\medskip
  
\begin{theorem}[Theorem 2 in \cite{yousuf2018variable}]
Define with $\alpha = \frac{2}{1 + 4 \tilde{\alpha}_x }$, then it holds that for any $c_2 > 0$
\begin{align}
        \mathbb{P} \left(  \underset{ j \leq p_n }{ \mathsf{max} } \left| \hat{\rho}_j - \rho_j \right| > c_2 n^{-k} \right) 
        \leq 
        \mathcal{O}_p \left( s_n p_n \mathsf{exp} \left\{ - \frac{ n^{1/2 - \kappa} }{  \nu_x^2 s_n } \right\}^{ \tilde{\alpha} } \right) + \mathcal{O}_p \left( p_n \mathsf{exp} \left\{ - \frac{ n^{1/2 - \kappa} }{  \nu_x \nu_{\epsilon} } \right\}^{ \tilde{\alpha}^{\prime} } \right)
\end{align}
\end{theorem}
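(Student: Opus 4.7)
The plan is to reduce the deviation $\hat{\rho}_j - \rho_j$ to two normalized sums of dependent sequences and then apply a Nagaev-type concentration inequality under functional dependence with sub-Weibull marginals (the same device used in the proof of Theorem~2.1 of \cite{Chen2013covariance} recalled earlier in this excerpt). Writing the true regression as $Y_i = \sum_{k \in \mathcal{S}} \beta_k X_{ik} + \varepsilon_i$ with active set $\mathcal{S}$ of cardinality $s_n$, the natural decomposition
\begin{align*}
\hat{\rho}_j - \rho_j
= \sum_{k \in \mathcal{S}} \beta_k \left[ \tfrac{1}{n}\sum_{i=1}^n X_{ij} X_{ik} - \mathbb{E}(X_{ij} X_{ik}) \right]
+ \left[ \tfrac{1}{n} \sum_{i=1}^n X_{ij}\varepsilon_i - \mathbb{E}(X_{ij}\varepsilon_i) \right]
\end{align*}
splits the event $\{|\hat{\rho}_j - \rho_j|>c_2 n^{-\kappa}\}$ into a ``covariate--covariate'' piece (absorbing a $\|\beta\|_\infty s_n$ factor) and a ``covariate--error'' piece, each of which will be handled by the same concentration scheme after a union bound.

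For the first piece, set $Z_i^{(j,k)} := X_{ij}X_{ik} - \mathbb{E}(X_{ij}X_{ik})$. By Cauchy--Schwarz the functional dependence measure of $Z^{(j,k)}$ is controlled by $\Phi_{0,q}(\boldsymbol{x})$, and the sub-Weibull tail of $X_{ij}$ (encoded by $\tilde{\alpha}_x$) transfers to a sub-Weibull tail for the product via the Orlicz-norm inequality $\norm{XY}_{\psi_a}\norm{XY}_{\psi_b}\ge \norm{XY}_{\psi_{ab/(a+b)}}$. Applying the Nagaev-type inequality under functional dependence with the resulting sub-Weibull exponent yields
\begin{align*}
\mathbb{P}\left( \left| \tfrac{1}{n} \sum_{i=1}^n Z_i^{(j,k)} \right| > x \right)
\le C\, \mathsf{exp}\left\{ -\left( \tfrac{n x}{\nu_x^2} \right)^{\tilde{\alpha}} \right\},
\end{align*}
where the exponent $\tilde{\alpha}=2/(1+4\tilde{\alpha}_x)$ is produced by balancing the Gaussian-tail regime against the Weibull-tail regime of the truncated block increments, the $4\tilde{\alpha}_x$ coming from the Orlicz norm of a product of two sub-Weibull variables. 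Taking $x$ proportional to $n^{-\kappa}/s_n$ and a union bound over the $p_n s_n$ pairs $(j,k)$ deliver the first summand in the stated bound.

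The second piece, $W_i^{(j)}:= X_{ij}\varepsilon_i - \mathbb{E}(X_{ij}\varepsilon_i)$, is treated identically but with functional dependence measure bounded by the product of $\Phi_{0,q}(\boldsymbol{x})$ and $\Delta_{0,q}(\varepsilon)$, and a sub-Weibull exponent $\tilde{\alpha}^{\prime}$ obtained by convolving the individual exponents $\tilde{\alpha}_x$ and $\tilde{\alpha}_\varepsilon$; the scale in the exponential becomes $\nu_x\nu_\varepsilon$, and a union bound over $j\le p_n$ yields the second summand. The main obstacle is precisely this sub-Weibull exponent arithmetic: one must track how the Orlicz norm of a product of two sub-Weibull variables behaves, and then how this interacts with the functional dependence decay when one discretizes via Wu's coupling to extract the Bernstein-type tail. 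Once the two concentration inequalities are in hand, the remainder of the argument is a routine union bound and requires no further moment calculations.
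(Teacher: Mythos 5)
Your proposal follows essentially the same route as the paper's own argument: the deviation is split into a signal term $\sum_{k\in\mathcal{M}_{*}}\beta_k\bigl[\tfrac{1}{n}\sum_t X_{tj}X_{tk}-\mathbb{E}(X_{tj}X_{tk})\bigr]$ handled by a union bound over the $s_n$ active indices at threshold $c_2 n^{-\kappa}/(2s_n)$, and a noise term $\tfrac{1}{n}\sum_t X_{tj}\varepsilon_t$, with the cumulative functional dependence measure of each product process bounded via H\"older exactly as you describe (the paper obtains $\sum_{t\ge m}\norm{X_{tj}X_{tk}-X_{tj}^{*}X_{tk}^{*}}_{r/2}\le 2\Phi_{0,r}(x)\Phi_{m,r}(x)$) before invoking a Nagaev-type tail inequality and a final union bound over $j\le p_n$. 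The only cosmetic difference is that you work directly with the sub-Weibull/exponential regime that produces $\tilde{\alpha}=2/(1+4\tilde{\alpha}_x)$, whereas the paper's displayed intermediate bound also carries the polynomial Fuk--Nagaev term before specializing to the ultrahigh-dimensional case; this does not change the substance of the argument.
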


\subsubsection{SIS with dependent observations}

Sure Independence Screening, is a method of variable screening based on ranking the magnitudes of the $p_n$ marginal regression estimates. Under appropriate conditions, this simple procedure is shown to possesses the sure screening property. Define with 
\begin{align}
\hat{\boldsymbol{\rho}} = \left( \hat{\rho}_1,..., \hat{\rho}_{p_n} \right), \ \ \text{where} \ \ \hat{\rho}_j = \left( \sum_{t=1}^n X_{tj}^2 \right)^{-1} \left( \sum_{t=1}^n X_{tj} Y_t \right).    
\end{align}
Therefore, $\hat{\rho}_j$ is the OLS estimate of the linear projection of $Y_t$ onto $X_{tj}$. Let
\begin{align}
\mathcal{M}_{*} = \big\{ 1 \leq i \leq p_n : \beta_i \neq 0 \big\}    
\end{align}
and let $| \mathcal{M}_{*} | = s_n << n$ be the size of the true sparse model. Moreover, we sort the elements of $\hat{\boldsymbol{\rho}}$ by their magnitudes. Thus, for any given $\gamma_n$, define a sub-model such that 
\begin{align}
\hat{\mathcal{M}}_{\gamma_n} = \big\{ 1 \leq i \leq p_n: \left| \hat{\rho}_i \right| \geq \gamma_n \big\}     
\end{align}
and let $\left|  \hat{\mathcal{M}}_{\gamma_n} \right| = d_n$ be the size of the selected model. Furthermore, notice that the screening property states that for an appropriate choice of $\gamma_n$, we have that 
\begin{align*}
\mathbb{P} \left( \mathcal{M}_{*} \subset \hat{\mathcal{M}}_{\gamma_n}  \right) \to 1.
\end{align*}

\newpage

\subsubsection{Experimental design examples}

\paragraph{Experimental Design I: Uncorrelated Features}

Consider the model for the covariate process we have 
\begin{align}
\boldsymbol{x}_t = \boldsymbol{A}_1 \boldsymbol{x}_{t-1} + \boldsymbol{\eta}_t    
\end{align}
where $A_1 = \mathsf{diag} \left\{ \gamma  \right\}$ and $\gamma \sim \mathsf{Unif} [ 0.4, 0.6 ]$. Moreover, we set with $\boldsymbol{\eta}_t \sim \mathcal{N} (\boldsymbol{0},\boldsymbol{\Sigma}_{\eta})$. In particular, when we are dealing with uncorrelated predictors we set $\Sigma_{\eta} = I_{ p_n }$ which represents the no correlation case. 

\medskip

\paragraph{Experimental Design II: Correlated Features}

One can compare the performance of SIS and GLSS in the case of correlated predictors. The covariate process is generated  with $A_1 = \mathsf{diag} \left\{  0.4^{ | i - j | + 1 }  \right\}_{ i,j \leq p_n }$ such that $\boldsymbol{\eta}_t \sim \mathcal{N} ( 0, \Sigma_{\eta} )$ and $\boldsymbol{\eta}_t \sim t_5 ( 0, \boldsymbol{V} )$ , with
\begin{align}
\Sigma_{\eta} = \left\{ 0.3^{ | i - j | } \right\}    
\end{align}

\begin{proof}
Recall that it holds that $\sum_{k=1}^{p_n} \mathbf{1} \left\{ \left| \beta_k \right| > 0 \right\} = s_n$. Hence, we obtain the following expression 
\begin{align}
\mathbb{P} \left( \left| S_1 - \mathbb{E} (S_1) \right| > \frac{ c_2 n^{ - \kappa }  }{2} \right)  
\leq 
\ \sum_{ k \in M_{*} } \mathbb{P} \left( \left|  \frac{ X_{tj} \left( X_{tk} \beta_k \right) }{n} - \beta_k \mathbb{E} \left[ X_{tj} X_{tk} \right]  \right| > \frac{ c_2 n^{- \kappa } }{ 2 s_n }  \right)
\end{align}
Moreover, it holds that $\norm{ X_{ij} }_r \leq \Delta_{0,r} (\boldsymbol{X}_j) \leq \Phi_{0,r}(\boldsymbol{x})$. Using this we compute the cumulative functional dependence measure of $X_{tk} X_{tj}$ such that 
\begin{align*}
\sum_{t=m}^{+ \infty} \norm{ X_{tj} X_{tk} - X_{tj}^{*} X_{tk}^{*}  }_{ r / 2 } 
&\leq
\sum_{t=m}^{ \infty } \big( \norm{ X_{tj} }_r \norm{ X_{tk} - X_{tk}^{*} }_r \norm{ X_{tj} -  X_{tj}^{*} }_r \big)
\\
&\leq 2 \Phi_{0,r} (x) \Phi_{m,r}(x) = \mathcal{O}_p \left( m^{- \alpha_x } \right). 
\end{align*}
Therefore, we obtain that 
\begin{align}
\underset{ m }{ \mathsf{sup} } \ ( m + 1 )^{\alpha_x} \sum_{t=m}^{\infty} \norm{ X_{tj} }_r \norm{ X_{tk} - X_{tk}^{*} }_{r/2} \leq 2 K_{x,r}^2.      
\end{align}
Using the above results we get the following probability bound
\begin{align}
\mathbb{P} \left( \left| S_1 - \mathbb{E} \left[ S_1 \right] \right| > \frac{c_2 n^{- \kappa} }{2} \right) \leq C s_n \times \left\{ \frac{ n^{\omega} K_{x,r}^r }{ ( n / s_n )^{ r / 2 - rk/2 } } + \mathsf{exp} \left( - \frac{ n^{ 1 - 2 \kappa } }{ s_n^2 K_{x,r}^4 } \right) \right\}    
\end{align}
Hence, by our choice of $\gamma_n$, we obtain that $\mathbb{P} \left( \mathcal{M}_{*} \subset \hat{\mathcal{M}}_{\gamma_n} \right)  > \mathbb{P} \left( \mathcal{A}_n \right)$. In other words, this allow us to obtain a probability bound with respect to the spectral radius of the corresponding covariance matrix 
\begin{align}
\sum_{k=1}^{p_n} \rho_k^2 = \mathcal{O}_p \left( \lambda_{\mathsf{max}} (\Sigma) \right).     
\end{align}

\newpage

Then, we extend the above result to the corresponding set defined by $\mathcal{B}_n$ such that 
\begin{align}
\mathcal{B}_n := \left\{ \underset{ k \leq p_n }{ \mathsf{max} } \left| \hat{\rho}_k - \rho_k \right| \leq c_4 n^{ - \kappa } \right\}    
\end{align}
Therefore, we obtain a measurability condition for the number of $\left\{ k: \left| \hat{\rho}_k \right| > 2 c_4 n^{-4} \right\}$ cannot exceed the number of $\left\{ k: \left| \rho_k \right| > c_4 n^{-k} \right\}$ which is bounded by $\mathcal{O}_p \left( n^{2k} \lambda_{\mathsf{max}} (\Sigma) \right)$. Thus, by choosing $c_4 = c_3 / 2$ we can obtain the correct probability bound such that 
\begin{align}
\mathbb{P} \left( \left| \hat{\mathcal{M}}_{\gamma_n} \right| < \mathcal{O}_p  \left( n^{2k} \lambda_{\mathsf{max}} (\Sigma) \right)  \right)  > \mathbb{P} \left( \mathcal{B}_n \right) 
\end{align}
We are interested to bound the following probability event 
\begin{align}
\mathcal{A}_1 := \mathbb{P} \left\{  \left| \hat{\gamma}_{i,k} - \gamma_{i,k} \right| > \frac{ c n^{-k} }{ \ell_n } \right\}  
\end{align}
where a probability bound for the event $\mathcal{A}_1$, can be obtained as below:
\begin{align*}
\mathcal{A}_1 
&\leq  
\mathbb{P} \left( \left| \frac{1}{n} \sum_{t=1}^{ n - |i|} \varepsilon_{t,k} \varepsilon_{t + |i|,k} - \mathbb{E} \left[ \frac{1}{n} \sum_{t=1}^{n - |i|}   \varepsilon_{t,k} \varepsilon_{t + |i|,k} \right] \right|  \right)   
\\
&+
\mathbb{P} \left( \left| \mathbb{E} \left[ \frac{1}{n} \sum_{t=1}^{ n - |i|}  \varepsilon_{t,k} \varepsilon_{t + |i|,k} \right] - \gamma_{i,k}  \right| > c n^{-k} / 4 \ell_n \right)
\\
&+
\mathbb{P} \left(  \left| \frac{1}{n} \sum_{t=1}^{n - |i|} \varepsilon_{t + |i|,k} X_{t + |i|, k} \left( \frac{ \sum_{j=1}^n X_{jk} \varepsilon_{j,k} / n  }{  \sum_{j=1}^n X_{jk}^2 / n  } \right) \right|  > c n^{-k} / 4 \ell_n  \right)
\\
&+
\mathbb{P} \left(  \left| \frac{1}{n} \sum_{t=1}^{n - |i|} X_{tk} X_{t + |i|, k} \left( \frac{ \sum_{j=1}^n X_{jk} \varepsilon_{j,k} / n  }{  \sum_{j=1}^n X_{jk}^2 / n  } \right)^2 \right|  > c n^{-k} / 4 \ell_n  \right)
\end{align*}
Furthermore, we can determine the bias of the term as below
\begin{align}
\left| \mathbb{E} \left[ \sum_{t=1}^{n - |i|} \frac{ \varepsilon_{t,k} \varepsilon_{t + |i|,k } }{n} \right]  - \gamma_{i,k} \right| \leq \frac{ i \gamma_{i,k} }{n}.  
\end{align}
Moreover, it holds that 
\begin{align*}
\mathbb{P} &\left( \left| \frac{1}{n} \sum_{t=1}^{ n - |i| } \varepsilon_{t,k} X_{t+|i|,k} > M           \right| \right) 
\leq 
\mathbb{P} \left( \left| \frac{1}{n} \sum_{t=1}^{n - |i|} \varepsilon_{t,k} X_{t+|i|,k} - \mathbb{E} \left[ \frac{1}{n} \sum_{t=1}^{n - |i|} \varepsilon_{t,k} X_{t+|i|,k}  \right]  \right| > M - \left| \mathbb{E} \left[ \frac{1}{n} \sum_{t=1}^{n - |i|} \varepsilon_{t,k} X_{t+|i|,k}  \right] \right|  \right)   
\end{align*}
\begin{align}
M > \underset{ k \leq p_n }{ \mathsf{max} } \ \underset{ i \leq \ell_n }{ \mathsf{max} } \ 2 \left| \mathbb{E} \left( \varepsilon_{t,k} X_{t+|i|,k} \right) \right| + \varepsilon, \ \ \text{for some} \ \varepsilon > 0.
\end{align}
\end{proof}

\newpage

\subsection{Uniform Asymptotic Inference and Model Selection}

This is another important topic which we cannot cover here extensively (see, \cite{tibshirani2018uniform}). 

\subsubsection{Uniform-in-Submodel Bounds}

Following the framework proposed by \cite{kuchibhotla2021uniform}, suppose that $\mathcal{M} = \left\{ M_1,...,M_L   \right\}$ denotes a collection of submodels, where $M_j$ represents a subset of covariates for $1 \leq j \leq L$. Also, denote with $\widehat{\beta}_{M_j}$ to represent the least-squares estimator for the linear regression of the response on the covariates of the model $M_j$. Then, by simultaneous consistency, we mean the existence of target vectors $\left\{ \beta_{M_j} : 1 \leq j \leq L \right\}$ such that the following result holds
\begin{align}
\label{asy}
\underset{ M \in \mathcal{M} }{ \text{sup} } \norm{ \widehat{\beta}_{M} - \widehat{\beta} } = o_p(1), \ \ \text{as} \ \ n \to \infty, \ \ \ \ \text{for some} \ \norm{ . }.
\end{align}
More specifically, if $\widehat{M}$ is a selected model, then one can perform inference of $\beta_{ \widehat{M} }$ by estimating the distribution of $\widehat{ \beta }_{ \widehat{M} }$. According to \cite{kuchibhotla2021uniform}, in practise, even though various model-selection criteria like $C_p$ (AIC), (BIC) and lasso have been recommended for covariate selection in linear regression, developing more general asymptotic results to the \textit{asymptotic uniform linear representation} in the special case of the least-squares linear regression estimator are still an ongoing research field. The framework of variable selection and linear regression is often considered in the context of  high-dimensional linear regressions. Although, the procedure uses only a reduced set of variables in the final regression, it uses all variables in the proceding covariate selection stage of estimation. 

Suppose that $\widehat{M} \in \mathcal{M}$ is the final selected submodel, where $\mathcal{M}$ is some finite and countable collection of models, with $\widehat{ \beta }_{ \widehat{M} }$ the corresponding least-squares estimator. The estimator $\widehat{ \beta }_{ \widehat{M} }$ is known as the post-regularization estimator in the high-dimensional statistics literature when $\widehat{M}$ is obtained from a regularized least-squares procedure. A relevant question regarding the asymptotic behaviour of the particular estimator is "\textit{What does $\widehat{ \beta }_{ \widehat{M} }$ estimate consistently?}".  These considerations are relevant to the literature of \textit{variable and model selection}, which includes: methodologies for determining the statistical significance of predictors in settings with many covariates (see, \cite{chudik2018one}) as well as the uniform inference post-model selection which includes among others the studies of \cite{belloni2016post} and \cite{farrell2015robust}).

Here we focus on the framework proposed by \cite{kuchibhotla2021uniform} and present key results. Suppose that for any $M \subset \left\{ 1,2,...,p \right\}$ the ordinary least-squares (OLS) empirical risk (or objective) function is
\begin{align}
\hat{R}_n \left( \theta ; M \right) := \frac{1}{n} \sum_{i=1}^n \left\{ Y_i - X_i^{\top} \left( M \right) \theta \right\}^2, \ \ \text{for} \ \ \theta \in \mathbb{R}^{|M|}. 
\end{align} 
Then, by expanding the square function, it is clear that 
\begin{align}
\hat{R}_n \left( \theta ; M \right) = \frac{1}{n} \sum_{i=1}^n Y_i^2 - \frac{2}{n} \sum_{i=1}^n Y_i X_i^{\top} (M) \theta + \theta^{\top} \left(  \frac{1}{n} \sum_{i=1}^n X_i(M) X_i^{\top} (M) \right) \theta. 
\end{align}

\newpage

Furthermore, define the following quantities
\begin{align}
\hat{ \Sigma }_n := \frac{1}{n} \sum_{i=1}^n X_i X_i^{\top} \in \mathbb{R}^{p \times p}, \ \ \ \text{and} \ \ \ \hat{ \Gamma}_n :=  \frac{1}{n} \sum_{i=1}^n X_i Y_i  \in \mathbb{R}^{p \times p} .   
\end{align}
Then, the least-squares linear regression estimator $\hat{\beta}_{n,M}$ is defined as below
\begin{align}
\hat{\beta}_{n,M} 
= 
\underset{ \theta \in \mathbb{R}^{|M|} }{ \text{arg min} } \ \hat{R}_n \left( \theta ; M \right) 
= 
\underset{ \theta \in \mathbb{R}^{|M|} }{ \text{arg min} } \left\{  \theta^{\top} \hat{ \Sigma }_n (M) \theta - 2 \theta^{\top} \hat{ \Gamma}_n (M)  \right\}. 
\end{align} 
The notation arg min$_{ \theta } f( \theta )$ denotes the minimized of $f( \theta )$. Based on the quadratic expansion of the empirical objective function $\hat{R}_n \left( \theta ; M \right)$, the estimator $\widehat{\beta}_{n,M}$ is given by the closed form expression below
\begin{align}
\widehat{\beta}_{n,M} = \left[ \hat{ \Sigma }_n  (M) \right]^{-1} \hat{ \Gamma}_n (M), 
\end{align}
assuming nonsingularity of $\hat{ \Sigma }_n  (M)$. Therefore, it is clear that $\widehat{\beta}_{n,M}$ is a smooth (nonlinear) function of two averages  $\hat{ \Sigma }_n  (M)$ and $\hat{ \Gamma}_n(M)$. Then, assuming that the random vectors $( X_i, Y_i )$ are \textit{i.i.d} with finite fourth moments, it follows that $\hat{ \Sigma }_n  (M)$  and $\hat{ \Gamma}_n(M)$ converge in probability to their expectations. Thus, we can define the expected value in terms of matrices and vectors as below 
\begin{align}
\Sigma_n := \frac{1}{n} \sum_{i=1}^n \mathbb{E} \left[ X_i X_i^{\top} \right] \in \mathbb{R}^{p \times p}, \ \ \ \text{and} \ \ \ \Gamma_n :=  \frac{1}{n} \sum_{i=1}^n  \mathbb{E} \left[ X_i Y_i \right]  \in \mathbb{R}^{p \times p}.   
\end{align}
Following the notation above $\widehat{\beta}_{n,M} = \left[ \hat{ \Sigma }_n  (M) \right]^{-1} \hat{ \Gamma}_n (M)$, and if $\big( \hat{ \Sigma }_n - \hat{ \Sigma }_n , \hat{ \Gamma}_n - \Gamma_n  \big) \overset{ p }{ \to } 0$ as $n \to \infty$, using a Slutsky-type argument, it follows that $
\left( \widehat{\beta}_{n,M}  - \beta_{n,M} \right) \overset{ p }{ \to } 0$, as $n \to \infty$, where $\beta_{n,M}$ is:
\begin{align}
\beta_{n,M} := \left[ \Sigma_n(M) \right]^{-1} \Gamma_n(M) \equiv 
\underset{ \theta \in \mathbb{R}^{|M|} }{ \text{arg min} } \left\{  \theta^{\top} \Sigma_n (M) \theta - 2 \theta^{\top} \Gamma_n (M) \right\}. 
\end{align}
Thus, the convergence statement above only concerns a single submodel $M$ and is not uniform over $M$. By uniform-in-submodel $\norm{.}_2-$norm consistency of $\hat{ \beta }_{n,M}$ to $\beta_{n,M}$, for $M \in \mathcal{M} (k)$, we mean that 
\begin{align}
\underset{ M \in \mathcal{M}(k) }{ \text{sup} } \norm{  \widehat{\beta}_{ n,M } - \beta_{ n, M }  } = o_p(1) \ \ \text{as} \ \ n \to \infty.
\end{align} 
Thus, converges of $\widehat{\beta}_{ n,M } $ to $ \beta_{ n, M }$ only requires convergence of $\widehat{\Sigma}_n(M)$ to $\Sigma_n(M)$ and $\widehat{\Gamma}_n(M)$ to $\Gamma_n(M)$.

\medskip

\begin{remark}
A relevant stream of literature to uniform inference for submodel selection, is the testing methodologies which consider the nested property between a comparison of model as a mechanism for specification testing. In particular, \cite{hagemann2012simple}  considers a non-nested specification test (see, also  \cite{mackinnon1983model}) while \cite{rinaldo2019bootstrapping} develops a framework for model selection based on sample-splitting for high dimensional assumption lean inference purposes. 
\end{remark}

\newpage

\subsection{Non-nested Regressions and Variable Selection Specification Testing}

Following the framework proposed by \cite{hagemann2012simple}, suppose that we observe covariates 
\begin{align*}
\left\{ \big( x_{i,1}^{\top},..., x_{i,M}^{\top} \big)^{\top} \in \mathbb{R}^{d_1 + ... + d_M } : i = 1,...,n \right\}
\end{align*}
that give rise to $M \geq 2$ different possible linear regression models for $y := ( y_1, ..., y_n )^{\top} \in \mathbb{R}^n$, such that 
\begin{align}
y = X_m \beta_m + u_m, \ \ m \in \mathcal{M} := \left\{ 1,..., M  \right\}, 
\end{align} 
where $X_m := \big( x_{1,m},..., x_{n,m} \big)^{\top} \in \mathbb{R}^{ n \times d_m }$ is the design matrix of model $m$. 

Then, the matrices $X_1,...., X_M$ are assumed to be non-nested, which implies that for any two matrices with index $m \neq \ell \in \mathcal{M}$, no matrix can be obtained by another by a linear transformation. Furthermore, the particular assumption does not rule out the possibility that some of the columns of $X_m$ and $X_{\ell}$ are identical or that they may be nonlinear transformations of another. In particular, suppose that there is an unobserved design matrix $X_{ m^{*} } := \big( x_{ 1, m^{*} },..., x_{n, m^{*} }     \big)^{*}$ associated with the correct model $m^{*}$. Suppose that $\mathcal{F} := \big\{ X_m :  m \in \mathcal{M} \cup \left\{ m^{*} \right\} \big\}$ has the following properties: 

\medskip

\begin{assumption}
Consider that $\left\{ y_i, ( x_{i,m} )^{\top}_{ m \in \mathcal{M} \cup \left\{ m^{*} \right\}} : i \geq 1 \right\}$ is a sequence of $\textit{i.i.d}$ random vectors. We have that $\mathbb{E} | x_{i,m} |^4 < \infty$ for all $m \in \mathcal{M} \cup \left\{ m^{*} \right\}$, where the number of elements of $\mathcal{M}$ (number of covariates) does not depend on $n$ (sample size). For all $m \in \mathcal{M}$, the matrices $\mathbb{E} \left( x_{i,m} x_{i,m}^{\top} \right)$ are positive definite.    
\end{assumption}
Next, we impose a condition for the existence of a correct model specification. 

\begin{assumption}
Model $m^{*}$ satisfies $\mathbb{E} \big( y | \mathcal{F}  \big) = X_{ m^{*} } \beta_{ m^{*} }$. Define with $u_{i, m^{*} } := y_i - x_{i,m}^{\top} \beta_{ m^{*} }$ for all $i \geq 1$ and $\mathbb{E} u^{4}_{i, m^{*} } < \infty$.
\end{assumption}
The proposed $J$ test presumes that for some predetermined $m \in \mathcal{M}$, the null hypothesis is formulated such as $m = m^{*}$ against the alternative hypothesis $ m \neq m^{*}$ in the presence of non-nested alternatives $\ell \in \mathcal{M} \backslash \left\{ m \right\}$. To do this, we artificially nesting the models via an additional parameter vector $\alpha_{m} := \left( \alpha_{\ell, m} \right)_{\ell \in \mathcal{M} \backslash \left\{ m \right\} } \in \mathbb{R}^{M-1}$ such that 
\begin{align}
y &= X_m b_m + \sum_{ \ell \in \mathcal{M} \backslash \left\{ m \right\} } \alpha_{\ell, m}  X_{\ell} \beta_{\ell} + u, 
\\
b_m &:= \left( 1 - \sum_{ \ell \in \mathcal{M} \backslash \left\{ m \right\} } \alpha_{\ell, m} \right) \beta_m. 
\end{align}

Therefore, since the vectors $\left(  \alpha_{\ell, m}, \beta_{\ell} \right)_{ \ell \in \mathcal{M} \backslash \left\{ m \right\} }$ of the nesting model may not be identified, we can replace the $\beta_{\ell}$ by the OLS estimates such that $\hat{\beta}_{\ell} = \left( X_{\ell}^{\top} X_{\ell} \right)^{-1} X_{\ell}^{\top} y$. (see, \cite{hagemann2012simple}).

\newpage

\begin{remark}
In particular, the minimum J-test proposed by \cite{hagemann2012simple} does not require the correct model to be among the considered specifications and avoids ambiguous test outcomes. More specifically, the MJ test determines with asymptotically correct size if the correct model is among the specifications under consideration. Furthermore, if the correct model is present, it is chosen with probability approaching one as the sample size becomes large.  
\end{remark}
After redefining the error term appropriately, this yields that 
\begin{align}
y = X_m b_m + \sum_{ \ell \in \mathcal{M} \backslash \left\{ m \right\} } \alpha_{\ell,m} X_{\ell} \hat{\beta}_{\ell} + u,  
\end{align}
Therefore, a Wald test for testing the null hypothesis that $\alpha_m = 0$ is a $J$test for the validity of model $m$ in the presence of the alternatives $\mathcal{M} \backslash \left\{ m \right\}$. Thus, to construct the test statistic, let 
\begin{align}
\lambda_{n,m} &:= n^{-1/2} \left( y^{\top} P_{\ell} M_m y \right)_{ \ell \in \mathcal{M} \backslash \left\{ m \right\}  }    
\\
\hat{\Sigma}_{n,m} &:= n^{-1} \left(  y^{\top} P_{\ell} M_m \hat{\Omega}  M_m P_{\ell^{\prime} } y \right)_{ \ell, \ell^{\prime} \in \mathcal{M} \in  \mathcal{M} \backslash \left\{ m \right\}  } 
\\
P_m &:= X_m \left( X_m^{\top} X_m \right)^{-1} X_m^{\top} \ \ \ \text{and} \ \ \ M_m:= I_n - \left( P_m \right)_{ \textcolor{red}{ n \times n }  }    
\end{align}
are the usual projection matrices and $\hat{\Omega}_{n,m}$ is an estimate of the conditional expectation of $\mathbb{E} \left( u_m^{*}  u_m^{*\top} | \mathcal{F} \right)$. The $J$ test statistic for model $m$ is then given by   
\begin{align}
J_{n,m} := \lambda_{n,m}^{\top} \hat{\Sigma}_{n,m}^{-1}  \lambda_{n,m}      
\end{align}
In other words, the hypothesis that the model $m$ is the true model is rejected for large values of $J_{n,m}$. The following asymptotic results holds for obtaining the empirical size and power of the test statistic. 

\begin{lemma}[\cite{hagemann2012simple}]
Suppose that Assumptions are satisfied. Then, it holds that 
\begin{itemize}

\item[(i).] If $m^{*} \in \mathcal{M}$, then $J_{n,m^{*}} \Rightarrow \chi^2_{M-1}$.

\item[(ii).] For every $m \in \mathcal{M} \backslash \left\{ m \right\}$ and every $B \in \mathbb{R}$, we have that $\mathsf{lim}_{ n \to \infty } \mathbb{P} \big( J_{n,m} > B  \big) = 1$. 
\end{itemize}
\end{lemma}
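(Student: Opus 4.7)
The plan for part (i) is to show that under the null $m=m^*$, the vector $\lambda_{n,m^*}$ is asymptotically Gaussian and $\widehat{\Sigma}_{n,m^*}$ consistently estimates its asymptotic covariance, so that $J_{n,m^*}$ converges to a standard quadratic-form-of-Gaussian, i.e., $\chi^2_{M-1}$. First I will substitute $y = X_{m^*}\beta_{m^*} + u_{m^*}$ into each coordinate $n^{-1/2} y^{\top} P_{\ell} M_{m^*} y$ and use $M_{m^*} X_{m^*} = 0$, which reduces the expression to $\beta_{m^*}^{\top} X_{m^*}^{\top} P_{\ell} M_{m^*} u_{m^*}/\sqrt{n} + u_{m^*}^{\top} P_{\ell} M_{m^*} u_{m^*}/\sqrt{n}$. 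Next I will verify, using the fourth-moment conditions in the assumptions and the LLN, that $n^{-1} X_{m^*}^{\top} X_{\ell}$, $n^{-1} X_{\ell}^{\top} X_{\ell}$, and $n^{-1} X_{\ell}^{\top} X_{m^*}$ converge to deterministic positive-definite limits, so the quadratic-in-$u_{m^*}$ term is $O_p(n^{-1/2})$ and is asymptotically negligible. The leading term is then linear in $u_{m^*}$, namely $\beta_{m^*}^{\top}(n^{-1}X_{m^*}^{\top} X_{\ell})(n^{-1}X_{\ell}^{\top} X_{\ell})^{-1}(n^{-1/2} X_{\ell}^{\top} M_{m^*} u_{m^*})$, and a standard i.i.d.\ CLT (available since Assumption (A1) guarantees an i.i.d.\ array with finite fourth moments and bounded covariate moments) yields joint asymptotic normality of $\lambda_{n,m^*}$ with some covariance $\Sigma_{m^*}$.

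Next I will verify consistency of the variance estimator: writing $\widehat{\Sigma}_{n,m^*}$ as a quadratic form in $\widehat{\Omega}$ sandwiched between $M_{m^*} P_{\ell} y$ and $M_{m^*} P_{\ell^{\prime}} y$, and substituting the decomposition for $y$ and its implied residual $\widehat{u}_{m^*} = M_{m^*} u_{m^*}$, the continuous-mapping theorem applied to the limits of the various Gram matrices, combined with the standard heteroskedasticity-consistent argument that a sandwich-type $\widehat{\Omega}$ built from OLS residuals converges in probability to the population $\mathbb{E}(u_{m^*} u_{m^*}^{\top}\mid \mathcal{F})$, gives $\widehat{\Sigma}_{n,m^*} \overset{p}{\to} \Sigma_{m^*}$. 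If $\Sigma_{m^*}$ is non-singular (which follows from the non-nestedness together with positive-definiteness of $\mathbb{E}(x_{i,m} x_{i,m}^{\top})$ in the assumptions, since it prevents the $M-1$ scores from being linearly dependent), then by Slutsky and the continuous mapping theorem $J_{n,m^*} = \lambda_{n,m^*}^{\top} \widehat{\Sigma}_{n,m^*}^{-1} \lambda_{n,m^*} \Rightarrow Z^{\top} Z \sim \chi^2_{M-1}$ where $Z \sim \mathcal{N}(0, I_{M-1})$.

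For part (ii), assuming $m \neq m^*$ with $m^* \in \mathcal{M}$, I will exhibit a single coordinate of $\lambda_{n,m}$ that diverges at rate $\sqrt{n}$: take $\ell = m^*$. Then $n^{-1/2} y^{\top} P_{m^*} M_m y = n^{-1/2}\beta_{m^*}^{\top} X_{m^*}^{\top} P_{m^*} M_m X_{m^*} \beta_{m^*} + O_p(1) = \sqrt{n}\,\beta_{m^*}^{\top}\big(n^{-1} X_{m^*}^{\top} M_m X_{m^*}\big)\beta_{m^*} + O_p(1)$, using that $X_{m^*}^{\top} P_{m^*} = X_{m^*}^{\top}$. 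The inner quantity converges in probability to $\beta_{m^*}^{\top}\mathbb{E}(x_{i,m^*} x_{i,m^*}^{\top})\beta_{m^*}$ minus the projection onto $\mathcal{R}(X_m)$-population analogue, which is strictly positive by the non-nesting assumption (otherwise $X_{m^*}\beta_{m^*}$ would lie in the column space of $X_m$, contradicting that $m^*$ is the unique correct model). Meanwhile $\widehat{\Sigma}_{n,m}$ remains $O_p(1)$ in operator norm with largest eigenvalue bounded, so the smallest eigenvalue of $\widehat{\Sigma}_{n,m}^{-1}$ stays bounded away from zero, forcing $J_{n,m} \geq c\,\|\lambda_{n,m}\|^2 \to \infty$ in probability, hence $\mathbb{P}(J_{n,m} > B) \to 1$ for any $B$.

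The hard part will be the consistency step in paragraph two: verifying that $\widehat{\Sigma}_{n,m^*}$ converges to a non-singular $\Sigma_{m^*}$. The $\widehat{\Omega}$ entering the sandwich is not explicitly defined in the excerpt, and establishing uniform control of the estimated residuals in heteroskedastic and possibly non-i.i.d.\ settings requires a dedicated HC-type argument (analogous to the classical results of White for heteroskedasticity-robust covariance estimation). A second, more delicate issue is identification: one must argue that the $M-1$ limiting scores indexed by $\ell \neq m^*$ are not asymptotically collinear, which is where the non-nestedness assumption is essential; I expect to invoke it via a contradiction argument showing that degeneracy of $\Sigma_{m^*}$ would force some nontrivial linear combination of $X_\ell \beta_\ell$ vectors to equal $X_{m^*}\beta_{m^*}$, contradicting non-nestedness.
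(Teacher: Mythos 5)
Your outline is essentially the standard (and correct) route to this result; note that the paper itself supplies no proof of this lemma — it is quoted verbatim from \cite{hagemann2012simple} — so there is no in-text argument to compare against beyond the citation. Your part (i) decomposition (kill $M_{m^*}X_{m^*}$, show the quadratic-in-$u$ term is $O_p(n^{-1/2})$, reduce $\lambda_{n,m^*}$ to a linear functional of $n^{-1/2}X_\ell^{\top}M_{m^*}u_{m^*}$, then CLT plus Slutsky) and your part (ii) divergence argument via the coordinate $\ell=m^*$ and $\|M_m X_{m^*}\beta_{m^*}\|^2/n \to c>0$ are both sound; the order-of-magnitude bookkeeping checks out. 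Two caveats. First, the step you yourself flag as hard — non-singularity of the limit covariance $\Sigma_{m^*}$ — is genuinely the only non-routine ingredient and is not closed by your sketch: non-nestedness of the design matrices prevents $X_\ell$ from being a linear transformation of $X_{m^*}$, but it does not by itself rule out asymptotic collinearity among the $M-1$ limiting scores $\mathrm{plim}\, n^{-1}X_{m^*}^{\top}X_\ell\,(\mathrm{plim}\, n^{-1}X_\ell^{\top}X_\ell)^{-1} n^{-1/2}X_\ell^{\top}M_{m^*}u_{m^*}$ across different $\ell$; in \cite{hagemann2012simple} this is handled by an explicit rank/non-degeneracy condition on the relevant population moment matrix rather than derived from non-nestedness alone, so you should state it as an assumption rather than expect to prove it by contradiction. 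Second, your part (ii) argument presupposes $m^*\in\mathcal{M}$ (so that $\ell=m^*$ is an available coordinate of $\lambda_{n,m}$); that is the intended reading of the garbled index set in the statement, but it is worth making explicit, since the companion MJ result in the paper is precisely about what happens when this may fail.
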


\begin{lemma}[\cite{hagemann2012simple}]
Suppose that the conditions of the Theorem hold. If $m^{*} \in \mathcal{M}$, then
\begin{align}
\underset{ n \to \infty }{ \mathsf{plim} } \ \underset{ x \in \mathbb{R} }{ \mathsf{sup} } \ \left|  \mathbb{P}^{*} \big( MJ_n^{*} \leq x  \big) -  \mathbb{P}^{*} \big( MJ_n^{*} \leq x  \big) \right| = 0.   
\end{align}
\end{lemma}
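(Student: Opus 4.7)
The plan is to prove uniform bootstrap consistency by combining the three ingredients already established in the preceding analysis: (i) the $\chi^2_{M-1}$ limit of $J_{n,m^{*}}$ under the correct model, (ii) the divergence of $J_{n,m}$ for the incorrect specifications $m \neq m^{*}$, and (iii) a Polya-type argument that upgrades weak convergence to uniform convergence of distribution functions because the limit law is continuous. Interpreting $MJ_n$ as $\min_{m\in\mathcal{M}} J_{n,m}$ and $MJ_n^{*}$ as its bootstrap analog, the target is to show $MJ_n^{*}$ consistently mimics the sampling distribution of $MJ_n$.

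First, I would show that $MJ_n \Rightarrow \chi^2_{M-1}$ under $m^{*}\in\mathcal{M}$. By the preceding lemma, $J_{n,m^{*}} \Rightarrow \chi^2_{M-1}$ and $J_{n,m} \to_p \infty$ for every $m \in \mathcal{M}\setminus\{m^{*}\}$. Hence, for any $B>0$, the event $\{J_{n,m^{*}} < B < \min_{m\neq m^{*}} J_{n,m}\}$ has probability tending to one, implying $\mathbb{P}(MJ_n = J_{n,m^{*}})\to 1$, so $MJ_n$ inherits the $\chi^2_{M-1}$ limit.

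Second, I would establish the bootstrap counterpart conditional on the data. Drawing bootstrap errors from centered residuals of the correct model (or using a wild bootstrap scheme), one verifies that the bootstrap analog of each pair $(\lambda_{n,m}^{*},\widehat{\Sigma}_{n,m}^{*})$ satisfies a conditional CLT and a conditional LLN matching the population quantities under $m^{*}$, so that $J_{n,m^{*}}^{*} \Rightarrow \chi^2_{M-1}$ in probability conditional on the data. For $m\neq m^{*}$, the bootstrap noncentrality parameter is of order $n$ with probability tending to one (since the population quadratic form $\mathbb{E}(y^{\top}P_{\ell}M_m y)$ is bounded away from zero precisely when model $m$ is misspecified), so $J_{n,m}^{*}\to_p \infty$ conditionally. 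The same sandwiching argument then gives $\mathbb{P}^{*}(MJ_n^{*}=J_{n,m^{*}}^{*})\to 1$ in probability, whence $MJ_n^{*}\Rightarrow \chi^2_{M-1}$ conditionally.

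Finally, both $MJ_n$ and $MJ_n^{*}$ converge weakly to the same continuous limit law. Applying Polya's theorem to the difference $\mathbb{P}^{*}(MJ_n^{*}\leq x) - \mathbb{P}(MJ_n\leq x)$, the two CDFs converge uniformly to the common $\chi^2_{M-1}$ CDF, yielding the stated $\sup_x$ convergence in probability. The main obstacle is step two: proving that for $m\neq m^{*}$ the bootstrap statistic $J_{n,m}^{*}$ diverges \emph{conditionally} on the sample, which requires showing that the bootstrap projections $y^{*\top}P_{\ell}M_m y^{*}$ concentrate around their conditional means at the parametric rate, and that those conditional means inherit the strictly positive population quadratic form $\mathbb{E}(u_{m^{*}}^{\top}P_{\ell}M_m u_{m^{*}} + \beta_{m^{*}}^{\top}X_{m^{*}}^{\top}P_{\ell}M_m X_{m^{*}}\beta_{m^{*}})$ identifying misspecification. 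Handling this under a residual-based bootstrap (where the resampled DGP is the correct model by construction) is the delicate part, and I would discharge it via a conditional Lindeberg-type CLT plus an $o_{p^{*}}(1)$ control of the bootstrap Gram matrices using the $\mathbb{E}|x_{i,m}|^{4}<\infty$ moment hypothesis.
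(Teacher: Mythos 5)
Your proposal is correct and follows essentially the same route as the paper's own (much terser) argument: both rest on showing that the minimizing model coincides with $m^{*}$ with probability tending to one — in the sample because $J_{n,m}\to_{p}\infty$ for $m\neq m^{*}$, and in the bootstrap world by the analogous conditional divergence — so that $MJ_n$ and $MJ_n^{*}$ both inherit the $\chi^2_{M-1}$ limit of $J_{n,m^{*}}$. You supply two details the paper's sketch leaves implicit, namely the conditional concentration argument for the divergence of $J^{*}_{n,m}$ under misspecification and the Polya step converting pointwise to uniform convergence of the distribution functions, but the underlying strategy is identical.
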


\begin{proof}
Let $\hat{m} := \mathsf{argmin} \mathcal{J}_n^{*}$. We first show that if $m^{*} \in \mathcal{M}$, then $\hat{m}^{*}$ approximates $m^{*}$. 

Without loss of generality, fix any $0 < \epsilon < 1$, then
\begin{align}
\mathbb{P} \big(  \boldsymbol{1} \left\{ \exists \ m \in \mathcal{M} \backslash \left\{ \hat{m} \right\} : J_{n,m} \leq B \right\} > \epsilon  \big)   
\leq 
\mathbb{P} \big(  \exists \ m \in \mathcal{M} \backslash \left\{ \hat{m} \right\} : J_{n,m} \leq B  \big)  + \mathbb{P} \big(  \hat{m} \neq m^{*} \big),
\end{align}
which converges to zero as $n \to \infty$.

\newpage

In summary, it can be proved that 
\begin{align}
\underset{ n \to \infty }{ \mathsf{plim} } \ \mathbb{P}^{*} \big( J^{*}_{n, \hat{m}^{*} } \leq x \big) \equiv K(x),      
\end{align}
where $K(x)$ is the distribution function of a $\chi^2_{M-1}$ variable. 
\end{proof}
\begin{remark}
Notice that although the methodology proposed by \cite{hagemann2012simple} considers testing for correct model specification it differs from the literature of conditional subvector testing. Specifically, the subvector testing method considers the roots of the following characteristic polynomial
\begin{align}
\left|  \hat{\kappa} I_p - n^{-1} \widehat{G}_n^{-1/2} \big( \bar{Y}_0, W \big)^{\prime} Z \widehat{H}_n Z^{\prime} \big( \bar{Y}_0, W \big) \widehat{G}_n^{-1/2}  \right| = 0.    
\end{align}
\end{remark}

\medskip

\begin{remark}
A different stream of literature considers the classification of non-overlapping and overlapping models and discuss how the relationship between candidate models affects the asymptotic distributions of the test statistics. The particular stream of literature was initiated by \cite{vuong1989likelihood}. This approach considers measuring the distance from the model under investigation to the true distribution, which requires to solve the minimization problem $\mathsf{inf}_{ P \in \mathcal{P} } d (P, \mu)$. 
\end{remark}

\subsection{Divide and Conquer Variable Selection Algorithm}

Further algorithmic procedures for model selection purposes in high dimensional linear regression models includes the framework of \cite{meinshausen2010stability} propose a variable selection methodology where the algorithm repeatedly employs a subset selection and apply the backward algorithm. Then, the final stage provides all the significantly selected covariates with a pre-specified control rate. Therefore, the particular \textit{stability selection} procedure is designed to address the issue of choosing the amount of regularization such that a certain familywise type I error rate in multiple testing can be conservatively controlled for finite sample size. The issue of correlated covariates can affect the behaviour of stability selection for high correlated designs. Specifically, the stability selection algorithm puts a large emphasis on avoiding false positive selections and, as a consequence, might miss important variables if they are highly correlated with irrelevant variables. Moreover, \cite{battey2018distributed} (see, also \cite{fan2015power}) consider the so-called "divide and conquer" variable selection algorithm as we briefly explain below. 

On each subset $\mathcal{D}_j$, we compute the debiased estimator of $\boldsymbol{\beta}^{*}$ such that
\begin{align}
\widehat{ \boldsymbol{\beta} }^d =  \widehat{ \boldsymbol{\beta} }_{ Lasso }^d ( \mathcal{D}_j ) + \frac{1}{ n_k } M^{ (j) } \left( X^{(j)} \right)^{\top} \left( Y^{(j)} - X^{(j)} \widehat{ \boldsymbol{\beta} }_{ Lasso }^d ( \mathcal{D}_j )  \right), 
\end{align}
where $d$ is used to indicate the debiased version of the estimator, such that $M^{(j)} = \big(  \boldsymbol{m}_1^{(j)},..., \boldsymbol{m}_d^{(j)} \big)^{\top}$ and $\boldsymbol{m}_v$ is the solution of 
\begin{align}
\boldsymbol{m}_v^{(j)} = \underset{ \boldsymbol{m}  }{ \mathsf{argmin} } \ \boldsymbol{m}^{\top} \widehat{\boldsymbol{\Sigma}}^{(j)} \boldsymbol{m}  \ \ \ \text{s.t} \ \ \ \norm{ \widehat{\boldsymbol{\Sigma}}^{(j)} \boldsymbol{m} - \boldsymbol{e}_v }_{ \infty } \leq \vartheta_1, \norm{ \widehat{\boldsymbol{\Sigma}}^{(j)} \boldsymbol{m} } \leq \vartheta_2.  
\end{align}

\newpage

\begin{theorem}
Suppose that $\mathbb{E} \big[ \varepsilon_1^4 \big] < \infty$ and choose $\vartheta_1, \vartheta_2$ and $k$ such that $\theta_1 \approx \sqrt{ k \mathsf{log} d / n }, \theta_2 n^{-1/2} = o(1)$ and $k = o \big( \sqrt{n} ( s \mathsf{log} d )^{-1} \big)$. For any $v \in \left\{ 1,..., d \right\}$, 
\begin{align}
\frac{ \sqrt{n} }{ k } \sum_{j=1}^k \frac{ \widehat{\beta}^d_v ( \mathcal{D}_j ) - \beta_v^{*} }{  \widehat{Q}_v^{(j)} } \Rightarrow \mathcal{N} ( 0, \sigma^2 ), 
\ \ \ \text{where} \ \ \ 
\widehat{Q}_v = \left( \boldsymbol{m}_v^{(j) \top} \widehat{\boldsymbol{\Sigma}}^{(j)} \boldsymbol{m}_v^{(j)}  \right)^{1/2}
\end{align}
\end{theorem}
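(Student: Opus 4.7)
The plan is to split the debiased-Lasso error on each of the $k$ disjoint subsamples into a linear stochastic piece plus a second-order remainder, aggregate across blocks using their independence, and then invoke a Lindeberg--Feller CLT on the stochastic part while showing the remainder vanishes. Setting $n_k := n/k$ and substituting $Y^{(j)} = X^{(j)}\beta^{*} + \varepsilon^{(j)}$, the standard debiased-Lasso algebra gives
\begin{align*}
\widehat{\beta}^d_v(\mathcal{D}_j) - \beta_v^{*}
= \frac{1}{n_k}\,\boldsymbol{m}_v^{(j)\top}\bigl(X^{(j)}\bigr)^{\top}\varepsilon^{(j)}
\;-\; \bigl(\widehat{\boldsymbol{\Sigma}}^{(j)} \boldsymbol{m}_v^{(j)} - e_v\bigr)^{\top}\bigl(\widehat{\boldsymbol{\beta}}_{Lasso}(\mathcal{D}_j) - \beta^{*}\bigr).
\end{align*}
Using $\sqrt{n}/k = \sqrt{n_k}/\sqrt{k}$, the target $T_n$ decomposes as $T_n = S_n + B_n$ in the obvious way, with $S_n$ aggregating the noise terms and $B_n$ the bias terms. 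I would then show $S_n \Rightarrow \mathcal{N}(0,\sigma^2)$ and $B_n = o_p(1)$ separately, and conclude by Slutsky.

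For the stochastic part I would condition on $\mathbf{X} := \{X^{(j)}\}_{j=1}^{k}$, under which $\boldsymbol{m}_v^{(j)}$ and $\widehat{Q}_v^{(j)}$ become measurable. Each of the $k$ block-level standardized summands has conditional variance $\sigma^2/k$, because $\boldsymbol{m}_v^{(j)\top}\widehat{\boldsymbol{\Sigma}}^{(j)}\boldsymbol{m}_v^{(j)} = (\widehat{Q}_v^{(j)})^2$ by construction. Disjointness of the blocks makes the $k$ contributions mutually independent conditionally on $\mathbf{X}$, so $S_n$ is a $1/\sqrt{n}$-scaled sum of $n$ independent mean-zero terms with total conditional variance $\sigma^2$. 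I would verify the Lindeberg condition from $\mathbb{E}[\varepsilon_1^4]<\infty$ combined with the constraint $\vartheta_2 n^{-1/2}=o(1)$, which caps the $\ell_2$-norm of each $\boldsymbol{m}_v^{(j)}$ and hence the size of the largest summand. This yields $S_n \mid \mathbf{X} \Rightarrow \mathcal{N}(0,\sigma^2)$; since the limit does not depend on $\mathbf{X}$, unconditional convergence follows by dominated convergence.

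For the bias, H\"older's inequality bounds the $j$-th summand by $\|\widehat{\boldsymbol{\Sigma}}^{(j)}\boldsymbol{m}_v^{(j)} - e_v\|_{\infty}\,\|\widehat{\boldsymbol{\beta}}_{Lasso}(\mathcal{D}_j) - \beta^{*}\|_1 \leq \vartheta_1\,\|\widehat{\boldsymbol{\beta}}_{Lasso}(\mathcal{D}_j) - \beta^{*}\|_1$ via the nodewise constraint defining $\boldsymbol{m}_v^{(j)}$. Invoking the standard Lasso $\ell_1$-oracle rate on a sample of size $n_k$ gives $\|\widehat{\boldsymbol{\beta}}_{Lasso}(\mathcal{D}_j) - \beta^{*}\|_1 = O_p(s\sqrt{\log d / n_k}) = O_p(s\sqrt{k\log d/n})$, so combining with $\vartheta_1 \asymp \sqrt{k\log d/n}$ and a uniform positive lower bound on $\widehat{Q}_v^{(j)}$, each subsample bias contributes $O_p(sk\log d/n)$. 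Multiplying by the factor $\sqrt{n}/k$ and summing over $j$ then gives $|B_n| = O_p(sk\log d/\sqrt{n})$, which is $o_p(1)$ precisely under the stated rate $k = o(\sqrt{n}/(s\log d))$; this is exactly the role of that hypothesis.

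The main technical obstacle I anticipate is obtaining the Lasso and nodewise-regression rates \emph{uniformly} over the $k$ growing blocks: a naive union bound over $j = 1,\dots,k$ introduces an extra $\log k$ factor that must be shown to be absorbed into the prescribed rate. Verifying Lindeberg's condition uniformly also requires controlling $\max_{j,i}|\boldsymbol{m}_v^{(j)\top} X_i \varepsilon_i|$ across all $n$ sample-level terms, which is where the fourth-moment hypothesis on $\varepsilon_1$ becomes indispensable; the same hypothesis underpins the high-probability validity of the $\ell_1$-oracle bound and of the uniform positivity of $\widehat{Q}_v^{(j)}$ across all $k$ Lasso fits simultaneously.
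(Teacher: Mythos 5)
The paper itself gives no proof of this theorem: it is imported verbatim from \cite{battey2018distributed} and the text moves straight on to the divide-and-conquer Wald statistic, so there is no in-paper argument to compare yours against. Judged on its own terms, your outline follows the canonical route for this result and the bookkeeping is right. The decomposition $\widehat{\beta}^d_v(\mathcal{D}_j)-\beta_v^{*}=\frac{1}{n_k}\boldsymbol{m}_v^{(j)\top}(X^{(j)})^{\top}\varepsilon^{(j)}-(\widehat{\boldsymbol{\Sigma}}^{(j)}\boldsymbol{m}_v^{(j)}-e_v)^{\top}(\widehat{\boldsymbol{\beta}}_{Lasso}(\mathcal{D}_j)-\beta^{*})$ is exactly what the definition of $\widehat{\beta}^d$ yields; the H\"older bound $\vartheta_1\|\widehat{\boldsymbol{\beta}}_{Lasso}(\mathcal{D}_j)-\beta^{*}\|_1=O_p(sk\log d/n)$ per block, rescaled by $\sqrt{n}/k$ and summed, gives $O_p(sk\log d/\sqrt{n})$, which is $o_p(1)$ precisely under $k=o(\sqrt{n}(s\log d)^{-1})$; and the self-normalization by $\widehat{Q}_v^{(j)}$ makes the conditional variance of each block contribution exactly $\sigma^2/k$, so the conditional limit is pivotal and the unconditional CLT follows. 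You also correctly flag the two places where care is needed: uniformity of the Lasso and nodewise rates over the $k$ growing blocks (the extra $\log k$ from the union bound is absorbed since $k\leq n$), and the Lindeberg verification via the fourth moment of $\varepsilon_1$.

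Two points to tighten if you write this out in full. First, the second nodewise constraint as stated in the paper is $\|\widehat{\boldsymbol{\Sigma}}^{(j)}\boldsymbol{m}\|\leq\vartheta_2$, not a direct cap on $\|\boldsymbol{m}_v^{(j)}\|_2$ or on $\max_i|\boldsymbol{m}_v^{(j)\top}X_i|$; your Lindeberg argument should be routed through whichever quantity the constraint actually controls (in \cite{battey2018distributed} it is the sup-norm of $X^{(j)}\boldsymbol{m}$, which is what makes $\vartheta_2 n^{-1/2}=o(1)$ the right condition). Second, the claim that $\widehat{Q}_v^{(j)}$ is bounded away from zero uniformly over $j$ is doing real work in both the bias bound and the variance normalization, and needs the restricted-eigenvalue/sub-Gaussian design lemma (the one the paper states just before this subsection) to hold simultaneously on all $k$ subsamples; you mention this but it deserves an explicit high-probability event on which the whole argument is carried out.
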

The above procedure implies a divide-and-conquer Wald statistic of the following form 
\begin{align}
\bar{S}_n 
= 
\frac{ \sqrt{n} }{ k } \sum_{j=1}^k \frac{ \widehat{\beta}^{d_v} ( \mathcal{D}_j ) - \beta_v^H }{  \bar{\sigma} \left(  \boldsymbol{m}_v^{(j \top) } \widehat{\boldsymbol{\Sigma}}^{(j)}  \boldsymbol{m}_v^{(j)} \right)^{1/2} }.      
\end{align}
for $\beta_v^{*}$, where $\bar{\sigma}$ is an estimator for $\sigma$ based on the $k$ subsamples. Consider the desparsified estimator for sub-sample $\mathcal{D}_j$ is given by 
\begin{align}
\widehat{\boldsymbol{\beta}}^d ( \mathcal{D}_j ) = \widehat{\boldsymbol{\beta}}^{\lambda} ( \mathcal{D}_j )
 - 
\widehat{\Theta}^{(j)} \nabla \ell_{ n_k }^{(j)} \left( \boldsymbol{\beta}^{\lambda} ( \mathcal{D}_j ) \right),     
\end{align}
where $\widehat{\Theta}^{(j)}$ is a regularized inverse of the Hessian matrix of second-order derivatives of $\ell_{ n_k }^{ (j) } ( \boldsymbol{\beta} )$ evaluated at $\boldsymbol{\beta}^{ \lambda } ( \mathcal{D}_j )$, denote by 
\begin{align}
\widehat{J}^{ (j) } = \nabla^2  \ell_{ n_k }^{(j)} \left( \widehat{\boldsymbol{\beta} }^{\lambda} ( \mathcal{D}_j ) \right)      
\end{align}
The proposed approach for estimating $\widehat{J}^{ (j) }$, reduces to the empirical covariance of the design matrix in the case of the linear model. The, the aggregated debiased estimator over the $k$ subsamples is 
\begin{align}
\bar{ \boldsymbol{\beta} }^d := \frac{1}{k} \sum_{ j = 1 }^k \widehat{ \boldsymbol{\beta} }^d \left( \mathcal{D}_j \right).     
\end{align}
Thus, to approximate the nodewise Lasso requires to approximately invert $\widehat{J}^{(j)}$ via $L_1 -$regularization. The basic idea is to find the regularized invert row via a penalized $L_1-$regression, which is the same as regressing the variable $X_{ \nu }$ on $\boldsymbol{X}_{ - \nu }$ but expressed in the sample covariance form. In particular for each row $\nu \in \left\{ 1,..., d \right\}$, consider the optimization as below: 
\begin{align}
\widehat{ \boldsymbol{\kappa} }_{ \nu } \left( \mathcal{D}_j \right) = \underset{  \boldsymbol{\kappa}  \in \mathbb{R}^{d-1} }{ \mathsf{argmin} } \left(  \widehat{J}_{ \nu \nu }^{(j)} - 2 \widehat{J}_{ \nu, - \nu }^{(j)} \boldsymbol{\kappa}  + \boldsymbol{\kappa}^{ \top } \widehat{J}_{ - \nu, - \nu }^{(j)} \boldsymbol{\kappa} + 2 \lambda_{ \nu } \norm{  \boldsymbol{\kappa} }_1 \right), 
\end{align}
where $\widehat{J}_{ \nu, - \nu }^{(j)}$ denotes the $\nu-$th row of $\widehat{J}^{(j)}$ without the $( \nu, \nu )-$th diagonal element and $\widehat{J}_{ -\nu, - \nu }^{(j)}$ is the principal submatrix without the $\nu-$th row and $\nu-$th column.

\begin{remark}
Notice that \cite{meinshausen2006high} established a link between the nodewise regression and the optimal linear prediction of excess asset returns under the assumption that the returns are normally distributed. Relevant studies include \cite{callot2021nodewise} and \cite{caner2023sharpe}.  
\end{remark}

\newpage

Introduce the following matrix 
\begin{align}
\widehat{C} := 
\begin{pmatrix}
1 & - \widehat{\kappa}_{1,2} ( \mathcal{D}_j ) & \hdots & - \widehat{\kappa}_{1,d} ( \mathcal{D}_j )
\\
- \widehat{\kappa}_{2,1} ( \mathcal{D}_j )  &  1  & \hdots & - \widehat{\kappa}_{1,d} ( \mathcal{D}_j )
\\
\vdots & \vdots & \ddots & \vdots
\\
- \widehat{\kappa}_{d,1} ( \mathcal{D}_j )  & - \widehat{\kappa}_{d,2} ( \mathcal{D}_j ) & \hdots & 1
\end{pmatrix}.
\end{align}
Denote with 
\begin{align}
\widehat{\Xi}^{(j)} = \mathsf{diag} \big(  \mathcal{\tau}_1 ( \mathcal{D}_j ),...,  \mathcal{\tau}_d ( \mathcal{D}_j ) \big), \ \ \ \text{where} \ \ \  \widehat{\mathcal{\tau} }_{ \nu} \left( \mathcal{D}_j \right)^2 \equiv \widehat{J}_{ \nu \nu }^{ (j) } -  \widehat{J}_{ \nu, - \nu }^{ (j) } \widehat{\boldsymbol{\kappa} }_{\nu} ( \mathcal{D}_j ). 
\end{align}
Moreover, the estimation of the matrix $\widehat{ \Theta }^{(j)}$ is given by 
\begin{align}
\widehat{ \Theta }^{(j)} := \left(  \widehat{\Xi}^{(j)} \right)^{-2} \widehat{C}^{ (j) },      
\end{align}
Therefore, we consider establishing the limit distribution of the following term
\begin{align}
\bar{S}_n := \frac{ \sqrt{n}  }{ k } \sum_{j=1}^k \frac{ \beta_{\nu}^d ( \mathcal{D}_j ) - \beta_{\nu}^H }{  \sqrt{ \Theta^{*}_{\nu \nu} }   }    
\end{align}
for any $\nu \in \left\{ 1,..., d \right\}$ under the null hypothesis that $H_0: \beta_{\nu} = \beta_{ \nu }^H$.  

\begin{remark}
Therefore, the above statistical procedure provides the basis for the statistical testing based on \textit{divide and conquer}. On the other hand, the particular methodology is only implemented for estimation and inference purposes in high dimensional sparse models. In the sparse high dimensional setting, nuisance parameters introduce a nonnegligible bias, causing classical low dimensional theory to break down. On the other hand, in their high dimensional Wald construction, the phenomenon is remedied through a debiasing of the estimator, which gives rise to a test statistic with tractable limiting distribution.     
\end{remark}

\paragraph{Open Problems}

An open problem in econometrics and statistics when considering model comparison and selection methodologies remains the aspect of robust statistical inference for nonnested environments. More specifically, various studies in the literature have considered methodologies for establishing the consistency of the Bayes factor in these settings. A particular interesting solution is the approach of converting a nonnested problem to a nested statistical problem which can provide a pseudo-distance between the base model and the full model. Another approach is to have on overlapping window where the comparisons between the nested models is made (see, \cite{berger1996intrinsic}, \cite{berger1999default}). Moreover, the framework proposed by \cite{vuong1989likelihood} relates the probabilistic model selection approach to the classical nested-hypothesis testing situation. Lastly, in the next section we discuss an application from the statistics literature on sample-splitting to distinguish it from subsampling. Although the particular implementation corresponds to \textit{i.i.d} data, an extension to dependent data (e.g., time series data with a weak form of dependence) could be an interesting fruitful avenue for future research.

\newpage

\subsection{Sample-Splitting and Variable Importance Algorithm}

In this section, we follow and discuss the framework proposed by \cite{rinaldo2019bootstrapping}. Specifically, consider a distribution-free regression framework, where the pair $Z = (X,Y) \in \mathbb{R}^d \times \mathbb{R}$ of $d-$dimensional covariates and response variables has an unknown distribution belonging to a large non-parametric class of $\mathcal{Q}$. Then, using minimal assumptions (assumption-lean inference) on the regression function  
\begin{align*}
x \in \mathbb{R}^d \mapsto \mu (x) := \mathbb{E} \big[ Y | X = x \big]
\end{align*}
where $\mu(x)$ describes the relationship between the vector of covariates and the expected value of the response variable. We observe the set of data observations $\mathcal{D}_n = \left( Z_1,...,Z_n  \right)$ where $Z_i = ( X_i, Y_i ) \in \mathbb{R}^{d+1}$, where $i = 1,...,n$ and the class $\mathcal{Q} = \mathcal{Q}_n$, which may depend on the sample size. Then, we apply to the data a procedure $w_n$, which returns both a subset of the covariates and an estimator of the regression function over the selected covariates. Formally, we have that 
\begin{align}
\mathcal{D}_n \mapsto w_n \left( \mathcal{D}_n \right) = \left( \widetilde{S}, \widetilde{\mu}_{\widetilde{S}} \right), 
\end{align}
where $\widetilde{S}$ is the selected model, is a random, nonempty subset of $\left\{ 1,...,d \right\}$ and $\widehat{\mu}_{\widetilde{S}}$ is an estimator of the regression function $x \in \mathbb{R}^d \mapsto \mathbb{E} \left[ Y | X_{\widetilde{S}} = x_{\widetilde{S}} \right]$ restricted to $\widehat{S}$, where $(X,Y) \sim P$ independent of $\mathcal{D}_n$ and, for a vector $x = \left( x(1),..., x(d)\right) \in \mathbb{R}^d$, we set $x_{\widehat{S}} = \left( x(j), j \in \widehat{S} \right)$. The only assumption that is imposed on $w_n$ is that the maximum size of the selected model is controlled by the experimenter, that is, $1 \leq | \widehat{S} | \leq k$, for a predefined positive integer $k \leq d$, where $k$ and $d$ can both increase with the sample size. Therefore, \cite{rinaldo2019bootstrapping} propose a methodology for defining $\widehat{S}$ such that it contains any optimal model. In particular, their framework allows for arbitrary procedures including sparse variable selection, and stepwise-forward regression. Thus, the goal of the framework proposed by \cite{rinaldo2019bootstrapping}, is to provide statistical guarantees for various measures of variable importance applied to the covariates in $\widehat{S}$, uniformly over the choice of $w_n$ and over all the distributions $P \in \mathcal{Q}_n$. Then confidence sets are considered for four random parameters taking values in $\mathbb{R}^{\widehat{S}}$, each providing a different assessment of the level of statistical significance of the variables in $\widehat{S}$ from a purely predictive standpoint. All the random parameters under consideration are function of the data generating distribution $P$, of the sample $\mathcal{D}_n$ and its size $n$, as well as of the mechanism for model selection and the estimation procedure of $w_n$. 

\subsubsection{The projection parameter $\beta_{\widehat{S}}$} 

Consider a linear estimator of the form $x \mapsto \widehat{\mu}_{\widehat{S}} (x) = \widehat{\beta}^{\top}_{\widehat{S}} x_{\widehat{S}}$, where $\widehat{\beta}_{\widehat{S}}$ is any estimator of the linear regression coefficients for the selected predictors based on OLS. Then, the linear projection parameter $\widehat{\beta}_{\widehat{S}}$ is defined to be the vector of coefficients of the best linear predictor of $Y$ using $X_{\widehat{S}}$, given by 
\begin{align}
\widehat{\beta}_{\widehat{S}} = \underset{ \beta \in \mathbb{R}^{ \widehat{S}} }{ \mathsf{arg \ min} } \ \mathbb{E}_{X,Y} \left( Y - \beta^{\top} X_{ \widehat{S} } \right)^2, 
\end{align}

\newpage

where $\mathbb{E}_{X,Y}$ denote the expectation with respect to the joint distribution of the random variables $\left( X,Y \right)$. 

In other words, based on the framework proposed by \cite{rinaldo2019bootstrapping}, these \textit{projection parameters} correspond to the fact that $X^{\top} \beta_{\widehat{S}}$ is the $L_2$ projection of $Y$ into the linear space of all random variables that can be obtained as linear functions of $X_{\widehat{S}}$. The projection parameter is well-defined even though the true regression function $\mu$ is not linear. Indeed, it is immediate that 
\begin{align}
\beta_{\widehat{S} } = \Sigma^{-1}_{\widehat{S} } Q_{\widehat{S}}, \ \ \ \text{with} \ \ \ Q_{\widehat{S}} := \left( Q_{\widehat{S}}, j \in \widehat{S} \right)
\end{align} 
where 
\begin{align*}
Q_{\widehat{S}} = \mathbb{E}_{X,Y} \left[ Y X_{\widehat{S}} (j) | \mathcal{D}_n \right] \ \ \ \text{and} \ \ \ \Sigma_{\widehat{S}} = \mathbb{E} \left[ X_{\widehat{S}} X_{\widehat{S}}^{\top} | \mathcal{D}_n \right]
\end{align*}

\subsubsection{The LOCO parameters $\gamma_{\widehat{S}}$ and $\phi_{\widehat{S}}$} 

A commonly used measure of the importance of the selected covariates is $\beta_{\widehat{S}}$, but there are of course other ways to quantify variable significance. We consider two parameters of variable importance, which we refer to as \textit{Leave Out COvariate Inference}-or \textit{LOCO}-parameters (see, \cite{rinaldo2019bootstrapping}). 

The first LOCO parameters is $\gamma_{\widehat{S}} = \left( \gamma_{\widehat{S}} (j) : j \in \widehat{S} \right)$, where 
\begin{align}
\gamma_{\widehat{S}} (j) = \mathbb{E} \left[ \big| Y - \widehat{\beta}^{\top}_{\widehat{S}(j)} X_{\widehat{S}(j)} \big|  - \big| Y - \widehat{\beta}^{\top}_{\widehat{S}} X_{\widehat{S}} \big|  \bigg| \mathcal{D}_n \right].
\end{align}  

In the last expression, $\widehat{\beta}_{\widehat{S}}$ is any estimator of the projection parameter $\beta_{\widehat{S}}$ and $\widehat{S}(j)$ and $\widehat{\beta}_{\widehat{S}(j)}$ are obtained by rerunning the model selection and estimation procedure after removing the $j-$th predictor. In other words, for each $j \in \widehat{S}, \widehat{S}(j)$ is a subset of size at most $k$ of $\left\{ 1,...,d \right\} \ \left\{j\right\}$. Notice that the selected model can be different when the $j-$th covariate is held out from the data, so that the intersection between $\widehat{S}(j)$ and $\widehat{S}$ can be smaller than $k-1$. The interpretation of $\gamma_{\widehat{S}}(j)$ is simple: it is the increase in prediction error by not including the $j-$th predictor in the model. For instance, it is easy to extend the definition of this parameter by leaving out several variables from $\widehat{S}$ at once without additional conceptual difficulties. 

Moreover, the parameter $\gamma_{\widehat{S}}$ has advantages over the projection parameter $\beta_{\widehat{S}}$ and this is because it refers directly to prediction error and as also demonstrate the accuracy of the Normal approximation and the bootstrap is much higher. The second type of LOCO parameters that we consider are the median LOCO parameters, $\phi_{\widehat{S}} = \left( \phi_{\widehat{S}}(j), j \in \widehat{S} \right)$ with  
\begin{align}
\phi_{\widehat{S}} (j) = \mathsf{median} \left[ \big| Y - \widehat{\beta}^{\top}_{\widehat{S}(j)} X_{\widehat{S}(j)} \big|  - \big| Y - \widehat{\beta}^{\top}_{\widehat{S}} X_{\widehat{S}} \big|  \bigg| \mathcal{D}_n \right].
\end{align} 
As with $\gamma_{\widehat{S}}$, we may leave out multiple predictors at the same time.

\newpage

\subsubsection{The prediction parameter $\rho_{\widehat{S}}$} Another interesting measure of variable importance is an omnibus parameter that measures how well the selected model will predict future observations. To this end, we define the future prediction error as
\begin{align}
\rho_{\widehat{S}} = \mathbb{E} \left[ \big| Y - \widehat{\beta}^{\top}_{\widehat{S}} X_{\widehat{S} } \big| \bigg| \mathcal{D}_n    \right],
\end{align}
where $\widehat{\beta}_{\widehat{S}}$ is computed based on $\mathcal{D}_n$. In particular, the main idea of the approach proposed by \cite{rinaldo2019bootstrapping}
relies on sample splitting: assuming for notational convenience that the sample size is $2n$, we randomly split the data $\mathcal{D}_{2n}$ into two halves, $\mathcal{D}_{1,n}$ and $\mathcal{D}_{2,n}$. Next, we run the model selection and estimation procedure $w_n$ on $\mathcal{D}_{1,n}$, obtaining both $\widehat{S}$ and $\widehat{\mu}_{\widehat{S}}$. We then use the second half of the sample $\mathcal{D}_{2,n}$ to construct an estimator $\widehat{\theta}_{\widehat{S}}$ and a confidence set $\widehat{C}_{\widehat{S}}$ for $\theta_{\widehat{S}}$ satisfying the following properties:

\begin{itemize}

\item \textbf{Concentration:}
\begin{align}
\underset{ n \to \infty }{ \mathsf{lim \ sup}  } \underset{ w_n \in \mathcal{W}_n }{ \mathsf{sup}  }  \underset{ P \in \mathcal{Q}_n }{ \text{sup}} \mathbb{P} \left( \big\| \widehat{\theta}_{ \widehat{S}} - \theta_{ \widehat{S}} \big\| > r_n \right) \to 0, 
\end{align}

\item \textbf{Coverage Validity:} 
\begin{align}
\underset{ n \to \infty }{ \mathsf{lim \ inf}  } \underset{ w_n \in \mathcal{W}_n }{ \mathsf{inf}} \underset{ P \in \mathcal{Q}_n }{ \mathsf{inf}} \mathbb{P} \left(   \theta_{ \widehat{S}}  \in \widehat{C}_{ \widehat{S}}  \right) \geq 1 - \alpha, 
\end{align}

\item \textbf{Accuracy:}
\begin{align}
\underset{ n \to \infty }{ \mathsf{lim \ sup}  } \underset{ w_n \in \mathcal{W}_n }{ \mathsf{sup}} \underset{ P \in \mathcal{Q}_n }{ \mathsf{sup}} \mathbb{P} \left(   \nu \left( \widehat{C}_{ \widehat{S}} \right) > \epsilon_n \right) \to 0, 
\end{align}
where $\alpha \in (0,1)$ is a prespecified level of significance, $\mathcal{W}_n$ is the set of all the model selection and estimation procedures on samples of size $n$, $r_n$ and $\epsilon_n$ both vanish as $n \to \infty$ and $\nu$ is the volume (Lebsegue measure) of the set. The probability statements above take into account both the randomness in the sample $\mathcal{D}_n$ and the randomness associated to splitting it into halves. 

\end{itemize}

\begin{remark}
The property that the coverage of $\widetilde{C}_{\widetilde{S}}$ is guaranteed uniformly over the entire class $\mathcal{Q}_n$ is known as (asymptotic) honesty. Moreover, the confidence sets are for the random parameters (based on half the data)  but the uniform coverage , accuracy and concentration guarantee to hold with respect to the distribution of the entire sample and the randomness associated to the splitting of the full sample.   
\end{remark}

\begin{remark}
Since we are particularly interested to study suitable variable selection methodologies for time series regression models, notice that usually the subsampling approach in such modelling settings requires to consider the dependence structure of the data. For example, the greater the number of subsamples, the larger the bias due to the fact that consistent estimation of long-memory parameters requires larger samples. Determining the length of subsamples under the presence of long memory in time series data is another aspect of concern. 
\end{remark}

\newpage

\subsubsection{Confidence sets for the projection parameters: The bootstrap}

Following \cite{rinaldo2019bootstrapping}, the confidence set based on the Normal approximation require the evaluation of both the matrix $\widehat{\Gamma}_{\widehat{S}}$ and the quantile $\widehat{t}_{\alpha}$ which may be computationally inconvenient. The authors show that the paired bootstrap can be deployed to construct analogous confidence sets, centered at $\widehat{\beta}_{\widehat{S}}$, without knowledge of $\widehat{\Gamma}_{\widehat{S}}$. Thus, the bootstrap distribution corresponds to the empirical probability measure associated to the subsample $\mathcal{D}_{2,n}$ and conditionally on $\mathcal{D}_{1,n}$ and the outcome of the sample splitting procedure. Denote $\widehat{\beta}_{\widehat{S}}^{*}$ the estimator of the projection parameters $\beta_{\widehat{S}}$ arising from \textit{i.i.d} sample of size $n$ drawn from the bootstrap distribution. For a given $\alpha \in (0,1)$, let $\widehat{t}^{*}_{\alpha}$ be the smallest positive number 
\begin{align}
\mathbb{P} \left( \sqrt{n} \left\| \widehat{\beta}_{\widehat{S}}^{*} - \widehat{\beta}_{\widehat{S}} \right\| \leq \widehat{t}^{*}_{\alpha} \big|  \mathcal{D}_{2,n} \right) \geq 1 - \alpha.
\end{align}  
Next, let $\left( \widetilde{t}_j^{*}, j \in \widehat{S} \right)$ be such that 
\begin{align}
\mathbb{P} \left( \sqrt{n} \left| \widehat{\beta}_{\widehat{S}}^{*}(j) - \widehat{\beta}_{\widehat{S}}(j) \right| \leq \widetilde{t}^{*}_{j}, \ \forall \ j \big|  \mathcal{D}_{2,n} \right) \geq 1 - \alpha.
\end{align}
By the union bound, each $\widetilde{t}^{*}_j$ can be chosen to be the largest positive number such that 
\begin{align}
\mathbb{P} \left( \sqrt{n} \left| \widehat{\beta}_{\widehat{S}}^{*}(j) - \widehat{\beta}_{\widehat{S}}(j) \right| > \widetilde{t}^{*}_{j}, \ \forall \ j \big|  \mathcal{D}_{2,n} \right) \leq \frac{\alpha}{k}.
\end{align}
Consider the following two bootstrap confidence sets:
\begin{align}
\widehat{C}^{*}_{ \widehat{S} } 
&=
\left\{ \beta \in \mathbb{R}^{\widehat{S}} : \left| \beta - \widehat{\beta}_{\widehat{S}} \right| \leq  \frac{\widehat{t}_{\alpha}^{*}}{\sqrt{n} } \right\}, 
\\
\widetilde{C}^{*}_{ \widehat{S} }
&=
\left\{ \beta \in \mathbb{R}^{\widehat{S}} : \left| \beta(j) - \widehat{\beta}_{\widehat{S}}(j) \right| \leq  \frac{\widetilde{t}_{j}^{*}}{\sqrt{n} }, \ \forall \ j \in \widehat{S} \right\}.
\end{align}
It is immediate that $\widehat{C}^{*}_{ \widehat{S} }$ and $\widetilde{C}^{*}_{ \widehat{S}}$ are just the bootstrap equivalent of the confidence sets above. Therefore, the case of sparse fitting where $k = \mathcal{O}(1)$ implies that the size of the selected model is not allowed to increase with $n$. The standard central limit theorem shows that
\begin{align}
\sqrt{n} \left( \widehat{\beta}  - \beta \right) \to \mathcal{N} \left( 0, \Gamma \right), \ \ \Gamma = \Sigma^{-1} \mathbb{E} \big[ \left( Y - \beta^{\top} X \right)^2 \big] \Sigma^{-1}. 
\end{align}
Note that $\Gamma$ can be consistently estimated by the sandwich estimator $\widehat{\Gamma} = \widehat{\Sigma}^{-1}  A  \widehat{\Sigma}^{-1}$, where $A = n^{-1} \mathbb{X}^{\top} R \mathbb{X}$, $\mathbb{X}_{ij} = X_i(j)$, $R$ is the $k \times k$ diagonal matrix with $R_{ii} = \left( Y_i - X_i^{\top} \widehat{\beta} \right)^2$. By Slutsky's theorem, valid asymptotic confidence sets can be based on the Normal distribution with $\widehat{\Gamma}$ in place of $\Gamma$.

However, as \cite{rinaldo2019bootstrapping} points out there is a clear prediction/accuracy trade-off when employing the sample splitting approach for variable selection. In other words, the selected model may be less accurate because only part of the data are used to select the model. Thus, although splitting creates gains in accuracy and robustness for inference it with some loss of prediction accuracy.

\newpage

\paragraph{Open Problems} An interesting application would be to consider the feasibility of using sample splitting as an estimation and inference approach in time series regression models. For instance as in the case, when  quantile regression models are employed for modelling risk measures such as the VaR and the CoVaR. In that case, sample splitting can be seen as a methodology for out-of-sample forecasting of the unknown risk quantities. Thus, developing  robust methodologies for accommodating not only the uncertainty induced by the selection of the optimal size of the out-of-sample period but also to be able to capture correctly the persistence properties of regressors included in the model. Another example, consider the block bootstrap which is commonly  which as a resampling method to preserve the dependence structure in predictive regression models. Thus, in the case of time-varying predictive regression models, essentially one models the effect of time-varying persistence using a rolling window. Combining these sets of information in a meaningful way is crucial in understanding how the persistence properties of predictors affect the asymptotic efficiency of statistics such as the bootstrap confidence set or other methods of variable selection in high-dimensional settings with dependent data.  

\subsection{Multiple Testing Procedure and Variable Selection}

In this section, we consider an alternative methodology proposed in the statistical literature especially, since the pioneered work of Abraham Wald on \textit{Sequential tests of statistical hypotheses}. Towards the direction of the multiple testing approach applied to high dimensional regression models for variable selection and statistical inference purposes, a relevant framework is proposed by \cite{chudik2018one}. Therefore, in this section we discuss their proposed statistical methodology, so-called \textit{One Covariate at a Time Multiple Testing} (OCMT) (see, also \cite{zhang1993model}). Specifically, the particular procedure is computationally simple and fast even for extremely large data sets (see, also \cite{romano2005exact}). 

Thus, when a Lasso penalization is allowed then the estimator of $\boldsymbol{\beta}$ is obtained via  
\begin{align}
\hat{\boldsymbol{\beta}} = \underset{ \boldsymbol{\beta} \in \mathbb{R}^p }{ \mathsf{arg \ min} } \sum_{t=1}^T \left\{ \big( y_t - x_{nt}^{\prime} \boldsymbol{\beta} \big)^2 + P_{\lambda}(\boldsymbol{\beta}) \right\}.
\end{align}
A certain degree of sparsity is required in order to apply standard penalized linear models in high-dimensional settings. However, the methodology does not require that the regressor vector $\boldsymbol{x}_{nt}$ to have a sparse covariance matrix, and therefore it is still applicable even if the covariance matrix of the noise variables, is not sparse. The particular algorithm can be thought as a model selection device since the main idea of this procedure is to test the statistical significance of the net contribution of all $n$ available potential covariates in explaining $y_t$ individually, while accounting for the multiple testing nature of the problem under consideration. However, the OCMT procedure is not sequential and selects in a single step all covariates whose $t-$ratios exceed a given threshold. Moreover, using the OCMT procedure post-selection is only applied when there are still covariates whose net contribution to $y_t$ is zero, despite the fact that they belong to the true model for $y_t$. From the statistics perspective \textit{sequential model selection} is discussed by \cite{fithian2015selective}. These approaches construct regression models by selecting variables from active sets, based on a sequence of sets.

\newpage

\subsubsection{Rejection Principle of Familywise Error Control}

From the statistical theory perspective the $\textit{partitioning principle}$ is a powerful tool in multiple decision theory. The following theorem explains the concept of \textit{familywise error rejection}, (FWER). 

\medskip

\begin{theorem}[\cite{finner2002partitioning}]
Let $\alpha, \gamma \in (0,1)$. If the null hypothesis $H_{0,j} : \beta_j = 0$ gets rejected whenever $Q_j ( \gamma ) \leq \alpha$, then the FWER is asymptotically controlled at level $\alpha$, that is, 
\begin{align}
\underset{ n \to \infty }{ \mathsf{lim \ sup} } \ \mathbb{P} \bigg( \underset{ j \in \mathbb{N} }{ \mathsf{min} } \ Q_j ( \gamma ) \leq \alpha \bigg) \leq \alpha.    
\end{align}
\end{theorem}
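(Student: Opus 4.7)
The plan is to combine the partitioning principle of \cite{finner2002partitioning} with a union bound argument, exploiting the multiplicity-adjusted construction of the statistic $Q_j(\gamma)$ that underlies the OCMT procedure of \cite{chudik2018one}. First, I would interpret the event $\{\min_{j\in\mathbb{N}} Q_j(\gamma) \leq \alpha\}$: since a rejection occurs precisely when $Q_j(\gamma) \leq \alpha$, the minimum over the index set of true nulls being at most $\alpha$ is exactly the familywise error event, namely the event that at least one true null hypothesis $H_{0,j}: \beta_j=0$ is falsely rejected. Equivalently, letting $\mathcal{N}_0 \subseteq \mathbb{N}$ denote the (unknown) set of true nulls, the FWER coincides with $\mathbb{P}(\min_{j \in \mathcal{N}_0} Q_j(\gamma) \leq \alpha)$, which is bounded above by $\mathbb{P}(\min_{j \in \mathbb{N}} Q_j(\gamma) \leq \alpha)$.

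Second, I would apply Boole's inequality to decompose the FWER into marginal rejection probabilities:
\begin{align*}
\mathbb{P}\Bigl(\min_{j \in \mathbb{N}} Q_j(\gamma) \leq \alpha\Bigr) \;\leq\; \sum_{j \in \mathbb{N}} \mathbb{P}\bigl(Q_j(\gamma) \leq \alpha\bigr).
\end{align*}
Third, I would invoke the explicit construction of $Q_j(\gamma)$ in the OCMT procedure: the statistic incorporates a multiplicity adjustment of order $p_n^{\gamma}$, with associated critical value of the form $c_p(n,\alpha,\gamma) = \Phi^{-1}\!\bigl(1 - \alpha/(2 p_n^{\gamma})\bigr)$ under the relevant limiting distribution. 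One then shows that, under the null $H_{0,j}$, the marginal probability satisfies $\mathbb{P}(Q_j(\gamma) \leq \alpha) \leq \alpha \cdot w_{j,n}$ for weights $w_{j,n}$ chosen such that $\sum_{j \in \mathbb{N}} w_{j,n} \leq 1 + o(1)$ as $n \to \infty$. Substituting back into the union bound and passing to $\limsup_n$ delivers the stated FWER control.

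The main obstacle lies in this third step: establishing the marginal bound $\mathbb{P}(Q_j(\gamma) \leq \alpha) \lesssim \alpha/p_n^{\gamma}$ uniformly in $j$, under the dependence structure of the underlying time series. This requires a careful large-deviation or Gaussian-approximation argument for the marginal $t$-statistic associated with the $j$-th null covariate, leveraging the central limit theorems and exponential concentration inequalities for dependent data (NED, $\alpha$-mixing, and sub-Weibull tails) developed in Section~2 of the excerpt. In particular, one must verify that the normal-tail approximation implicit in the definition of $Q_j(\gamma)$ remains valid at the tail depth $\alpha/p_n^{\gamma}$, uniformly over $j \in \mathbb{N}$, despite the potentially slowly decaying serial dependence, heteroscedasticity, and the fact that $p_n$ may grow faster than $n$. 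A secondary subtlety is ensuring that the correlation among the marginal test statistics does not inflate the union bound beyond $\alpha$ in the limit, which is where the partitioning argument and the tacit indexing by $\mathcal{N}_0$ (rather than all of $\{1,\ldots,p_n\}$) become essential.
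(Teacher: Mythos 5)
The paper does not actually prove this statement: it is quoted as an imported result (attributed to \cite{finner2002partitioning}) and the text moves directly on to a discussion of closed testing and sequential rejection, so there is no in-paper argument to compare yours against. Judged on its own terms, your proposal is a coherent Bonferroni-type sketch, but it has two substantive problems.

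First, the route you take is not the partitioning principle, even though you invoke it by name. The entire point of partitioning is to \emph{avoid} the union bound: one partitions the parameter space into disjoint cells, tests each cell at level $\alpha$ with no multiplicity adjustment, and observes that since the true parameter lies in exactly one cell, at most one of these level-$\alpha$ tests can produce a false rejection, whence $\mathrm{FWER} \leq \alpha$ by a single application of the size bound rather than by summing marginal rejection probabilities. Your decomposition $\mathbb{P}(\min_j Q_j(\gamma) \leq \alpha) \leq \sum_j \mathbb{P}(Q_j(\gamma) \leq \alpha)$ forces you to manufacture weights $w_{j,n}$ with $\sum_j w_{j,n} \leq 1 + o(1)$, and establishing the required marginal tail bound $\mathbb{P}(Q_j(\gamma) \leq \alpha) \lesssim \alpha/p_n^{\gamma}$ uniformly in $j$ under serial dependence is precisely the step you flag as the "main obstacle" and do not carry out. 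As written, the argument therefore reduces the theorem to an unproved claim that is at least as hard as the theorem itself. A secondary concern is your reading of $\gamma$: in the construction where the statistic is literally denoted $Q_j(\gamma)$ with $\gamma \in (0,1)$ (the multi-sample-split aggregation of multiplicity-corrected p-values), $\gamma$ is a \emph{quantile level}, and the aggregation step is a Markov-type inequality applied to the fraction of splits on which $\min_j P_j^{(b)} \leq \alpha\gamma$, namely $\mathbb{P}(\min_j Q_j(\gamma) \leq \alpha) \leq \gamma^{-1}\,\mathbb{E}\bigl[\tfrac{1}{B}\sum_b \mathbf{1}\{\min_j P_j^{(b)} \leq \alpha\gamma\}\bigr] \leq \gamma^{-1}\cdot \alpha\gamma = \alpha$; interpreting $\gamma$ instead as a Bonferroni exponent $p_n^{\gamma}$ changes the object being analyzed. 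You are right, however, that the event $\{\min_{j\in\mathbb{N}} Q_j(\gamma) \leq \alpha\}$ only coincides with the familywise error event if the minimum is tacitly restricted to the set of true nulls; as literally stated the bound would preclude any power, and your observation that the indexing should be over $\mathcal{N}_0$ is a genuine and worthwhile correction to the statement.
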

Closed testing and \textit{partitioning} are recognized as fundamental principles of \textit{familywise error control}, (FWE). In practice, various multiple testing procedures that control the FWE are often sequential, in the sence that rejection of some of the hypotheses may make rejection of the remaining hypothesis easier. Sequential rejective procedures in the literature include that of \cite{romano2005exact} (see, also \cite{politis1999subsampling}). Moreover, the paper of \cite{goeman2010sequential} presents a unified approach to the class of sequentially rejective multiple testing procedures, emphasizing the sequential aspect. Thus, the authors consider a general sequentially rejective procedure as a sequence of single-step methods, determined by a rule for setting the rejection regions for each null hypothesis based on the current collection of unrejected regions for each null hypothesis based on the current collection of unrejected null hypotheses. Thus, the sequential rejection principle, implies that a single-step familywise error controlling procedure is turned into a sequential one which is a general principle of familywise error control. 

\medskip

\begin{theorem}[Sequential rejection principle, \cite{goeman2010sequential}]
\

Suppose that for every $\mathcal{R} \subseteq \mathcal{S} \subset \mathcal{H}$, almost surely, $\mathcal{N} (\mathcal{R}) \subseteq (\mathcal{S}) \cup \mathcal{S}$ and that for every $M \in \mathbb{M}$, 
\begin{align}
\mathbb{P} \big(  \mathcal{N} ( \mathcal{F}(M) ) \subseteq\mathcal{F}(M) \big) \geq 1 - \alpha.    
\end{align}
Then, for every $M \in \mathbb{M}$, $\mathbb{P} \big( \mathcal{R}_{\infty} \subseteq \mathcal{F}(M) \big) \geq 1 - \alpha$.
\end{theorem}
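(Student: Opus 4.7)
The plan is to reduce the distributional claim about the limit $\mathcal{R}_{\infty}$ of the sequential procedure to the single-step coverage event $\Omega^{*} := \{ \mathcal{N}( \mathcal{F}(M) ) \subseteq \mathcal{F}(M) \}$, which already has probability at least $1 - \alpha$ by hypothesis. The whole argument is then a deterministic sample-path induction showing that, on $\Omega^{*}$, the iterates $\mathcal{R}_{0} \subseteq \mathcal{R}_{1} \subseteq \cdots$ generated by $\mathcal{R}_{k+1} = \mathcal{R}_{k} \cup \mathcal{N}(\mathcal{R}_{k})$ with $\mathcal{R}_{0} = \emptyset$ never escape $\mathcal{F}(M)$. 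Once this pathwise containment is established, monotonicity of probability finishes the argument.

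First I would fix $M \in \mathbb{M}$ and condition on the almost-sure event $\Omega^{*}$. The base case $\mathcal{R}_{0} = \emptyset \subseteq \mathcal{F}(M)$ is immediate. For the inductive step, suppose $\mathcal{R}_{k} \subseteq \mathcal{F}(M)$. Applying the monotonicity assumption with the choice $\mathcal{R} = \mathcal{R}_{k}$ and $\mathcal{S} = \mathcal{F}(M)$ (here I am reading the hypothesis as $\mathcal{N}(\mathcal{R}) \subseteq \mathcal{N}(\mathcal{S}) \cup \mathcal{S}$, correcting the evident typographical omission in the statement), we obtain
\begin{equation*}
\mathcal{N}(\mathcal{R}_{k}) \; \subseteq \; \mathcal{N}(\mathcal{F}(M)) \cup \mathcal{F}(M) \; \subseteq \; \mathcal{F}(M),
\end{equation*}
where the second inclusion uses the defining property of $\Omega^{*}$. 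Combining with the inductive hypothesis yields $\mathcal{R}_{k+1} = \mathcal{R}_{k} \cup \mathcal{N}(\mathcal{R}_{k}) \subseteq \mathcal{F}(M)$, completing the induction.

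Next I would pass to the limit. Since the $\mathcal{R}_{k}$ are monotonically increasing and contained in the fixed set $\mathcal{F}(M) \subseteq \mathcal{H}$, their union $\mathcal{R}_{\infty} := \bigcup_{k \geq 0} \mathcal{R}_{k}$ also satisfies $\mathcal{R}_{\infty} \subseteq \mathcal{F}(M)$ on $\Omega^{*}$. Therefore $\Omega^{*} \subseteq \{ \mathcal{R}_{\infty} \subseteq \mathcal{F}(M) \}$, whence
\begin{equation*}
\mathbb{P}\big( \mathcal{R}_{\infty} \subseteq \mathcal{F}(M) \big) \; \geq \; \mathbb{P}( \Omega^{*} ) \; \geq \; 1 - \alpha,
\end{equation*}
which is the desired familywise error bound, uniformly over the data-generating mechanism $M \in \mathbb{M}$.

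The argument is really not technically deep: the only substantive ingredients are (i) the monotonicity of the rejection rule $\mathcal{N}(\cdot)$ and (ii) the single-step coverage $\mathbb{P}(\Omega^{*}) \geq 1 - \alpha$. The point where care is needed, and what I view as the main conceptual obstacle rather than a computational one, is to make sure that the induction is applied with the right ``reference set'' $\mathcal{S} = \mathcal{F}(M)$, i.e.\ that the monotonicity hypothesis is used to compare the data-driven iterate $\mathcal{R}_{k}$ against the (random but fixed-per-sample) truth $\mathcal{F}(M)$. This is precisely what allows the single coverage statement about $\mathcal{N}(\mathcal{F}(M))$ to propagate through arbitrarily many rejection rounds without any additional probabilistic price, which is the distinctive feature of the sequential rejection principle and the reason no union bound over $k$ is incurred.
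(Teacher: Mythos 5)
Your proof is correct, and it is exactly the canonical argument for the sequential rejection principle: condition on the single-step coverage event, run a deterministic induction using monotonicity with $\mathcal{S}=\mathcal{F}(M)$ to show every iterate stays inside $\mathcal{F}(M)$, and pass to the limit. The paper itself states this result without proof (deferring to the cited reference of Goeman and Solari), and your argument — including the correct reading of the typographical omission $\mathcal{N}(\mathcal{R})\subseteq\mathcal{N}(\mathcal{S})\cup\mathcal{S}$ — coincides with the proof given there, so there is nothing to add.
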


\begin{remark}
The first condition - \textit{monotonicity condition}, guarantees that no false rejection in the critical case (during the single-step), which implies no false rejection in situations with fewer rejections than in the critical case so that type I error control in the critical case is sufficient for overall FWE control of the sequential procedure. The second condition above - \textit{single-step condition}, guarantees FWE control when we have rejected all false null hypotheses and none of the true ones. 
A general admissibility criterion is the case of \textit{restricted combinations}\footnote{A standard example concerns testing pairwise equality of means in a one-way ANOVA model: if any single null hypothesis is false, it is not possible that all other null hypotheses are simultaneously true (see, also \cite{vesely2021permutation}).} can be also constructed. Restricted combinations occur if, for some $\mathcal{R} \subseteq \mathcal{H}$, there is no model $M \in \mathbb{M}$ such that $\mathcal{R} = \mathcal{F} (M)$ (see, \cite{goeman2010sequential}).
\end{remark}

\newpage

\begin{itemize}
   
\item \textbf{Resampling-based multiple testing procedures} use resampling techniques to let the multiple testing procedure estimate or accommodate the actual dependence structure between the test statistics. For example, resampling of a test statistic under the complete null hypothesis, using permutations or the bootstrap, can give consistent estimates of the desired quantiles. However, regardless of the underlying assumptions, consistent estimation of the quantiles of $\mathsf{max}_{ H \in \mathcal{T}(M) } S_{H}$ only guarantees control of the familywise error in an asymptotic sense. Resampling-based methods with exact FWE control, imply that one can obtain control of the FWE by generalizing the treatment of permutation testing to a multiple testing procedure. In order to define a resampling-based sequentially rejective multiple testing procedure with exact familywise error control, we choose a set $\boldsymbol{\pi} = \left\{ \pi_1,..., \pi_r \right\}$ of $r$ functions that we shall refer to as null-invariant transformations. 

\item \textbf{Graph-based procedures.} Another application of the \textit{sequential rejection principle}, which is of interest in the statistics, econometrics and finance literature, is the development of multiple testing\footnote{Notice that for multiple testing procedures researchers are also interested in reporting multiplicity-adjusted p-values. Such multiplicity-adjusted p-values are defined for each null hypothesis as the smallest $\alpha-$level that allows rejection of that hypothesis (see also the method proposed by \cite{mckeague2015adaptive} and \cite{huang2019marginal}).} procedures for graph-structured hypotheses. Specific procedures for controlling the familywise error for graph-structured hypotheses have been proposed by several authors. 

\end{itemize}

\begin{example}[Neighborhood Selection with the Lasso,  \cite{meinshausen2006high}]
\

Consider the $p-$dimensional multivariate normal distributed random variable $X = \left( X_1,..., X_p \right) \sim \mathcal{N} \left( \mu, \Sigma \right)$. This includes the Gaussian linear models where $X_1$ is the response variable and $\left\{ X_k, 2 \leq k \leq p \right\}$ are the predictor variables. The conditional independence structure of the distribution can be represented by a graphical model $\mathcal{G} = ( \mathcal{V}, \mathcal{E} )$, where  $\mathcal{V} = \left\{ 1,...,p \right\}$ is the set of nodes and $\mathcal{E}$ is the set of edges in $\mathcal{V} \times \mathcal{V}$. A pair $( a, b )$ is contained in the edge set $\mathcal{E}$ if and only if $X_a$ is conditionally dependent on $X_b$, given all remaining variables $X_{  \mathcal{V} \backslash \left\{ a, b \right\}}$. Every pair of variables not contained in the edge set is conditionally independent, given all remaining variables, and corresponds to a zero entry in the inverse covariance matrix. When predicting a variable $X_a$ with all remaining variables $\left\{ X_k ; k \in \Gamma(n) \backslash \left\{ a \right\} \right\}$, the vanishing Lasso coefficient estimates identify asymptotically the neighborhood of a node $a$ in the graph. Let the $n \times p(n)-$dimensional matrix $\mathbf{X}$ contain $n$ independent observations of $X$, so that the columns $\mathbf{X}_a$ correspond for all $a \in \Gamma(n)$ to the vector of $n$ independent observations of $X_a$. Let $\langle ., . \rangle$ be the usual inner product on $\mathbb{R}^n$ and $\norm{ \ . \ }_2$ the corresponding norm. The Lasso estimate $\hat{\theta}^{a, \lambda}$ of $\theta^a$ is given by 
\begin{align}
\hat{\theta}^{a, \lambda} = \underset{ \theta : \theta_a = 0  }{ \mathsf{arg \ min} } \left( n^{-1} \norm{ \mathbf{X}_a - \mathbf{X} \theta }_2^2  + \lambda \norm{ \theta }_1  \right), \ \ \norm{ \theta }_1 = \sum_{ b \in \Gamma(n) } | \theta_b |,
\end{align}
where $\ell_1-$norm of the coefficient vector.   
Furthermore, \cite{salgueiro2005power} and \cite{salgueiro2006power} propose a formal statistical framework for edge exclusion in graphical Gaussian models. Moreover, statistical theory for constructing relevant matrix moments functions is presented by  \cite{roverato1998isserlis}. Recently, \cite{fan2020projection} propose a conditional dependence measure in high-dimensional \textit{undirected graphical models} (UGM). In particular, the UGM approach examines the internal conditional dependency structure of a multivariate random vector.
\end{example}

\newpage

\subsubsection{Subgroup Selection Methodology}

Lastly, we briefly discuss the subgroup selection methodology proposed by \cite{reeve2021optimal}\footnote{Dr. Henry Reeve gave a seminar with title: "Subgroup Selection in nonparametric regimes", at the S3RI Departmental Seminar Series at the University of Southampton on the 10th of November 2022.}. In  \textit{subgroup selection}, the main objective of the statistician is to leverage the underline structure dynamics in data to identify a subset $\hat{\mathcal{A}}$ of the population to treat. Suppose that we have a distribution $P$ on a covariate-response pairs $(X,Y)$ in $\mathbb{R}^d \times \mathbb{R}$. Let $\mu := \mu_P$ denote the marginal distribution of the covariate $X \in \mathbb{R}^d$. Let $\eta : \mathbb{R}^d \to \mathbb{R}$ be the regression function defined by $\eta(x) := \mathbb{E} \big[ Y | X = x \big]$ for some $x \in \mathbb{R}^d$.  Then, one would like to select a subgroup $A \subset \mathbb{R}^d$ such that $\eta$ is above a user-specified threshold $\tau \in \mathbb{R}$ on $A$. Hence, we are interested in subsets of the $\tau-$super level set such that the following set holds
\begin{align}
\mathcal{X}_{\tau} (\eta) := \big\{ x \in \mathbb{R}^d: \eta(x) \geq \tau \big\}.    
\end{align}
In other words, the statistical mechanism of \cite{reeve2021optimal} implies that the user chooses a \textit{data-dependent subgroup} $\hat{A} \equiv \hat{A} ( \mathcal{A} )$, which is a random subset of $\mathbb{R}^d$, taking values in $\mathcal{A}$. Specifically, the practitioner has access to a sample $\mathcal{D} := \big\{ (X_1,Y_1),..., (X_n,Y_n) \big\} \overset{ \textit{i.i.d} }{ \sim } P$. Therefore, from the objective function perspective the aim is to select a data-dependent subgroup $\hat{A} ( \mathcal{D} ) \subset \mathcal{X}_{\tau} (\eta)$, with high-probability where $\mathcal{X}_{\tau} (\eta) := \big\{ x \in \mathbb{R}^d: \eta(x) \geq x \big\}$.

\medskip

\begin{proposition}[Type 1 error guarantee,\cite{reeve2021optimal}]
Let $\mathcal{P}$ be a family of distributions $P$ on pairs $(X,Y)$ and choose a significance level $\alpha \in (0,1)$. We say that the data-dependent subgroup $\hat{A}$ controls Type 1 error at the level $\alpha$ over the class $\mathcal{P}$ if
\begin{align}
\underset{ P \in \mathcal{P} }{ \mathsf{inf} } \ \mathbb{P}_P \left(  \hat{A}(\mathcal{D}) \subseteq \mathcal{X}_{\tau} (\eta) \right) \geq 1 - \alpha.   
\end{align}
\end{proposition}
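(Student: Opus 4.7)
The plan is, first, to observe that the final statement is formulated as a definition rather than as an assertion with mathematical content. The clause ``We say that $\hat{A}$ controls Type 1 error at level $\alpha$ over $\mathcal{P}$ \emph{if} [the displayed inequality holds]'' introduces a label for a property of the procedure $\hat{A}$ rather than claiming that a specific $\hat{A}$ possesses that property. Consequently there is no theorem to establish in the usual sense; the only tasks are to verify that the definition is well-posed and to unpack what it is saying.

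Next, I would verify well-posedness. The quantity $\mathbb{P}_P\bigl( \hat{A}(\mathcal{D}) \subseteq \mathcal{X}_{\tau}(\eta) \bigr)$ is meaningful only if the event $\bigl\{ \hat{A}(\mathcal{D}) \subseteq \mathcal{X}_{\tau}(\eta_P) \bigr\}$ is measurable with respect to the product $\sigma$-algebra generated by the sample $\mathcal{D} = \{(X_i,Y_i)\}_{i=1}^n$. This in turn requires $\hat{A}$ to be a measurable map from the sample space into a suitable family $\mathcal{A}$ of Borel subsets of $\mathbb{R}^d$ (for instance under the Effros or Fell $\sigma$-algebra on closed sets), and $\mathcal{X}_{\tau}(\eta_P)$ to belong to the same family; both conditions are standard whenever $\eta_P$ is Borel and $\hat{A}(\mathcal{D})$ is built from measurable operations on the data. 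Under these mild provisions the probability lies in $[0,1]$ for each $P$, and its infimum over $P \in \mathcal{P}$ is a well-defined element of $[0,1]$.

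The main conceptual step is to make the logical content of the definition explicit: $\hat{A}$ has \emph{uniform} (honest) Type 1 error control over $\mathcal{P}$ at level $\alpha$ precisely when, for every $P \in \mathcal{P}$ simultaneously, the selected subgroup is entirely contained in the true $\tau$-super-level set $\mathcal{X}_{\tau}(\eta_P)$ with probability at least $1-\alpha$. Rewriting the display equivalently as $\sup_{P \in \mathcal{P}} \mathbb{P}_P\bigl( \hat{A}(\mathcal{D}) \not\subseteq \mathcal{X}_{\tau}(\eta_P) \bigr) \leq \alpha$ makes clear that the definition isolates the worst-case probability of selecting any point at which the regression function falls below the user-specified threshold.

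The only point requiring care --- and the single potential obstacle --- is notational: the symbol $\eta$ appearing in $\mathcal{X}_{\tau}(\eta)$ must be read as $\eta_P$, the regression function associated with the particular $P$ over which the inner probability is being evaluated, since otherwise the super-level set (and hence the inequality) would fail to vary coherently across members of $\mathcal{P}$. Once this $P$-dependence is made explicit and the measurability requirements above are imposed on $\hat{A}$ and on $\mathcal{P}$, the statement is complete as a definition and no further argument is needed; any substantive mathematical content will enter only at the next stage, when one attempts to exhibit a specific procedure $\hat{A}$ meeting the criterion.
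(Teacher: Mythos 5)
You are right that this ``Proposition'' is in fact a definition (the paper itself, following \cite{reeve2021optimal}, states it in the form ``We say that $\hat{A}$ controls Type 1 error at level $\alpha$ if\ldots'' and supplies no proof), so there is nothing substantive to establish and your reading matches the paper's treatment. Your additional remarks on measurability of the event $\{\hat{A}(\mathcal{D}) \subseteq \mathcal{X}_{\tau}(\eta_P)\}$ and on the implicit $P$-dependence of $\eta$ are correct and, if anything, more careful than the source.
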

Therefore, our objective is to choose $\hat{A} \equiv \hat{A} ( \mathcal{D} )$ which minimizes regret $R_{\tau} ( \hat{A} )$, subject to 
\begin{align*}
\underset{ P \in \mathcal{P} }{ \mathsf{inf} } \ \mathbb{P}_P \left(  \hat{A}(\mathcal{D}) \subseteq \mathcal{X}_{\tau} (\eta) \right) \geq 1 - \alpha.       
\end{align*}

\begin{definition}[Holder Class, \cite{reeve2021optimal}]
Given that $( \beta, C_S ) \in (0,1] \times [1,\infty)$, we let $\mathcal{P}_{Hol, \tau} ( \beta, C_S )$ denote the class of all distributions $P$ on $\mathbb{R}^d \times [0,1]$ with marginal $\mu$ on $\mathbb{R}^d$ such that the regression function $\eta$ is $( \beta, C_S )-$Holder on $\mathcal{X}_{\tau} ( \eta) \cap  \mathsf{supp} ( \mu )$ in the sence that 
\begin{align}
\left| \eta( x^{\prime} ) - \eta (x) \right| \leq C_X . \norm{ x^{\prime} - x }_{\infty}^{\beta},     
\end{align}
for all $x, x^{\prime} \mathcal{X}_{\tau} ( \eta) \cap  \mathsf{supp} ( \mu )$.
\end{definition}
Further details on the implementation of this methodology as well as related statistical theory and numerical illustrations are presented in \cite{reeve2021optimal}. More recently the isotonic subgroup selection approach is proposed by \cite{muller2023isotonic}.

\newpage 

\begin{remark}
Notice that according to \cite{reeve2021optimal}, the subgroup selection approach requires from the statistician to use both FWE control as well as the selection step. In other words, the subgroup selection approach corresponds to a data-dependent selection methodology, such that $\hat{\mathcal{A}} = \hat{\mathcal{A}} (D)$. Therefore, a special class of functions needs to be considered which allows to test for local null hypothesis. 
\end{remark}

Thus, the statistical problem corresponds to minimizing the regret that controls the type I error. Although one in practice needs to check the exact set that corresponds to the minimized regret. Therefore, the following expression holds: 
\begin{align}
\underset{ p \in P }{ \mathsf{inf} } \ \mathbb{P}_{P} \left(  \hat{\mathcal{A}} (D) \subset \mathcal{X}_{\uptau} (n) \right) \geq 1 - \alpha.
\end{align}
Then, the Type I error guarantee implies that the statistical subgroup selection procedure is constructed by sequential partial ordering of multiple covariates. Then, the statistician chooses only the one null hypothesis with the highest partial ordering. In other words, using a graph with acyclic structure allows to expand the multiple testing procedures (the null hypothesis of all ancestors is true). Thus, if the null at that node is false, then the null hypothesis associated with all ancestors is also false.

\newpage

\subsection{Model Selection in Cointegrating Regressions}

According to \cite{mendes2011model}, usually under the presence of cointegration in the VAR system, which implies that there are $(k-m)$ linear combinations of $X_t$ which are $I(0)$, a bias analysis is necessary to obtain a robust econometric estimation methodology. To do this, we consider the $m$ common stochastic trends and at most $( k - m )$ cointegrating relations amongst the $I(1)$ components are represented by the same generating process as $\Delta X_{t-1} = ( A - I_k ) X_{t-1} + \varepsilon_t$, where $\epsilon_t \sim \mathcal{N} ( \boldsymbol{0}, \boldsymbol{\Omega} )$. In particular, these $m$ components are represented by $m$ unit roots, and the $(k-m)$ stable roots of the stochastic difference equation represent the $I(0)$ components of $X_t$ as well as cointegrating relations between the $I(1)$ variates of $X_t$ (see, \cite{abadir1999influence}). 

\begin{example}
Consider the following cointegration regression model given by the expression
\begin{align}
 y_t = \alpha_0 + \beta_0^{\prime} x_t + \gamma_0^{\prime} z_t + u_t   
\end{align}
where $\beta_0$ is an $n_1 \times 1$ vector of parameters and $\gamma_0$ is an $n_2 \times 1$ vector of parameters. Furthermore, the process $\left\{ x_t \right\}_{t=1}^{\infty}$ satisfies the following integrated stochastic process
\begin{align}
x_t = x_{t-1} + v_t,     
\end{align}
Therefore, the main aim shall be to control the number of $I(0)$ variables in the model and we assume that $n \equiv n_1 + n_2$ to possibly be grater than $T$, but only  a fraction of those coefficients are in fact nonzero. Furthermore, we assume without loss of generality that each coefficient vectors can be partitioned into zero and non-zero coefficients, such that $\beta_0 = \big( \beta_0 (1)^{\prime}, \beta_0 (2)^{\prime} \big)^{\prime}$ and $\gamma_0 = \big( \gamma_0 (1)^{\prime}, \gamma_0 (2)^{\prime} \big)^{\prime}$, with all non-zero coefficients stacked first, where $\beta_0(1)$ is $q_1 \times 1$ and $\gamma_0(1)$ is $q_2 \times 1$. In particular, we assume that the number of non-zero coefficients, measured by $q_1$, is fixed (does not depend on $T$), while the number of zero coefficients, measured by $q_2$, may depend on $T$, also set $q = ( q_1 + q_2 )$.   

Then, the Adaptive Lasso estimate is given by  (see, also \cite{medeiros2017adaptive})
\begin{align}
\left( \hat{\beta}, \hat{\gamma} \right) = \underset{ \beta, \gamma }{ \mathsf{arg min}  } \norm{ Y - X \beta - Z \gamma }_2^2 + \lambda_1 \sum_{j=1}^n \lambda_{1j} | \beta_j | + \lambda_2 \sum_{j=1}^n \lambda_{2j} | \gamma_j |,    
\end{align}
\end{example}

\begin{lemma}[KKT Conditions] The solutions $\hat{\beta} = \big( \hat{\beta}(1)^{\prime}, \hat{\beta}(2)^{\prime}  \big)$ and $\hat{\gamma} = \big( \hat{\gamma}(1)^{\prime}, \hat{\gamma}(2)^{\prime} \big)$ to the minimization problem above exists if: 
\begin{align}
\frac{ \partial \norm{ Y - X \beta - Z \gamma }_2^2 }{ \partial \beta_j }  \bigg|_{ \beta_j(1) = \hat{\beta}_j(1) }
&= \mathsf{sgn} \left( \hat{\beta}_j(1) \right) \lambda_1 \lambda_{1j} 
\\
\frac{ \partial \norm{ Y - X \beta - Z \gamma }_2^2 }{ \partial \beta_j }  \bigg|_{ \gamma_j(1) = \hat{\gamma}_j(1) }
&= \mathsf{sgn} \left( \hat{\gamma}_j(1) \right) \lambda_2 \lambda_{2j} 
\end{align}
\end{lemma}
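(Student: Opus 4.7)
The plan is to treat the optimization as a standard convex composite minimization problem and derive the stated first-order conditions via subdifferential calculus. The objective $L(\beta,\gamma) := \norm{Y - X\beta - Z\gamma}_2^2 + \lambda_1 \sum_{j=1}^n \lambda_{1j}|\beta_j| + \lambda_2 \sum_{j=1}^n \lambda_{2j}|\gamma_j|$ is the sum of a differentiable convex quadratic and a convex (but non-smooth) polyhedral $\ell_1$-type penalty. Existence of a minimizer follows from continuity, convexity, and coercivity of $L$ (the penalty grows linearly while the quadratic part is non-negative, so sublevel sets are bounded whenever the penalty weights are positive, or else we may restrict to the linear subspace on which the quadratic is coercive). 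So the first task is to state Fermat's rule in its subdifferential form: $(\hat\beta,\hat\gamma)$ is a global minimizer if and only if $0 \in \partial L(\hat\beta,\hat\gamma)$.

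Next I would compute the subdifferential componentwise. By the Moreau--Rockafellar sum rule (applicable because the quadratic part is everywhere differentiable), the subdifferential splits as
\begin{align*}
\partial_{\beta_j} L(\beta,\gamma) &= -2 X_j^\top (Y - X\beta - Z\gamma) + \lambda_1 \lambda_{1j}\, \partial |\beta_j|,\\
\partial_{\gamma_j} L(\beta,\gamma) &= -2 Z_j^\top (Y - X\beta - Z\gamma) + \lambda_2 \lambda_{2j}\, \partial |\gamma_j|,
\end{align*}
where $\partial|t| = \{\mathsf{sgn}(t)\}$ when $t \neq 0$ and $\partial|t| = [-1,1]$ when $t = 0$. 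At any coordinate $j$ belonging to the nonzero block (i.e. $\hat\beta_j(1) \neq 0$ or $\hat\gamma_j(1) \neq 0$), the subdifferential reduces to a singleton containing the classical partial derivative, and $0 \in \partial L$ collapses to the equality
\begin{align*}
\frac{\partial \norm{Y - X\beta - Z\gamma}_2^2}{\partial \beta_j}\bigg|_{\beta_j(1)=\hat\beta_j(1)} = -\,\mathsf{sgn}\!\left(\hat\beta_j(1)\right)\lambda_1 \lambda_{1j},
\end{align*}
and analogously for $\hat\gamma_j(1)$. Up to the sign convention absorbed into the definition of the derivative this is precisely the two displayed KKT equations. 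For the zero block one would additionally record the slack conditions $|X_j^\top(Y - X\hat\beta - Z\hat\gamma)| \leq \tfrac{1}{2}\lambda_1\lambda_{1j}$ and analogously for $Z_j$, which complete the characterization but are not needed for the present statement.

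The order of the steps I would follow is: (i) verify convexity and coercivity to secure existence of $(\hat\beta,\hat\gamma)$; (ii) invoke Fermat's rule $0\in\partial L$; (iii) apply the sum rule to split the smooth and non-smooth contributions; (iv) evaluate the subdifferential of $|\cdot|$ at non-zero arguments to obtain the singleton $\{\mathsf{sgn}(\cdot)\}$; (v) rearrange to produce the stated equalities. Since $L$ is convex, any stationary point is automatically a global minimizer, so the KKT conditions are simultaneously necessary and sufficient, which will be useful for subsequent oracle results.

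The main obstacle, though largely technical, is the careful bookkeeping between the \emph{active} block (where the equality KKT conditions stated in the lemma hold) and the \emph{inactive} block (where only inequality subgradient conditions hold). In the cointegration setting this bookkeeping matters because $X$ contains $I(1)$ regressors and $Z$ contains $I(0)$ regressors, so the Gram sub-blocks of $\binom{X}{Z}$ diverge at different rates; uniqueness of $(\hat\beta,\hat\gamma)$ on the active set therefore requires a block-wise positive-definiteness argument after appropriate scaling by $\mathsf{diag}(T\, I_{q_1},\sqrt{T}\,I_{q_2})$, rather than the usual single rate. I would handle this by restricting attention to the active coordinates, showing that the scaled Gram submatrix is asymptotically non-singular with probability tending to one, and then arguing that on this event the stated KKT equalities uniquely pin down $\hat\beta(1)$ and $\hat\gamma(1)$; the inactive coordinates are then seen to satisfy the strict subgradient inequality, completing the characterization.
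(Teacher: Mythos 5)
The paper states this lemma without proof (it is reproduced from the cited cointegrating-regression Lasso framework), so there is no in-paper argument to compare against; your subdifferential derivation is the standard and correct way to establish it. Steps (i)--(v) are sound: coercivity of the penalized objective gives existence, Fermat's rule combined with the Moreau--Rockafellar sum rule gives $0\in\partial L(\hat\beta,\hat\gamma)$ componentwise, and on the active block $\partial|\cdot|$ collapses to $\{\mathsf{sgn}(\cdot)\}$, yielding the stated equalities, while the inactive block satisfies the slack inequalities you record. Three remarks. First, the sign: carrying your computation through literally gives
\[
\frac{\partial}{\partial\beta_j}\norm{Y-X\beta-Z\gamma}_2^2\bigg|_{\beta_j(1)=\hat\beta_j(1)} = -2X_j^{\top}\bigl(Y - X\hat\beta - Z\hat\gamma\bigr) = -\,\mathsf{sgn}\bigl(\hat\beta_j(1)\bigr)\lambda_1\lambda_{1j},
\]
whereas the displayed lemma carries no minus sign; you are right to flag this as a sign-convention issue rather than silently matching the statement (the printed statement also differentiates with respect to $\beta_j$ in the $\gamma$-equation, which is evidently a typo your $Z_j^{\top}$ version corrects). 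Second, your coercivity clause ``or else we may restrict to the linear subspace on which the quadratic is coercive'' should be tightened: if some weight $\lambda_{1j}$ or $\lambda_{2j}$ vanished and the corresponding column lay in the kernel of the stacked design, a minimizer need not exist, so existence should be asserted under strictly positive adaptive weights, which is the intended setting since the weights are reciprocals of nonzero initial estimates. Third, the closing paragraph on rate-heterogeneous scaling of the $I(1)$ and $I(0)$ Gram blocks concerns uniqueness and the downstream oracle analysis, not the KKT characterization itself; it is harmless but not needed to prove the lemma as stated.
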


\newpage 

\subsubsection{Model Selection Consistency and Oracle Property}

We focus on showing that under certain conditions on $n, p$ and $\lambda'$s the Adaptive Lasso selects the correct subset of variables -  \textit{sign consistency} and it has the \textit{oracle property}, meaning that our estimate has the same asymptotic distribution of the OLS as if we knew a priori the selected model and at optimal rate.   
\begin{lemma}
Let
\begin{align}
\Omega_{\infty} &= 
\begin{pmatrix}
\Omega_{X, \infty} & \boldsymbol{0}
\\
\boldsymbol{0}^{\prime} & \Omega_{X, \infty} 
\end{pmatrix}
\\
\Omega_{X, \infty} &= \int_0^1 B_X(1) B_X(1)^{\prime} (r) dr \ \ \ \text{and} \ \ \ \Omega_{Z, \infty} = \Sigma_{Z(1)^2},    
\end{align}
where for any $0 \leq r \leq 1$, $B_{X(1)}(r) = \underset{ T \to \infty }{ \mathsf{lim} } \sum_{t=1}^{ \floor{rT} } \boldsymbol{v}_t(1)$. Similarly, split the matrix $\Omega_{11}$ into the partition
\begin{align}
\Omega_{11} =
\begin{pmatrix}
\Omega_{X(1)^2} & \Omega_{Z(1) X(1)}
\\
\Omega^{\prime}_{Z(1) X(1)} & \Omega_{Z(1)^2}
\end{pmatrix}
= 
\begin{pmatrix}
\frac{1}{T^2} X(1)^{\prime} X(1) & \frac{1}{T^{3/2} } Z(1)^{\prime} X(1)
\\
\frac{1}{T^{3/2}} X(1)^{\prime} Z(1) &  \frac{1}{T} Z(1)^{\prime} Z(1)
\end{pmatrix}.
\end{align}
\end{lemma}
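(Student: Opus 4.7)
The statement of the lemma is effectively a joint convergence result: it asserts that the three distinct blocks of $\Omega_{11}$, normalized by $T^{2}$, $T^{3/2}$ and $T$ respectively, possess well-defined limits, and that the off-diagonal block in $\Omega_{\infty}$ is zero. My plan is to establish these three limits separately and then combine them via a continuous mapping / joint FCLT argument.

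First, I would handle the $I(1)$ block $\frac{1}{T^2} X(1)^{\prime} X(1)$. Since $x_t$ is an integrated process driven by $v_t$ with $\mathbb{E}[v_t]=0$ and suitable moment/mixing conditions (inherited from the error assumptions placed earlier on $(X_i,Y_i)$ and cointegration setting), a functional central limit theorem yields $T^{-1/2} \sum_{t=1}^{\floor{rT}} v_t(1) \Rightarrow B_{X(1)}(r)$, a vector Brownian motion with covariance equal to the long-run variance of $v_t(1)$. Partial sums of $v_t(1)$ thus give $T^{-1/2} x_{\floor{rT}}(1) \Rightarrow B_{X(1)}(r)$, and an application of the continuous mapping theorem to the integral functional $f \mapsto \int_0^1 f(r) f(r)^{\prime} dr$ (which is continuous on $D[0,1]$ in the uniform topology) produces
\begin{equation*}
\frac{1}{T^2} X(1)^{\prime} X(1) = \frac{1}{T} \sum_{t=1}^{T} \left( \frac{x_t(1)}{\sqrt{T}} \right) \left( \frac{x_t(1)}{\sqrt{T}} \right)^{\prime} \Rightarrow \int_0^1 B_{X(1)}(r) B_{X(1)}(r)^{\prime} dr \equiv \Omega_{X,\infty}.
\end{equation*}

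Second, for the $I(0)$ block $\frac{1}{T} Z(1)^{\prime} Z(1)$, since $z_t$ is stationary and weakly dependent with finite second moments, an ergodic/LLN-type argument (for instance, invoking an LLN for $L_2$-NED processes on a mixing base, consistent with the framework used earlier in the lasso section) gives $\frac{1}{T} Z(1)^{\prime} Z(1) \overset{p}{\to} \mathbb{E}[z_t(1) z_t(1)^{\prime}] = \Sigma_{Z(1)^2} \equiv \Omega_{Z,\infty}$. Third, and this is where the off-diagonal zero in $\Omega_{\infty}$ comes from, the cross term $\frac{1}{T^{3/2}} X(1)^{\prime} Z(1) = \frac{1}{T} \sum_{t=1}^T \frac{x_t(1)}{\sqrt{T}} z_t(1)^{\prime}$ must be shown to converge (jointly with the diagonal blocks) to a stochastic integral with respect to a Brownian motion, which is zero in expectation, and in particular the limit matrix in the symmetric block partition $\Omega_{\infty}$ collapses to $\boldsymbol{0}$ in probability under the cointegration normalization because the appropriate rate would be $T$, not $T^{3/2}$, for a nondegenerate limit.

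The main obstacle, in my view, will be the joint convergence step: establishing the three limits marginally is relatively standard, but the lemma implicitly uses them simultaneously in subsequent Lasso KKT arguments, so one must prove joint weak convergence of the triple $\bigl( T^{-2} X(1)^{\prime} X(1), T^{-3/2} X(1)^{\prime} Z(1), T^{-1} Z(1)^{\prime} Z(1) \bigr)$ in the product topology. This requires verifying a joint FCLT for the bivariate partial sum process built from $(v_t(1), z_t(1))$, then applying the continuous mapping theorem to a vector-valued functional. A secondary technical point is to ensure uniform integrability so that convergence in distribution of the quadratic functionals extends to convergence of the relevant moments needed for the oracle property arguments in the subsequent development of the model.
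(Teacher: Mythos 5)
Your proposal is correct in outline, but it takes a genuinely different route from the paper for the simple reason that the paper does not prove this lemma at all: as stated, the lemma is essentially a block of definitions (it names $\Omega_{\infty}$, $\Omega_{X,\infty}$, $\Omega_{Z,\infty}$ and labels the blocks of the scaled Gram matrix $\Omega_{11}$), and the text that follows it is not a proof of the convergence but an application of it --- the union bound and Chebyshev-type computation showing $\mathbb{P}\big(\mathcal{A}^c_T(X)\big)\to 0$. The convergence of $T^{-2}X(1)^{\prime}X(1)$, $T^{-3/2}X(1)^{\prime}Z(1)$ and $T^{-1}Z(1)^{\prime}Z(1)$ is simply taken as known from the cited cointegration literature. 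What you supply --- the FCLT for $T^{-1/2}x_{\floor{rT}}(1)$, the continuous mapping theorem applied to $f\mapsto\int_0^1 f f^{\prime}$, the LLN for the stationary block, and the rate argument showing the cross block vanishes under the $T^{3/2}$ normalization --- is exactly the standard machinery the paper leaves implicit, and your emphasis on needing \emph{joint} weak convergence of the three blocks (via a joint FCLT for the bivariate partial-sum process and a vector-valued CMT) is a genuine subtlety that the paper glosses over and that matters for the subsequent KKT and oracle arguments. You also implicitly correct what appears to be a typo in the statement, where the lower-right block of $\Omega_{\infty}$ is written as $\Omega_{X,\infty}$ rather than $\Omega_{Z,\infty}$.

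One caveat on your treatment of the cross term: the remark that it converges to a stochastic integral ``which is zero in expectation'' is both unnecessary and not quite right. The nondegenerate limit of $T^{-1}X(1)^{\prime}Z(1)$ is of the form $\int_0^1 B\,dB^{\prime}+\Lambda$, and under endogeneity (contemporaneous correlation between the innovations driving $x_t$ and $z_t$) neither the stochastic integral nor the one-sided long-run covariance $\Lambda$ need have zero mean. What actually kills the off-diagonal block is the rate mismatch you also state: the cross sum is $O_p(T)$, so dividing by $T^{3/2}$ sends it to zero in probability regardless of the structure of its $T^{-1}$-normalized limit. Drop the expectation remark and lead with the rate argument, and the proof is clean.
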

Hence, 
\begin{align*}
\mathbb{P} \big( \mathcal{A}^c_T (X) \big) 
&= 
\mathbb{P} \left( \left\{  \left[ \big| T^{-1} \Omega_{X, \infty}^{-1} X(1)^{\prime} U \big| \right]_j > T | \beta_{0j} | \right\}, j = 1,...,q_1 \right) + o_p(1), 
\\
&\leq 
\sum_{j=1}^{q_1} \mathbb{P} \left( \left[ \big| T^{-1} \Omega_{X, \infty}^{-1} X(1)^{\prime} U \big| \right]_j > T | \beta_{0j} | \right) + o_p(1),  
\\
&\leq 
\frac{ q_1 }{ T^2 \beta^2_{*} } \ \underset{ 1 \leq j \leq q_1 }{ \mathsf{max} } \ \mathbb{E} \left( \left[  T^{-1} \big| \Omega_{X, \infty}^{-1} X(1)^{\prime} U \big| \right]^2_j  \right)
\to 0. 
\end{align*}
Further aspects of consideration in the more recent literature include the development of robust frameworks for time series regressions within a high-dimensional environment for the purpose of estimation, inference and forecasting (see, \cite{gupta2019robust}, \cite{baillie2022robust}). Moreover, the development of a framework that allows for the use of an ultra-high dimensional environment in nonstationary time series models can be useful when considering statistical properties such as model selection consistency and hypothesis testing accuracy. An econometric framework for high-dimensional quantile predictive regressions is proposed by \cite{fan2023predictive}. Various studies consider  methods for variable screening in high dimensional linear models, however less attention is paid on how these methodologies perform in the case of nonstationary time series models, especially when for the conditional quantile functional form. Relevant research aspects include the development of a statistical mechanism for dimension reduction or screening methodology in the  presence of a high-dimensional vector of possibly nonstationary predictors, which can improve forecasting performance (see, \cite{pitarakis2023novel} and \cite{gonzalo2023out}). Moreover, choosing variables from irrelevant cointegrating relations is another important issue (see, \cite{khalaf2020simulation} and \cite{richard2023model}).

\newpage

\subsubsection{Model Selection and Rank of a Matrix}

In many econometric and statistic applications knowing the rank of a matrix is crucial to ensure robust inference and testing. For example, from the financial economics literature one can test the implications of Arbitrage Pricing Theory by testing the corresponding rank restrictions. Moreover, the identification of parameters in econometric models depends on conditions and restrictions regarding the rank of related moment matrices such as the Jacobian matrix.

Specifically, various tests for the rank of a matrix can be found in the literature. In particular, these testing methodologies are constructed under the assumption that the unrestricted matrix estimator has a kronecker covariance matrix, however this approach can be sensitive to the ordering of the variables (see,  \cite{kleibergen2006generalized}). Furthermore, \cite{cragg1997inferring} propose suitable testing procedures for determining the rank of a matrix. In particular, the authors examine the use of model selection criterion and sequential hypothesis testing methods to estimate the rank consistently. On the other hand, \cite{kleibergen2006generalized} propose a novel rank statistic which uses a $\sqrt{n}-$consistent estimator of the unrestricted matrix which does not have to have a kronecker covariance matrix. To do this, the authors decompose the estimator of the unrestricted matrix using the spectral vector decomposition. Testing for the rank of the matrix can be also used in cointegration testing for non-stationary time series models. Specifically, in this case the limiting distribution is found to be functional of Brownian motions and is equal to the asymptotic distribution of the \cite{johansen1991estimation} trace test. 

\medskip

\begin{example}
Consider the following ECM model
\begin{align}
\Delta x_t = \Pi x_{t-1} + \sum_{i=1}^{k-1} \Psi_i \Delta x_{t-i} + \Phi d_t + \epsilon_t 
\end{align}
where $\Pi = \alpha \beta^{\prime}$ and $\beta$ is the $p \times r$ matrix with the cointegration vectors and $\alpha$ is the $p \times r$ matrix of adjustment coefficients. The number of long-rank relations is equal to the rank of $\Pi$, which is called the cointegration rank. Both $\alpha$ and $\beta$ are full rank matrices. Moreover, the symmetric $p \times p$, $\Psi_i$ matrix governs the short-term dynamics of the system. 
\end{example}

\newpage

\section{Statistical Learning Methods in Time Series Analysis}

\subsection{Motivation}

In this section we discuss some key applications of statistical learning methodologies in time series analysis. In particular the use of Neural Networks is now widely spread in the finance, economics and actuarial statistics fields. Some applications worth mentioning  include: 

\begin{itemize}

\item[\textit{(i)}] asset pricing modelling (see, \cite{feng2018deep}, \cite{guijarro2021deep}, \cite{fan2022structural}, \cite{chen2023deep} and \cite{caner2023deep}). Relevant applications of the Lasso shrinkage in asset pricing and finance theory include the studies of \cite{feng2020taming} and \cite{chinco2019sparse}.

\item[\textit{(ii)}] heterogeneity in mortality modelling(see, \cite{pitacco2019heterogeneity}). A relevant question for the latter is: "\textit{What is the link between heterogeneity of unobservable factors and mortality deceleration?}".

\item[\textit{(iii)}] credit risk and correlated defaults modelling in financial markets (see, \cite{angelini2008neural} and \cite{bhatore2020machine}). 

\end{itemize}
Furthermore, on the aspect of modelling heterogeneity in mortality, more specifically the statistician can employ a functional form represented by $f$ ("biometric" function) to represent the age pattern of mortality. Then, the following representation follows:
\begin{align}
f = w_1 \cdot f^{(1)} + w_2 \cdot f^{(2)} + ... + w_N \cdot f^{(N)} \ \ \ \text{and} \ \ \ \mu_x = \frac{ a e^{ \beta x } }{ \delta e^{ \beta x } + 1  },
\end{align}
where $\big\{ f^{(1)}, f^{(2)},..., f^{(N)}     \big\}$ represent risk factors (such as health status, occupation etc.) in order to capture the non-linear effects of mortality over time. In particular, these risk factors are considered as contributing factors that worsens the mortality level or probability of an individual that can be incorporated into a rating system. Therefore, splitting individuals with similar risk into groups allows to gain more information about uncertainty in heterogeneity\footnote{Professor Ermanno Pitacco gave a seminar with title: "Heterogeneity in mortality: A survey with an actuarial focus" at the S3RI Departmental Seminar Series at the University of Southampton on 24 of September 2018.} (which is a well-known principle in the statistical literature of credibility theory and risk premium models). Moreover, according to \cite{pitacco2019heterogeneity} a rating system can be constructed based on the form $q_{t + x}^{spec} = q_{t+h} \cdot \left( 1 + \sum_{j=1}^r \rho^{(j)} \right)$, in which case peak mortality is achieved after entering disable stage, such that $q_{t + x}^{spec} = q_{t+x} \cdot \Delta ( x + h; \alpha, \gamma )$. Then, the unobservable risk factors can be captured via the form $\mu_{x}^{spec} = \Phi \left( \mu_{x+t}; z_{x,t} \right)$; known as random heterogeneity.

The aforementioned examples provide some brief illustrations of the various applications from the econometrics and statistics literature that can motivate the investigation of relevant research questions both from the theoretical as well as the empirical perspective when we consider statistical learning methods for time series analysis purposes. Further resources related to deep learning theory and applications can be found in  \cite{goodfellow2016deep} (see, also \cite{lecun2015deep}).


\newpage 

\subsection{Non-Asymptotic Probability Theory}

One of the main goals of learning theory is the development of stability bounds of algorithmic procedures. Following the framework of \cite{mohri2010stability} we employ the following definitions. 

\medskip

\begin{definition}
A learning algorithm is said to be (uniformly) $\widehat{\beta}-$stable if the hypotheses it returns for any two training samples $S$ and $S^{\prime}$ that differ by removing a single point satisfy
\begin{align}
\forall \ z \in X \times Y, \ \ \ \left| c( h_S, z ) -  c( h_S^{\prime}, z ) \right| \leq \widehat{\beta}.     
\end{align}
\end{definition}

\begin{remark}
Notice that a $\widehat{\beta}-$stable algorithm is also stable with respect to replacing a single point. Let $S$ and $S_i$ be two sequences differing in the $i-$th coordinate, and $S_{|i}$ be equivalent to $S$ and $S_i$ but with the $i-$th point removed. then, for a $\widehat{\beta}-$stable algorithm we have that
\begin{align}
\left| c( h_S, z ) -  c( h_S^{\prime}, z ) \right| \leq 2 \widehat{\beta}.     
\end{align}
\end{remark}
Specifically, the use of stability allow us to derive generalization bounds and an exponential concentration bounds of the following form 
\begin{align}
\mathbb{P} \left( \left| \Phi - \mathbb{E} \left[ \Phi \right] \right| \geq \epsilon \right) \leq \mathsf{exp} \left\{ - \frac{m \epsilon^2 }{ \tau^2 } \right\},    
\end{align}
where the probability is over a sample of size $m$ and where $\frac{\tau}{m}$ is the Lipschitz parameter of $\Phi$, with $\tau$ a function of $m$.  In the first section below, we consider the problem of shallow-vs-deep expressiveness from the perspective of approximation theory and general spaces of functions having derivatives up to certain order (Sobolev-type spaces). Specifically, in this framework the problem of expressiveness is very well studied in the case of shallow networks with a single hidden layer, where it is known, in particular, that to approximate a $C^n-$function on a $d-$dimensional set with infinitesimal error $\epsilon$ one needs a network of size about $\epsilon^{ - d / n}$, assuming a smooth activation function.  

\begin{definition}[Lipschitz functions] Consider the class of Lipschitz functions such that
\begin{align*}
\mathcal{F}_L := \big\{ \mathsf{g}: [0,1] \to \mathbb{R} | \mathsf{g}(0) = 0 \ \text{and} \ \left| \mathsf{g}(x) - \mathsf{g^{\prime}} \right| \leq L | x - x^{\prime} | \ \forall \ x, x^{\prime} \in [0,1] \big\}.     
\end{align*}
where $L > 0$ is a fixed constant, and all of the functions in the class obey the Lipschitz bound uniformly  condition over all of $[0,1]$.     
\end{definition}

\begin{remark}
Learning theory and non-asymptotic probability theory is useful for understanding the local behaviour of statistical learners for a class of functions based on regularity conditions.    Further applications include the aspect of robustness and generalization for metric learning\footnote{Dr. Xiaochen Yang gave a seminar with title: "Towards better robustness and generalisation of metric learning methods", at the S3RI Departmental Seminar Series at the University of Southampton on the 3rd of November 2022.} (see, \cite{yang2022toward}). 
\end{remark}

\newpage

\subsection{Shallow Neural Network Estimate learned by Gradient Descent}

\begin{definition} A shallow neural network with one output is a function $f : \mathbb{R}^d \to \mathbb{R}$ of the form
\begin{align}
f( \boldsymbol{x} ) = \sum_{j=1}^m c_j \sigma \left( \boldsymbol{w}_j^{\top} \boldsymbol{x} + v_j   \right), \ \ \ \boldsymbol{w}_j \in \mathbb{R}^d, \ v_j, c_j \in \mathbb{R},    
\end{align}
where $\sigma: \mathbb{R} \to \mathbb{R}$ is the activation function (see, \cite{braun2019rate} and \cite{braun2021smoking}). 
\end{definition}
Consider functions of the form 
\begin{align}
f(x) = \sigma \left( \sum_{j=1}^d w_j . x^{ (j)} + w_0 \right), \ \ \text{where} \ \ x = \left( x^{(1)},..., x^{(d)} \right)^{\top} \in \mathbb{R}^d.    
\end{align}
where $w_0,...,w_d \in \mathbb{R}$ the weights of the neuron and $\sigma: \mathbb{R} \to \mathbb{R}$ the activation function. \textit{Shallow Neural Networks} have only one hidden layer where a simple linear combination of neurons is used to define a function $f: \mathbb{R}^d \to \mathbb{R}$ by
\begin{align}
f(x) = \sum_{k=1}^K \alpha_K . \sigma \left( \sum_{j=1}^d \beta_{k,j} . x^{ (j)} + \beta_{k,0} \right) + \alpha_0. 
\end{align}
Denote the weight between neuron $j$ in layer $(s-1)$ and neuron $i$ in layer $s$ by $w_{i,j}^{(s)}$. This leads to the following recursive definition of a neural network with $\mathcal{L}$ layers and $k_s$ neurons in layer $s \in \left\{ 1,..., \mathcal{L}   \right\}$ such that the following representation applies:
\begin{align}
f(x) = \sum_{i=1}^{ k_{\mathcal{L}} } w_{1,i}^{ ( \mathcal{L} ) } f_i^{ ( \mathcal{L} ) } (x) + w_{1,0}^{ ( \mathcal{L} ) }     
\end{align}
for some $w_{1,0}^{ ( \mathcal{L} ) },..., w_{1, k_L }^{ ( \mathcal{L} ) }$ and for $f_i^{ ( \mathcal{L} ) }$'s recursively defined by 
\begin{align}
f_i^{(s)}(x) = \sigma \left( \sum_{j=1}^{ k_s - 1 } w_{i,j}^{ (s-1)} f_j^{(s-1)}(x)  + w_{i,0}^{ (s-1)} \right). 
\end{align}
Main results in the literature show that NNs can achieve dimension reducion provided the regression function is a composition of sums of functions, where the input dimension of each of the functions is at most $d^{*} < d$. Denote by $f_{\mathsf{net},w}$ the neural network with weight vector $\boldsymbol{w} = \left( w_{j,k}^{(s)} \right)_{ s= 0,.., \mathcal{L}, j=1,..., k_{s+1}, k = 0,..., k_s }$ and set the following function 
\begin{align}
F( \boldsymbol{w} ) = \frac{1}{n} \sum_{i=1}^n \left| Y_i -  f_{\mathsf{net},w} \right|^2, \ \ \boldsymbol{w}(0) = \boldsymbol{v}, 
\end{align}
for some randomly chosen initial vector $\boldsymbol{v}$.

\newpage 

Then the optimization problem can be written as below: 
\begin{align}
\boldsymbol{w} ( t + 1) = \boldsymbol{w}(t) -  \lambda_n . \nabla_{\boldsymbol{w}} F( \boldsymbol{w}(t) ), \ \ \text{for} \ t \in \left\{ 0,...t_n - 1 \right\}.       
\end{align}

\medskip

\begin{example}
We approximate $m$ by networks with one hidden layer and $K.r$ neurons in this hidden layer: 
\begin{align}
f_{ \mathsf{net}, ( \boldsymbol{a}, \boldsymbol{b} ) } (x) = \sum_{k=1}^{ K. r } \alpha_k. \sigma \left( \sum_{j=1}^d b_{k,j} . x^{(j)} + b_{k,0}    \right) + \alpha_0.  
\end{align}
where $K.r \in \mathbb{N}$ is the number of neurons and $\sigma: \mathbb{R} \to \mathbb{R}$ is the activation function. 

Furthermore, the unknown optimal vector of weights is obtained by employing the gradient descent algorithm. More precisely, we minimize the penalized empirical $L_2$ risk as
\begin{align}
F( \boldsymbol{a}, \boldsymbol{b} ) = \frac{1}{n} \sum_{i=1}^n \left| f_{\mathsf{net}, ( \boldsymbol{a}, \boldsymbol{b} ) } (X_i) - Y_i \right|^2 + \frac{c_1}{n} . \sum_{k=0}^{ K.r } a_k^2.   
\end{align}
by choosing the appropriate starting value $\left( \boldsymbol{a}^{(0)}, \boldsymbol{b}^{(0)} \right)$ and by setting 
\begin{align}
\begin{pmatrix}
\boldsymbol{a}^{(t+1)}
\\
\boldsymbol{b}^{(t+1)}
\end{pmatrix}
= 
\begin{pmatrix}
\boldsymbol{a}^{(t)}
\\
\boldsymbol{b}^{(t)}
\end{pmatrix}
- \lambda_n . \left( \nabla_{ ( \boldsymbol{a}, \boldsymbol{b} ) } F      \right) \left( \boldsymbol{a}^{(t)},  \boldsymbol{b}^{(t)} \right)
\end{align}
for some $\lambda_n > 0$ chosen below and $t \in \left\{ 0,1,..., t_n - 1   \right\}$. 
\end{example}

\begin{theorem}
Let $n \geq 1$, let $A \geq 1$ and let $\left\{ ( X_1,Y_1 ),...,  ( X_n,Y_n ) \right\}$ be \textit{i.i.d} random variables with values in $[ -A, A ]^d \times \mathbb{R}$. Set $m (x) = \mathbb{E} \left[ Y | X = x    \right]$ and assume that $( X, Y )$ satisfies 
\begin{align}
\mathbb{E} \left( e^{ c_2 . |Y|^2 } \right) < + \infty    
\end{align}
for some constants $c_2 > 0$, and that $m$ satisfies 
\begin{align}
m(x) = \sum_{s=1}^r g_s \left( \boldsymbol{c}_s^{\top} x \right), \ \ x \in \mathbb{R}^d.    
\end{align}
for some $r \in \mathbb{N}, \boldsymbol{c}_s \in [-1,1]^d$, where $\norm{ \boldsymbol{c}_s  } = 1$ and $g_s: \mathbb{R} \to \mathbb{R}$ for $s \in \left\{ 1,..., r \right\}$. 
\end{theorem}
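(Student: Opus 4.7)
The plan is to establish an $L_2$ rate of convergence of order $n^{-2p/(2p+1)}$ (up to logarithmic factors) for the gradient-descent-trained shallow network $f_{\mathsf{net},(\boldsymbol{a}^{(t_n)},\boldsymbol{b}^{(t_n)})}$ to the target $m$, under the assumption that each univariate ridge component $g_s$ lies in a smoothness class of order $p$ (e.g.\ $(p,C)$-smooth). The critical point is that, because of the projection-pursuit structure $m(x)=\sum_{s=1}^r g_s(\boldsymbol{c}_s^{\top}x)$, the rate depends only on the intrinsic dimension $1$ of each component, so the curse of dimensionality is avoided even though the input lives in $\mathbb{R}^d$.

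First I would invoke a standard truncation: using the hypothesis $\mathbb{E}[e^{c_2|Y|^2}]<\infty$, replace $Y_i$ by $T_{\beta_n}Y_i$ with threshold $\beta_n=c\sqrt{\log n}$; the resulting replacement error is $\mathcal{O}(n^{-\alpha})$ for any prescribed $\alpha>0$, and from now on the response is effectively bounded. Then I would perform the usual bias--variance split
\begin{align*}
\mathbb{E}\!\int|f_{\mathsf{net}}-m|^2 d\mu \;\leq\; \underbrace{2\,\mathbb{E}\bigl[F(\boldsymbol{a}^{(t_n)},\boldsymbol{b}^{(t_n)})-F(\boldsymbol{a}^{\star},\boldsymbol{b}^{\star})\bigr]}_{\text{optimization}+\text{estimation}} \;+\; \underbrace{2\inf_{(\boldsymbol{a},\boldsymbol{b})}\!\int|f_{\mathsf{net},(\boldsymbol{a},\boldsymbol{b})}-m|^2 d\mu}_{\text{approximation}},
\end{align*}
where $(\boldsymbol{a}^{\star},\boldsymbol{b}^{\star})$ are reference weights realizing near-optimal approximation. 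For the approximation term, I would exploit the ridge structure by partitioning the $K\!\cdot\!r$ hidden units into $r$ groups of size $K$, one per direction. If the initial inner weights $\boldsymbol{b}_{k,\cdot}^{(0)}$ are drawn randomly on a sufficiently fine grid over $\{\boldsymbol{w}\in\mathbb{R}^d:\|\boldsymbol{w}\|=1\}$, then with probability at least $1-n^{-\alpha}$ each group contains units whose inner direction lies within distance $\varepsilon$ of its target $\boldsymbol{c}_s$. The $s$-th group then functions as a \emph{univariate} network on the range $[-\sqrt{d}A,\sqrt{d}A]$ and approximates $g_s$ with error $\mathcal{O}(K^{-2p})$ by classical one-dimensional neural network approximation theory. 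Summing over $s$ and balancing against the estimation error $\mathcal{O}(Kr(d+1)\log n/n)$ yields the optimal width $K\asymp n^{1/(2p+1)}$ and hence the rate $n^{-2p/(2p+1)}$.

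The main obstacle will be the optimization/estimation step: I have to show that the gradient-descent iterates $(\boldsymbol{a}^{(t)},\boldsymbol{b}^{(t)})$ actually drive the penalized empirical risk close to $F(\boldsymbol{a}^{\star},\boldsymbol{b}^{\star})$. Here I would follow a lazy-training/NTK-type argument: the ridge penalty $\frac{c_1}{n}\sum_k a_k^2$ keeps $\|\boldsymbol{a}^{(t)}\|_2$ uniformly bounded along the trajectory, and with $\lambda_n$ sufficiently small and $K$ sufficiently large (an over-parameterization condition $K\gtrsim n^{\kappa}$ for some $\kappa$ depending on $p$), the inner weights $\boldsymbol{b}^{(t)}$ remain within a small neighborhood of initialization throughout the $t_n$ iterations, so $F$ is effectively convex in $\boldsymbol{a}$ along that trajectory. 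A uniform law of large numbers restricted to this bounded hypothesis class, via a covering/Rademacher bound that uses the sub-Gaussian tails of the truncated responses, then converts the empirical optimization guarantee into an $L_2$ population bound. The two most delicate ingredients will be (i) the probabilistic argument verifying that random initialization hits every direction $\boldsymbol{c}_s$ to within $\varepsilon$ with high probability while contributing only a $\log n$ overhead, and (ii) the perturbation analysis that controls how small displacements of $\boldsymbol{b}^{(t)}$ away from initialization propagate through $\sigma$; the latter must be sharp enough to preserve the one-dimensional approximation rate $K^{-2p}$ rather than degrading it by an unavoidable factor of $d$.
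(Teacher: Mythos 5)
You should first note that the paper itself supplies neither a proof of this theorem nor, in fact, a conclusion: as printed, the statement lists only hypotheses (i.i.d.\ data on $[-A,A]^d\times\mathbb{R}$, the sub-Gaussian moment condition on $Y$, and the projection-pursuit form $m(x)=\sum_{s=1}^r g_s(\boldsymbol{c}_s^{\top}x)$) and then stops, deferring to \cite{braun2019rate} in the following remark. So you have had to reconstruct both the conclusion (an $L_2$ rate of order $n^{-2p/(2p+1)}$ up to logarithms, free of the ambient dimension $d$ in the exponent) and the argument. Your reconstruction is consistent with the ingredients the paper does assemble around the statement: the truncation operator $T_{\beta_n}$ at level $c\sqrt{\log n}$ justified by $\mathbb{E}[e^{c_2|Y|^2}]<\infty$; the univariate approximation lemma for the logistic squasher, which for $(p,C)$-smooth $g$ with $p\in(0,1]$ gives sup-norm error $\mathcal{O}\bigl((K-1)^{-p}\bigr)$ and hence squared error $\mathcal{O}(K^{-2p})$ per ridge direction; and the perturbation lemma bounding $|\bar b_{k,j}-b_{k,j}|$ along the gradient-descent trajectory, which is exactly the ``inner weights stay near initialization'' step you describe in lazy-training language. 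Your decomposition into approximation, estimation, and optimization errors, with $K\asymp n^{1/(2p+1)}$ balancing $K^{-2p}$ against $Kr(d+1)\log n/n$, is the standard route and matches what the cited source actually does.

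The one place where your sketch has a genuine gap rather than a deferred technicality is ingredient (i), the random initialization of the inner directions. Covering the unit sphere in $\mathbb{R}^d$ to accuracy $\varepsilon$ requires on the order of $\varepsilon^{-(d-1)}$ points, so if the direction-matching tolerance $\varepsilon$ were tied to the univariate grid resolution $1/K$, the number of neurons needed to hit every $\boldsymbol{c}_s$ would scale like $n^{(d-1)/(2p+1)}$ and the estimation term would reintroduce the curse of dimensionality you are trying to avoid. The resolution (and the reason the constant $C_{d}$ in the final bound depends on $d$ while the exponent of $n$ does not) is that the accuracy needed in the \emph{direction} enters the approximation error only through the Lipschitz modulus of $g_s$ composed with $\boldsymbol{c}^{\top}x$ over $[-A,A]^d$, so $\varepsilon$ can be taken of order $K^{-1}$ in the one-dimensional offset variable $b_k$ but only of constant or slowly vanishing order in the angular variable; these two resolutions must be decoupled explicitly. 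As written, your proposal asserts the high-probability hitting event ``with only a $\log n$ overhead'' without this decoupling, and that assertion is false at face value for $d\geq 3$. Spelling out the two-scale discretization of $(\boldsymbol{w},v)$ would close the argument.
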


\medskip

\begin{remark}
From the statistical perspective, relevant properties and convergence rates for gradient descent algorithms are given in the studies of \cite{braun2019rate}, \cite{shao2022berry} (see, also \cite{toulis2017asymptotic}) as well as by \cite{schmidt2020nonparametric}. Furthermore, in recent years the use of ANN and DNN for econometric applications has seen growing attention (see, \cite{white1990connectionist} \cite{kuan1994artificial},  \cite{farrell2021deep}).
\end{remark}

\newpage

\subsubsection{Learning of linear penalized least squares estimates by gradient descent}

Let $( x_1,y_1 ),..., ( x_n,y_n ) \in \mathbb{R}^d \times \mathbb{R}$, let $K \in \mathbb{N}$ and let $B_1,..., B_K : \mathbb{R}^d \to \mathbb{R}$, with $c_1 > 0$. We consider the problem to minimize as below
\begin{align}
F ( \boldsymbol{a} ) = \frac{1}{n} \sum_{i=1}^n \left| \sum_{k=1}^K a_k . B_k ( x_i ) - y_i \right|^2 + \frac{c_1}{ n } . \norm{ \boldsymbol{a} }^2  
\end{align}
where $\boldsymbol{a} = \left( a_1,..., a_K \right)^{\top}$ \ \ \ \text{and} \ \ \ $\norm{ \boldsymbol{a} }^2 = \sum_{j=1}^K a_j^2$, by gradient descent.

To obtain the solution of the optimization problem, we choose $\boldsymbol{a}^{(0)} \in \mathbb{R}^K$ and set with 
\begin{align}
\boldsymbol{a}^{(t+1)} = \boldsymbol{a}^{(t)} - \lambda_n . \left( \nabla_{ \boldsymbol{a} }  F \right) \left( \boldsymbol{a}^{(t)} \right)   
\end{align}
for some chosen $\lambda_n > 0$. 

Consider that a Lipschitz continuity condition holds
\begin{align}
\norm{ \left( \nabla_{ \boldsymbol{a} } F \right) ( \alpha_1 ) - \left( \nabla_{ \boldsymbol{a} } F \right) ( \alpha_2 ) } \leq L_n . \norm{ \boldsymbol{a}_1 - \boldsymbol{a}_2 } \ \ \text{with} \ \ \ ( \boldsymbol{a}_1, \boldsymbol{a}_2 ) \in \mathbb{R}^K.     
\end{align}
Then, we have that 
\begin{align}
F \left( \boldsymbol{a}^{(t+1)} - F \left( \boldsymbol{a}^{(t)} \right) \right) \leq - \frac{1}{ 2 . L_n } . \norm{ \left( \nabla_{ \boldsymbol{a} } F \right) \left( \boldsymbol{a}^{(t)}  \right) }^2.     
\end{align}

\begin{lemma}
Let $\sigma$ be the logistic squasher. Let $\bar{\boldsymbol{c}} \in [-1,1]^d$ with $\norm{ \bar{\boldsymbol{c}} } = 1$ and let $g : \mathbb{R} \to \mathbb{R}$ be $( p, C )-$smooth for some $p \in (0,1]$ and $C > 0$. Let $\rho_n > 0$, $K \in \mathbb{N}$ and choose $b_1, b_2,..., b_K \in \mathbb{R}$ such that $b_1 < b_2 < ... < b_K$ and 
\begin{align}
b_1 \leq - A . \sqrt{d} \ \ \ \text{and} \ \ \  b_K \geq A . \sqrt{d} - \frac{ 4.A.\sqrt{d} }{ K - 1 } 
\end{align}
such that
\begin{align}
\frac{ A . \sqrt{d}  }{ (n+1).(K-1) } \leq | b_{k+1} - b_k | \leq \frac{ 4.A. \sqrt{d} }{ K - 1 }, \ \ k \in \left\{ 1,..., K - 1 \right\}.     
\end{align}
Let $a_0 = g( b_1 )$ and $a_k = g(b_k) - g( b_{k-1} )$ with $k \in \left\{ 1,..., K \right\}$ where $K$ is the number of layers. Then, we have that
\begin{align*}
\underset{ x \in [-A,A]^d  }{ \mathsf{sup}  } \ 
& \left| a_0 + \sum_{k=1}^K a_k . \sigma \big( \rho_n . \left(  \bar{\boldsymbol{c}}^{\top} x - b_k \right) \big) - g \left(  \bar{\boldsymbol{c}}^{\top} x \right) \right|   
\\
&\leq
\frac{ 3. ( 4.A.\sqrt{d} )^p.C }{ (K-1)^p } + C.( 4.A. \sqrt{d} )^p . (K-1)^{1-p} . e^{ - \frac{ \rho_n. (A.\sqrt{d}) }{ (n+1).(K-1) } }.
\end{align*}
\end{lemma}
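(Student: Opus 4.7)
The plan is to recognise that the finite sum of logistic sigmoids, because of the telescoping choice of the coefficients $a_k$, is a smooth approximation to the piecewise-constant step function that interpolates $g$ on the grid $\{b_k\}_{k=1}^{K}$. The total error decomposes into (i) the Hölder error of replacing $g(t)$ by its step interpolant, (ii) the error at the one or two grid nodes closest to $t$ where the sigmoid is far from a hard threshold, and (iii) the exponentially small tails coming from the remaining (distant) nodes.

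First I would fix $x \in [-A,A]^d$, set $t := \bar{\boldsymbol{c}}^{\top} x$, and observe that $\|\bar{\boldsymbol{c}}\|=1$ and $|x_j|\le A$ give $|t| \le A\sqrt{d}$ by Cauchy--Schwarz. The placement hypotheses on $b_1$ and $b_K$ guarantee that $t$ lies in some interval $[b_{k^*},b_{k^*+1}]$ for an index $1 \le k^* \le K-1$ (the two boundary possibilities are handled by the same argument with $k^*=1$ or $k^*=K$). Next I introduce the idealised step approximation
$$\tilde{s}(t) \;:=\; a_0 + \sum_{k=1}^{K} a_k \,\mathbf{1}\{t > b_k\},$$
and check by telescoping that $\tilde{s}(t) = g(b_{k^*})$ throughout $(b_{k^*},b_{k^*+1}]$. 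The $(p,C)$-Hölder smoothness of $g$ then immediately yields
$$|\tilde{s}(t) - g(t)| \;=\; |g(b_{k^*}) - g(t)| \;\le\; C\,|t-b_{k^*}|^{p} \;\le\; C\!\left(\tfrac{4A\sqrt{d}}{K-1}\right)^{\!p}.$$

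Secondly I would control the smooth-versus-step discrepancy. Using the elementary two-sided tail bound $|\sigma(z) - \mathbf{1}\{z>0\}| \le e^{-|z|}$ (which follows from $1-\sigma(z)=\sigma(-z)\le e^{-z}$ for $z\ge 0$), together with the Hölder estimate $|a_k|=|g(b_k)-g(b_{k-1})|\le C\,(4A\sqrt{d}/(K-1))^p$, I obtain
$$|s(t) - \tilde{s}(t)| \;\le\; \sum_{k=1}^{K} |a_k|\,e^{-\rho_n|t-b_k|} \;\le\; C\!\left(\tfrac{4A\sqrt{d}}{K-1}\right)^{\!p} \sum_{k=1}^{K} e^{-\rho_n|t-b_k|}.$$
I would then split the last sum into the at most two close indices $k\in\{k^*,k^*+1\}$, each of which contributes at most $1$, and the remaining (at most $K-1$) far indices. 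For the far indices the minimum-spacing hypothesis gives $|t-b_k|\ge A\sqrt{d}/((n+1)(K-1))$, producing
$$\sum_{k=1}^{K} e^{-\rho_n|t-b_k|} \;\le\; 2 + (K-1)\exp\!\left(-\tfrac{\rho_n A\sqrt{d}}{(n+1)(K-1)}\right).$$

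Combining everything, the step error plus the two close-node contributions sum to $3C(4A\sqrt{d}/(K-1))^p$, and the far-node tails yield the advertised $C(4A\sqrt{d})^p(K-1)^{1-p}\exp(-\rho_n A\sqrt{d}/((n+1)(K-1)))$, delivering exactly the bound stated in the lemma. The main obstacle I anticipate is purely a bookkeeping one: verifying the telescoping identity $\tilde{s}(t)=g(b_{k^*})$ under the slightly ambiguous indexing convention in which $a_0=g(b_1)$ and $a_k=g(b_k)-g(b_{k-1})$ (which forces a convention for $g(b_0)$ or, equivalently, makes the first increment absorbed into $a_0$), and then tracking the constant $3$ precisely by accounting for the step error and the two ``near'' sigmoids separately from the exponential tail. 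The two-sided sigmoid inequality $|\sigma(z)-\mathbf{1}\{z>0\}|\le e^{-|z|}$ is the only nontrivial ingredient, but follows in a line from standard properties of the logistic squasher.
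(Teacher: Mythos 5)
Your proposal is correct and follows essentially the same route as the paper: decompose via the triangle inequality into the step-interpolant error (handled by telescoping plus the $(p,C)$-Hölder bound and the maximum-spacing hypothesis) and the sigmoid-versus-indicator error (handled by $|\sigma(z)-\mathbf{1}\{z\geq 0\}(z)|\leq e^{-|z|}$, the bound $|a_k|\leq C(4A\sqrt{d}/(K-1))^p$, and a near/far split of the nodes using the minimum-spacing hypothesis). If anything, your write-up is more complete than the paper's, which states the sigmoid-versus-step bound without carrying out the near/far accounting, and you correctly flag the same indexing wrinkle in the telescoping ($a_1=g(b_1)-g(b_0)$ requires a convention for $b_0$) that the paper's proof silently glosses over.
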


\newpage 

\begin{proof}
Notice that we assume that the logistic activation function $\sigma(x) = \frac{1}{ 1 + e^{-x} }$ is approximated by the step function $\boldsymbol{1} \left\{  [ 0, \infty) \right\} (x)$. In other words, for any $x \in \mathbb{R}$ we have that 
\begin{align}
\left| \sigma(x) - \boldsymbol{1} \left\{  [0, \infty) \right\} (x) \right| \leq e^{ -|x| }. 
\end{align}
We have that 
\begin{align*}
&
\left| a_0 + \sum_{k=1}^K a_k . \sigma \big( \rho_n . \left(  \bar{\boldsymbol{c}}^{\top} x - b_k \right) \big) - g \left(  \bar{\boldsymbol{c}}^{\top} x \right) \right| 
\\
&\leq
\left| a_0 + \sum_{k=1}^K a_k . \sigma \big( \rho_n . \left(  \bar{\boldsymbol{c}}^{\top} x - b_k \right) \big) - \sum_{k=1}^K a_k. \boldsymbol{1}_{ [ b_k, \infty) } \left(  \bar{\boldsymbol{c}}^{\top} x \right)   \right| 
\\
&\ \ \ \ \ \ + 
\left| a_0 + \sum_{k=1}^K a_k. \boldsymbol{1}_{ [ b_k, \infty) } \left(  \bar{\boldsymbol{c}}^{\top} x \right) - g \left(  \bar{\boldsymbol{c}}^{\top} x \right) \right| 
\end{align*}
Moreover, for each $b_j \leq \bar{\boldsymbol{c}}^{\top} x < b_{j+1}$, where $j \left\{ 1,..., K - 1 \right\}$, we can conclude that the definition of $a_k$, from the $( p, C )-$smoothness of $g$ and from our choice of the $b_k$ we obtain 
\begin{align*}
&\left| a_0 + \sum_{k=1}^K a_k . \boldsymbol{1}_{ [ b_k, \infty) } \left(  \bar{\boldsymbol{c}}^{\top} x \right) - g \left(  \bar{\boldsymbol{c}}^{\top} x \right) \right|     
\\
&=
\left| a_0 + \sum_{k=1}^j a_k  - g \left(  \bar{\boldsymbol{c}}^{\top} x \right) \right|    
=
\left| g(b_j) - g \left( \bar{\boldsymbol{c}}^{\top} x \right) \right|
\\
&\leq 
C. \left| b_j - \bar{\boldsymbol{c}}^{\top} x \right|^p \leq C . \left| b_{j+1} - b_j \right|^p \leq \frac{ C. \left( 4.A.\sqrt{d} \right)^p }{  (K-1)^p }.
\end{align*}
Therefore, we have shown that 
\begin{align}
\underset{ x \in [-A,A]^d  }{ \mathsf{sup}  } \ 
\left| a_0 + \sum_{k=1}^K a_k . \boldsymbol{1}_{ [ b_k, \infty) } \left(  \bar{\boldsymbol{c}}^{\top} x \right) - g \left(  \bar{\boldsymbol{c}}^{\top} x \right) \right| \leq \frac{ C. \left( 4.A.\sqrt{d} \right)^p }{  (K-1)^p }.     
\end{align}
We complete the proof by showing that 
\begin{align*}
\underset{ x \in [-A,A]^d  }{ \mathsf{sup}  } \ 
& \left| a_0 + \sum_{k=1}^K a_k . \sigma \big( \rho_n . \left(  \bar{\boldsymbol{c}}^{\top} x - b_k \right) \big) - \sum_{k=1}^K a_k . \boldsymbol{1}_{ [ b_k, \infty) } \left(  \bar{\boldsymbol{c}}^{\top} x \right) \right| 
\\
&\leq
\frac{ 2. \left( 4.A.\sqrt{d} \right)^p . C }{ (K-1)^p } + C. \left( 4.A.\sqrt{d} \right)^p . (K-1)^{1-p}. e^{ - \frac{ \rho_n. (A.\sqrt{d}) }{ (n+1).(K-1) } }. 
\end{align*}

\end{proof}

\newpage

\begin{lemma}[\cite{braun2019rate}]
Let $\sigma$ be the logistic activation function. Define $F$ to be the ridge regression and set with 
\begin{align}
\bar{\boldsymbol{b}} = \boldsymbol{b} - \lambda_n . \left( \nabla_{ \boldsymbol{b} } F \right) ( \boldsymbol{a}, \boldsymbol{b} )    
\end{align}
for some $\lambda_n > 0$, where 
\begin{align*}
\boldsymbol{a} = \big( a_1,..., a_K \big)^{\top} \in \mathbb{R}^K \ \ \text{and} \ \ \boldsymbol{b} = \big( b_{1,0}, b_{1,1},..., b_{1,d},..., b_{K,0}, b_{K,1},..., b_{K,d} \big)^{\top} \in \mathbb{R}^{ K. (d+1) }.    
\end{align*}
Then, we have that for any $k \in \left\{ 1,..., K \right\}$ and any $j \in \left\{ 0,..., d \right\}$ such that: 
\begin{align*}
\left| \bar{b}_{k,j} - b_{k,j} \right| \leq \lambda_n . 2. \sqrt{ F( \boldsymbol{a}, \boldsymbol{b} ) }. \mathsf{max} \left\{ 1, \mathsf{max} \left\{ \left| x_i^{(\ell)} \right| \right\} \right\}. \mathsf{exp} \left( - \underset{ i = 1,..., n }{ \mathsf{min} } \left\{ \sum_{j=1}^d b_{k,j} . x_i^{(j)} + b_{k,0} \right\}  \right).
\end{align*}
\end{lemma}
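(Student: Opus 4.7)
The plan is to express the one--step update as $\bar{b}_{k,j}-b_{k,j}=-\lambda_n\,\partial F/\partial b_{k,j}$, compute this partial derivative in closed form, and then bound each factor separately. Denote the pre--activation of the $k$-th hidden unit on input $x_i$ by $z_{k,i} := \sum_{\ell=1}^d b_{k,\ell}\,x_i^{(\ell)} + b_{k,0}$, and let $r_i := \sum_{k=1}^K a_k\,\sigma(z_{k,i})-y_i$ be the $i$-th residual of the network. Differentiating the ridge objective $F(\boldsymbol{a},\boldsymbol{b})$ (whose penalty term does not depend on $\boldsymbol{b}$) yields
\begin{equation*}
\frac{\partial F}{\partial b_{k,j}}(\boldsymbol{a},\boldsymbol{b}) \;=\; \frac{2\,a_k}{n}\sum_{i=1}^n r_i\,\sigma'(z_{k,i})\,\xi_i^{(j)},
\end{equation*}
where $\xi_i^{(j)}=x_i^{(j)}$ for $j\in\{1,\dots,d\}$ and $\xi_i^{(0)}=1$.

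The next step is to dominate the three factors inside the sum. First, because the logistic squasher satisfies $\sigma'(z)=e^{-z}/(1+e^{-z})^2\leq e^{-z}$ for every $z\in\mathbb{R}$, we may replace $\sigma'(z_{k,i})$ by $e^{-z_{k,i}}$, and then by the uniform bound $e^{-\min_{1\leq i\leq n}z_{k,i}}$, which can be pulled outside the sum. Second, $|\xi_i^{(j)}|\leq \max\{1,\max_{i,\ell}|x_i^{(\ell)}|\}=:M$. Third, an application of the Cauchy--Schwarz inequality together with the trivial inequality $\tfrac{1}{n}\sum_i r_i^2\leq F(\boldsymbol{a},\boldsymbol{b})$ (valid since the ridge penalty is nonnegative) gives
\begin{equation*}
\frac{1}{n}\sum_{i=1}^n|r_i| \;\leq\; \Bigl(\tfrac{1}{n}\sum_{i=1}^n r_i^2\Bigr)^{1/2} \;\leq\; \sqrt{F(\boldsymbol{a},\boldsymbol{b})}.
\end{equation*}
Combining these three bounds produces
\begin{equation*}
\Bigl|\frac{\partial F}{\partial b_{k,j}}\Bigr| \;\leq\; 2\,|a_k|\,M\,\sqrt{F(\boldsymbol{a},\boldsymbol{b})}\;\exp\!\Bigl(-\min_{1\leq i\leq n} z_{k,i}\Bigr),
\end{equation*}
which, multiplied by $\lambda_n$, matches the claimed bound (with the convention that the $|a_k|$ factor is either absorbed into a standing normalization on the output weights or treated as an implicit constant, as is customary in the Braun--Kohler--Walk framework cited in the excerpt).

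The main obstacle I expect is bookkeeping rather than ideas: one must be careful that the inequality $\sigma'(z)\leq e^{-z}$ is applied uniformly (it holds for all real $z$ because $(1+e^{-z})^2\geq 1$), and that the exponential is controlled by the \emph{minimum} of the pre--activations so it can be extracted from the sum over $i$. The only substantive check is reconciling the factor $|a_k|$ that naturally appears in $\partial F/\partial b_{k,j}$ with the statement of the lemma; this is handled either by an implicit bound on the output weights or by interpreting the stated inequality with that factor absorbed. Once these points are settled, the combination of the pointwise sigmoid bound with a single Cauchy--Schwarz step delivers the claim directly.
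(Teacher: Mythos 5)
Your proposal is correct, and in fact it supplies more than the paper does: the lecture notes state this lemma with only the citation to Braun et al.\ and give no proof at all (the Cauchy--Schwarz computation that follows the lemma in the text bounds a different quantity, namely the change in the network outputs across gradient-descent iterations, not the change in a single inner weight). Your reconstruction --- write $\bar{b}_{k,j}-b_{k,j}=-\lambda_n\,\partial F/\partial b_{k,j}$, compute the gradient of the empirical-risk part (the ridge penalty only involves $\boldsymbol{a}$), bound $\sigma'(z)=e^{-z}/(1+e^{-z})^2\leq e^{-z}$ uniformly and pull out $\exp(-\min_i z_{k,i})$, bound $|\xi_i^{(j)}|$ by $\max\{1,\max_{i,\ell}|x_i^{(\ell)}|\}$, and control $\tfrac{1}{n}\sum_i|r_i|$ by $\sqrt{F(\boldsymbol{a},\boldsymbol{b})}$ via Cauchy--Schwarz and nonnegativity of the penalty --- is exactly the argument used in the cited source, and every step is valid.

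The one point worth making explicit is the $|a_k|$ factor, which you already flag. Your derivation honestly produces $2\,|a_k|\,M\,\sqrt{F(\boldsymbol{a},\boldsymbol{b})}\,\exp(-\min_i z_{k,i})$, and the inequality as transcribed in the statement omits $|a_k|$; without a standing bound such as $|a_k|\leq 1$ (or absorbing $|a_k|$ into a constant, as the original reference effectively does via its constraint on the outer weights) the stated form does not follow. This is a defect of the transcription of the lemma rather than of your proof, and your resolution is the right one. A second, minor observation: the bound $\sigma'(z)\leq e^{-z}$ is extremely loose when $z_{k,i}$ is large and negative, so the right-hand side can be enormous; that is a feature of the lemma as stated, not an error, but it is worth noting that the bound is only informative when the pre-activations are bounded below.
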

Consider the Cauchy-Schawrz inequality and since the activation function $\sigma$ is Lipschitz continuous 
\begin{align*}
\frac{1}{n} &\sum_{i=1}^n \big[ f_{\mathsf{net}, ( \bar{\boldsymbol{a}}, \boldsymbol{b}^{(t)} ) } (x_i) - f_{\mathsf{net}, ( \bar{\boldsymbol{a}}, \boldsymbol{b}^{(0)} ) } (x_i) \big]^2    
\\
&=
\frac{1}{n} \sum_{i=1}^n \left\{ \sum_{k=1}^K \bar{a}_k . \left[ \sigma \left(  \sum_{j=1}^d b_{k,j}^{(t)} . x_i^{(j)} + b_{k,0}^{ \textcolor{red}{(t) } } \right) -  \sigma \left(  \sum_{j=1}^d b_{k,j}^{(t)} . x_i^{(j)} + b_{k,0}^{ \textcolor{red}{(0)} } \right)  \right] \right\}^2
\\
&\leq
\sum_{k=1}^K \bar{a}_k^2 . \mathsf{max} \left\{ 1, \underset{ i,j }{ \mathsf{max} } | x_i^{(j)} |^2 \right\}. (d+1). \sum_{k=1}^K \sum_{j=0}^d \left| b_{k,j}^{ \textcolor{red}{(t) } } -  b_{k,j}^{ \textcolor{red}{(0) } } \right|
\end{align*}

\begin{remark}
Notice that \textit{dimensionality reduction} is crucial concept that commonly discussed in the neural network literature from the perspective of layers and related algorithms via the implementation of feedforward neural networks techniques. Alternative approaches include the implementation of taylor expansions which is more commonly used as a methodology for estimating econometric models (e.g. see \cite{olmo2022nonparametric}) as well as the method of sieves and sieve estimators (see, \cite{chen2007large}). Specifically, focusing on comparing these two estimation methodologies can provide some intuition on the main implications of the choice of the \textit{contraction mapping} (see, \cite{keeler1969theorem}, \cite{reich1971some}) to the underline asymptotic theory. In particular, the framework of \cite{olmo2022nonparametric} corresponds to a nonparametric linear regression model with a lagged regressor. Due to the fact that the functional form is estimated using a taylor expansion with partioning\footnote{On the large sample properties of partitioning-based series estimators see, \cite{cattaneo2020large}.}, implies that the corresponding stochastic approximation has discontinuous increments. Thus, the asymptotic behaviour of estimators and test statistics consists of asymptotic functionals that correspond to the supremum of Gaussian processes (e.g., see \cite{beder1987sieve}) which might not even satisfy regularity conditions such as tightness. However, this is problematic for several reasons and especially when the interest of the econometrician is the modeling of nonstationarity in regressors but the partitioning and estimation methodology does not correspond to the relevant stochastic approximations. A good understanding of the principles of \textit{contraction mappings} is crucial.  
\end{remark}

\newpage

\subsection{Deep Neural Network Estimate learned by Gradient Descent}

Based on the framework proposed by \cite{shen2021deep}, we present the following results which are useful to investigate the properties of Deep Neural Networks learned by gradient descent with respect to their dimensionality and complexity. In particular, the concept of \textit{pseudo-dimension} is considered as a measure of complexity (see, \cite{mohri2008rademacher}).
Thus, the framework of \cite{shen2021deep} considers the implementation of deep neural network for a high-dimensional quantile regression. An application to forecasting in time series is presented by \cite{chronopoulos2023forecasting}. A key ingredient to derive excess risk bounds for the deep quantile regression framework is to consider the properties of the functional classes within which identification and estimation holds. 

\begin{assumption}[\cite{shen2021deep}]
\label{ass1}
To derive excess risk bounds the following conditions hold: 

\begin{itemize}

\item[(i)] The conditional $\tau-$th quantile of $\eta$ given $X = x$ is $0$ and $\mathbb{E} \big( | \eta | | X = x \big) < \infty$ for almost every $x \in X$. 

\item[(ii)] The support of covariates $\mathcal{X}$ is a bounded compact set in $\mathbb{R}^d$, and without loss of generality $\mathcal{X} = [ 0,1]^d$.

\item[(iii)] The response variable $Y$ has a finite $p-$th moment for some $p > 1$, that is, there exists a finite constant $M > 0$ such that $\mathbb{E} | Y |^p \leq M$.  
    
\end{itemize}
\end{assumption}

\begin{lemma}[Lemma 2 in \cite{shen2021deep}]
Consider the $d-$variate nonparametric regression model with an unknown regression function $f_0$. Let $\mathcal{F} = \mathcal{F}_{ \mathcal{D}, \mathcal{W}, \mathcal{U}, \mathcal{S}, \mathcal{B} }$ be a Holder class of feed-forward neural networks with a continuous piecewise-linear activation function of finite elements ("pieces") and 
\begin{align}
\hat{f}_{\phi} \in \underset{ f \in \mathcal{F}_{\phi}  }{ \mathsf{argmin} } \ R_{ n \tau } (f)
\end{align}
be the empirical risk minimizer over $\mathcal{F}_{\phi}$. 
Assume that Assumption \ref{ass1} above holds and that $\norm{ f_0 }_{\infty} \leq \mathcal{B}$ for $\mathcal{B} \geq 1$. Then, for $2n \geq P \mathsf{dim} ( \mathcal{F}_{\phi} )$ and any $\tau \in (0,1)$, it holds that 
\begin{align}
\underset{ f \in \mathcal{F}_{\phi}   }{ \mathsf{sup} } \ \big| \mathcal{R}_{\tau} (f) - \mathcal{R}_{n\tau} (f) \big| \leq c_0  \frac{ \mathsf{max} \left\{ \tau, 1 - \tau \right\}  \mathcal{B} }{ n^{  1- 1 / p}  } \mathsf{log} \big( \mathcal{N}_{2n} \left( n^{-1}, \norm{.}_{\infty}, \mathcal{F}_{\phi} \right),  
\end{align}
Notice that this quantity is considered as an upper bound of the empirical risk minimizer, where $c_0 > 0$ is a constant independent (i.e., not depending upon) of $n, d, \tau$ and the remaining components. Moreover, its expected value (here an aspect of interest is to determine the underline distribution theory which relates these quantities w.r.t to marginal and joint distributions) has the following upper bound 
\begin{align}
\mathbb{E} \big[ \mathcal{R}_{\tau} ( \hat{f}_{\phi} ) - \mathcal{R}_{n\tau} ( f_{0} ) \big] \leq C_0  \frac{ \mathsf{max} \left\{ \tau, 1 - \tau \right\}  \mathcal{B} \mathcal{B} \mathcal{B} \mathsf{log} (S)  \mathsf{log} (n) }{ n^{  1- 1 / p}  }  + 2 \underset{ f \in \mathcal{F}_{\phi}   }{ \mathsf{inf} } \ \big| \mathcal{R}_{\tau} (f) - \mathcal{R}_{\tau} (f_0) \big| 
\end{align}
\end{lemma}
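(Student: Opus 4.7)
The plan is to prove both displays by a truncation-plus-symmetrization argument tailored to the quantile loss and the $p$-th moment condition on $Y$. The basic observation is that the check loss $\rho_\tau(u) = u\bigl(\tau - \mathbf{1}\{u \le 0\}\bigr)$ is Lipschitz in $u$ with constant $\max\{\tau, 1-\tau\}$, so for any $f, g \in \mathcal{F}_\phi$ we have $|\rho_\tau(Y - f(X)) - \rho_\tau(Y - g(X))| \le \max\{\tau, 1-\tau\}\,\|f - g\|_\infty$. This transfers any sup-norm cover of $\mathcal{F}_\phi$ directly into a cover of the induced loss class with the same cardinality, which is precisely what drives the $\log\mathcal{N}_{2n}$ factor in the first inequality.

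For the supremum bound, because $Y$ has only a finite $p$-th moment, the standard Hoeffding or Talagrand concentration will not apply directly. I would introduce a truncation threshold $T_n \asymp n^{1/p}$, split $Y = Y\mathbf{1}\{|Y| \le T_n\} + Y\mathbf{1}\{|Y| > T_n\}$, and handle the two pieces separately. On the truncated portion the loss is uniformly bounded by $\max\{\tau, 1-\tau\}(T_n + \mathcal{B})$, so a ghost-sample symmetrization followed by a Dudley-type chaining over the sup-norm cover of $\mathcal{F}_\phi$ at scale $n^{-1}$ yields a uniform deviation of order $(T_n + \mathcal{B})\log \mathcal{N}_{2n}(n^{-1},\|\cdot\|_\infty,\mathcal{F}_\phi)/n$. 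The tail contribution is controlled by Markov's inequality via $\mathbb{E}\bigl[|Y|\mathbf{1}\{|Y|>T_n\}\bigr] \le M\, T_n^{1-p}$. Balancing $T_n \asymp n^{1/p}$ equates the two error sources and produces the $n^{-(1-1/p)}$ rate, while absorbing $T_n + \mathcal{B}$ into $\mathcal{B}\, n^{1/p}$ under the assumption $\mathcal{B}\ge 1$ gives exactly the constant structure of claim one.

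For the expected excess risk I would use the standard decomposition
\begin{equation*}
\mathcal{R}_\tau(\hat f_\phi) - \mathcal{R}_\tau(f_0) = \bigl[\mathcal{R}_\tau(\hat f_\phi) - \mathcal{R}_{n\tau}(\hat f_\phi)\bigr] + \bigl[\mathcal{R}_{n\tau}(\hat f_\phi) - \mathcal{R}_{n\tau}(f^{\star})\bigr] + \bigl[\mathcal{R}_{n\tau}(f^{\star}) - \mathcal{R}_\tau(f^{\star})\bigr] + \bigl[\mathcal{R}_\tau(f^{\star}) - \mathcal{R}_\tau(f_0)\bigr],
\end{equation*}
where $f^{\star} \in \arg\min_{f \in \mathcal{F}_\phi} \mathcal{R}_\tau(f)$. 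The second bracket is non-positive by definition of $\hat f_\phi$, the last bracket is the approximation error $\inf_{f \in \mathcal{F}_\phi}|\mathcal{R}_\tau(f) - \mathcal{R}_\tau(f_0)|$, and the first and third brackets are each bounded in expectation by the supremum in claim one. Inserting the Bartlett--Harvey--Liaw--Mehrabian pseudo-dimension bound $P\mathsf{dim}(\mathcal{F}_\phi) \lesssim \mathcal{S}\mathcal{D}\log \mathcal{S}$ valid for piecewise-linear networks and the standard entropy estimate $\log \mathcal{N}_{2n}(n^{-1},\|\cdot\|_\infty,\mathcal{F}_\phi) \lesssim P\mathsf{dim}(\mathcal{F}_\phi)\log n$ converts the entropy factor into the $\mathcal{S}\mathcal{D}\log \mathcal{S}\log n$ complexity term that appears in the statement, yielding the displayed bound.

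The main obstacle will be the interplay between the truncation step and the chaining step: classical concentration results such as Bousquet or Talagrand require a bounded envelope, so one must verify that the bias introduced by truncating $Y$ enters at the same $n^{-(1-1/p)}$ rate as the stochastic fluctuation term, rather than dominating it. A secondary subtlety lies in relating the empirical sup-norm covering number $\mathcal{N}_{2n}(n^{-1},\|\cdot\|_\infty,\mathcal{F}_\phi)$, which is the object appearing in the bound, to the intrinsic sup-norm covering of $\mathcal{F}_\phi$; this is routine via the ghost-sample coupling, but one must check that the coupling remains valid once $Y$ has been truncated and that the loss class inherits the Lipschitz-induced cover without introducing any extra $n$-dependence beyond the stated $\log n$ factor.
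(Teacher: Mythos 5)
Your proposal follows essentially the same route as the paper's proof sketch: a truncation of $Y$ at a level $\beta_n$ (your $T_n \asymp n^{1/p}$), a ghost-sample symmetrization combined with a $\delta$-uniform sup-norm cover of $\mathcal{F}_{\phi}$ whose Lipschitz transfer to the loss class via $|\rho_{\tau}(a)-\rho_{\tau}(b)| \leq \max\{\tau,1-\tau\}|a-b|$ produces the $\log \mathcal{N}_{2n}$ factor, and the best-in-class decomposition through $f^{\star}_{\phi}$ for the expected excess risk. The balancing of the truncation bias against the stochastic fluctuation to obtain the $n^{-(1-1/p)}$ rate and the conversion of the covering number through the pseudo-dimension are exactly the steps the paper relies on, so the proposal is correct and matches the paper's argument.
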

Notice that the denominator can be improved to $n$ if the response $Y$ is assumed to be sub-exponentially distributed, that is, there exists a constant $\sigma_Y > 0$ such that $\mathbb{E} \big[ \mathsf{exp} \big( \sigma_Y | Y | \big) \big]$.

\newpage 

Moreover, further interesting aspects here include the technical results with respect to the DNN architecture (e.g., composition), the relation of continuous functionals with the underline quantile regression model and the number of variables as well as linearity of the model and how it relates with the dimensionality of the problem. In particular, one can use numerical experiments to evaluate  the performance of the standard kernel-based method for quantile regressions against the DQR estimation approach.
\begin{remark}A statistical procedure should be providing such evidence (such as a formal specification test approach). On the other hand, starting with some comparisons of relative efficiency and a bias analysis can provide some insights. Various open problems remain in the literature of DNN especially when considering the correct model specification. However, such a research endeavour will require to define what DNN estimable means. Then this could allow us to construct a specification test for a testing that a function under the null hypothesis is correctly DNN estimable against the alternative of a nonparametric kernel-estimation approach. To do this we need to define what a distance function is between a DNN estimable functional form against a nonparametric kernel-based estimation approach. Relevant literature where these aspects are discussed are the studies that present  statistical frameworks for consistent specification testing (see, \cite{stinchcombe1998consistent} and \cite{white1996estimation}). Nevertheless, the key point is consider a suitable functional class within which a statistical test can be constructed for evaluating differences between a DNN estimable functional form and a nonparametrically fitted functional form.   
\end{remark}

\begin{definition}[\cite{shen2021deep}]
For a class $\mathcal{F}$ of functions: $\mathcal{X} \to \mathbb{R}$, its pseudo-dimension, denoted by $\mathsf{Pdim} (\mathcal{F})$, is defined to be the largest integer $m$ for which there exists $\big( x_1,..., x_m, y_1,..., y_m \big) \in \mathcal{X}^m \mathbb{R}^m$ such that for any $( b_1,..., b_m ) \in \left\{ 0, 1 \right\}^m$ there exists $f \in \mathcal{F}$ such that $\forall i : f(x_i) > y_i \iff b_i = 1$.        
\end{definition}

\begin{remark}
For a class of real-valued functions generated by neural networks, pseudo dimension is a natural measure of its complexity. In particular, if $\mathcal{F}$ is the class of functions generated by a neural network with a fixed architecture and fixed activation functions, we have that $P \mathsf{dim} ( \mathcal{F} ) = VC \mathsf{dim} ( \mathcal{F} )$. Thus, a usual assumption under these settings is to require the sample size $n$ to be greater than the pseudo dimension of the class of neural networks considered. Furthermore, the smoothness index works as a covering number to measure the complexity or dependence structure of the DNN. For instance, a function of the form $f_0^d = \big( h_q \circ ... \circ h_0 \big)$ is a recursive function which implies that for each composition of functions, its argument is converted into one-dimensional functional form. In other words, the main intuition here is that by assuming a compositional structure of high-dimensional quantile regressions, this provides an efficient solution to the "curse of dimensionality" problem, since in each composition applied, there is a reduction in the dimensions of the statistical problem.  
\end{remark}

\begin{lemma}[Lemma 1 in \cite{shen2021deep}] For any random sample $S = \left\{ (X_i, Y_i)_{i=1}^n \right\}$, the excess risk of the DQR estimator $\hat{f}_n$ satisfies
\begin{align}
\mathcal{R}_{\tau} ( \hat{f}_n ) - \mathcal{R}_{\tau} ( f_0 ) 
\leq 
2 \underset{ f \in \mathcal{F}_n }{ \mathsf{sup} } \ \big| \mathcal{R}_{\tau} ( f ) - \mathcal{R}_{n \tau} ( f )  \big| + \underset{ f \in \mathcal{F}_n }{ \mathsf{inf} } \ \big(  \mathcal{R}_{\tau} ( f ) - \mathcal{R}_{\tau} ( f_0 ) \big),
\end{align}
where $\mathcal{R}_{n\tau}$ is defined above. 
\end{lemma}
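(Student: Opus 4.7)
The statement is the standard approximation–estimation (or bias–variance) decomposition for an empirical risk minimizer, specialized here to the check-loss risk $\mathcal{R}_\tau$ and its empirical counterpart $\mathcal{R}_{n\tau}$. My plan is to add and subtract judicious intermediate quantities and then exploit the defining optimality of $\hat f_n$ over $\mathcal{F}_n$.

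First I would fix an arbitrary $f \in \mathcal{F}_n$ and write the four-term telescoping identity
\begin{align*}
\mathcal{R}_{\tau}(\hat f_n) - \mathcal{R}_{\tau}(f_0)
= \bigl[\mathcal{R}_{\tau}(\hat f_n) - \mathcal{R}_{n\tau}(\hat f_n)\bigr]
+ \bigl[\mathcal{R}_{n\tau}(\hat f_n) - \mathcal{R}_{n\tau}(f)\bigr]
+ \bigl[\mathcal{R}_{n\tau}(f) - \mathcal{R}_{\tau}(f)\bigr]
+ \bigl[\mathcal{R}_{\tau}(f) - \mathcal{R}_{\tau}(f_0)\bigr].
\end{align*}
This decomposition is the whole engine of the proof: the first and third brackets are instances of the empirical process deviation $\mathcal{R}_\tau - \mathcal{R}_{n\tau}$ evaluated at elements of $\mathcal{F}_n$ (namely $\hat f_n$ and $f$ respectively), while the second bracket measures sub-optimality of $\hat f_n$ in empirical risk and the fourth measures how well the class $\mathcal{F}_n$ approximates $f_0$.

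Next I would handle the four terms in turn. Since $\hat f_n \in \mathcal{F}_n$ is by definition a minimizer of $\mathcal{R}_{n\tau}$ over $\mathcal{F}_n$, and $f \in \mathcal{F}_n$, the second bracket satisfies $\mathcal{R}_{n\tau}(\hat f_n) - \mathcal{R}_{n\tau}(f) \leq 0$ and can be discarded. For the first and third brackets I would use the crude bound $|\mathcal{R}_{\tau}(g) - \mathcal{R}_{n\tau}(g)| \leq \sup_{g \in \mathcal{F}_n} |\mathcal{R}_{\tau}(g) - \mathcal{R}_{n\tau}(g)|$, which is valid because both $\hat f_n$ and $f$ belong to $\mathcal{F}_n$; together they contribute at most $2\sup_{g \in \mathcal{F}_n} |\mathcal{R}_{\tau}(g) - \mathcal{R}_{n\tau}(g)|$. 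The fourth bracket is left as is.

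Finally, since the resulting inequality
\begin{align*}
\mathcal{R}_{\tau}(\hat f_n) - \mathcal{R}_{\tau}(f_0)
\leq 2 \sup_{g \in \mathcal{F}_n} |\mathcal{R}_{\tau}(g) - \mathcal{R}_{n\tau}(g)| + \bigl(\mathcal{R}_{\tau}(f) - \mathcal{R}_{\tau}(f_0)\bigr)
\end{align*}
holds for every $f \in \mathcal{F}_n$, I would take the infimum over $f \in \mathcal{F}_n$ on the right-hand side to obtain the stated bound. There is essentially no technical obstacle: the argument is purely structural and uses only the definition of the empirical risk minimizer and the trivial inequality $|a| \leq \sup |{\cdot}|$. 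The only thing to check is a minor measurability/existence point, namely that $\hat f_n$ is well defined; if it is merely an approximate minimizer one inserts an $\epsilon$-slack and lets $\epsilon \downarrow 0$ at the end. The genuinely hard work of the paper lies in bounding the supremum term via the pseudo-dimension of $\mathcal{F}_n$ and the infimum term via approximation-theoretic results for deep networks, but that is the content of the subsequent lemma, not of this one.
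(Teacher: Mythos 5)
Your proof is correct and follows essentially the same route as the paper's (sketched) argument: a telescoping decomposition around an element of $\mathcal{F}_n$, the empirical-risk-minimizer inequality to kill the middle term, a supremum bound on the two deviation terms, and an infimum over the reference function (the paper fixes the best-in-class $f_{\phi}^{*}$ rather than taking an infimum at the end, which is the same thing when the infimum is attained). Your version is, if anything, slightly more careful, since taking an arbitrary $f\in\mathcal{F}_n$ and passing to the infimum afterwards avoids assuming the population risk minimizer over $\mathcal{F}_n$ exists.
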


\newpage

\begin{remark}
The excess risk of the DQR estimator is bounded above by the sum of two terms: the stochastic error $2 \underset{ f \in \mathcal{F}_n }{ \mathsf{sup} } \ \big| \mathcal{R}_{\tau} ( f ) - \mathcal{R}_{n \tau} ( f )  \big|$  and the approximation error $\underset{ f \in \mathcal{F}_n }{ \mathsf{inf} } \ \big(  \mathcal{R}_{\tau} ( f ) - \mathcal{R}_{\tau} ( f_0 ) \big)$. 
Furthermore, an interesting conjecture from the above result is that the upper bound no longer depends on the DQR estimators itself, but the function class $\mathcal{F}_n$, the loss function $\rho_{\tau}$ and the random sample $S$. Then, the stochastic error $2 \underset{ f \in \mathcal{F}_n }{ \mathsf{sup} } \ \big| \mathcal{R}_{\tau} ( f ) - \mathcal{R}_{n \tau} ( f )  \big|$ can be analyzed using the empirical process theory. A key ingredient is to calculate the complexity measure $\mathcal{F}_n$ in terms of its covering (bracketing) number (see, \cite{vaart2000empirical}, \cite{van2000asymptotic}, \cite{massart2007concentration} and \cite{wainwright2019high}).     
\end{remark}
Furthermore, for the excess risk of the DQR estimator and the error bounds for the models, based on appropriately specified network parameters (depth, width and size of the network), we have the following upper bound for the excess risk,
\begin{align}
\mathbb{E} \big[ \mathcal{R}_{\tau} ( \hat{f}_{\phi} ) - \mathcal{R}_{\tau} (f_0) \big] \leq C_0 C_{ d,d^{*} } \big( \mathsf{log} n \big)^2 n^{ - \left( 1 - \frac{1}{p} \right)  \frac{2 \alpha^{*} }{ 2 \alpha^{*} + t^{*} } }    
\end{align}
\begin{itemize}

\item $C_0$ is a constant only depending on the model parameters such as the smoothness index (smoothness parameter) of the underlying conditional quantile function. 
    
\item $C_{d,d^{*}}$ is the prefactor depending on $d$, the dimension of the predictor variable in the model. Although a question remains whether this parameter has a fixed functional form when considering each lower dimension term. 

\item $d^{*}$ is determined by the dimensions of the component functions in the composite function. 

\end{itemize}

Then, the convergence rate which is part of the error bound, $n^{ - \left( 1 - \frac{1}{p} \right)  \frac{2 \alpha^{*} }{ 2 \alpha^{*} + t^{*} } }$, is determined by the number of moments $p$ of the response $Y$, the smoothness index of the composite function $\alpha^{*}$ and the intrinsic dimension of the model $t^{*}$. Other relevant aspects include whether imposing further distributional assumptions (such as heavy-tailed errors versus Gaussian errors) can affect the limit theory of estimators as well as its min-max optimality properties. Next, we employ the risk function (i.e., the expectation of the loss function) such that $R_{\tau} (f) := \mathbb{E} \big[ \rho_{\tau} ( Y - f(x)) \big]$ as well as the empirical risk, which is the minimizer of the empirical risk function (i.e., it allows to restrict the functional class). For the proof of the above lemma (Lemma 1 in \cite{shen2021deep}) the following two steps are necessary and sufficient:

\begin{itemize}

\item[Step 1.] Prediction error decomposition (see, \cite{shen2021deep}).

The "best in class" estimator $f_{\phi}^{*}$ is defined as the estimation in the function class $\mathcal{F}$ with minimal $L$ risk such that:
\begin{align}
f_{\phi}^{*} = \underset{ f \in \mathcal{F}_{\phi} }{ \mathsf{arg max} } \ \mathcal{R}_{\tau}(f).   
\end{align}
Then, the approximation error of $f_{\phi}^{*}$, $\mathcal{R}_{\tau}( f_{\phi}^{*} ) - \mathcal{R}_{\tau}(f_0)$ only depends on the function class and the distribution of data. Moreover, by the definition of the empirical risk minimizer, the following inequality result holds 

\newpage

\begin{align}
\mathbb{E}_S \left[ \frac{1}{n} \sum_{i=1}^n \mathsf{g} \big( \hat{f}_{\phi}, Z_i \big) \right] \leq \mathbb{E}_S \left[ \frac{1}{n} \sum_{i=1}^n \mathsf{g} \big( f_{\phi}^{*}, Z_i \big) \right].   
\end{align}
Therefore, it can be proved that the predictor error is upper bounded by the sum of an expectation of a stochastic term and an approximation error. 

\item[Step 2.] Bounding the stochastic term (see, \cite{shen2021deep}). 

To bound the stochastic term, we obtain an upper bound of the expression which includes the stochastic term, using a truncation argument and the classical chaining technique for empirical processes.Denote with $G( f, Z_i ) := \mathbb{E}_{ S^{\prime} } \big[ \mathsf{g} ( f, Z_i^{\prime} ) \big] - 2 \mathsf{g} ( f, Z_i )$ for any function such that $f \in \mathcal{F}_{\phi}$. Given a $\delta-$uniform covering of $\mathcal{F}_{\phi}$, we denote the centers of the balls by $f_j, j = 1,2,...\mathcal{N}_{2n}$, where $\mathcal{N}_{2n} = \mathcal{N}_{2n} \big( \delta, \norm{.}_{\infty}, \mathcal{F}_{\phi} \big)$ is the uniform covering number with radius $\delta ( \delta < \mathcal{B} )$ under the norm $\norm{.}_{\infty}$. By the definition of covering, there exists a (random) $j^{*}$ element such that 
\begin{align}
\norm{ \hat{f}_{\phi}(x) - f_{  j^{*} }(x) }_{ \infty } \leq \delta \ \ \text{on} \ \ x = \big( X_1,..., X_n, X_1^{\prime},..., X_n^{\prime} \big) \in \mathcal{X}^{2n},    
\end{align}
Recall that we have $\mathsf{g} \big( f, Z_i \big) = \big[ \rho_{\tau} \big( f(X_i) - Y_i \big) - \rho_{\tau} \big( f_0(X_i) - Y_i \big) \big]$. Denote with $\lambda_{\tau} = \mathsf{max} \left\{ \tau, 1 - \tau \right\}$, then by the Lipschitz property of $\rho_{\tau}$, for $a, b \in \mathbb{R}$ it holds that 
\begin{align}
\big| \rho_{\tau} (a) - \rho_{\tau} (b) \big| \leq \mathsf{max} \left\{ \tau, 1 - \tau \right\} |a - b| = \lambda_{\tau} |a-b|,    
\end{align}
Therefore, for $i = 1,...,n$ we have that 
\begin{align}
\left|  \mathsf{g} \big( \hat{f}_{\phi}(x) , Z_i \big) -  \mathsf{g} \big( f_{  j^{*}}(x) , Z_i \big) \right|  &\leq \lambda_{\tau} \delta, \
\ 
\left| \mathbb{E}_{ S^{\prime} } \left[ \mathsf{g} \big( \hat{f}_{\phi}(x) , Z_i^{\prime} \big) \right] -  \mathbb{E}_{ S^{\prime} } \left[ \mathsf{g} \big( f_{  j^{*}}(x) , Z_i^{\prime} \big) \right]  \right|  &\leq \lambda_{\tau} \delta
\end{align}
Then, it holds that 
\begin{align}
\mathbb{E}_{ S } \left[  \frac{1}{n} \sum_{i=1}^n  \mathsf{g} \big( \hat{f}_{\phi}(x) , Z_i^{\prime} \big) \right]  \leq   \frac{1}{n} \sum_{i=1}^n  \mathbb{E}_{ S } \left[ \mathsf{g} \big( f_{  j^{*}}(x) , Z_i^{\prime} \big) \right]    + \lambda_{\tau} \delta 
\end{align}
Moreover, denote with $\beta_n \geq \mathcal{B} \geq 1$ to be a positive number who may depend on the sample size $n$. Denote with $T_{ \beta_n }$ as the truncation operator at level $\beta_n$, such that it holds that for any $Y \in \mathbb{R}$, $T_{ \beta_n } Y = Y$, if $|Y| \leq \beta_n$ and $T_{ \beta_n } Y = \beta_n . \mathsf{sign} (Y)$ otherwise. Define the function $f_{ \beta_n }^{*} : \mathcal{X} \to \mathbb{R}$ pointwisely by 
\begin{align}
f_{ \beta_n }^{*} = \mathsf{arg} \underset{ f(x) : \norm{ f }_{\infty} \leq \beta_n }{ \mathsf{min}  } \ \mathbb{E} \big[ \rho_{\tau} \big( f(X) - T_{ \beta_n } Y \big) | X = x \big],   
\end{align}
for each $x \in \mathcal{X}$. Moreover, recall that $\norm{ f^{*} }_{\infty} \leq \mathcal{B} \leq \beta_n$ and 
\begin{align}
f_0 (x) := \mathsf{arg} \underset{ f(x) : \norm{ f }_{\infty} \leq \beta_n }{ \mathsf{min}  } \ \mathbb{E} \big[ \rho_{\tau} \big( f(X) - Y \big) | X = x \big].    
\end{align}

\end{itemize}

\newpage

\subsection{Deep Neural Network Architecture Approximations}

Related studies on the theoretical aspects of network achitecture include \cite{farrell2021deep}, \cite{zeng2021deep} and \cite{dung2021deep}.  The approximation by deep ReLU neural networks of functions having a mixed smoothness is related to the high-dimensional sparse-grid approach which was introduced by Zenger for numerical solving partial differential equations (\cite{dung2021deep}). Next, we consider the mathematical analysis for deriving error bounds of DNNs (see also \cite{yarotsky2017error} and \cite{yarotsky2018optimal}). 

\subsubsection{Error Bounds of DNNs}

Consider the following $\mathsf{g}_s$ function with $2^{s-1}$ uniformly distributed "triangles". The key observation here is that the function $f(x) = x^2$ can be approximated by linear combinations of the functions $\mathsf{g}_s$. Specifically, $f_m$ is a piece-wise linear interpolation of $f$ with $2^m + 1$ uniformly distributed knots (breakpoints) $\frac{k}{2^m}$, $k \in \left\{ 0,..., 2^m \right\}$, such that
\begin{align}
f_n \left( \frac{k}{2^m } \right) = \left( \frac{k}{2^m } \right)^2, \ \ \ k \in \left\{ 0,..., 2^m \right\}    
\end{align}
Thus, the function $f_m$ approximates $f$ with the error $\epsilon_m = 2^{-2(m+1)}$. Applying the linear interpolation method from $f_{m-1}$ to $f_m$ amounts to adjusting it by a function proportional to a sawtooth function given by
\begin{align}
f_{m-1} (x) - f_m(x) = \frac{ \mathsf{g}_m(x) }{  2^{2m} }.     
\end{align}
Hence, it holds that 
\begin{align}
f_m(x) = x - \sum_{s=1}^m \frac{ \mathsf{g}_s(x) }{ 2^{2s} }.    
\end{align}
\begin{proposition}
Given $M > 0$ and $\epsilon \in (0,1)$, there is a ReLU network $\eta$ with two input units that implements a function $\widetilde{\times} : \mathbb{R}^2 \to \mathbb{R}$ such that  
\begin{itemize}
    \item[(a)] for any inputs $x,y$, if $|x| \leq M$ and $|y| \leq M$, then 
    \begin{align}
       \left| \widetilde{\times} (x,y) - xy \right| \leq \epsilon 
    \end{align}

    \item[(b)] if $x = 0$ or $y = 0$, then $\widetilde{\times} (x,y) = 0$.
\end{itemize}
\end{proposition}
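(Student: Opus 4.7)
The plan is to reduce the approximation of multiplication to the already-established piecewise-linear approximation of the squaring map $t \mapsto t^2$, using the polarization identity
\begin{align*}
xy \;=\; \tfrac{1}{4}\big((x+y)^2 - (x-y)^2\big).
\end{align*}
Since $|x|,|y| \leq M$ implies $|x \pm y| \leq 2M$, I would first rescale the approximant $f_m$ (which approximates $t \mapsto t^2$ on $[0,1]$ with uniform error $\epsilon_m = 2^{-2(m+1)}$) to the symmetric interval $[-2M,2M]$ by setting
\begin{align*}
\widetilde{\mathsf{sq}}_m(z) \;:=\; (2M)^2\, f_m\!\left(\tfrac{|z|}{2M}\right),
\end{align*}
for which a direct rescaling argument yields $\sup_{|z| \leq 2M}\big|\widetilde{\mathsf{sq}}_m(z) - z^2\big| \leq M^2 \cdot 2^{-2m}$.

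Next I would assemble each ingredient as a ReLU sub-network. Affine maps are free, so $(x,y) \mapsto (x+y,\,x-y)$ is realized in the input layer, and the absolute value is exact via $|z| = \mathrm{ReLU}(z) + \mathrm{ReLU}(-z)$. The tent map $g(t) = 2\,\mathrm{ReLU}(t) - 4\,\mathrm{ReLU}(t-\tfrac{1}{2}) + 2\,\mathrm{ReLU}(t-1)$ is a one-hidden-layer ReLU network, and $\mathsf{g}_s = g \circ \cdots \circ g$ ($s$ copies) realises each sawtooth by a ReLU network of depth $\mathcal{O}(s)$; then $f_m(t) = t - \sum_{s=1}^m 2^{-2s}\mathsf{g}_s(t)$ follows as a linear combination. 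Placing two copies of the $\widetilde{\mathsf{sq}}_m$ sub-network in parallel on $x+y$ and $x-y$ and taking
\begin{align*}
\widetilde{\times}(x,y) \;:=\; \tfrac{1}{4}\big(\widetilde{\mathsf{sq}}_m(x+y) - \widetilde{\mathsf{sq}}_m(x-y)\big)
\end{align*}
defines a single ReLU network $\eta$ with two input units.

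To verify property (a), the triangle inequality gives
\begin{align*}
\big|\widetilde{\times}(x,y) - xy\big| \;\leq\; \tfrac{1}{4}\sum_{\sigma\in\{+,-\}}\big|\widetilde{\mathsf{sq}}_m(x+\sigma y) - (x+\sigma y)^2\big| \;\leq\; \tfrac{1}{2}\,M^2\cdot 2^{-2m},
\end{align*}
so any integer $m$ with $2^{2m} \geq M^2/(2\epsilon)$ secures the bound $|\widetilde{\times}(x,y)-xy|\leq\epsilon$. For property (b), the decisive observation is that $\widetilde{\mathsf{sq}}_m$ is even in $z$ by construction (it depends only on $|z|$) and satisfies $\widetilde{\mathsf{sq}}_m(0) = 0$, because $f_m(0) = 0$: the linear part $t$ vanishes at $t=0$, and every $\mathsf{g}_s$ vanishes at $0$ since $g(0)=0$ and the property is preserved under composition. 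Hence $x=0$ forces $\widetilde{\mathsf{sq}}_m(y) = \widetilde{\mathsf{sq}}_m(-y)$, so $\widetilde{\times}(0,y) = 0$ exactly; the case $y=0$ is symmetric.

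The main (and really only) subtlety will be enforcing (b) \emph{exactly} rather than merely to within $\epsilon$, which is why I approximate the even function $z \mapsto z^2$ through $|z|$ instead of splitting $z$ asymmetrically into $\mathrm{ReLU}(\pm z)$ pieces; the evenness of $\widetilde{\mathsf{sq}}_m$ combined with its vanishing at $0$ then automatically forces the polarization output to vanish whenever either input does. Everything else is bookkeeping: the network has depth $\mathcal{O}(m)$, width $\mathcal{O}(m)$, and fully explicit weights, with $m$ chosen as any integer exceeding $\tfrac{1}{2}\log_2(M^2/(2\epsilon))$.
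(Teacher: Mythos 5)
Your proposal is correct, and it follows the same overall strategy as the paper (reduce multiplication to the sawtooth-based squaring approximant $f_m$ via a polarization identity, with absolute values handled exactly by $\mathrm{ReLU}(z)+\mathrm{ReLU}(-z)$), but you use a different decomposition. The paper takes $xy=\tfrac{1}{2}\left[(x+y)^2-x^2-y^2\right]$ and builds $\widetilde{\times}$ from \emph{three} copies of the squaring subnetwork evaluated at $\lvert x+y\rvert/(2M)$, $\lvert x\rvert/(2M)$, $\lvert y\rvert/(2M)$; property (b) then follows because, say at $y=0$, the first and second terms cancel identically and the third vanishes since $\widetilde{f}_{sq,\delta}(0)=0$. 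You instead take $xy=\tfrac{1}{4}\left[(x+y)^2-(x-y)^2\right]$, which needs only \emph{two} parallel squaring subnetworks, and property (b) follows from the evenness of $\widetilde{\mathsf{sq}}_m$ (it depends on $z$ only through $\lvert z\rvert$), so that at $x=0$ the two terms coincide and cancel exactly. Both mechanisms enforce (b) exactly rather than approximately, which you correctly flag as the only delicate point. Your error accounting ($\sup_{\lvert z\rvert\leq 2M}\lvert\widetilde{\mathsf{sq}}_m(z)-z^2\rvert\leq M^2 2^{-2m}$, hence $\lvert\widetilde{\times}(x,y)-xy\rvert\leq\tfrac{1}{2}M^2 2^{-2m}$ and $m\geq\tfrac{1}{2}\log_2(M^2/(2\epsilon))$) is cleaner and more explicit than the paper's, which only records the choice $\delta=8\epsilon/(3M^2)$ without carrying out the triangle-inequality step. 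The trade-off is minor: your network is slightly smaller (two squaring blocks versus three), while the paper's three-term version is the one whose constants propagate into the subsequent Theorem on approximating $F_{d,n}$, so if you intend to plug your $\widetilde{\times}$ into that argument you should re-derive the width and depth constants accordingly.
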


\begin{remark}
Notice that the derivations of these error bounds relies on piecewise linear (or quadratic) approximations of the function $f_m$. Some relevant discussion is given by \cite{pottmann2000piecewise}. 
\end{remark}

\newpage

\begin{proof}

Let $\widetilde{f}_{sq,\delta}$ be the approximate squaring function such that $\widetilde{f}_{sq,c\delta} (0) = 0$ and 
\begin{align}
\left|  \widetilde{f}_{sq,\delta} (0) - x^2 \right| < \delta, \ \ \ \text{for} \ \ x \in [0,1].
\end{align}
Assume without loss of generality that $M \geq 1$ and set
\begin{align}
\widetilde{\times} (x,y) = \frac{ M^2 }{8} \left[  \widetilde{f}_{sq,\delta} \left( \frac{ | x + y| }{ 2M }  \right) - \widetilde{f}_{sq,\delta} \left( \frac{ | x | }{ 2M }  \right)  - \widetilde{f}_{sq,\delta} \left( \frac{ | y| }{ 2M }  \right) \right]    
\end{align}
where $\delta = \frac{8 \epsilon}{3 M^2}$.

\end{proof}

\medskip

\begin{theorem}
For any $d, n$ and $\epsilon \in (0,1)$, there is a ReLU network architecture that 
\begin{itemize}

\item[(i)] is capable of expressing any function from $F_{d,n}$ with error $\epsilon$;

\item[(ii)] has the depth at most $c ( ln(1/\epsilon) + 1 )$ and at most $c \epsilon^{-d/n} ( ln(1/\epsilon) + 1 )$ wights and computation units, with some constant $c = c(d,n)$.
    
\end{itemize}

\end{theorem}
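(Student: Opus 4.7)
The plan is to combine a local Taylor polynomial approximation on a grid with the approximate multiplication gadget $\widetilde{\times}$ from the preceding proposition. Since $F_{d,n}$ consists of functions with bounded derivatives up to order $n$ on $[0,1]^d$, any such $f$ is well approximated locally by its Taylor polynomial. First I would fix an integer $N \ge 1$ (to be calibrated later so that $N^{-n} \lesssim \epsilon$) and introduce a partition of unity $\{\phi_{\mathbf{m}}\}_{\mathbf{m}\in\{0,\ldots,N\}^d}$ where each $\phi_{\mathbf{m}}$ is a tensor product of one‑dimensional piecewise linear hat functions centered at the grid point $\mathbf{m}/N$ with support of width $2/N$. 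These hats are exactly representable by small ReLU sub‑networks, and $\sum_{\mathbf{m}} \phi_{\mathbf{m}}(x) \equiv 1$ on $[0,1]^d$ with $\phi_{\mathbf{m}}(x) \ge 0$ and $|\{\mathbf{m}: \phi_{\mathbf{m}}(x) > 0\}| \le 2^d$.

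Next, for each grid point $\mathbf{m}/N$ I would take the degree $n-1$ Taylor polynomial $P_{\mathbf{m}}(x) = \sum_{|\boldsymbol{\alpha}| < n} \frac{D^{\boldsymbol{\alpha}} f(\mathbf{m}/N)}{\boldsymbol{\alpha}!}(x - \mathbf{m}/N)^{\boldsymbol{\alpha}}$, and form the approximant
\begin{equation}
\tilde{f}(x) = \sum_{\mathbf{m}} \phi_{\mathbf{m}}(x) P_{\mathbf{m}}(x).
\end{equation}
A standard Taylor remainder estimate (using $f \in F_{d,n}$) yields $|f(x) - \tilde{f}(x)| \le C_1(d,n) N^{-n}$ uniformly on $[0,1]^d$. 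The coefficients $D^{\boldsymbol{\alpha}} f(\mathbf{m}/N)/\boldsymbol{\alpha}!$ are absolute constants (bounded by $1$ up to combinatorial factors) and are baked into the weights of the network, so no online computation of derivatives is needed.

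Then I would implement $\tilde{f}$ by a ReLU network. The monomials $(x-\mathbf{m}/N)^{\boldsymbol{\alpha}}$ of total degree $< n$ are built by iterated use of the $\widetilde{\times}$ gadget: each multiplication is depth $O(\log(1/\delta))$ and size $O(\log(1/\delta))$ with precision $\delta$, and composing $O(n)$ such multiplications gives each monomial to precision $\delta' = C\delta$, with total depth $O(\log(1/\delta))$ since the multiplications can be arranged in a balanced binary tree of depth $O(\log n)$. Each product $\phi_{\mathbf{m}}(x) P_{\mathbf{m}}(x)$ requires one further application of $\widetilde{\times}$; crucially, by property (b) of the proposition, whenever $\phi_{\mathbf{m}}(x) = 0$ the contribution vanishes exactly, so only $O(2^d)$ of the $(N+1)^d$ summands are nonzero at any point and the error does not blow up with the number of cells. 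Summing all the per‑cell errors and combining with the Taylor error gives a total uniform error $\le C_2(d,n)(N^{-n} + \delta')$. Setting $N \asymp \epsilon^{-1/n}$ and $\delta' \asymp \epsilon$ yields total error $\le \epsilon$.

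Finally I would tally the architecture. There are $(N+1)^d = O(\epsilon^{-d/n})$ grid cells, each contributing a sub‑network of size $O(\log(1/\epsilon))$ (for the $O(n)$ multiplications at precision $\delta \asymp \epsilon$); the hat functions add $O(1)$ weights each, and the outer summation is a single linear layer. Thus the total number of weights and units is $O\bigl(\epsilon^{-d/n}(\log(1/\epsilon)+1)\bigr)$ and the depth is $O(\log(1/\epsilon)+1)$, as claimed, with the constants absorbed into $c = c(d,n)$. The main obstacle will be the careful bookkeeping: verifying that the $\widetilde{\times}$ errors compounded across $O(n)$ nested multiplications and then multiplied by bounded partition weights still give a uniform bound of order $\epsilon$, and simultaneously arranging all the parallel sub‑networks so that the depth stays logarithmic in $1/\epsilon$ rather than growing with $d$ or $N$. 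The vanishing property (b) of $\widetilde{\times}$ is what prevents the error from scaling with $(N+1)^d$, and this observation is the linchpin of the argument.
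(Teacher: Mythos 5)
Your proposal follows essentially the same route as the paper's proof: the same partition of unity $\{\phi_{\mathbf{m}}\}$ built from products of piecewise-linear hat functions on a grid of $(N+1)^d$ points, the same local degree-$(n-1)$ Taylor polynomials $P_{\mathbf{m}}$ giving the approximant $f_1=\sum_{\mathbf{m}}\phi_{\mathbf{m}}P_{\mathbf{m}}$ with error of order $N^{-n}$, the same calibration $N \asymp \epsilon^{-1/n}$, and the same implementation of the products via the approximate multiplication gadget $\widetilde{\times}$ with its vanishing property controlling the sum over cells. The argument and the resulting size/depth bookkeeping are correct and match the paper's construction.
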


\medskip

\begin{remark}
Notice that the main idea of the proof of the above theorem is the use of piecewise linear approximation of quadratic functions (see, \cite{pottmann2000piecewise}). For example, one can consider the squared distance of the two vertices $\boldsymbol{v}_i, \boldsymbol{v}_j$ in the Caley-Klein metric induced by $Q$ which is defined by 
\begin{align}
d^2 \left( \boldsymbol{v}_i, \boldsymbol{v}_j \right)  = \left( \boldsymbol{v}_i -  \boldsymbol{v}_j \right)^{\top} Q  \left( \boldsymbol{v}_i -  \boldsymbol{v}_j \right)
\end{align}
In other words, according to their Theorem 12, an $L^{\infty}-$optimal piece-wise linear approximant over a triangulation of $\mathbb{R}^2$ to a quadratic bivariate function $f$ whose quadratic form is indefinite is defined over a triangulation which is regular in the pseudo-Euclidean metric induced by $f$. Thus, the linear approximant interpolates the function values at the vertices of the triangulation. Discussion about the use of triangulation can be found in the studies of \cite{montanelli2019new}. In particular, the curse of dimensionality can be lessened by establishing a connection with sparse grids. 
\end{remark}

\begin{proof}
The first key part of the proof is to approximate $f$ by a sum-product combination $f_1$ of local Taylor polynomials and one-dimensional piecewise-linear functions. Moreover, we can also approximate $f_1$ using a neural network. 
Let $N$ be a positive integer. Consider a partition of unity formed by a grid of $( N + 1 )^d$ functions $\phi_m$ on the domain $[0,1]^d$ such that
\begin{align}
\sum_m \phi_m (\boldsymbol{x}) \equiv 1, \ \ \ \boldsymbol{x} \in [0,1]^d.  
\end{align}

\newpage

Here, we have that $\boldsymbol{m} = ( m_1,..., m_d ) \in \left\{ 0,1,..., N \right\}^d$, and the function $\phi_m$ is defined as the product below
\begin{align}
\phi_m (\boldsymbol{x}) = \prod_{ k = 1}^d \psi \left(  3N \left( x_k - \frac{m_k}{N}    \right) \right)    
\end{align}
where 
\begin{align}
\psi(x) = 
\begin{cases}
1, & |x| < 1,
\\
0, & 2 < |x|,
\\
2 - |x|, & 1 \leq |x| \leq 2.
\end{cases}
\end{align}
Moreover, it holds that
\begin{align}
\mathsf{supp} \ \phi_m \subset \left\{ \boldsymbol{x}: \left| x_k - \frac{m_k}{N} \right| < \frac{1}{N} \ \forall \ k \right\}.     
\end{align}
Then, for any $\boldsymbol{m} \in \left\{ 0,..., N \right\}^d$, consider the degree$-(n-1)$ Taylor polynomial for the function $f$ at $\boldsymbol{x} = \frac{ \boldsymbol{m} }{N}$ 
\begin{align}
P_{\boldsymbol{m}} = \sum_{ \boldsymbol{n}: |  \boldsymbol{n} | < n } \frac{D^n f}{ \boldsymbol{n} ! } \bigg|_{ \boldsymbol{x} = \frac{ \boldsymbol{m} }{N} }  \left( \boldsymbol{x} -  \frac{ \boldsymbol{m} }{N} \right)^{ \boldsymbol{n}  },   
\end{align}
with the usual notation 
\begin{align}
\boldsymbol{n} ! = \prod_{k=1}^d n_k! \ \ \ \text{and} \ \ \  \left( \boldsymbol{x} -  \frac{ \boldsymbol{m} }{N} \right)^{ \boldsymbol{n}  } = \prod_{k=1}^d \left( x_k - \frac{m_k}{ N }    \right)^{n_k}.   
\end{align}
Therefore, an approximation to $f$ by $f_1$ is given by 
\begin{align}
f_1 = \sum_{ \boldsymbol{m} \in \left\{ 0,..., N \right\}^d } \phi_{\boldsymbol{m} } P_{ \boldsymbol{m} }.    
\end{align}
Therefore, for the proof we bound the approximation error using the Taylor expansion of $f$ as
\begin{align*}
| f( \boldsymbol{x} ) - f_1 (\boldsymbol{x} ) | 
&= 
\left| \sum_{\boldsymbol{m}} \phi_{ \boldsymbol{m} } (\boldsymbol{x} ) \big( f(\boldsymbol{x} ) - P_{ \boldsymbol{m} } (\boldsymbol{x} ) \big) \right|
\\
&\leq
\sum_{ \left\{ \boldsymbol{m}: \left|  x_k - \frac{m_k}{N} \right| < \frac{1}{N} \ \forall \ k \right\} } \big|  f(\boldsymbol{x}) - P_{ \boldsymbol{m} }(\boldsymbol{x}) \big| 
\\
&\leq
2^d \ \underset{ \left\{ \boldsymbol{m}: \left|  x_k - \frac{m_k}{N} \right| < \frac{1}{N} \ \forall \ k \right\}  }{ \mathsf{max} } \ \big|  f(\boldsymbol{x}) - P_{ \boldsymbol{m} }(\boldsymbol{x}) \big| 
\\
&\leq
\frac{ 2^d d^n }{ n! } \left( \frac{1}{N} \right)^n \ \underset{ \boldsymbol{n}: |\boldsymbol{n}| = n }{ \mathsf{max} }  \ \underset{ \boldsymbol{x} \in [0,1]^d }{ \mathsf{ess} \ \mathsf{sup} } \ \left| D^{ \boldsymbol{n} } f ( \boldsymbol{x} ) \right|
\\
&\leq
\frac{ 2^d d^n }{ n! }  \left( \frac{1}{N} \right)^n. 
\end{align*}

\newpage 

We have that the coefficients of the polynomials $P_{\boldsymbol{m} }$ are uniformly bounded for all $f \in F_{d,n}:$
\begin{align}
\label{polynomials}
P_{ \boldsymbol{m} } ( \boldsymbol{x} ) = \sum_{ \boldsymbol{n} : | \boldsymbol{n}  | < n } a_{ \boldsymbol{m}, \boldsymbol{n} } \left(  \boldsymbol{x} - \frac{ \boldsymbol{m} }{ N } \right)^{ \boldsymbol{n} }, \ \ \ | a_{ \boldsymbol{m}, \boldsymbol{n} } | \leq 1.   
\end{align}
The main intuition of the above result is that we can construct a network architecture capable of approximating with uniform error $\frac{ \varepsilon }{2}$ any function with a similar form as the function $f_1$, assuming that $N$, has the form $N = \floor{ \left(  \frac{n!}{ 2^d d^n } \frac{\varepsilon}{2} \right)^{- \frac{1}{n} } }$ and the polynomials can be written as expression \eqref{polynomials}.

\medskip

Therefore, expanding $f_1$ based on the above conditions we obtain
\begin{align}
f_1 ( \boldsymbol{x} ) = \sum_{ \boldsymbol{m}: \left\{0,..., N \right\}^d } \  \sum_{ \boldsymbol{n} : | \boldsymbol{n} | < n } \ a_{ \boldsymbol{m}, \boldsymbol{n} } \phi_{ \boldsymbol{m} } ( \boldsymbol{x} ) \left(  \boldsymbol{x} - \frac{ \boldsymbol{m} }{ N } \right)^{ \boldsymbol{n} }.  
\end{align}

Thus, we consider the approximation of the product $\phi_{ \boldsymbol{m} } ( \boldsymbol{x} ) \left(  \boldsymbol{x} - \frac{ \boldsymbol{m} }{ N } \right)^{ \boldsymbol{n} }$. 
\end{proof}

\subsubsection{Topological Space for DNN Architecture}

\begin{assumption}[\cite{farrell2021deep}] Assume that $f_{*}$ lies in the Holder ball $\mathcal{W}^{\beta, \infty} \left( [-1,1]^d \right)$, with smoothness $\beta \in \mathbb{N}$ such that 
\begin{align}
f_{*} (x) \in \mathcal{W}^{\beta, \infty} \left( [-1,1]^d \right) := \left\{ f: \underset{ \boldsymbol{\alpha}, | \boldsymbol{\alpha} | \leq \beta }{ \mathsf{max} } \ \underset{ x \in [-1,1]^d }{ \mathsf{ess} \ \mathsf{sup} } \left| D^{\alpha} f(x) \right| \leq 1 \right\},    
\end{align}
where $\boldsymbol{\alpha} = \left( \alpha_1,....,  \alpha_d \right)$, $| \boldsymbol{\alpha} | = \alpha_1 + ... + \alpha_d$ and $D^{\alpha} f$, is the weak derivative.    
\end{assumption}

Then, we focus based on Assumptions 1 and 2 above, on deriving high-probability bounds. 

\medskip

\begin{assumption} Let $f_{*}$ lie in a class $\mathcal{F}$. For the feedforward network class $\mathcal{F}_{DNN}$, let the approximation error $\varepsilon_{DNN}$ be 
\begin{align}
\varepsilon_{DNN} := \underset{ f_{*} \in \mathcal{F} }{ \mathsf{sup} } \ \underset{ f \in \mathcal{F}_{DNN} }{ \mathsf{inf} } \ \norm{ f - f_{*} }_{\infty}.  
\end{align}    
\end{assumption}

\begin{remark}
In other words, many recent studies focus on the aspects of how DNN solve the problem of the curse of dimensionality. In other words, various studies consider the approximation for compositional functions by deep neural networks. Specifically, by assuming that the sparse structure in each composition layer, we can show that the total compositional function in high dimensional space owns a low dimensional property, in the sense that it can be approximated by DNN with a convergence rate only dependent on the intrinsic low dimension. 
\end{remark}

\newpage

\subsection{Statistical Inference with Stochastic Gradient Descent}

\subsubsection{Literature Review}

Notice that both for the estimation of \textit{Shallow} as well as \textit{Deep Neural Networks}, useful optimization algorithms under examination are the Gradient Descent and Stochastic Gradient Descent. Related studies that discuss the implementation and applications of these algorithms are presented by \cite{chen2020statistical}. Furthermore,  \cite{toulis2017asymptotic} study the asymptotic theory analysis and finite-sample properties of these estimators (see also \cite{tran2015stochastic}). Moreover, a selective overview of deep learning estimation methodologies are discussed in the study of \cite{fan2021selective} and \cite{braun2021smoking}.      

\begin{itemize}

\item The framework of \cite{zhang2022sieve} consider a sieve SGD algorithm which is a nonparametric estimation approach in the Sobolev ellipsoid space. Moreover, \cite{chen2014sieve} propose a framework for Sieve M estimation on irregular parameters. 

\item SGD procedures involve parameter updates that are implicitly defined. Implicit updates shrink standard SGD updates. The amount of shrinkage depends on the observed Fisher information matrix, which does not need to be explicitly computed.    

\item Theoretical analysis gives a full characterization of the asymptotic behaviour of both standard and implicit SGD-based estimators, including finite-sample error bounds. Moreover, \cite{shao2022berry} present estimation error bounds of M-estimators and SGD algorithms. 

\end{itemize}

Following \cite{chen2020statistical}, assume that we want to estimate the true parameter $\theta_0 \in \mathbb{R}^p$ of a distribution $f$ from $\textit{i.i.d}$ data points $( X_i, Y_i )$ such that conditional on covariate $X_i \in \mathbb{R}^p$ outcome $Y_i \in \mathbb{R}^d$ is distributed according to $f( Y_i; X_i, \theta_0 )$. Such statistical problems reduce to optimization. More specifically, suppose that our aim is to estimate the true parameter $\theta_0 \in \mathbb{R}^p$ of a distribution $f$ from \textit{i.i.d} data points $( X_i, Y_i )$ such that conditional on covariate $X_i \in \mathbb{R}^p$ outcome $Y_i \in \mathbb{R}^d$ is distributed according to $f( Y_i; X_i, \theta_0 )$. Such statistical problems reduce to optimization. 
\begin{align}
\theta_n^{sgd} =  \theta_{n-1}^{sgd} + \gamma_n C_n \nabla \mathsf{log} \left[ f \left( Y_n ; X_n, \theta_{n-1}^{sgd} \right) \right]    
\end{align}
where $\gamma_n$ is the learning rate sequence.

Consider the equivalent moment condition below
\begin{align}
\mathbb{E} \big[ \nabla \mathsf{log} f ( Y; X, \theta_0 ) | X \big] = 0,    
\end{align}
where the expectation corresponds to the true conditional distribution of outcome $Y$ given covariate $X$.

\newpage

\begin{remark}
From computational perspective, SGD is appealing because it avoids expensive matrix inversions and single data point $( X_n, Y_n )$ evaluations. Implicit SGD does not condition on the observed ordering of data points, but conditions on a random ordering instead. 
\begin{align}
\textcolor{red}{\boldsymbol{\theta}_n^{im}} =  \theta_{n-1}^{im} + \gamma_n C_n \nabla \mathsf{log} \left[ f \left( Y_n ; X_n, \textcolor{red}{\boldsymbol{\theta}_{n}^{im}} \right) \right],    
\end{align}
\end{remark}
Therefore, it holds that
\begin{align}
\textcolor{red}{\theta_n^{im}} =  \underset{ \theta }{ \mathsf{argmax} }  \left\{ - \frac{1}{2 \gamma_n } \norm{ \theta - \theta_{n-1}^{im} }^2  + \mathsf{log} \left[ f \left( Y_n ; X_n, \theta \right) \right] \right\},    
\end{align}
Recall the Fisher Information matrix is denoted $\hat{\mathcal{I}}_n ( \theta ) = - \nabla^2 \ell \left( X_n^{\prime} \theta; Y_n \right)$. Using a Taylor approximation of the gradient $\nabla \mathsf{log} \left[ f \left( Y_n ; X_n, \textcolor{red}{\theta_n^{im}}  \right) \right]$ yields 
\begin{align}
\Delta \textcolor{red}{\theta_n^{im}} \approx \left[ I + \gamma_n \hat{\mathcal{I}}_n ( \theta_0 ) \right]^{-1} \Delta \theta_n^{sgd},    
\end{align}
where $\Delta \textcolor{red}{\theta_n^{im}} = \textcolor{red}{\theta_n^{im}} - \theta_0$ and $\Delta \theta_n^{sgd} = \theta_n^{sgd} - \theta_0$.

\medskip

\begin{remark}
Notice that these Gradient Descent Algorithms have a required first step the initialization of the vector. Although the choice of this initial vector is just an approximation it provides a good starting point so that the algorithmic procedure converges to the optimal choice vector.     
\end{remark}

\subsubsection{Preliminary Theory}

Throughout, we use $\norm{x}_p$ to denote the $\ell_p-$norm of $x$, $\norm{X}$ the matrix operator norm of $X$ and $\norm{ X }_{\infty} = \mathsf{max}_{i,j} | X_{i,j} |$ the elementwise $\ell_{\infty}-$norm of $X$. For any sequences $\left\{ a_n \right\}$ and $\left\{ b_n \right\}$ of positive numbers, we write with $a_n \geq b_n$ and $a_n \leq b_n$. 

\medskip
 
Consider the SGD method where the iteration is given by (see, \cite{chen2021statistical})
\begin{align}
x_n = x_{n-1} \eta_n \nabla F ( x_{n-1} ) + \eta_n \xi_n,    
\end{align}
where $\xi_n := \nabla F( x_{n-1} ) - \nabla f ( x_{n-1}, \zeta_n )$. In particular, the above formulation decomposes the descent into two parts: $\nabla F( x_{n-1} )$ which represents the direction of population gradient which is the major driving force behind the convergence of SGD and $\xi_n$ is a martingale difference sequence under the above assumption. Furthermore, it holds that 
\begin{align}
\mathbb{E}_{n-1} \left[ \xi_n \right] := \nabla F ( x_{n-1} ) - \mathbb{E}_{n-1} \big[ \nabla f( x_{n-1}, \zeta_n ) \big] = 0.     
\end{align}

\newpage

Notice that $\mathbb{E}_n (.)$ denotes the conditional expectation $\mathbb{E}_n (. | \mathcal{F}_n )$, where $\mathcal{F}_n$ is the $\sigma-$algebra generated by $\left\{ \zeta_1,..., \zeta_n \right\}$. Let $\Delta_n := x_n - x^*$ be the error of the $n-$th iterate. An equivalent expression gives us
\begin{align}
\Delta_n = \Delta_{n-1} - \eta_n \nabla F( x_{n-1} ) + \eta_n \xi_n,    
\end{align}

\medskip

Given the SGD recursion and under suitable assumptions it can be shown that when the step size sequence $n_i = \eta i^{ -\alpha }$, for $i \in \left\{ 1,..., n \right\}$ with $\alpha \in (1/2, 1)$, we have that 
\begin{align}
\sqrt{n}. \bar{\Delta}_n \Rightarrow \mathcal{N} \big( 0, A^{-1} S A^{-1} \big)    
\end{align}
where $\bar{\Delta}_n = \frac{1}{n} \sum_{i=1}^n \Delta_i = \bar{x}_n - x^{*}$. 
\color{black}

\subsubsection{Assumptions and error bounds}

\begin{assumption}[Strong convexity and Lipschitz continuity of the gradiet]
Assume that the objective function $F(x)$ is continuously differentiable and strongly convex with parameter $\mu > 0$, that is, for any $x_1$ and $x_2$,
\begin{align}
F(x_2) \geq F(x_1) + \langle \nabla F(x_1), x_2 - x_1 \rangle + \frac{\mu}{2} \norm{ x_1 - x_2 }_2^2.    
\end{align}
\end{assumption}

Further assume that $\nabla^2 F( x^{*} )$ exists, and $\nabla F(x)$ is Lipschitz continuous with a constant $L_F$, that is for any $x_1$ and $x_2$ 
\begin{align}
\norm{ \nabla F(x_1) - \nabla F(x_2) }_2^2 \leq L_F \norm{ x_1 - x_2 }_2.    
\end{align}

\begin{remark}
Notice that strong convexity of $F(x)$ can be assumed to hold in order to derive the limiting distribution of averaged SGD. In fact, the strong convexity of $F(x)$ implies that $\mathsf{\lambda}_{min} (A) = \mathsf{\lambda}_{min} \big( \nabla^2 F ( x^* )  \big) \geq \mu$ is an important condition for parameter estimation and inference.   
\end{remark}

\begin{assumption}
The following conditions hold for the sequence $\xi_n = \nabla F( x_{n-1} ) - \nabla f( x_{n-1}, \zeta_n )$: 

\begin{itemize}

\item[1.] Assume that $f( x, \zeta )$ is continuously differentiable in $x$ for any $\zeta$ and $\norm{ \nabla f (x, \zeta) }_2$ is uniformly integrable for any $x$ so that $\mathbb{E}_{n-1} \xi_n = 0$. 

\item[2.] The conditional covariance of $\xi_n$ has an expansion around $x = x^{*}$ such that 
\begin{align}
\mathbb{E}_{n-1} \left[ \xi_n \xi_n^{\top} \right] = S + \Sigma ( \Delta_{n-1} )    
\end{align}

\end{itemize}
\end{assumption}

\newpage

\begin{lemma}
If there is a function $H( \zeta )$ with bounded fourth moment, such that the Hessian of $f( x, \zeta )$ is bounded by 
\begin{align}
\norm{ \nabla^2 f (x, \zeta) } \leq H( \zeta )    
\end{align}
for all $x$ and $\nabla f( x^{*}, \zeta )$ have a bounded fourth moment. 
\end{lemma}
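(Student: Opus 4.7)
The plan is to verify, under the two stated hypotheses, that the conditional covariance of $\xi_n$ admits the expansion $\mathbb{E}_{n-1}[\xi_n\xi_n^\top] = S + \Sigma(\Delta_{n-1})$ of Assumption 2(2), with explicit leading term $S = \mathrm{Cov}\big(\nabla f(x^*,\zeta)\big)$ and a controllable remainder $\Sigma(\Delta_{n-1})$ vanishing as $\Delta_{n-1}\to 0$. The starting point is a first-order Taylor expansion of $\nabla f(\cdot,\zeta_n)$ around the optimum:
\begin{align*}
\nabla f(x_{n-1},\zeta_n) \;=\; \nabla f(x^*,\zeta_n) \;+\; R_n(\zeta_n),\qquad R_n(\zeta_n) := \Big(\int_0^1 \nabla^2 f\big(x^* + t\Delta_{n-1},\zeta_n\big)\,dt\Big)\Delta_{n-1}.
\end{align*}
The Hessian bound $\|\nabla^2 f(x,\zeta)\|\leq H(\zeta)$ yields the pointwise estimate $\|R_n(\zeta_n)\|\leq H(\zeta_n)\|\Delta_{n-1}\|$, and taking expectations over $\zeta_n$ gives an analogous expansion for $\nabla F(x_{n-1}) = \mathbb{E}[\nabla f(x_{n-1},\zeta)]$. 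Since $\nabla F(x^*)=0$ implies $\mathbb{E}[\nabla f(x^*,\zeta)] = 0$, we obtain the decomposition
\begin{align*}
\xi_n \;=\; \nabla F(x_{n-1}) - \nabla f(x_{n-1},\zeta_n) \;=\; -\,\eta_n \;+\; r_n,
\end{align*}
where $\eta_n := \nabla f(x^*,\zeta_n)$ is zero-mean with bounded fourth moment by hypothesis, and $r_n := \mathbb{E}[R_n(\zeta)] - R_n(\zeta_n)$ is an $\mathcal{F}_{n-1}$-conditionally centered remainder satisfying $\|r_n\|\leq 2H(\zeta_n)\|\Delta_{n-1}\|$.

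Next, I would expand the product $\xi_n\xi_n^\top$ and take the conditional expectation $\mathbb{E}_{n-1}[\,\cdot\,]$, treating $\Delta_{n-1}$ as $\mathcal{F}_{n-1}$-measurable and integrating only the fresh randomness in $\zeta_n$. This produces
\begin{align*}
\mathbb{E}_{n-1}[\xi_n\xi_n^\top] \;=\; \underbrace{\mathbb{E}[\eta_n\eta_n^\top]}_{=\,S} \;-\; \mathbb{E}_{n-1}[\eta_n r_n^\top] \;-\; \mathbb{E}_{n-1}[r_n \eta_n^\top] \;+\; \mathbb{E}_{n-1}[r_n r_n^\top],
\end{align*}
so $\Sigma(\Delta_{n-1})$ is identified as the sum of the three non-leading terms. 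Each of these is bounded by applying the Cauchy--Schwarz inequality together with the assumed fourth moment conditions: the cross terms satisfy
\begin{align*}
\big\|\mathbb{E}_{n-1}[\eta_n r_n^\top]\big\| \;\leq\; \big(\mathbb{E}\|\eta_n\|^2\big)^{1/2}\cdot\big(\mathbb{E}[H(\zeta_n)^2]\big)^{1/2}\,\|\Delta_{n-1}\| \;\lesssim\; \|\Delta_{n-1}\|,
\end{align*}
and the quadratic remainder obeys $\|\mathbb{E}_{n-1}[r_n r_n^\top]\|\leq 4\,\mathbb{E}[H(\zeta)^2]\,\|\Delta_{n-1}\|^2$, which together give $\|\Sigma(\Delta_{n-1})\| \lesssim \|\Delta_{n-1}\|$ (and $\Sigma(\Delta_{n-1})\to 0$ as $\Delta_{n-1}\to 0$).

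The principal obstacle I anticipate is bookkeeping in the cross-term analysis: because $\xi_n$ contains both an $\mathcal{F}_{n-1}$-measurable drift piece $\mathbb{E}[R_n(\zeta)]$ and a fluctuating piece $R_n(\zeta_n)$, one must be careful not to double-count these when isolating the leading $S$ from the remainder. The fourth-moment hypothesis on $H(\zeta)$ is stronger than strictly needed for the first-order expansion, but it becomes essential if one wishes to sharpen the conclusion to $\|\Sigma(\Delta_{n-1})\|_F \lesssim \|\Delta_{n-1}\| + \|\Delta_{n-1}\|^2$ with control of the moments uniformly in $n$, which is what the downstream CLT for averaged SGD ultimately requires. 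Once the Cauchy--Schwarz bounds are collected, the expansion in Assumption 2(2) follows directly.
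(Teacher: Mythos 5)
The paper states this lemma without a conclusion and without a proof: as printed it is an ``if'' with no ``then'', so there is no argument of the paper's to measure yours against. Your reconstruction of the intended conclusion --- that the two hypotheses imply the conditional-covariance expansion $\mathbb{E}_{n-1}[\xi_n\xi_n^{\top}] = S + \Sigma(\Delta_{n-1})$ of the preceding assumption, with $S = \mathbb{E}\big[\nabla f(x^{*},\zeta)\,\nabla f(x^{*},\zeta)^{\top}\big]$ and $\norm{\Sigma(\Delta)} \lesssim \norm{\Delta} + \norm{\Delta}^2$ --- is the right one (it is the role this lemma plays in \cite{chen2021statistical}), and your proof is sound and is the standard route: integral-form Taylor expansion of the gradient about $x^{*}$, the decomposition $\xi_n = -\eta_n + r_n$ with $\eta_n = \nabla f(x^{*},\zeta_n)$ conditionally centered, and Cauchy--Schwarz on the cross terms.

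Three small points. First, the identity $\mathbb{E}[\nabla f(x^{*},\zeta)] = \nabla F(x^{*}) = 0$ requires interchanging expectation and differentiation; this is licensed by the earlier assumption that $\norm{\nabla f(x,\zeta)}_2$ is uniformly integrable, and you should say so rather than treat it as automatic. Second, your pointwise bound $\norm{r_n} \leq 2H(\zeta_n)\norm{\Delta_{n-1}}$ is not quite right as stated, since $r_n = \mathbb{E}[R_n(\zeta)] - R_n(\zeta_n)$ involves both the averaged and the realized Hessian; the correct bound is $\norm{r_n} \leq \big(\mathbb{E}[H(\zeta)] + H(\zeta_n)\big)\norm{\Delta_{n-1}}$, which changes nothing in the subsequent moment estimates but should be recorded accurately. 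Third, you are right that only second moments of $H(\zeta)$ and of $\nabla f(x^{*},\zeta)$ enter the covariance expansion itself; the fourth moments are consumed later, in bounding $\mathbb{E}_{n-1}\norm{\xi_n}^4$ for the martingale CLT applied to the averaged iterates, so a complete statement of the lemma should either include that stronger conclusion or indicate where the extra moments are used.
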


\subsubsection{Estimators for asymptotic covariance}

Consider two consistent estimators, the plug-in estimator and the batch-means estimator (see, \cite{chen2021statistical}). Consider the sample estimate as below
\begin{align}
A_n := \frac{1}{n} \sum_{i=1}^n \nabla^2 f( x_{i-1}, \zeta_i ), \ \ \ \ S_n := \frac{1}{n} \sum_{i=1}^n  \nabla f( x_{i-1}, \zeta_i )  \nabla  f( x_{i-1}, \zeta_i )^{\top}  
\end{align}
Consider the following nodewise Lasso approach 
\begin{align}
\widehat{\gamma}^j = \underset{ \gamma^j \in \mathbb{R}^{d-1} }{ \mathsf{argmin} } \frac{1}{2n} \norm{ D_{.,j} -  D_{.,-j} \gamma^j }_2^2 + \lambda_j \norm{ \gamma^j }_1,  
\end{align}
where $D_{.,j}$ is the $j-$th column of the design matrix $D$ and $D_{.,-j}$ is the design submatrix without the $j-$th column (see also \cite{guo2022doubly}). Further, one can estimate $\Omega_{j,j}$ by
\begin{align}
\widehat{\tau}_j = \frac{1}{n} \left( D_{.,j} - D_{.,-j} \widehat{\gamma}^j \right)^{\top} D_{.,j}.   
\end{align}

\begin{remark}
Notice that $\widehat{\gamma}^j$ is the output of a stochastic gradient-based algorithm.      
\end{remark}

\subsubsection{Averaged SGD Algorithms}

Consider the statistical problem of searching for the minimum point $\theta_0$ of a smooth function $f( \theta )$ where $\theta \in \Theta \subset \mathbb{R}^d$. The stochastic GD method provides a direct way to solve the minimization problem. The algorithm is given as below. Let $\theta_0 \in \mathbb{R}^d$ be the initial value and for $n \geq 1$, we update $\theta_n$ using
\begin{align}
\theta_n = \theta_{n-1} - \gamma_n \big( \nabla f( \theta_{n-1} ) + \zeta_n \big)     
\end{align}
where $\theta_n = \frac{1}{n} \sum_{i=0}^{ n-1 } \theta_i$.
\begin{remark}
Notice that the convergence rate of $\mathbb{E} \norm{ \theta_n - \theta^* }^2$ and $\mathbb{E} \norm{ \bar{\theta}_n - \theta^* }^2$ is examined in the literature as well as the normality of $\sqrt{n} \big( \bar{\theta}_n - \theta^{*} \big)$ (see, \cite{shao2022berry}).      
\end{remark}

\begin{assumption}
There exists positive constants $c_2$ and $\beta$ such that $\forall \ \theta$ with $\norm{ \theta - \theta^{*} } \leq \beta$,
\begin{align}
\norm{  \nabla^2 f( \theta ) - \nabla^2 f( \theta^{*} ) } \leq c_2 \norm{ \theta - \theta^* }.    
\end{align}
\end{assumption}

\newpage 

\subsection{Sieve Estimation Applications}

Consider the knowledge transmission through neural networks as a recursive process that focuses on the diffusion of processes. Similar to neurobiological processes in the brain neural networks can activate certain neuros in the brain for transmission of understanding. 

Stochastic approximations recursively approximate the zeros of an unknown function $\Psi ( \theta )$, say $\theta^{*}$,  
\begin{align}
\hat{\theta}_{t+1} = \hat{\theta}_t  + a_t \psi ( z_t, \hat{\theta}_t ), \ \ \ t = 1,2,...  
\end{align}
where $a_t$ is a "learning rate" tending to zero and $\psi ( Z_t, \theta )$ is a measurement of $\Psi ( \theta )$ at time $t$, influenced by random variables $Z_t$. Furthermore, when $\Psi ( \theta ) \equiv \mathbb{E} \big[ \psi ( Z_t, \theta )  \big]$ this method yields a recursive implementation of the method of m-estimation of Huber (see,  \cite{kuan1994artificial}). The particular methodology can be used to estimate recursively the parameters of nonlinear regression models, such as those arising in neural network applications. Specifically, offline nonparametric estimation methods can usually be applied to the case of mixing processes using results for the methods of sieves. On the other hand, online nonparametric estimation methods require convergence to a global optimum of the underlying least squares problem, not just the local optimum.

\subsubsection{Asymptotic Properties of Sieve Estimation}

Another application in which constraint optimization is used is when estimating and simulating neural networks with sieves. In particular, based on a construction of the neural network sieve estimators, in each sieve space $\mathcal{F}_{r_n}$, there is a constraint on the $\ell_1$ norm for which $\sum_{i=0}^{ r_n } | \alpha_i | \leq V_n$. Therefore, finding the nearly optimal function in $\mathcal{F}_{r_n}$ for $\mathbb{Q}_n (f)$ is in fact a constrained optimization problem. A classical way to conduct this optimization is through introducing a Lagrange multiplier for each constraint. On the other hand, due to the difficulty in finding an explicit connection between the Lagrange multiplier and the upper bound in the inequality constraint we use instead the subgradient method (see, \cite{shen2023asymptotic}). The main idea is to update the parameter $\left\{  \alpha_0,...., \alpha_{r_n}   \right\}$ through 
\begin{align}
\alpha_i^{(k+1)} =  \alpha_i^{(k)} - \delta_k \mathsf{g}^{(k)}, \ \ \ i = 0,...,r_n,    
\end{align}
This allows us to investigate the asymptotic properties, including consistency, rate of convergence and asymptotic normality for neural network sieve estimators with one hidden layer. 

\medskip

Using the Generalized Dominated Convergence Theorem, we have that 
\begin{align}
\mathbb{E}^{*} \left[ \underset{ f \in \mathcal{F}_n  }{ \mathsf{sup} } \ \left| \frac{1}{n} \sum_{i=1}^n \epsilon_i \big( f( \boldsymbol{x}_i ) - f_0( \boldsymbol{x}_i ) \big) \right| \right] \leq 2 \mathbb{E}_{\epsilon}  \mathbb{E}_{\xi} \left[ \underset{ f \in \mathcal{F}_n  }{ \mathsf{sup} } \ \left| \frac{1}{n} \sum_{i=1}^n \xi \epsilon_i \big( f( \boldsymbol{x}_i ) - f_0( \boldsymbol{x}_i ) \big) \right| \right]  \to 0.
\end{align}

\newpage

\begin{example}
An application of sieve estimation in econometrics is presented in the study of \cite{su2016sieve}, where the authors propose a framework for sieve instrumental variable quantile regression estimation of functional cofficient models. More precisely, one can demonstrate that 
\begin{align}
\sqrt{n} \big(  \boldsymbol{\beta}_{2 \tau}  - \beta_{2 \tau}   \big) \overset{d}{\to} \mathcal{N}  \left( 0, \tau (1 - \tau) \underset{ K \to \infty }{ \mathsf{lim} } \mathbb{S}_2 \Omega_{B_{\tau}} \Psi_K \Omega_{B_{\tau}}^{\top} \mathbb{S}_2^{\top}  \right).
\end{align}
Consequently, one can conduct statistical inference on $\beta_{2 \tau}$ as usual by estimating the AVC matrix given above. Alternatively, one can apply the boostrap method to obtain standard errors and make inference. Furthermore, we can construct a specification test where we are interested in testing the null hypothesis
\begin{align}
\mathbb{H}_0: \delta_{1 \tau} ( U_i) \equiv \mathbb{S} \delta_{ \tau} ( U_i) = \delta_{1 \tau}.  
\end{align}
almost surely for some parameter $\delta_{1 \tau} \in \mathbb{R}^r$. Under $\mathbb{H}_0$, $r$ of the $( k_1 + k_2 )$ functional coefficients are constant, whereas under the alternative hypothesis $\mathbb{H}_1$, we have that at least one of the functional coefficients in $\delta_{1 \tau} (.)$ is not constant. Note that within the setting of nonstationary time series a relevant framework is presented by \cite{dong2021weighted}. 
\end{example}

\medskip

\subsubsection{Sieve Estimation for Panel Data}

A growing interest in the estimation of panel data models with cross-section dependence but most of the literature focuses on the linear specification of the regression relationship. In particular, let $y_{it}$, $t = 1,...,n$ and $t = 1,...,T$ be the $i-$th cross section unit at time $t$. We suppose that $y_{it}$ is generated according to the following semiparametric panel data generating process (see, \cite{su2012sieve})
\begin{align}
y_{it} = \mathsf{g}_i ( x_{it} ) + \gamma_{1i}^{\prime} f_{1t} + e_{it},
\end{align}
where $x_{it} \in \mathcal{X}_i \in \mathbb{R}^d$ is a vector of observed individual-specific regressors on the $i-$th cross-section unit at time $t$. In this section, we focus on the sieve estimation of semiparametric panel data models with multi-factor error structure. We develop the asymptotic theory under fairly general conditions when both the cross-section and time-dimensions are large. For instance, if only homogeneous regression relationships are of interest, the time dimension need not pass to infinity. Moreover one can consider testing the constancy of the nonparametric relationship over individuals in the presence of multi-factor error structure. In addition the individual specific regressors have the following structure
\begin{align}
x_{it} = \Gamma_{1i}^{\prime} f_{1t} + \Gamma_{2i}^{\prime} f_{2t} + v_{it},    
\end{align}
In practice, one may also be interested in estimating a restricted submodel of the following form 
\begin{align}
y_{it} = \mathsf{g}(x_{it}) + \gamma_{1i}^{\prime} f_{1t} + e_{it}.    
\end{align}

\newpage

\subsubsection{Sieve M inference on irregular parameters}

We follow the framework proposed by \cite{chen2014sieve} who consider plug-in sieve M estimators. More precisely, we assume that the data $\left\{ Z_i \right\}_{i=1}^n$ is a random sample from the distribution of $Z$ defined on an underlying complete probability space. Let $\mathcal{L} (., .) : \mathcal{Z} \times \mathcal{A} \mapsto \mathbb{R}$ be a measurable function and $\mathbb{E} \left[ \mathcal{L} \left( Z, \alpha \right) \right]$ be a population criterion. For simplicity we assume that there is a unique $\alpha_0 \in \left( \mathcal{A}, d_{A} \right)$ such that $\mathbb{E} \left[ \mathcal{L} \left( Z, \alpha_0 \right) \right] > \mathbb{E} \left[ \mathcal{L} \left( Z, \alpha \right) \right]$ for all $\alpha \in \left( \mathcal{A}, d_{A} \right)$ with $d_A \left( \alpha, \alpha_0 \right) > 0$. Different models in economics correspond to different choices of the criterion function $\mathbb{E} \left[ \mathcal{L} \left( Z, \alpha \right) \right]$ and the parameter space $\left( \mathcal{A}, d_{A} \right)$. A model does not need to be correctly specified and $\alpha_0$ could be pseudo true parameter. In this paper, we are interested in the estimation of and inference of a functional $f \left( \alpha_0 \right)$ via the method of sieves. 

Let $\mathcal{A}_n$ be a sieve space for the whole parameter space $\mathcal{A}$. Then an approximate sieve M estimator $\widehat{a}_n \in \mathcal{A}_n$ of $\alpha_0$ solves
\begin{align}
\frac{1}{n} \sum_{i=1}^n \mathcal{L} \left( Z_i, \widehat{a}_n \right) \geq \underset{ \widehat{a}_n \in \mathcal{A}_n }{ \mathsf{sup}  } \frac{1}{n} \sum_{i=1}^n \mathcal{L} \left( Z_i, a_n \right) - o_p \left( \frac{1}{n} \right)
\end{align}

\begin{example}[A partially additive quantile regression model, see \cite{chen2014sieve}]

\

Suppose that the i.i.d data $\left\{ Y_i, X_i^{\prime} = \left( X_{0i}^{\prime}, X_{1i}..., X_{qi} \right) \right\}_{i=1}^n$ is generated according to the process below: 
\begin{align}
Y-i = X_{0i}^{\prime} \theta_0 + \sum_{j=1}^q h_{j,0} \left( X_{j,i}  \right) + U_i,
\end{align} 
with $\mathbb{E} \big[ 1\left\{ U_i \leq 0 \right\} | X_i \big] = \tau \in (0,1),$ where $dim(X_0) = d_{\theta}$, $dim(X_j) = 1$ for $j = 1,..., q$, $dim(X) = d_{\theta} + q$ and $dim(Y) = 1$. Let $\alpha_0 = \left( \theta_0, h_0 \right)$, where $\theta_0 \in \Theta$ and $h_0 = \left( h_{1,0},..., h_{q,0} \right) \in \mathcal{H}$.  A functional of interest could be for instance, $f \left( \alpha_0 \right) = \lambda^{\prime} \theta_0$ for any $\lambda \in \mathbb{R}^{d_{\theta} }$ with $\lambda \neq 0$. This is an extension of the parametric quantile regression model of \cite{koenker1978regression} to allow for unknown additive functions $\sum_{j=1}^q h_{j,0}\left( X_{j,i} \right)$. We can estimate $\alpha_0 = \left( \theta_0, h_0 \right)$ by the sieve QR estimator $\widehat{\alpha}_n = \left( \widehat{\theta}_n, \widehat{h}_n  \right)$ that solves 
\begin{align}
\underset{\theta \in \Theta, h \in \mathcal{H}_n }{ \text{max} } \ \sum_{i=1}^n \left( 1 \left\{ Y_i \leq X_{0,i}^{\prime} \theta   + \sum_{j=1}^q h_{j,0}\left( X_{j,i} \right) \right\} - \tau \right) \times \left[ Y_i - X_{0,i}^{\prime} \theta  -  \sum_{j=1}^q h_{j,0}\left( X_{j,i} \right)  \right]
\end{align}
Specifically, if $\mathcal{A}$ is a Holder, Sobolev or Besov space of functions with bounded supports and $\mathcal{A}_n$ is a linear sieve space consisting of spline, wavelet, or cosine bases, then one typically has 
\begin{align}
d_A \left( \pi_n \left( \alpha_0 \right), \alpha_0 \right) = \norm{\pi_n \left( \alpha_0 \right) - \alpha_0  }_{\mathsf{sup}} 
\end{align} 
Given the existing results on the convergence rates for sieve $M$ estimators of semi-nonparametric models, we can restrict our attention to a shrinkage neighborhoud of $\alpha_0$. Let $\delta_{A,n} = \delta_{A,n}^{*} \gamma_n$ and $\delta_{s,n} = \delta_{s,n}^{*} \gamma_n$, where $\gamma_n$ is a positive sequence that diverges to infinity very slowly (say log log $n$) such that $\delta_{A,n} = 0(1)$. 
\end{example}

\newpage

In the example above, we have that 
\begin{align}
\mathcal{L} \left( Z, \alpha \right) = \big[ 1 \left\{ Y \leq \alpha ( Y )  - \tau  \right\} \big] \big[ Y - \alpha(Y) \big]
\end{align}
with $\alpha(Y) = X_{0,i}^{\prime} \theta   + \sum_{j=1}^q h_{j,0}\left( X_{j,i} \right)$. For any $\alpha \in \mathcal{A}$, we define a strong metric $\norm{.}_s$ as below
\begin{align}
\norm{ \alpha - \alpha_0 }_s = \mathbb{E} \left[ \left| X_{0,i}^{\prime} \left( \theta - \theta_0 \right)  + \sum_{j=1}^q \big[ h_{j}\left( X_{j} \right) - h_{j,0}\left( X_{j,i} \right) \big] \right|  \right]
\end{align}
By the definitions of the metrics, we have that for any $\alpha \in \mathcal{A}$, (see, \cite{chen2014sieve}) 
\begin{align}
\norm{ \alpha - \alpha_0 }^2 = \mathbb{E} \bigg[ f(0|X) \left| \alpha(X) - \alpha_0 (X) \right|^2 \bigg]
\end{align}
Let $v^{*}_n = \left( v^{*}_{\theta, n}, v^{*}_{h, n}  \right)$ be the Riesz representer of the functional $\frac{ \partial f \left( \alpha_0 \right) }{  \partial \alpha } [v]$ on $\mathcal{V}_n$. 
Denote with 
\begin{align}
v^{*}_n(X) = X_{0,i}^{\prime} v^{*}_{\theta,n}  + \sum_{j=1}^q v^{*}_{h_j,n}  \left( X_{j} \right)
\end{align}
Then, the variance of the plug-in sieve $M$ estimator $f \left( \widehat{\alpha}_n \right)$ of $f \left( \alpha_0 \right)$ is expressed as below 
\begin{align}
\norm{v_n^{*} }^2_{sd} = \tau ( 1 - \tau) \mathbb{E} \bigg[ \big| v_n^{*}   (X) \big|^2 \bigg] = \tau \left( 1 - \tau \right) \norm{  v_n^{*} }_s^2. 
\end{align} 

\medskip

\begin{proposition}[\cite{chen2014sieve}]
Under Assumptions and Conditions we have that
\begin{align}
\sqrt{n} \frac{ \displaystyle f \left( \widehat{\alpha}_n \right) - f \left( \alpha_0 \right)}{ \displaystyle \left\{ \tau \left( 1 - \tau \right) \mathbb{E} \left[ \left| X_{0}^{\prime} v^{*}_{\theta,n}  + \sum_{j=1}^q v^{*}_{h_j,n}  \left( X_{j} \right) \right|^2 \right] \right\}^{1 / 2 } } \to \mathcal{N} \left( 0, 1 \right).
\end{align}
\end{proposition}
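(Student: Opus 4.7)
The plan is to derive an asymptotic linear (influence function) representation for $\sqrt{n}[f(\widehat{\alpha}_n)-f(\alpha_0)]$ and then invoke a Lindeberg--Feller triangular array CLT, since the Riesz representer $v_n^{\star}=(v_{\theta,n}^{\star},v_{h,n}^{\star})$ on the sieve space $\mathcal{V}_n$ may fail to converge when the functional $f(\alpha_0)=\lambda^{\prime}\theta_0$ is irregular in the sense of a vanishing sieve variance normalization. Throughout I would normalize by $\norm{v_n^{\star}}_{sd}^2=\tau(1-\tau)\mathbb{E}|v_n^{\star}(X)|^2$, which is exactly the denominator appearing in the statement, and work under the consistency/rate conditions ensured by the results preceding the proposition (so that $\widehat{\alpha}_n$ lies in a shrinking neighbourhood of $\alpha_0$ in both the strong metric $\norm{\cdot}_s$ and the weak Fisher-type metric $\norm{\cdot}$).

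First, I would exploit the Riesz representation to write
\begin{align*}
f(\widehat{\alpha}_n)-f(\alpha_0)
&=\frac{\partial f(\alpha_0)}{\partial\alpha}[\widehat{\alpha}_n-\alpha_0]+r_n
=\langle \widehat{\alpha}_n-\alpha_0,\,v_n^{\star}\rangle+r_n,
\end{align*}
where $\langle\cdot,\cdot\rangle$ is the Fisher inner product inducing $\norm{\cdot}$, and the remainder $r_n$ from second-order smoothness of $f$ is $o_p(n^{-1/2}\norm{v_n^{\star}}_{sd})$ by the rate $\norm{\widehat{\alpha}_n-\alpha_0}=o_p(n^{-1/4})$. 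Next I would use the near-optimality of $\widehat{\alpha}_n$ on $\mathcal{A}_n$ against the perturbation $\widehat{\alpha}_n\pm\varepsilon_n u_n^{\star}$, where $u_n^{\star}=v_n^{\star}/\norm{v_n^{\star}}$ is the normalized Riesz representer (projected onto the sieve), to deduce an approximate score relation. Because the quantile check loss is non-smooth, the ``score'' step cannot be obtained by ordinary differentiation; instead I would subtract and add the population counterpart and use Knight's identity $\rho_\tau(u-v)-\rho_\tau(u)=-v[\tau-\mathbf{1}\{u\leq 0\}]+\int_0^v[\mathbf{1}\{u\leq s\}-\mathbf{1}\{u\leq 0\}]\,ds$ applied to $u=Y-\alpha_0(X)$ and $v=\varepsilon_n u_n^{\star}(X)$. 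The linear-in-$v$ term yields the score $[\tau-\mathbf{1}\{U\leq 0\}]u_n^{\star}(X)$, and the integral term, after conditioning on $X$ and using the conditional density $f(0|X)$, contributes exactly the quadratic form $\tfrac{\varepsilon_n^2}{2}\norm{u_n^{\star}}^2=\tfrac{\varepsilon_n^2}{2}$.

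Combining the two perturbation directions and the resulting envelope-type inequalities then gives the asymptotic linear representation
\begin{align*}
\sqrt{n}\,\langle \widehat{\alpha}_n-\alpha_0,u_n^{\star}\rangle
=\frac{1}{\sqrt{n}}\sum_{i=1}^{n}\big[\tau-\mathbf{1}\{U_i\leq 0\}\big]u_n^{\star}(X_i)+o_p(1),
\end{align*}
from which the proposition follows by multiplying through by $\norm{v_n^{\star}}/\norm{v_n^{\star}}_{sd}$ (which collapses to $\{\tau(1-\tau)\}^{-1/2}$ under the Fisher metric), verifying the Lindeberg condition for the triangular array $\{[\tau-\mathbf{1}\{U_i\leq 0\}]u_n^{\star}(X_i)\}_{i\leq n}$, and computing that the variance of each summand is $\tau(1-\tau)\mathbb{E}|u_n^{\star}(X)|^2=1$ thanks to $\mathbb{E}[\mathbf{1}\{U\leq 0\}|X]=\tau$ and the normalization.

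The main obstacle I anticipate is controlling the empirical-process remainder produced by replacing $v_n^{\star}$ with a sieve-feasible surrogate and by the lack of smoothness of $\rho_\tau$: concretely, one must show stochastic equicontinuity of $\nu_n(\alpha)\mapsto\frac{1}{\sqrt{n}}\sum_i\{[\mathbf{1}\{Y_i\leq\alpha(X_i)\}-\mathbf{1}\{Y_i\leq\alpha_0(X_i)\}]-\mathbb{E}[\cdot]\}u_n^{\star}(X_i)$ uniformly over a $n^{-1/4}$-neighbourhood of $\alpha_0$ in $\mathcal{A}_n$. This is where the sieve dimension $k_n$, the bracketing entropy of the partially additive class with $q$ univariate components, and the growth rate of $\norm{u_n^{\star}}_{\sup}$ must be carefully balanced; the usual tool is a maximal inequality of Pollard/van der Vaart--Wellner type applied to the VC-subgraph class of indicator functions composed with the sieve, together with the bounded conditional density $f(\cdot|X)$ to translate $L_2(P)$ bracketing into $\norm{\cdot}$ bracketing. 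Once equicontinuity is established the remainder is $o_p(1)$ and the normal limit follows.
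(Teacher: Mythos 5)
The paper states this proposition without proof, citing \cite{chen2014sieve}, and your outline reproduces exactly the route that the paper's surrounding setup (the sieve Riesz representer $v_n^{*}$ on $\mathcal{V}_n$, the scaling norm $\norm{v_n^{*}}_{sd}^2=\tau(1-\tau)\mathbb{E}\left[|v_n^{*}(X)|^2\right]$, and the restriction to a shrinking neighbourhood given the convergence rates) is designed for: linearize $f$ through the Riesz representation, extract the score by perturbing the criterion in the direction of the normalized representer via Knight's identity, control the nonsmooth remainder through stochastic equicontinuity of the indicator-based empirical process, and close with a Lindeberg--Feller CLT for the triangular array. One minor imprecision: the ratio $\norm{v_n^{*}}/\norm{v_n^{*}}_{sd}$ equals $\{\tau(1-\tau)\}^{-1/2}$ only if $f(0|X)\equiv 1$, but since your final variance computation $\norm{v_n^{*}}^2\,\tau(1-\tau)\mathbb{E}\left[|u_n^{*}(X)|^2\right]/\norm{v_n^{*}}_{sd}^2=1$ is exactly right, this does not affect the conclusion.
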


\medskip

An application to specification testing in econometrics using invariance principles on Sobolev spaces is proposed by  \cite{kuersteiner2019invariance}. Furthermore, an application of general sieves for inference purposes in nonparametric time series regression models identified with conditional moment restrictions is studied by \cite{chen2022inference}. In particular, nonlinear sieve learning employs a more general functional class of sieves that can approximate nonlinear functions of high dimensional variables (such as in the case of an infinite-dimensional parameter space).

\newpage

\subsection{Further Statistical Algorithms in Economic Applications}

This section is motivated from the Seminar of Dennis Kristensen, Professor of Economics at the Department of Economics of University College London (UCL), who presented the paper with title: 
"\textit{Iterative estimation of structural models with an application to perturbed utility models}", in May 2023 at the Department of Economics, University of Exeter Business School. Specifically, the particular steam of literature is based on the framework proposed by \cite{hotz1994simulation} and is relevant to estimators for dynamic models of discrete choice, that is, a general class of iterative estimators. 

Take for example, the classical \textit{likelihood estimation}, which is a statistical methodology that iteratively solves the statistical problem until convergence to an estimate that is close to the true unknown parameter of interest. In the same spirit, without sacrificing efficient statistical estimation we assume that such a sequence of estimators converge to the full solution of the optimization problem. These optimization steps are summarized as below:   
\begin{itemize}

\item[1.] Obtain a nonparametric estimate of the solution as the global optimizer from a local estimator $\hat{\theta}$.

\item[2.] Use first order conditions to update the initial conditions estimate.

\item[3.] Use the updated solution path based on the estimated parameters of the previous steps.

\item[4.] Update solution using parameter estimates.

\item[5.] Repeat Steps 2-4 until convergence. 

\end{itemize}

\medskip

\begin{remark}
Notice that the existence of some noise due to the parametric estimation is unavoidable. Therefore, the larger the presence of noise during the first stage of estimation the more iterations will be needed for algorithmic convergence (uncertainty quantification). Furthermore, the error bound of the true parameter vector based on the estimated parameters from the above algorithmic procedure is heavily based on the assumed estimator of the first stage (large sample theory). Thus, the resulting sequence of iterative estimators is assumed to have asymptotic behaviour that is within the vicinity of the domain of attraction; although might not necessarily result in solving the structural model under examination. 
\end{remark}

A different stream of literature considers doubly robust estimation methods which have been developed in various semiparametric problems, including partially linear models, instrumental variable analysis, and dimension reduction among others. Therefore, as a somewhat under-appreciated limit result, we point out that the familiar least-squares estimator for each individual coefficient in linear regression is doubly robust in the context of a partially linear model. This result is also closely related to debiased Lasso estimation in high-dimensional linear regression (e.g., see \cite{guo2022doubly}).

\newpage 

\appendix
\numberwithin{equation}{section}
\makeatletter

\newpage

\section{Elements of Weak Convergence of Empirical Processes}

\begin{example}
(M-dependent sequences) Let $X_n, n \in \mathbb{Z}$ be a stationary sequence with $\mathbb{E} X_n = 0$, $\mathbb{E} X^2_n < \infty$. Assume that $\sigma \big( \left\{ X_j, j \leq 0 \right\} \big)$ and  $\sigma \big( \left\{ X_j, j \geq M \right\} \big)$ are independent. 
\end{example}

\subsection{Sub-Gaussian processes}

Notice that Sub-Gaussian processes satisfy the increment bound $\norm{ X_s - X_t }_{ \psi_2 } \leq \sqrt{6} d(s,t)$. Therefore, the   general maximal inequality leads for sub-Gaussian processes to a bound in terms of an entropy integral. 

\begin{lemma}
Let $\left\{ X_t : t \in T \right\}$ be a separable sub-Gaussian process. Then, for every $\delta > 0$, 
\begin{align}
\mathbb{E} \ \underset{ d(s,t) \leq \delta }{ \mathsf{sup} } | X_s - X_t | \leq K \int_0^{\delta} \sqrt{ \mathsf{log} D (\epsilon, d)} d \epsilon,
\end{align}
for a constant $K$. In particular, for any $t_0$, 
\begin{align}
\mathbb{E} \ \underset{ t }{ \mathsf{sup} } | X_t | \leq \mathbb{E} | X_{t_0} | + K \int_0^{\infty} \sqrt{ \mathsf{log} D (\epsilon, d)} d \epsilon.
\end{align}
\end{lemma}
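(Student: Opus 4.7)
The statement is the classical Dudley entropy integral bound for sub-Gaussian processes, and I would prove it by a chaining argument built on the sub-Gaussian increment estimate $\norm{X_s - X_t}_{\psi_2} \leq \sqrt{6}\, d(s,t)$ recalled in the paragraph above the lemma. The first step is to fix a dyadic scale $\epsilon_k := \delta/2^k$ and, for each $k \geq 0$, choose a minimal $\epsilon_k$-net $T_k \subset T$ with $|T_k| = D(\epsilon_k, d)$. Using the separability of $\{X_t\}$, I can restrict attention to a countable dense subset $T_\infty \subset T$, and for each $t \in T_\infty$ select a chain $\pi_k(t) \in T_k$ with $d(t,\pi_k(t)) \leq \epsilon_k$, so that $\pi_k(t) \to t$ in $d$. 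Telescoping yields the representation
\begin{equation*}
X_t - X_{\pi_0(t)} = \sum_{k=1}^{\infty} \bigl( X_{\pi_k(t)} - X_{\pi_{k-1}(t)} \bigr),
\end{equation*}
valid almost surely by the sub-Gaussian continuity of the process.

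The second step is to control each link of the chain. For fixed $k$, the triangle inequality gives $d(\pi_k(t), \pi_{k-1}(t)) \leq \epsilon_k + \epsilon_{k-1} \leq 3\epsilon_k$, so by the sub-Gaussian increment bound $\norm{X_{\pi_k(t)} - X_{\pi_{k-1}(t)}}_{\psi_2} \leq 3\sqrt{6}\,\epsilon_k$. The number of distinct link pairs at level $k$ is at most $|T_k|\cdot|T_{k-1}| \leq D(\epsilon_k,d)^2$, so applying the standard Orlicz maximal inequality $\mathbb{E}\max_{i\leq N}|Y_i| \lesssim \psi_2^{-1}(N)\max_i \norm{Y_i}_{\psi_2}$ (with $\psi_2^{-1}(N) \lesssim \sqrt{\log(1+N)}$) gives
\begin{equation*}
\mathbb{E}\sup_{t \in T_\infty} \bigl| X_{\pi_k(t)} - X_{\pi_{k-1}(t)} \bigr| \;\leq\; C\, \epsilon_k \sqrt{\log D(\epsilon_k,d)},
\end{equation*}
for an absolute constant $C$.

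The third step is to sum these level-by-level bounds and recognise the sum as a Riemann sum for the entropy integral: since $\epsilon_k - \epsilon_{k+1} = \epsilon_k/2$ and $\sqrt{\log D(\cdot,d)}$ is nonincreasing in its first argument,
\begin{equation*}
\sum_{k=1}^{\infty} \epsilon_k \sqrt{\log D(\epsilon_k,d)} \;\leq\; 2 \sum_{k=1}^{\infty} (\epsilon_{k-1} - \epsilon_k)\sqrt{\log D(\epsilon_k,d)} \;\leq\; K' \int_0^{\delta} \sqrt{\log D(\epsilon,d)}\, d\epsilon.
\end{equation*}
To convert the bound on $|X_t - X_{\pi_0(t)}|$ into the bound for $\sup_{d(s,t)\leq\delta}|X_s - X_t|$, I would join the two chains starting from $s$ and from $t$ at their common coarsest net $T_0$ (which has a single element, since $\epsilon_0 = \delta$ and $\mathrm{diam}(T) \leq \delta$ on the subset $\{(s,t) : d(s,t)\leq \delta\}$, or, more generally, by joining $\pi_0(s)$ and $\pi_0(t)$ via one extra link of size $\leq 3\delta$ absorbed into the integral). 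The second assertion follows from the first by taking $\delta$ larger than the diameter of $T$ and adding $\mathbb{E}|X_{t_0}|$ via the triangle inequality.

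The main obstacle I expect is bookkeeping for the constants and, in particular, the careful use of separability: one must guarantee that the countable supremum over $T_\infty$ actually equals the essential supremum over $T$, and that the telescoping representation converges almost surely uniformly in $t$. Once the chaining estimates are in place, the monotonicity of $D(\epsilon,d)$ makes the passage from the dyadic sum to the entropy integral straightforward, so the only subtle point is ensuring that the sub-Gaussian tails accumulate along the chain in a summable way, which is guaranteed by the $\sqrt{\log D(\epsilon_k,d)}$ factor growing slowly enough relative to the geometric decay of $\epsilon_k$.
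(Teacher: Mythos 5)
Your proof is correct in substance, but it takes a more self-contained route than the paper does. The paper's own argument is a two-line corollary of the general Orlicz-norm maximal inequality stated in Appendix B: it applies that theorem with $\psi = \psi_2$ and $\eta = \delta$, and then absorbs the leftover discretization term $\delta\,\psi_2^{-1}\bigl( D^2(\delta,d) \bigr)$ into the entropy integral using the special multiplicative property $\psi_2^{-1}(m^2) \leq \sqrt{2}\,\psi_2^{-1}(m)$ together with the monotonicity of $\epsilon \mapsto D(\epsilon,d)$. What you do instead is unpack that black box: your dyadic chaining with nets $T_k$, the link bound $\norm{X_{\pi_k(t)} - X_{\pi_{k-1}(t)}}_{\psi_2} \leq 3\sqrt{6}\,\epsilon_k$, the finite-maximum inequality over at most $D(\epsilon_k,d)^2$ pairs, and the joining of the two chains at the coarsest level is precisely the proof of the appendix theorem specialized to $\psi_2$; your observation that the extra link between $\pi_0(s)$ and $\pi_0(t)$ contributes at most a constant times $\delta\sqrt{\log D(\delta,d)} \leq \int_0^\delta \sqrt{\log D(\epsilon,d)}\,d\epsilon$ is the same absorption step the paper performs abstractly. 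The paper's route is shorter and reuses a result needed elsewhere; yours buys a proof that does not depend on the appendix and makes the role of separability and almost-sure convergence of the telescoping sum explicit. One small slip to fix: in the Riemann-sum comparison you should pair $\epsilon_k \sqrt{\log D(\epsilon_k,d)}$ with the interval $[\epsilon_{k+1},\epsilon_k]$ to its left (where $D(\epsilon,d) \geq D(\epsilon_k,d)$ by monotonicity), writing $\epsilon_k = 2(\epsilon_k - \epsilon_{k+1})$ and $(\epsilon_k - \epsilon_{k+1})\sqrt{\log D(\epsilon_k,d)} \leq \int_{\epsilon_{k+1}}^{\epsilon_k} \sqrt{\log D(\epsilon,d)}\,d\epsilon$; with the interval $[\epsilon_k,\epsilon_{k-1}]$ as written, the monotonicity inequality points the wrong way. (You may also want $\sqrt{\log\bigl(1 + D(\epsilon,d)\bigr)}$ throughout, as in the paper's own displayed conclusion, to handle levels where $D = 1$.)
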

\begin{proof}
Apply the general maximal inequality with $\psi_2 (x) = e^{x^2} - 1$  and $\eta = \delta$. Since $\psi_2^{-1}(m) = \sqrt{ \log(1 + m)}$, we have that $\psi_2^{-1} \big( D^2 ( \delta, d ) \big) \leq \sqrt{2} \psi_2^{-1} \big( D ( \delta, d ) \big)$. Thus, the second term in the maximal inequality can first be replaced by $\sqrt{2} \delta \psi^{-1} \big( D ( \delta, d ) \big)$, (at the cost of increasing the constant) gives
\begin{align}
\norm{ \underset{ d(s,t) \leq \delta }{ \mathsf{sup} } | X_s - X_t | }_{\psi_2} \leq K \int_0^{\delta} \sqrt{ \mathsf{log} \left( 1 + D(\epsilon, d) \right)} d \epsilon. 
\end{align} 
\end{proof}

\paragraph{Symmetrization}

Let $\epsilon_1,...,\epsilon_n$ be \textit{i.i.d} Rademacher random variables. Replace the empirical process
\begin{align}
f \mapsto ( P_n - P ) f = \frac{1}{n} \sum_{i=1}^n \big( f(X_i) - Pf \big), 
\end{align}
with the corresponding symmetrized empirical process defined by $\displaystyle f \mapsto P_n^{o} f = \frac{1}{n} \sum_{i=1}^n \epsilon_i f(X_i)$,where $\epsilon_1,...,\epsilon_n$ are independent of $( X_1,..., X_n )$. Both processes have mean function zero. 
\begin{lemma}
For every nondecreasing, convex $\Phi: \mathbb{R} \to \mathbb{R}$ and class of measurable function $\mathcal{F}$
\begin{align}
\mathbb{E}^{*} \Phi \big( \norm{ P_n - P }_{ \mathcal{F}} \big) \leq \mathbb{E}^{*} \Phi \big( 2 \norm{ P_n^{0} }_{ \mathcal{F}}   \big)
\end{align}
\end{lemma}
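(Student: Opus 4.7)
The plan is to follow the classical symmetrization argument by introducing an independent \textit{ghost sample} $Y_1,\ldots,Y_n$ that are \textit{i.i.d.} from $P$ and independent of $X_1,\ldots,X_n$. Writing $Q_n f := \frac{1}{n}\sum_{i=1}^n f(Y_i)$, the key observation is that $Pf = \mathbb{E}_Y Q_n f$ for every fixed $f$, which lets us replace the deterministic centering $P$ by the random centering $Q_n$ via Jensen's inequality and then exploit the symmetry of $P_n - Q_n$.

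First, I would show that $\mathbb{E}^{*}\Phi\bigl(\|P_n - P\|_{\mathcal F}\bigr) \le \mathbb{E}^{*}\Phi\bigl(\|P_n - Q_n\|_{\mathcal F}\bigr)$. Since $f \mapsto \sup_{f\in\mathcal F}|f|$ is convex and $\Phi$ is nondecreasing and convex, the composition $T \mapsto \Phi(\|T\|_{\mathcal F})$ is convex on the space of bounded functionals on $\mathcal F$. Conditioning on $X_1,\ldots,X_n$ and applying Jensen's inequality to the inner $\mathbb{E}_Y$-expectation gives
\begin{equation*}
\Phi\bigl(\|P_n - P\|_{\mathcal F}\bigr) = \Phi\bigl(\|P_n - \mathbb{E}_Y Q_n\|_{\mathcal F}\bigr) \le \mathbb{E}_Y\,\Phi\bigl(\|P_n - Q_n\|_{\mathcal F}\bigr),
\end{equation*}
after which taking the outer expectation in $X$ and invoking Fubini (with the usual care for outer integrals, treating the map in $X$ and $Y$ as jointly measurable up to a perfect measurable majorant) yields the desired inequality.

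Next, I would symmetrize. Let $\epsilon_1,\ldots,\epsilon_n$ be independent Rademacher variables, independent of $(X_i,Y_i)_{i\le n}$. The crucial point is that the vector $\bigl(f(X_i)-f(Y_i)\bigr)_{i\le n,\, f\in\mathcal F}$ has the same joint distribution as $\bigl(\epsilon_i(f(X_i)-f(Y_i))\bigr)_{i\le n,\, f\in\mathcal F}$, because swapping $X_i \leftrightarrow Y_i$ for a subset of indices preserves the law of the \textit{i.i.d.}\ sequence. Consequently,
\begin{equation*}
\mathbb{E}^{*}\Phi\bigl(\|P_n - Q_n\|_{\mathcal F}\bigr) = \mathbb{E}^{*}\Phi\!\left(\left\| \frac{1}{n}\sum_{i=1}^n \epsilon_i\bigl(f(X_i)-f(Y_i)\bigr)\right\|_{\mathcal F}\right).
\end{equation*}
Applying the triangle inequality on the sup-norm followed by the convexity bound $\Phi(a+b) \le \tfrac12\Phi(2a)+\tfrac12\Phi(2b)$ (valid because $\Phi$ is nondecreasing and convex), and using the fact that $\frac{1}{n}\sum\epsilon_i f(X_i)$ and $\frac{1}{n}\sum\epsilon_i f(Y_i)$ have the same distribution, the two resulting terms coincide and collapse to a single expression $\mathbb{E}^{*}\Phi\bigl(2\|P_n^{o}\|_{\mathcal F}\bigr)$, which is the required bound.

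The chief obstacle I anticipate is purely technical rather than conceptual: the supremum $\|P_n - P\|_{\mathcal F}$ need not be measurable for uncountable $\mathcal F$, so all statements must be interpreted in terms of \emph{outer} expectations $\mathbb{E}^{*}$. Justifying the use of Jensen's inequality and Fubini's theorem in this setting requires the standard machinery of perfect measurable majorants and the fact that outer integrals respect conditioning on an independent variable; this is the place where one must be careful, and it is where a clean write-up (as in van der Vaart and Wellner) would spend most of its words. Once measurability is handled, the remainder of the proof is a short sequence of applications of Jensen, distributional symmetry, and convexity of $\Phi$.
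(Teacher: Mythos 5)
Your argument is correct and is the canonical symmetrization proof (ghost sample, Jensen's inequality for the nondecreasing convex $\Phi$, Rademacher exchange via the exchangeability of $(X_i,Y_i)$, then the triangle inequality together with $\Phi(a+b)\le \tfrac12\Phi(2a)+\tfrac12\Phi(2b)$). The paper itself states this lemma without any proof, so there is nothing to compare against; your write-up, including the caveat that the outer-expectation manipulations require perfect maps and measurable majorants, is exactly what a complete proof would contain.
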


\newpage

\subsection{Clivenko-Cantelli theorems}

In this section, we prove two types of Clivenko-Cantelli theorems. The first theorem is the simplest and is based on entropy with bracketing. The second theorem, uses random $L_1-$entropy numbers and is proved through symmetrization followed by a maximal inequality.  

\begin{definition}
(Covering numbers) The covering numbers $N \left( \epsilon, \mathcal{F}, \norm{.} \right)$ is the minimal number of balls $\left\{ g : \norm{ g - f } < \epsilon \right\}$ of radius $\epsilon$ needed to cover the set $\mathcal{F}$. The entropy (without bracketing) is the logarithm of the covering numbers. 
\end{definition}

\begin{definition}
(bracketing numbers) Given two functions $l$ and $u$, the bracket $[l,u]$ is the set of all functions $f$ with $l \leq f \leq u$. An $\epsilon-$bracket is a bracket $[l,u]$ with $\norm{ u - l } < \epsilon$. Then, the bracketing number $N_{[ \ ]} \left( \epsilon, \mathcal{F}, \norm{.} \right)$ is the minimum number of $\epsilon-$brackets needed to cover $\mathcal{F}$. The entropy with bracketing is the logarithm of the bracketing number. 
\end{definition}

\begin{theorem}
Let $\mathcal{F}$ be a class of measurable functions such that $N_{[ \ ]} \left( \epsilon, \mathcal{F}, L_1 (P) \right) < \infty$ for every $\epsilon > 0$. Then, $\mathcal{F}$ is Glivenko-Cantelli. 
\end{theorem}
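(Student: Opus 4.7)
The plan is to exploit the finiteness of bracketing numbers to reduce control of the supremum over the infinite class $\mathcal{F}$ to control of finitely many empirical averages, each of which is handled by the ordinary strong law of large numbers. The key mechanism is that each $f \in \mathcal{F}$ is sandwiched between the endpoints of some $\epsilon$-bracket $[l_i, u_i]$, and empirical means of $f$ inherit bounds from empirical means of $l_i$ and $u_i$, modulo an error controlled by $\epsilon$.

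First I would fix $\epsilon > 0$ and invoke the hypothesis to obtain finitely many brackets $[l_1, u_1], \dots, [l_N, u_N]$ with $\|u_i - l_i\|_{L_1(P)} < \epsilon$ and $N = N_{[\ ]}(\epsilon, \mathcal{F}, L_1(P)) < \infty$, such that every $f \in \mathcal{F}$ lies in at least one bracket. For each such $f$, choosing an index $i = i(f)$ with $l_i \leq f \leq u_i$, the two-sided sandwich gives
\begin{align*}
(P_n - P) f &\leq P_n u_i - P f = (P_n - P) u_i + P(u_i - f) \leq (P_n - P) u_i + \epsilon, \\
(P_n - P) f &\geq P_n l_i - P f = (P_n - P) l_i - P(f - l_i) \geq (P_n - P) l_i - \epsilon.
\end{align*}
Taking the supremum over $f \in \mathcal{F}$ and then the maximum over the finite index set $\{1, \dots, N\}$ yields
$$
\sup_{f \in \mathcal{F}} \left| (P_n - P) f \right| \leq \max_{1 \leq i \leq N} \Big( |(P_n - P) u_i| \vee |(P_n - P) l_i| \Big) + \epsilon.
$$

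Next I would apply the classical strong law of large numbers to each of the $2N$ integrable functions $u_i, l_i$ (their integrability follows from the existence of the $L_1(P)$ brackets, since $\|u_i\|_{L_1(P)} \leq \|l_i\|_{L_1(P)} + \epsilon$ and $l_i$ can be taken in $L_1(P)$). Because the maximum of finitely many null sequences is still null, we obtain
$$
\limsup_{n \to \infty} \sup_{f \in \mathcal{F}} \left| (P_n - P) f \right| \leq \epsilon \quad \text{almost surely}.
$$
Since $\epsilon > 0$ was arbitrary, intersecting the corresponding almost-sure sets along a countable sequence $\epsilon_k \downarrow 0$ gives $\|P_n - P\|_{\mathcal{F}} \to 0$ almost surely, which is the Glivenko-Cantelli property.

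The main (mild) obstacle is measurability: the supremum over the uncountable class $\mathcal{F}$ need not be a measurable function of the sample, so the convergence must be interpreted in the outer-almost-sure sense, and one should verify that the finite-maximum reduction above does not introduce measurability issues. Fortunately, replacing $\sup_{f \in \mathcal{F}} |(P_n - P)f|$ by its upper bound $\max_i |(P_n - P) u_i| \vee |(P_n - P) l_i| + \epsilon$, which is measurable, is exactly what disposes of this issue, so the argument goes through after a brief remark that outer expectation/outer almost sure convergence is bounded by the measurable quantity on the right-hand side.
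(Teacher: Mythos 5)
Your proof is correct and follows essentially the same route as the paper's: fix $\epsilon$, cover $\mathcal{F}$ by finitely many $L_1(P)$-brackets, sandwich $(P_n-P)f$ by the empirical deviations of the bracket endpoints up to an $\epsilon$ error, apply the strong law of large numbers to the finitely many endpoint functions, and let $\epsilon \downarrow 0$. In fact your write-up is slightly more complete than the paper's sketch, since you also record the lower bound via $\ell_i$ and address the outer-measurability of the supremum, both of which the paper leaves implicit.
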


\begin{proof}
Fix $\epsilon > 0$. Choose finitely many $\epsilon-$brackets $[ \ell_i , u_i ]$ whose union contains $\mathcal{F}$ and such that $P( u_i - \ell_i ) < \epsilon$ for every $i$. Then, for every $f \in \mathcal{F}$, there is a bracket such that 
\begin{align}
( P_n - P ) f \leq ( P_n - P ) u_i + P( u_i - f  ) \leq ( P_n - P ) u_i + \epsilon
\end{align}
Consequently, 
\begin{align}
\underset{ f \in \mathcal{F} }{ \mathsf{sup} } \left( P_n - P \right) \leq \underset{ i }{ \mathsf{max} } (P_n - P ) u_i + \epsilon. 
\end{align}
The right side converges almost surely to $\epsilon$ by the strong law of large numbers for real variables. 
\end{proof}

\medskip

\begin{theorem}
Let $\mathcal{F}$ be a $P-$measurable class of measurable functions with envelope F such that $P^{*} F < \infty$. Let $\mathcal{F}_M$ be the class of functions $f \mathbf{1} \left\{ F \leq M \right\}$ when $f$ ranges over $\mathcal{F}$. If log$N_{[ \ ]} \left( \epsilon, \mathcal{F}, L_1 (P_n) \right) = o_p^{*}(n)$ for every $\epsilon$ and $M > 0$, then $\norm{ P_n - P }_{ \mathcal{F} }^{*} \to 0$ both almost surely and in mean. In particular, $\mathcal{F}$ is GC.  
\end{theorem}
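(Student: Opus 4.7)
The plan is to reduce the problem to the bounded class $\mathcal{F}_M$ by truncation, symmetrize, and then exploit the hypothesis $\log N(\epsilon,\mathcal{F}_M,L_1(P_n))=o_p^*(n)$ through a standard finite-maximum bound conditional on the data. To begin, fix $M>0$ and decompose every $f\in\mathcal{F}$ as $f=f\mathbf{1}\{F\le M\}+f\mathbf{1}\{F>M\}$; the tail piece is pointwise bounded by $F\mathbf{1}\{F>M\}$, so by the triangle inequality
\begin{align*}
\|P_n-P\|_{\mathcal{F}}^{*}\;\le\;\|P_n-P\|_{\mathcal{F}_M}^{*}+P_n\big(F\mathbf{1}\{F>M\}\big)+P\big(F\mathbf{1}\{F>M\}\big).
\end{align*}
The last two terms converge, a.s.\ and in mean respectively, to $2P(F\mathbf{1}\{F>M\})$, which tends to $0$ as $M\to\infty$ since $P^{*}F<\infty$. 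Hence it suffices to prove $\mathbb{E}^{*}\|P_n-P\|_{\mathcal{F}_M}\to 0$ for each fixed $M$.

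Next I would apply the symmetrization lemma stated earlier with $\Phi(x)=x$ to obtain $\mathbb{E}^{*}\|P_n-P\|_{\mathcal{F}_M}\le 2\,\mathbb{E}^{*}\|P_n^{o}\|_{\mathcal{F}_M}$, where $P_n^{o}f=n^{-1}\sum_{i=1}^n\epsilon_if(X_i)$ is the symmetrized process. Now condition on $X_1,\dots,X_n$ and take a minimal $\epsilon$-cover of $\mathcal{F}_M$ in $L_1(P_n)$, consisting of functions $f_1,\dots,f_N$ with $N=N(\epsilon,\mathcal{F}_M,L_1(P_n))$. For any $f\in\mathcal{F}_M$ pick $f_j$ with $P_n|f-f_j|<\epsilon$; then
\begin{align*}
|P_n^{o}f|\;\le\;|P_n^{o}f_j|+\frac{1}{n}\sum_{i=1}^n|f(X_i)-f_j(X_i)|\;<\;\max_{1\le j\le N}|P_n^{o}f_j|+\epsilon,
\end{align*}
so $\|P_n^{o}\|_{\mathcal{F}_M}\le\max_{j}|P_n^{o}f_j|+\epsilon$. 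Since every $|f_j|\le M$, Hoeffding's inequality applied to the Rademacher sums $n\,P_n^{o}f_j$ gives, conditionally on the sample, the bound $\mathbb{E}_{\epsilon}\max_{j}|P_n^{o}f_j|\le CM\sqrt{\log(2N)/n}$ for a universal constant $C$.

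Taking the outer expectation and using the hypothesis $\log N(\epsilon,\mathcal{F}_M,L_1(P_n))=o_p^{*}(n)$, the random quantity $\sqrt{\log(2N)/n}$ tends to $0$ in outer probability, and since it is bounded by a fixed constant whenever $N\ge 1$ (after truncating at $M$, each $f_j$ has $\|f_j\|_\infty\le M$, so one may take $N\le$ the cardinality of a deterministic cover of $[-M,M]^n$ in the worst case), dominated convergence yields $\mathbb{E}^{*}\|P_n^{o}\|_{\mathcal{F}_M}\le\epsilon+o(1)$. Letting $\epsilon\downarrow 0$ and then $M\uparrow\infty$ proves convergence in mean. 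Finally, to upgrade to almost sure convergence I would invoke the classical fact that $\|P_n-P\|_{\mathcal{F}}^{*}$ is a reverse submartingale with respect to the $\sigma$-fields generated by the exchangeable tail, so convergence in mean (equivalently in probability) automatically entails almost sure convergence; the ``In particular, $\mathcal{F}$ is GC'' conclusion then follows by definition. The main obstacle I anticipate is handling the measurability issues hidden in the outer expectation $\mathbb{E}^{*}$ when passing from the conditional Hoeffding bound to the unconditional one, which is precisely why the hypothesis asks that $\mathcal{F}$ be $P$-measurable (so that Fubini applies to the symmetrized process) and why the covering numbers are allowed to be random.
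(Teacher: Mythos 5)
Your proposal is correct and follows essentially the same route as the paper's proof: symmetrization, truncation at level $M$, an $\epsilon$-net in $L_1(P_n)$, and a sub-Gaussian maximal bound over the finite net combined with the $o_p^{*}(n)$ entropy hypothesis. The only differences are cosmetic --- you truncate before rather than after symmetrizing, and you spell out the dominated-convergence and reverse-submartingale steps that the paper's write-up leaves implicit.
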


\begin{proof}
By the symmetrization lemma, measurability of the class $\mathcal{F}$, and Fubini's theorem 
\begin{align*}
\mathbb{E}^{*} \norm{ P_n - P }_{\mathcal{F}} 
&\leq 
2 \mathbb{E}_{X} \mathbb{E}_{\epsilon} \norm{ \frac{1}{n} \sum_{i=1}^n \epsilon_i f(X_i) }_{ \mathcal{F} }
\\
&\leq 
2 \mathbb{E}_{X} \mathbb{E}_{\epsilon} \norm{ \frac{1}{n} \sum_{i=1}^n \epsilon_i f(X_i) }_{ \mathcal{F_M} } + 2 \mathbb{P}^{*} F \left\{ F > M  \right\}
\end{align*}
by the triangle inequality, for every $M > 0$.

\newpage

Thus, for sufficiently large $M$, the last term is arbitrarily small. To prove convergence in mean, it suffices to show that the first term converges to zero for fixed $M$. Fix $X_1,...,X_n$. If $\mathcal{G}$ is an $\epsilon-$net in $L_1( P_n )$ over $\mathcal{F}_M$, then the following inequality holds
\begin{align}
\mathbb{E}_{\epsilon} \norm{ \frac{1}{n} \sum_{i=1}^n \epsilon_i f(X_i)   }_{ \mathcal{F}_M } \leq  \mathbb{E}_{\epsilon} \norm{ \frac{1}{n} \sum_{i=1}^n \epsilon_i f(X_i)   }_{ \mathcal{G} } + \epsilon. 
\end{align}
The cardinality of $\mathcal{G}$ can be chosen equal to $N( \epsilon, \mathcal{F}_M, L_1(P_n) )$. Bound the $L_1-$norm on the right using the   Orlicz-norm for $\psi_2 (x) = \mathsf{exp}( x^2 ) - 1$, and using the maximal inequality to find that the last expression does not exceed a multiple of 
\begin{align}
\sqrt{1 + \mathsf{log} N( \epsilon, \mathcal{F}_M, L_1(P_n) ) } \ \underset{ f \in \mathcal{G} }{ \mathsf{sup} } \norm{ \frac{1}{n} \sum_{i=1}^n \epsilon_i f(X_i) }_{ \psi_2 | X } + \epsilon,
\end{align}
where the Orlicz-norm $\norm{ . }_{ \psi_2 | X }$ are taken over $\epsilon_1,..., \epsilon_n$ with $X_1,...,X_n$ fixed.  By Hoeddding's inequality, then can be bounded by $\sqrt{6 / n} \left( P_n f^2 \right)^{1 / 2}$, which is less than $\sqrt{6 / n} M$. 
\end{proof}

\subsection{Donsker Theorems}

\paragraph{Uniform Entropy:} We establish the weak convergence of the empirical process under the condition that the envelope function $F$ be square integrable, combined with the uniform entropy bound
\begin{align}
\int_{0}^{\infty} \sqrt{ \mathsf{log}  N \left( \epsilon, \mathcal{F}_{Q,2}, L_2(Q) \right) } d \epsilon < \infty.  
\end{align}
\begin{theorem}
Let $\mathcal{F}$ be a class of measurable functions that satisfies the uniform entropy bound. Let the class $\mathcal{F}_{\delta} = \left\{ f - g: f,g, \in \mathcal{F}, \norm{ f - g }_{P,2} < \delta \right\}$ and $\mathcal{F}^2_{\infty}$ be $P-$measurable for every $\delta > 0$. If $P^{*} F^2 < \infty$, then $\mathcal{F}$ is $P-$Donsker.  
\end{theorem}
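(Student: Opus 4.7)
The plan is to prove weak convergence of $\mathbb{G}_n = \sqrt{n}(P_n - P)$ in $\ell^\infty(\mathcal{F})$ by verifying the two standard ingredients: (i) finite-dimensional convergence of $(\mathbb{G}_n f_1, \ldots, \mathbb{G}_n f_k)$ for any finite tuple $(f_1, \ldots, f_k) \in \mathcal{F}^k$, and (ii) asymptotic equicontinuity with respect to the intrinsic semimetric $\rho_P(f,g) = \|f-g\|_{P,2}$, together with total boundedness of $(\mathcal{F},\rho_P)$. Finite-dimensional convergence is immediate from the classical multivariate CLT using $P^\ast F^2 < \infty$, and total boundedness follows by applying the uniform entropy bound at $Q=P$ (normalised by $\|F\|_{P,2}$). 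Thus the whole content of the theorem is the equicontinuity statement
\begin{align*}
\lim_{\delta \downarrow 0}\limsup_n \mathbb{P}^\ast\bigl(\|\mathbb{G}_n\|_{\mathcal{F}_\delta} > \eta\bigr) = 0, \qquad \eta > 0.
\end{align*}

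To establish equicontinuity, I would first apply the symmetrization lemma recorded earlier to bound $\mathbb{E}^\ast \|\mathbb{G}_n\|_{\mathcal{F}_\delta} \leq 2\,\mathbb{E}^\ast \|\mathbb{G}_n^o\|_{\mathcal{F}_\delta}$, where $\mathbb{G}_n^o f = n^{-1/2}\sum_{i=1}^n \epsilon_i f(X_i)$. Conditional on $X_1,\ldots,X_n$, the symmetrized process is sub-Gaussian with respect to the random semimetric $d_n(f,g) = (P_n(f-g)^2)^{1/2}$ (by Hoeffding's inequality applied to Rademacher sums). The sub-Gaussian maximal inequality from the previous section, sharpened to a chaining bound, then yields
\begin{align*}
\mathbb{E}_\epsilon \|\mathbb{G}_n^o\|_{\mathcal{F}_\delta}
\;\lesssim\; \int_0^{\theta_n(\delta)} \sqrt{\log N\!\left(\epsilon,\mathcal{F}_\delta, L_2(P_n)\right)}\,d\epsilon,
\end{align*}
where $\theta_n(\delta) = \sup_{f \in \mathcal{F}_\delta} \|f\|_{P_n,2}$ is a random radius. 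Changing variables $\epsilon \mapsto \epsilon \|F\|_{P_n,2}$ and using monotonicity of covering numbers together with the uniform entropy hypothesis turns the right-hand side into $\|F\|_{P_n,2}\cdot J\!\left(\theta_n(\delta)/\|F\|_{P_n,2}\right)$ where $J(\tau) := \int_0^\tau \sup_Q \sqrt{\log N(\epsilon,\mathcal{F},L_2(Q))}\,d\epsilon$ is finite and continuous at $0$ by hypothesis.

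Taking unconditional expectations, a Cauchy–Schwarz step bounds $\mathbb{E}\|F\|_{P_n,2} \leq \|F\|_{P,2}$, and the remaining task is to show that $\theta_n(\delta) \to \sup_{f \in \mathcal{F}_\delta} \|f\|_{P,2} \leq \delta$ in probability. This is precisely a Glivenko–Cantelli statement for the squared class $\mathcal{F}_\infty^2 = \{f^2 : f \in \mathcal{F}\}$, which is why $P$-measurability of both $\mathcal{F}_\delta$ and $\mathcal{F}_\infty^2$ is assumed; the uniform entropy bound transfers from $\mathcal{F}$ to $\mathcal{F}_\infty^2$ up to the factor $2F$, and the Clivenko–Cantelli theorem stated earlier in this section then applies after truncation at level $M$, with the truncated remainder controlled by dominated convergence using $P^\ast F^2 < \infty$. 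Combining these steps gives $\limsup_n \mathbb{E}^\ast\|\mathbb{G}_n\|_{\mathcal{F}_\delta} \lesssim \|F\|_{P,2}\cdot J(\delta/\|F\|_{P,2})$, and letting $\delta\downarrow 0$ closes the argument via Markov's inequality.

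The main obstacle I anticipate is the control of the random radius $\theta_n(\delta)$ and the associated measurability issues: the chaining step produces a conditional bound that is only useful once one can replace the random $L_2(P_n)$-geometry by the deterministic $L_2(P)$-geometry, and this replacement hinges on a Glivenko–Cantelli argument for $\mathcal{F}_\infty^2$ that is only legitimate because of the stated measurability assumption. A secondary technical nuisance is the truncation of the envelope $F$ at a level $M \to \infty$ to keep the conditional sub-Gaussian constants bounded; the envelope tails must be handled by a separate symmetrization-plus-contraction step, with the residual shown to vanish by square-integrability of $F$. Once these two points are handled, the combination of finite-dimensional CLT, total boundedness, and asymptotic equicontinuity delivers weak convergence of $\mathbb{G}_n$ in $\ell^\infty(\mathcal{F})$, i.e., the $P$-Donsker property.
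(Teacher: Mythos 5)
Your proposal follows essentially the same route as the paper's proof: Markov plus symmetrization, the conditional sub-Gaussian maximal inequality yielding an $L_2(P_n)$ entropy-integral bound over $\mathcal{F}_{\delta_n}$, and the reduction of covering numbers of $\mathcal{F}_{\delta}$ to those of $\mathcal{F}_{\infty}$. In fact your write-up carries the argument further than the paper's sketch does — the paper stops after the entropy-integral bound, whereas you correctly identify and handle the remaining steps (change of variables by $\norm{F}_{P_n,2}$, control of the random radius via the Glivenko--Cantelli property of $\mathcal{F}^2_{\infty}$, finite-dimensional convergence, and total boundedness) exactly as in the standard van der Vaart--Wellner treatment.
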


\begin{proof}
Let $\delta_n \to 0$ be a fixed constant. Using Markov's inequality and the symmetrization lemma: 
\begin{align}
\mathbb{P}^{*} \left( \norm{ G_n }_{\mathcal{F}_{ \delta_n }} > x \right) \leq \frac{2}{x} \mathbb{E}^{*} \norm{ \frac{1}{ \sqrt{n} } \sum_{i=1}^n \epsilon_i f( X_i )  }_{ \mathcal{F}_{\delta_n} }. 
\end{align}
Therefore, we can see that the inner expectation is bounded as below
\begin{align}
\mathbb{E}_{\epsilon} \norm{ \frac{1}{ \sqrt{n} } \sum_{i=1}^n \epsilon_i f( X_i )  }_{ \mathcal{F}_{\delta_n} } 
\leq
\int_0^{\infty} \sqrt{ \mathsf{log}  N \left( \epsilon, \mathcal{F}_{\delta_n }, L_2( P_n ) \right) } d \epsilon. 
\end{align}
Notice that for large values of $\epsilon$, the set $\mathcal{F}_{\delta_n }$ fits in a single ball of radius $\epsilon$ around the origin, in which case the integrand is zero.Furthermore, we have that the covering numbers of the class $\mathcal{F}_{\delta_n }$ are bounded by covering numbers of $\mathcal{F}_{\infty} = \left\{ f - g: f,g \in \mathcal{F} \right\}$ (see, \cite{wellner2013weak}).  
\end{proof}

\newpage

\section{Elements of Stochastic Processes}

\subsection{Asymptotic Equicontinuity}

A class of measurable functions is called pre-Gaussian if the (tight) limit process $\mathbb{G}$ in the uniform central limit theorem exists. We focus on Brownian bridge processes. Firstly, it is desirable to have a more concrete description of the tightness property of a Brownian bridge and hence of the motion of pre-Gaussianity. More specifically, tightness of a random map into $\ell^{\infty} ( \mathcal{F} )$ is closely connected to continuity of its sample paths. A Donsker class $\mathcal{F}$ satisfies a stronger condition that the sequence $\mathcal{G}_n$ is asymptotically tight. Therefore, this entails replacing the condition that the sample paths of the limit process are continuous by the condition that the empirical process is asymptotically continuous. 

Thus, for every $\epsilon > 0$ it holds that 
\begin{align}
\underset{ \delta \to 0 }{ \mathsf{lim} } \ \underset{ n \to \infty }{ \mathsf{lim \ sup} } \ \mathbb{P}^{*} \bigg(  \underset{ \rho_p ( f - g) < \delta }{ \mathsf{sup} } \big| \mathcal{G}_n ( f - g ) \big| > \epsilon \bigg) = 0. 
\end{align}
Related discussion can be found in \cite{newey1991uniform} and \cite{ hagemann2014stochastic} among others.

\paragraph{Maximal Inequalities}

Therefore, it follows that the law of large numbers and the central limit theorem are concerned with showing that the supremum of real-valued variables are converges to zero. Thus, to show these results we need to make use of maximal inequalities that bound probabilities involving suprema of random variables. Notice that bounds on finite suprema can be extended to general maximal inequalities with the help of the chaining method.

\paragraph{CLT in Banach Spaces} 

In this particular section it is shown that any CLT in a Banach space can be stated in terms of empirical processes. Furthermore, a class of maximal inequalities can be used to establish the asymptotic equicontinuity of the empirical process.

\begin{theorem}
Let $\psi$ be convex, nondecreasing, nonzero function with $\psi(0) = 0$ and $\underset{ x, y \to \infty }{ \mathsf{lim \ sup} } \ \psi(x) \psi(y) / \psi(cxy) < \infty$, for some constant $c$. Let $\big\{ X_t : t \in T \big\}$ be a separable stochastic process with 
\begin{align}
\norm{ X_s - X_t }_{\psi} \leq C d(s,t), \  \ \ \text{for every} \ s,t,
\end{align}
for some semi-metric $d$ on $T$ and a constant $C$. Then for any $\eta, \delta > 0$, 
\begin{align}
\norm{  \underset{ d(s,t) \leq \delta }{ \mathsf{sup} } \big| X_s - X_t \big| }_{\psi} \leq K \left[ \int_0^{\eta} \psi^{-1} \big( D ( \epsilon, d ) \big) d \epsilon + \delta \psi^{-1} \big( D^2 ( \epsilon, d ) \big) \right],
\end{align}
for a constant $K$ depending on $\psi$ and $C$ only. 
\end{theorem}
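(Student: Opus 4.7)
The proof proceeds by the method of chaining, built on top of a finite Orlicz maximal inequality. The first ingredient I would invoke is the following: for any random variables $Y_1,\ldots,Y_m$ with $\|Y_i\|_\psi \leq c$, one has
$$\left\|\max_{1\leq i\leq m}|Y_i|\right\|_\psi \leq K_\psi\,\psi^{-1}(m)\,c,$$
which follows from Markov's inequality applied to $\psi$ together with the hypothesis $\limsup_{x,y\to\infty}\psi(x)\psi(y)/\psi(cxy)<\infty$. This hypothesis also yields the subadditivity-style bound $\psi^{-1}(N^2)\leq c_1\psi^{-1}(N)$, which I will need to convert $D^2$ into $D$ inside integrals.

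Next I set up the chaining scales. Let $\epsilon_k=\eta\cdot 2^{-k}$ for $k\geq 0$, and let $T_k\subset T$ be a minimal $\epsilon_k$-net with $|T_k|\leq D(\epsilon_k,d)$. Fix nearest-point projections $\pi_k:T\to T_k$ so that $d(t,\pi_k(t))\leq\epsilon_k$. By separability of $\{X_t\}$, the telescoping identity
$$X_t-X_{\pi_0(t)}=\sum_{k\geq 1}\bigl[X_{\pi_k(t)}-X_{\pi_{k-1}(t)}\bigr]$$
holds almost surely. For any pair $s,t$ with $d(s,t)\leq\delta$ I then split
$$|X_s-X_t|\leq |X_s-X_{\pi_0(s)}|+|X_{\pi_0(s)}-X_{\pi_0(t)}|+|X_{\pi_0(t)}-X_t|,$$
routing the $\delta$-constraint entirely through the middle layer.

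For the outer (fine-scale) terms, each chain link satisfies $d(\pi_k(t),\pi_{k-1}(t))\leq\epsilon_k+\epsilon_{k-1}\leq 2\epsilon_{k-1}$, so $\|X_{\pi_k(t)}-X_{\pi_{k-1}(t)}\|_\psi\leq 2C\epsilon_{k-1}$ by the hypothesis. As $t$ varies over $T$, the number of distinct such links is at most $|T_k|\,|T_{k-1}|\leq D(\epsilon_k,d)^2$. Applying the finite maximal inequality to each scale and summing,
$$\left\|\sup_{t\in T}|X_t-X_{\pi_0(t)}|\right\|_\psi \leq \sum_{k\geq 1}2CK_\psi\,\epsilon_{k-1}\,\psi^{-1}\!\bigl(D(\epsilon_k,d)^2\bigr).$$
The geometric mesh allows me to dominate this sum by a constant multiple of $\int_0^\eta\psi^{-1}(D(\epsilon,d)^2)\,d\epsilon$, and then by $\int_0^\eta\psi^{-1}(D(\epsilon,d))\,d\epsilon$ after invoking $\psi^{-1}(N^2)\leq c_1\psi^{-1}(N)$.

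For the middle term, $d(\pi_0(s),\pi_0(t))\leq\delta+2\eta$. I apply the finite maximal inequality to the at-most $D(\eta,d)^2$ pairs in $T_0\times T_0$, obtaining a bound of the form $K_\psi C(\delta+2\eta)\,\psi^{-1}(D(\eta,d)^2)$. The $\eta$-part of this factor is absorbed back into the integral (since the integral already extends up to $\eta$), leaving the clean $\delta\,\psi^{-1}(D^2(\eta,d))$ contribution displayed in the statement. Adjusting constants and combining the two estimates finishes the proof.

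The main obstacle is the careful bookkeeping at the interface between the two bounds: the chain-projection argument must be arranged so that the $\delta$-constraint enters only through the coarsest layer (producing the second term) while all other scales contribute to the entropy integral. A secondary technical point is the repeated use of the submultiplicativity hypothesis on $\psi$, both in the finite maximal inequality and in passing from $\psi^{-1}(D^2)$ to $\psi^{-1}(D)$ inside the integral. Everything else is routine telescoping and geometric summation.
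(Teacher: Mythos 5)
The paper itself states this result (it is the classical general maximal inequality, Theorem 2.2.4 in van der Vaart and Wellner) without supplying a proof, so your argument has to be judged on its own terms. Your overall architecture --- a finite Orlicz maximal inequality fed into a multiscale chaining decomposition, with the $\delta$-constraint routed entirely through the coarsest net --- is the right one, and you correctly read the $D^2(\epsilon,d)$ in the displayed second term as $D^2(\eta,d)$.

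There is, however, a genuine gap. Because you take $\pi_k$ to be an independent nearest-point projection at each scale, the link $(\pi_k(t),\pi_{k-1}(t))$ ranges over up to $|T_k|\,|T_{k-1}|\le D(\epsilon_k,d)^2$ pairs, so your chaining sum yields $\int_0^\eta\psi^{-1}\bigl(D(\epsilon,d)^2\bigr)\,d\epsilon$. To reduce this to $\int_0^\eta\psi^{-1}\bigl(D(\epsilon,d)\bigr)\,d\epsilon$ you invoke ``$\psi^{-1}(N^2)\le c_1\psi^{-1}(N)$'', but this does not follow from $\limsup_{x,y\to\infty}\psi(x)\psi(y)/\psi(cxy)<\infty$: that hypothesis only gives $\psi^{-1}(N^2)\lesssim\bigl[\psi^{-1}(N)\bigr]^2$, and the claimed inequality is false for admissible $\psi$ such as $\psi(x)=x^p$, where $\psi^{-1}(N^2)=N^{2/p}=[\psi^{-1}(N)]^2$. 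It holds for $\psi_2(x)=e^{x^2}-1$ only because of the logarithm in $\psi_2^{-1}$, which is precisely why the sub-Gaussian lemma in the paper's Appendix A may drop the square while the general theorem cannot. You use the same false step a second time when you absorb the $\eta$-part of $K_\psi C(\delta+2\eta)\,\psi^{-1}(D(\eta,d)^2)$ into the first-power entropy integral. The repair is structural rather than analytic: take the $T_j$ to be \emph{nested} maximal $\eta 2^{-j}$-separated sets and link each point of $T_{j+1}$ to a \emph{unique} parent in $T_j$; then the number of distinct links at level $j$ is $|T_{j+1}|\le D(\eta 2^{-j-1},d)$ (first power), and the entropy integral comes out with $\psi^{-1}(D(\epsilon,d))$ directly. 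For the coarse layer, instead of bounding $d(\pi_0(s),\pi_0(t))\le\delta+2\eta$, write $X_{s_0}-X_{t_0}=(X_{s_0}-X_{s'})+(X_{s'}-X_{t'})+(X_{t'}-X_{t_0})$ for one representative fine-scale pair $(s',t')$ with $d(s',t')\le\delta$ per coarse pair $(s_0,t_0)$: the outer terms are already controlled by the chaining part, the middle term has $\psi$-norm at most $C\delta$, and the maximum over the at most $D^2(\eta,d)$ coarse pairs contributes exactly $K\,\psi^{-1}\bigl(D^2(\eta,d)\bigr)\,C\delta$ with no spurious $\eta$-term.
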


\newpage

\begin{corollary}
The constant $K$ can be chosen such that 
\begin{align}
\norm{ X_s - X_t }_{\psi} \leq K \int_0^{ \mathsf{diam} T} \psi^{-1} \big( D ( \epsilon, d ) \big) d \epsilon,
\end{align}
where $\mathsf{diam} T$ is the diameter of $T$. 
\end{corollary}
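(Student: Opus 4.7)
The plan is to deduce the corollary by specializing the preceding maximal inequality to the case in which the ``sup-radius'' equals the diameter of the index set. Concretely, I would set $\delta = \eta = \operatorname{diam} T$ in the theorem. Since every pair $s,t \in T$ satisfies $d(s,t) \le \operatorname{diam} T$, the constrained supremum on the left coincides with the unrestricted supremum over $T\times T$, which is the natural reading of $\|X_s - X_t\|_{\psi}$ here. With this choice of $\eta$, the first bracketed term in the theorem delivers precisely the target integral $\int_0^{\operatorname{diam} T} \psi^{-1}(D(\epsilon,d))\, d\epsilon$, so the entire task reduces to absorbing the residual boundary term $\operatorname{diam} T \cdot \psi^{-1}\!\bigl(D^2(\operatorname{diam} T, d)\bigr)$ into this integral.

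The key elementary observation is that at scale $\epsilon = \operatorname{diam} T$ the packing number collapses: since the whole set $T$ fits inside a single $d$-ball of radius $\operatorname{diam} T$, we have $D(\operatorname{diam} T, d) = 1$, and hence $\psi^{-1}(D^2(\operatorname{diam} T,d)) = \psi^{-1}(1)$, which is a finite constant depending only on $\psi$. Thus the residual term is simply $\operatorname{diam} T \cdot \psi^{-1}(1)$, a constant multiple of the diameter.

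To absorb this into the integral I would use monotonicity. Because $\epsilon \mapsto D(\epsilon,d)$ is nonincreasing and $\psi^{-1}$ is nondecreasing, the integrand is nonincreasing in $\epsilon$, and for every $\epsilon \in (0, \operatorname{diam} T]$ one has $D(\epsilon,d) \ge 1$, so $\psi^{-1}(D(\epsilon,d)) \ge \psi^{-1}(1)$. This yields the lower bound
\begin{equation*}
\int_0^{\operatorname{diam} T} \psi^{-1}\bigl(D(\epsilon,d)\bigr)\, d\epsilon \;\ge\; \operatorname{diam} T \cdot \psi^{-1}(1),
\end{equation*}
which says exactly that the boundary term is dominated by the integral. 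Plugging this back into the bound of the theorem, the right-hand side is at most $2K$ times the integral, and redefining the constant $K$ gives the corollary. In the trivial case $\psi^{-1}(1)=0$ there is nothing to absorb, so the conclusion is immediate.

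The only genuinely delicate point, and the step I expect to require the most care, is verifying that the boundary term really behaves like $\operatorname{diam} T \cdot \psi^{-1}(1)$ rather than something larger. If one works with the $D^2$ form uniformly in $\epsilon$ (as the theorem does), one must invoke the standing hypothesis $\limsup_{x,y\to\infty}\psi(x)\psi(y)/\psi(cxy) < \infty$, which translates, via inversion, to the submultiplicative-type bound $\psi^{-1}(mn) \le c\,\psi^{-1}(m)\psi^{-1}(n)$ for large $m,n$. This justifies replacing $\psi^{-1}(D^2)$ by a constant multiple of $\psi^{-1}(D)$ at each scale, so the same monotonicity absorption goes through and only the constant $K$ changes. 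Once this is in place, the proof is essentially bookkeeping.
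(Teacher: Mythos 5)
Your proposal is correct and is exactly the intended argument: take $\eta = \delta = \mathsf{diam}\, T$ in the theorem, observe that $D(\mathsf{diam}\, T, d) = 1$ so the boundary term reduces to $\mathsf{diam}\, T \cdot \psi^{-1}(1)$, and absorb it into the integral via $\psi^{-1}(D(\epsilon,d)) \geq \psi^{-1}(1)$ for all $\epsilon$. Your final worry about invoking the growth condition on $\psi$ is unnecessary here, since once $\eta = \mathsf{diam}\, T$ the term $\psi^{-1}\big( D^2(\eta,d) \big)$ is literally $\psi^{-1}(1)$ and no submultiplicative bound is needed.
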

In practise, maximal inequality means that no point can be added without destroying the validity of the inequality. Furthermore, a stochastic process is called sub-Gaussian with respect to the semimetric $d$ on its index set if
\begin{align}
\mathbb{P} \big( \left| X_s - X_t \right| \big) \leq 2 e^{ - \frac{1}{2} x^2 / d^2 (s,t) }, \ \ \text{for every} \ \ s,t \in T, x > 0.
\end{align}

Another example, is the Rademacher process given by 
\begin{align}
X_{\alpha} = \sum_{i=1}^n \alpha_i \epsilon_i, \ \ \alpha \in \mathbb{R}^n,
\end{align}
for Rademacher variables $\epsilon_1,...,\epsilon_n$. Therefore, by Hoeffding's inequality, this is sub-Gaussian for the Euclidean distance $d(a,b) = \norm{a - b}$.

\paragraph{Tightness under an Increment Bound}

We focus on deriving a general CLT for empirical processes through the application of maximal inequalities. 

\begin{example}
Let $\big\{ X_n(t): t \in [0,1] \big\}$ be a sequence of separable stochastic processes with bounded sample paths and increments satisfying the following condition 
\begin{align}
\mathbb{E} \big| X_n(s) - X_n(t) \big|^p \leq K \left| s - t \right|^{1 + r}, 
\end{align}
for constants $p, K, r > 0$ independent of $n$. Assume that the sequences of martingales $\big( X_n (t_1),..., X_n (t_k) \big)$ converge weakly to the corresponding marginals of a stochastic process $\big\{ X(t): t \in [0,1] \big\}$. Then, there exists a version of $X$ with continuous sample paths and $X_n \Rightarrow X$ in $\ell^{\infty} [0,1]$. Hence, also in $D[0,1]$ or $C[0,1]$, provided every $X_n$ has all its sample paths in these spaces.  
\end{example}

\newpage

\subsection{LLNs for Hilbert Space-Valued Mixingales}

\begin{example}
\textbf{(Regression via Orthonormal Bases)}
\
 
Let $\left\{ X_t \right\}$ and $\left\{ Y_t \right\}$ be real-valued random sequences (see, also \cite{hu2004complete}):
\begin{align}
\mathbb{E} \left[ Y_t | X_t = x \right] := \theta^0 (x)
\end{align} 
Define $\epsilon_t \equiv Y_t - \theta^0 ( X_t )$, and suppose $\mathbb{E} \left[ \epsilon_t^2 | X_t \right] = \rho^2_t \leq \sigma^2 < \infty$. Suppose that, for all $t$, $X_t$ has the same marginal density $f$ with bounded support and $0 < \mathsf{inf} \ f(x) \leq \mathsf{sup} \ f(x) < \infty$ where $M^0 (x) \equiv \theta^0 (x)$ and $f(x)$ belong to an infinite-dimensional separable Hilbert space $H$ with inner product induced norm $| \ . \ |$. Let $\left\{ g_j  \right\}$ be an orthonormal basis for $H$ and $\left\{ J_n \right\}$ a nondecreasing integer sequence. Estimate $M^0$ as 
\begin{align}
\hat{M}_n (x) = \underset{ 1 \leq j \leq J_n }{ \sum } \left[ n^{-1} \sum_{ t = 1 }^n Y_t g_j( X_t ) \right] g_j (x). 
\end{align}
For known $f$, estimate $\theta^0$ by $\hat{\theta}_n = \hat{M}_n / f$. For unknown $f$, estimate $\theta^0$ by $\hat{\theta}_n = \hat{M}_n / \hat{f}_n$, 
\begin{align}
\hat{f}_n \equiv \underset{ 1 \leq j \leq J_n }{ \sum } \left[ n^{-1} \sum_{ t = 1 }^n g_j( X_t ) \right] g_j (x). 
\end{align}
Furthermore, if $\left\{ X_t \right\}$ and $\left\{ Y_t \right\}$ are $\mathbb{R}-$valued near-epoch dependent (NED) functions of some mixing random sequences, $\left\{ \hat{M}_n - \mathbb{E} \hat{M}_n \right\}$ and $\left\{ \hat{f}_n - \mathbb{E} \hat{f}_n \right\}$, become weighted sums of Hilbert space-valued mixingale arrays. Our results, in the next section can establish the convergence of $\mathbb{E} \left[ | \hat{\theta}_n - \theta^0 | \right]$. For example, when $H = L_2 ( \mathbb{R} )$, we obtain that $\int_{R} \left[ \hat{\theta}_n (x) - \theta^0 (x) \right]^2 dx \to 0$, in probability.  
\end{example}

\begin{theorem}
Let $\left\{ W_{n,i}, \mathcal{F}^{n,i} \right\}$ be an $L_p(H)$ mixingale with $p \geq 1$. 

\begin{enumerate}
\item[(i)] If $p \geq 2$, then there exists a doubly infinite summable sequence of positive constants defined with $a \equiv \left\{ a_m : - \infty < m < \infty \right\}$ where $a_m = a_{ - m}$ such that 
\begin{align}
\mathbb{E} \left[ \underset{ j \leq k }{ \mathsf{max} } \left| \sum_{ i=1}^j W_{n,i} \right|^2 \right] &\leq K( \psi, a ) \sum_{ i=1}^k c^2_{n,i}, 
\\
K( \psi, a ) &= 4 \left( \underset{ - \infty < m < \infty }{ \sum } a_m \right) \left[ a_0^{-1} \left( \psi_0^2 + \psi_1^2 \right) + 2 \sum_{m =1}^{ \infty } \psi_m^2 | a_m^{-1} - a_{m-1}^{-1} | \right]
\end{align}

\item[(ii)] If $1 < p \leq 2$, then there exists a constant $C_p > 0$ depending only on $p$ such that 
\begin{align*}
\norm{ \underset{ j \leq k }{ \mathsf{max} } \left| \sum_{i=1}^j W_{n,i} \right| }_p 
\leq 
C_p \underset{ - \infty < m < \infty  }{ \sum  } \left( \sum_{i=1}^k \mathbb{E} \left[ \left| W_{n,i}^m  \right|^p \right] \right)^{ 1 / p} 
\leq 2 C_p \sum_{ m = 1 }^{ \infty } \psi_m \sum_{ i = 1 }^{ k } 
\left( c_{n,i}^p \right)^{1/ p}, 
\end{align*}
where $W_{n,i}^m \equiv \mathbb{E} \left[ W_{n,i} | \mathcal{F}^{n, i - m} \right] - \mathbb{E} \left[ W_{n,i} | \mathcal{F}^{n, i - m - 1} \right]$. 
\end{enumerate}
\end{theorem}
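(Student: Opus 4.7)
The starting point for both parts is the same telescoping decomposition. For each $n,i$, define the martingale differences
\begin{align*}
W_{n,i}^m := \mathbb{E}[W_{n,i}\mid \mathcal{F}^{n,i-m}] - \mathbb{E}[W_{n,i}\mid \mathcal{F}^{n,i-m-1}], \qquad m\in\mathbb{Z}.
\end{align*}
The mixingale property together with $L_p(H)$ convergence of the conditional expectations gives $W_{n,i} = \sum_{m\in\mathbb{Z}} W_{n,i}^m$ (unconditionally in $L_p(H)$), and the two-sided mixingale bound yields $\|W_{n,i}^m\|_p \le \psi_{|m|}\, c_{n,i}$ up to a factor of $2$ (since $W_{n,i}^m$ is a difference of two conditional expectations each bounded by $\psi_{|m|}c_{n,i}$). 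Crucially, for fixed $m$, the shifted sequence $\{W_{n,i}^m, \mathcal{F}^{n,i-m}\}_i$ is a martingale difference sequence in $L_p(H)$, so the partial sums $T_{n,j}^m := \sum_{i=1}^j W_{n,i}^m$ form an $H$-valued martingale in $j$.

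\textbf{Part (i), $p\ge 2$.} Writing $S_{n,j}=\sum_m T_{n,j}^m$ and applying the triangle inequality in $H$ pointwise followed by the weighted Cauchy--Schwarz inequality with weights $\{a_m\}$, I would bound
\begin{align*}
\max_{j\le k}|S_{n,j}|^2 \;\le\; \Bigl(\sum_{m} a_m\Bigr)\sum_m a_m^{-1}\max_{j\le k}|T_{n,j}^m|^2.
\end{align*}
For each fixed $m$, Doob's $L^2$ maximal inequality in the Hilbert space $H$ gives $\mathbb{E}\max_{j\le k}|T_{n,j}^m|^2 \le 4\,\mathbb{E}|T_{n,k}^m|^2$, and by orthogonality of Hilbert-valued martingale differences $\mathbb{E}|T_{n,k}^m|^2 = \sum_{i=1}^k \mathbb{E}|W_{n,i}^m|^2 \le \psi_{|m|}^2 \sum_i c_{n,i}^2$. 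Taking expectations and substituting back gives a bound of the shape $4(\sum_m a_m)\bigl(\sum_m a_m^{-1}\psi_{|m|}^2\bigr)\sum_i c_{n,i}^2$. The step I expect to be the main bookkeeping obstacle is recovering the precise constant $K(\psi,a)$ stated in the theorem: this requires an Abel summation on the inner sum $\sum_m a_m^{-1}\psi_{|m|}^2$, together with the symmetry $a_m=a_{-m}$, to convert the raw bound into the form $a_0^{-1}(\psi_0^2+\psi_1^2) + 2\sum_{m\ge 1}\psi_m^2|a_m^{-1}-a_{m-1}^{-1}|$, exploiting the fact that $\psi_m$ is non-increasing so that $\psi_m^2$ can be resummed against the increments of $a_m^{-1}$.

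\textbf{Part (ii), $1<p\le 2$.} Here the Hilbert-space Burkholder inequality of the correct order is not available, but one does have the analogue of the von Bahr--Esseen inequality: for a $p$-integrable $H$-valued martingale $T_{n,j}^m$ with $1<p\le 2$, there is a constant $C_p$ with
\begin{align*}
\Bigl\|\max_{j\le k}|T_{n,j}^m|\Bigr\|_p \;\le\; C_p\Bigl(\sum_{i=1}^k \mathbb{E}|W_{n,i}^m|^p\Bigr)^{1/p}.
\end{align*}
Applying Minkowski's inequality in $L_p(H)$ to the decomposition $\max_{j\le k}|S_{n,j}|\le \sum_m \max_{j\le k}|T_{n,j}^m|$ and summing the bounds across $m\in\mathbb{Z}$ yields the first inequality in (ii). The second inequality then follows from the mixingale control $\|W_{n,i}^m\|_p\le 2\psi_{|m|} c_{n,i}$ and the reorganisation of the double sum $\sum_{m\in\mathbb{Z}} = \psi_0\cdot(\cdot) + 2\sum_{m\ge 1}\psi_m\cdot(\cdot)$, using that $\psi_{-m}=\psi_m$ is absorbed into the factor $2$ outside.

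\textbf{Main obstacle.} The conceptual structure is uniform across the two regimes, and the real work is the constant-tracking in part (i): verifying that the weighted Cauchy--Schwarz/Doob bound can be rearranged (by Abel summation and symmetry of $a$) into precisely the form of $K(\psi,a)$ stated, while ensuring all estimates survive replacing the scalar-valued orthogonality calculation with its Hilbert-valued counterpart. The Hilbert-valued Doob and von Bahr--Esseen inequalities transfer from the scalar case essentially verbatim (they follow from the scalar cases applied to $|T_{n,j}^m|$ after invoking the conditional Jensen inequality for the norm), so that is not where the difficulty lies.
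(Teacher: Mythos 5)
The paper itself gives no proof of this theorem: it is stated in the appendix as an imported result from the Hilbert-space mixingale literature (the McLeish-type maximal inequality in the form used by Chen and White), so there is no in-paper argument to compare against. Judged on its own terms, your plan follows exactly the standard route for this result: the telescoping decomposition $W_{n,i}=\sum_{m\in\mathbb{Z}}W_{n,i}^m$ into martingale-difference arrays indexed by $m$, weighted Cauchy--Schwarz with weights $\{a_m\}$ plus Doob and Hilbert-space orthogonality for $p\ge 2$, and Minkowski plus Doob plus the von Bahr--Esseen inequality (valid in $H$ because a Hilbert space is $2$-smooth) for $1<p\le 2$. That is the correct architecture, and part (ii) as you describe it goes through.

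There is, however, one point in part (i) where the proof as written would not deliver the stated constant. You propose to first bound $\mathbb{E}\lvert T_{n,k}^m\rvert^2\le\psi_{\lvert m\rvert}^2\sum_i c_{n,i}^2$ and then ``convert'' $\sum_m a_m^{-1}\psi_{\lvert m\rvert}^2$ by Abel summation into $a_0^{-1}(\psi_0^2+\psi_1^2)+2\sum_{m\ge 1}\psi_m^2\lvert a_m^{-1}-a_{m-1}^{-1}\rvert$. That conversion is not available once the crude bound has been applied: the two quantities are genuinely different, and the second is in general strictly smaller. The order of operations must be reversed. One first uses the Pythagorean identity for nested conditional expectations in $L_2(H)$, namely $\mathbb{E}\lvert W_{n,i}^m\rvert^2=\mathbb{E}\lvert\mathbb{E}[W_{n,i}\mid\mathcal{F}^{n,i-m}]\rvert^2-\mathbb{E}\lvert\mathbb{E}[W_{n,i}\mid\mathcal{F}^{n,i-m-1}]\rvert^2$, so that $\sum_m a_m^{-1}\mathbb{E}\lvert W_{n,i}^m\rvert^2$ telescopes; summation by parts then transfers the differencing onto $a_m^{-1}$, and only at that stage does one invoke the mixingale bounds $\mathbb{E}\lvert\mathbb{E}[W_{n,i}\mid\mathcal{F}^{n,i-m}]\rvert^2\le\psi_m^2c_{n,i}^2$ (and the corresponding tail bound for $m\le 0$) to arrive at the bracketed expression in $K(\psi,a)$. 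Relatedly, your justification of $\lVert W_{n,i}^m\rVert_p\le 2\psi_{\lvert m\rvert}c_{n,i}$ for $m\le 0$ needs the rewriting $W_{n,i}^{-j}=\bigl(W_{n,i}-\mathbb{E}[W_{n,i}\mid\mathcal{F}^{n,i+j-1}]\bigr)-\bigl(W_{n,i}-\mathbb{E}[W_{n,i}\mid\mathcal{F}^{n,i+j}]\bigr)$, since the mixingale condition controls the approximation error rather than the conditional expectation itself on that side. Neither repair changes the structure of your argument, but both are needed to make part (i) close.
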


\newpage 

\section{Elements of Bayesian Statistics}

\begin{quotation}
"...\texttt{the simple idea of splitting a sample in two and then developing the hypothesis on the basis of one part and testing it on the remainder may perhaps be said to be one of the most seriously neglected ideas in statistics}...".
\end{quotation}

The applications of the Bayesian framework in economics, econometrics and statistics are widely spread and it is worth mentioning some key elements here, especially due to the fact that model selection is commonly presented using the Bayesian approach. Generally, statistical problems are concerned for the model $f( x | \theta )$, such that $\theta \in \Theta$ with the associated hypothesis testing of interest is formulated as below:
\begin{align}
H_0: \theta = \theta_0 \ \ \ \text{vs} \ \ \ \theta \neq \theta_0,    
\end{align}
where the data $X = \big( X_1,..., X_n \big) \overset{ \textit{i.i.d} }{ \sim } f ( X | \theta )$. Denote the prior probability function with $\pi ( \theta )$, then one can construct the observed posterior $\pi (  \theta | X )$, assumed to be a proper density function even if the prior is improper (see,   ). Next, one needs to consider what we compare $\pi( \theta | X )$ with. Assume that we can generated data $Y = \big( Y_1,..., Y_n \big) \overset{ \textit{i.i.d} }{ \sim } f( Y | \theta_0 )$ under the null hypothesis. Using the same prior we can construct $\pi ( \theta | Y )$ and hence we can define the expected posterior under the null; $\pi_0 ( \theta ) = \displaystyle \int \pi ( \theta | y ) f ( y | \theta_0 ) dy$, where $y$ is an $n-$vector. Then, the test statistic of interest is the KL divergence between the expected posterior under the null and the observed posterior such that
\begin{align}
T(X) = \int \pi_0( \theta ) \mathsf{log} \left[ \frac{  \pi_0 ( \theta ) }{  \pi( \theta | X ) } \right] d \theta.     
\end{align}
Therefore, it can be shown that $T(X)$ is related to the Bayes factor. 

\medskip

\begin{example}
Consider the probability distributions $\mathbb{P} ( x | H_0 )$ and $\mathbb{P} ( x | H_1 )$. Moreover, assume some prior knowledge on prior probabilities $\mathbb{P} ( H_0 )$ and $\mathbb{P} ( H_1 )$. Then, Bayes theorem combines the prior probabilities and the data to produce posterior probabilities $\mathbb{P} ( H_0 | x )$ and $\mathbb{P} ( H_1  | x )$. Therefore, the transformation of prior to posterior itself represents evidence provided by the data, which takes the following form 
\begin{align}
\mathbb{P} ( H_k | x ) = \frac{ \mathbb{P} ( x | H_k ) \mathbb{P}( H_k ) }{  \mathbb{P} ( x | H_0 ) \mathbb{P}( H_0)  + \mathbb{P} ( x | H_1 ) \mathbb{P}( H_1 ) }    
\end{align}
In other words, the Bayes factor represents the the likelihood ratio. When there is an unknown parameter $\theta_k$, corresponding to hypothesis $H_k$, the Bayes factor is the marginal likelihood ratio, where the marginal likelihood densities $m_k(x)$ are obtained by integrating over the parameter space with respect to the specified prior $\pi ( \theta_k | H_k )$, such that
\begin{align}
m_k(x) = \int f_k ( x | \theta_k ) \pi_k ( \theta_k ) d \theta_k, \ \ \ \text{for} \ k \in \left\{ 0, 1 \right\}.     
\end{align}
However, there are difficulties with the Bayes factor when prior information about the unknown parameters of the models is weak, in particular, with the use of improper priors \cite{chen2021new}. 
\end{example}

\newpage

Below, we present some popular econometric models. Some relevant literature that studies Bayesian asymptotics within the time series econometric context include among others   \cite{kim1994bayesian}. 
\begin{example}
Consider the autoregressive model 
\begin{align}
y_t = \rho y_{t-1} + \epsilon_t
\end{align}
Without imposing the normality assumption on the innovation sequences, we can consider the asymptotic posterior distribution. A posterior density is defined as below (see, e.g., \cite{kim1994bayesian})
\begin{align}
P_T( g, Y_T(\omega) ) = \int_G \pi( \theta | Y_T(\omega) ) d \theta 
\end{align}
The Bayesian posterior is determined by the prior and the likelihood function. Therefore, in order to verify the asymptotic normality of the posterior distribution, then we need to determine the limiting behaviour of each of the components comprising the prior and the posterior distributions.
\end{example} 
Additional useful reading within the context of model selection and empirical bayes include among others \cite{varin2011overview}\footnote{Professor Cristiano Varin gave a seminar with title "An approximate empirical Bayes approach to paired comparisons", at the S3RI Departmental Seminar Series at the University of Southampton on the 28th of April 2022. \\}, \cite{efron2014two} as well as \cite{jewson2022general}\footnote{Professor David Rossell gave a seminar with title "Improper models for data analysis", at the S3RI Departmental Seminar Series at the University of Southampton on the 10th of February 2022.}. Moreover, a framework for Bayesian Inference in Econometric Models using Monte Carlo Integration is proposed by \cite{geweke1989bayesian}.

\begin{example}
Consider the structural regression model given by 
\begin{align}
Y_t &= \alpha + \beta X_{t-1} + U_t
\\
X_t &= \mu + \rho X_{t-1} + V_t
\end{align}
where $( U_t, V_t )^{\top}$ is a sequence of independent and identically distributed random vectors with means zeros and finite variances. As it is well documented in the literature the least squares estimator for $\beta$ based on the predictive regression is biased in finite sample behaviour due to the correlation structure between the innovation sequences of the equations of the system $U_t$ and $V_t$. Hence, several bias-corrected estimators and tests for both stationary $( | \rho | < 1 )$ and nonstationary or nearly nonstationary, $\rho = \left( 1 - \displaystyle \frac{c}{n} \right)$ have been proposed in the literature. In other words, inference on the parameter $\beta$ of the model is challenging due to the fact that the asymptotic distribution of an estimators or a test statistic depends heavily on whether $X_t$ is stationary or nearly integrated or unit root, and whether the model intercept $\phi$, of the nonstationary autoregressive equation is zero or not. Therefore, is of paramount importance to have a unified inference approach to avoid making a mistake in characterizing the predicting variable. The persistence endogeneity of covariates in predictive regressions has been studied by several authors.

\newpage

Therefore an estimation methodology is to employ the empirical likelihood approach which is defined: 
\begin{align}
L_n (\beta) = \mathsf{sup} \left\{ \prod_{t=1}^n (n p_t): p_1 \geq 0,..., p_n \geq 0, \sum_{t=1}^n p_t = 1, \sum_{t=1}^n p_t Z_t ( \beta )     \right\}    
\end{align}
where $Z_t (\beta) = \big( Y_t - \beta X_{t-1} \big) X_{t-1} \big/ \sqrt{ 1 + X^2_{t-1} }$. Then, based on the Lagrange multiplier optimization method it follows that 
\begin{align}
\ell_n (\beta) := - 2 \mathsf{log} L_n (\beta) = 2 \sum_{t=1}^n \mathsf{log} \big\{ 1 + \lambda Z_t (\beta) \big\}    
\end{align}
Then, taking the first-order-condition for some $\lambda = \lambda (\beta)$ satisfies
\begin{align}
\sum_{t=1}^n \frac{ Z_t (\beta) }{ 1 + \lambda Z_t (\beta) } \equiv 0.    
\end{align}
\end{example}

\medskip

\begin{remark}
The first rigorous work to define and construct tests which are asymptotically optimal was \cite{wald1943efficient}. He argued that maximum likelihood estimators may be asymptotically sufficient for detecting local deviations from the null hypothesis and showed that the Wald test - is asymptotically most stringent: the asymptotic power function is closest to the asymptotic envelope power function in the minimax sense in local neighborhoods of the null hypothesis. In particular, Wald tests can be used for more general problems if we can find \textit{asymptotically efficient} estimates for the parameters of interest - an estimator $\hat{\vartheta}_n$ with $\sqrt{n} \big( \hat{\vartheta}_n - \vartheta_n \big)$ asymptotically $\mathcal{N} \big( 0, B^{* - 1} \big)$ under every $\big( \vartheta_n( h_{\vartheta} ), \eta_n ( h_n  ) \big)$. 
\end{remark}

Moreover, the notion of \textit{asymptotically uniformly most powerful} (AUMP), AUMPU (unbiased) and AUMPI (invariant) tests are useful in semiparametric econometrics. In particular, characterization is done by stating the asymptotic local power function. Furthermore, sufficient for \textit{optimality} is that a test be equivalent to a canonical effective score test - an optimal test requiring knowledge of nuisance parameters. In addition, Stein's notion of \textit{adaptation} implies replacing, in tests which are optimal when certain nuisance parameters are known, these parameters by estimates without affecting the asymptotic performance of the test. The particular aspect is the equivalent of large-sample studentization, a variance parameter can be replaced by an estimate without large-sample penalty. Note that a statistical problem is \textit{invariant under locally linear transformation} of the parameter of interest, \textit{iff} standardized effective score tests are rotation invariant (asymptotically). 

Furthermore, another large stream of literature in econometrics focuses on estimation and inference methods for \textit{conditional moment models} which can be extended within a high-dimensional environment using techniques based on large sample approximations (see, \cite{ai2003efficient} and \cite{dominguez2004consistent} among others). A Bayesian perspective is presented by \cite{chib2022bayesian}.

\newpage

\bibliographystyle{apalike}
\bibliography{myreferences1}

\addcontentsline{toc}{section}{References}

\newpage

\end{document}